\newtheorem{theorem}{Theorem}[section]
\newtheorem{lemma}[theorem]{Lemma}
\newtheorem{meta-theorem}[theorem]{Meta-Theorem}
\newtheorem{claim}[theorem]{Claim}
\newtheorem{corollary}[theorem]{Corollary}
\newtheorem{definition}[theorem]{Definition}
\crefname{theorem}{Theorem}{Theorems}
\crefname{proposition}{Proposition}{Propositions}
\crefname{observation}{Observation}{Observations}
\crefname{lemma}{Lemma}{Lemmas}
\crefname{claim}{Claim}{Claims}
\crefname{problem}{Problem}{Problems}
\crefname{conjecture}{Conjecture}{Conjectures}
\crefname{question}{Question}{Questions}
\crefname{example}{Example}{Examples}
\crefname{fact}{Fact}{Facts}
\definecolor{darkgreen}{rgb}{0,0.5,0}
\algnewcommand\algorithmicswitch{\textbf{switch}}
\algnewcommand\algorithmiccase{\textbf{case}}
\newcommand{\disc}{\Delta}
\newcommand{\eps}{\varepsilon}
\renewcommand{\P}{\textrm{P}}
\newcommand{\poly}{\operatorname{poly}}
\renewcommand{\phi}{\varphi}
\newcommand{\E}{\mathbb{E}}
\renewcommand{\Pr}{\P}
\renewcommand{\paragraph}[1]{\vspace{0.15cm}\noindent {\bf #1}:}
\newcommand{\FullOrShort}{full}
  \newcommand{\fullOnly}[1]{#1}
  \newcommand{\shortOnly}[1]{}
    \newcommand{\fullOnly}[1]{}
    \newcommand{\IncludePictures}[1]{}
\newcommand{\fail}{prob^{bad}_i}
\newcommand{\Deltav}{\mathbf{\Delta}}
\newcommand{\Deltai}{\Delta_i}
\newcommand{\Ibad}{I^{bad}}
\newcommand{\Pot}
\renewcommand{\Pr}{\P}
\title{Work-Efficient Parallel Derandomization II: \\
Optimal Concentrations via Bootstrapping}
\begin{document}
\date{}
%\author{anonymous author(s)}
 \author{Mohsen Ghaffari \\ \small MIT \\ \small ghaffari@mit.edu \and Christoph Grunau \\ \small ETH Zurich \\ \small cgrunau@ethz.ch }
\maketitle

\begin{abstract}
In this paper, we present an efficient parallel derandomization method for randomized algorithms that rely on concentrations such as the Chernoff bound. This settles a classic problem in parallel derandomization, which dates back to the 1980s. 

Concretely, consider the \textit{set balancing} problem where $m$ sets of size at most $s$ are given in a ground set of size $n$, and we should partition the ground set into two parts such that each set is split evenly up to a small additive (discrepancy) bound. A random partition achieves a discrepancy of $O(\sqrt{s \log m})$ in each set, by Chernoff bound. We give a deterministic parallel algorithm that matches this bound, using near-linear work $\tilde{O}(m+n+\sum_{i=1}^{m} |S_i|)$ and polylogarithmic depth $\poly(\log(mn))$. The previous results were weaker in discrepancy and/or work bounds: Motwani, Naor, and Naor [FOCS'89] and Berger and Rompel [FOCS'89] achieve discrepancy $s^{\eps} \cdot O(\sqrt{s \log m})$ with work $\tilde{O}(m+n+\sum_{i=1}^{m} |S_i|) \cdot m^{\Theta(1/\eps)}$ and polylogarithmic depth; the discrepancy was optimized to $O(\sqrt{s \log m})$ in later work, e.g. by Harris [Algorithmica'19], but the work bound remained prohibitively high at $\tilde{O}(m^4n^3)$. Notice that these would require a large polynomial number of processors to even match the near-linear runtime of the sequential algorithm. Ghaffari, Grunau, and Rozhon [FOCS'23] achieve discrepancy $s/\poly(\log(nm)) + O(\sqrt{s \log m})$ with near-linear work and polylogarithmic-depth. Notice that this discrepancy is nearly quadratically larger than the desired bound and barely sublinear with respect to the trivial bound of $s$. 

Our method is different from prior work. It can be viewed as a novel bootstrapping mechanism that uses crude partitioning algorithms as a subroutine and sharpens their discrepancy to the optimal bound. In particular, we solve the problem recursively, by using the crude partition in each iteration to split the variables into many smaller parts, and then we find a constraint for the variables in each part such that we reduce the overall number of variables in the problem. The scheme relies crucially on an interesting application of the multiplicative weights update method to control the variance losses in each iteration.

Our result applies to the much more general \textit{lattice approximation} problem, thus providing an efficient parallel derandomization of the randomized rounding scheme for linear programs. 
\end{abstract}

 {   
     % \listoftodos
     \thispagestyle{empty}
 }

{   \newpage
    %\footnotesize
    \smallskip
    \hypersetup{linkcolor=blue}
    \tableofcontents
    \setcounter{page}{0}
    \thispagestyle{empty}
}

\newpage
\setcounter{page}{1}

\section{Introduction}
This paper presents an efficient parallel method for derandomizing randomized algorithms that rely on concentrations of measure such as Chernoff and Hoeffding bounds. This settles one of the classic and central questions in parallel derandomization~\cite{motwani1989probabilistic, berger1989simulating}. 

Let us start with a concrete and simple-to-state problem, known as the \textit{set balancing} or \textit{set discrepancy} problem, which has been used as the primary target in this line of work\footnote{This problem and similar others arise naturally and frequently in algorithm design. Here is a simple example: suppose we want to partition the edges of a graph $G=(V, E)$ into $k$ parts, such that each node has almost equal number of edges in different parts. Such a partition is useful, e.g., when computing an edge-coloring as different parts can be colored independently.}. By known reductions~\cite{motwani1989probabilistic}, our result applies to much more general problems such as \textit{lattice approximation}~\cite{raghavan1986probabilistic}. 
%It was also considered earlier as a context for sequential derandomization, via the method of conditional probabilities  \cite{spencer1977balancing, raghavan1986probabilistic}. 
% Furthermore, other more general problems such as \textit{vector balancing} and \textit{lattice approximation} reduce to it. 

\medskip
\paragraph{Set balancing} Consider $m$ subsets $S_{1}, S_2, \dots, S_m \subseteq [n]$ in a ground set of $n$ elements, where $|S_i|\leq s$ for each $i\in [m]$. We want to split the ground set into two parts such that each of the subsets is split as evenly as possible. Concretely, we want a vector $\chi \in \{\-1,1\}^n$ that minimizes the \textit{set system's discrepancy} defined as $\max_{i\in [m]} disc(S_i)$ where the discrepancy of set $S_i$ is defined as $disc(S_i):=|\sum_{j\in S_i} \chi_j|$.
For a random $\chi \in \{\-1,1\}^n$, the Chernoff bound implies a discrepancy of $O(\sqrt{s\log m})$, via a union bound over all $m$ sets.\footnote{A beautiful result of Spencer~\cite{spencer1985six} gives a discrepancy bound of $O(\sqrt{n(1+\log(m/n))})$, and algorithmic variants were provided in recent breakthroughs ~\cite{bansal2010constructive, lovett2012constructive}. However, our focus here is on bounds that have $\sqrt{s}$ as the leading factor, instead of $\sqrt{n}$, because this is frequently needed in algorithmics, e.g., in generalization to lattice approximation. Moreover, the difference between the second factors $\sqrt{\log(2m/n)}$ and $\sqrt{\log m}$ is considerable only in the special case where $m\ll n^{1.001}$.  } 

Results of Spencer \cite{spencer1977balancing} and Raghavan \cite{raghavan1986probabilistic} present deterministic algorithms that achieve the same bound, via the method of conditional expectation. But these algorithms are inherently sequential. Our objective is to achieve a similar result via a deterministic parallel algorithm. We next review the state of the art, after a quick recap on the parallel terminology.

\medskip
\paragraph{Parallel model and terminology: depth, work, and work-efficiency} We follow the standard \textit{work-depth} model~\cite{jaja1992introduction, blelloch1996programming}, where the algorithm runs on $p$ processors with access to a shared memory. In any algorithm $\mathcal{A}$, its depth $D(\mathcal{A})$ is the longest chain of computational steps in $\mathcal{A}$ each of which depends on the previous ones. In other words, this is the time that it would take the algorithm to run even if we were given an infinite number of processors. The work $W(\mathcal{A})$ is the total number of the computational steps in $\mathcal{A}$. Given $p$ processors, the algorithm clearly needs $\max\{D(\mathcal{A}), W(\mathcal{A})/p\}$ time.
By Brent's principle~\cite{brent1974parallel}, we can run the algorithm in $D(\mathcal{A})+ W(\mathcal{A})/p$ time using $p$ processors. The objective in parallel computations is to devise algorithms that run faster than their sequential counterpart, ideally with a speed-up proportional to the number of processors $p$. In particular, this requires the parallel algorithm to have a work bound (asymptotically) equal to the best known sequential algorithm, in which case we call the algorithm \textit{work-efficient}. There have been a number of exciting recent work on achieving work-efficient parallel algorithms (or \textit{nearly work-efficient} algorithms where the work bound is relaxed by a polylogarithmic factor) for various problems; see, e.g.,  \cite{fineman2018nearly, jambulapati2019parallel, blelloch2020parallelism, li2020faster, andoni2020parallel, cao2020efficient,dhulipala2021theoretically,anderson2021parallel,rozhovn2022undirected,rozhovn2022deterministic, GGR2023Chernoff}.

\subsection{State of the art}
For reference, let us note that the sequential deterministic algorithms of \cite{spencer1977balancing, raghavan1986probabilistic} achieve the desired discrepancy bound of $O(\sqrt{s\log m})$ in $\tilde{O}(n+m+\sum_{i=1}^{m} |S_i|)$ time, which is near-linear in the input size. Thus, the ideal parallel algorithm is one with the same near-linear work bound and polylogarithmic depth, achieving the same discrepancy. The state-of-the-art deterministic parallel algorithms achieve weaker results: 

\begin{itemize}
\item[(1)] Motwani, Naor, and Naor~\cite{motwani1989probabilistic}, and Berger and Rompel~\cite{berger1989simulating}, presented parallel deterministic algorithms with $\poly(\log(mn))$ depth that achieve discrepancy $s^{\eps} \cdot O(\sqrt{s\log m})$, for $\eps\in (0, 0.5]$, using $\tilde{O}(n+m+\sum_{i=1}^{m} |S_i|) \cdot m^{\Theta(1/\eps)}$ work. Motwani, Naor, and Naor~\cite{motwani1989probabilistic} also present an improvement, which limits the work to a bound of roughly $\tilde{O}(n+m+\sum_{i=1}^{m} |S_i|) \cdot m^{4}$, though at the expense of increasing the depth to $\log^{\Theta(1/\eps+1)}(mn)$. Via a parallel randomized automata fooling method of Karger and Koller~\cite{karger1994NC}, Mahajan, Ramos, and Subrahmanyam~\cite{mahajan2001solving} optimized the discrepancy bound to $O(\sqrt{s\log m})$, using a high work bound of $\tilde{O}(m^{10} n^7)$, which was later improved by Harris~\cite{harris2019deterministic} to $\tilde{O}(m^4 n^3)$. All these algorithms are quite far from work efficiency, and require a large polynomial number of processors to even match the speed of sequential algorithms---a prohibitively high requirement. 

\item[(2)] Recently, Ghaffari, Grunau, and Rozhon~\cite{GGR2023Chernoff} gave a parallel deterministic algorithm  using $\tilde{O}(n+m+\sum_{i=1}^{m} |S_i|)$ work and $\poly(\log(mn))$ depth that achieves a discrepancy of $s/\poly(\log(mn))+O(\sqrt{s\log m})$. That is they split the ground set into two parts such that for each of the $m$ subsets of size at most $s$, the intersection with each part has size at most $s(1/2+1/\poly(\log(mn))$, for $s=\Omega(\poly(\log(mn))$. However, notice that the achieved discrepancy is almost quadratically higher than the desired bound. 
\end{itemize}

% \medskip
\subsection{Our results} In this paper, we present the first work-efficient deterministic parallel algorithm that achieves the optimal discrepancy in polylogarithmic depth, thus essentially settling the above question.

\begin{restatable}{theorem}{thmMainUnweighted}
\label{thm:main-unweighted}
Consider $m \geq 2$ subsets $S_{1}, S_2, \dots, S_m \subseteq [n]$ and suppose $|S_i|\leq s$ for each $i\in [m].$ There is a deterministic parallel algorithm, with $\tilde{O}(n+m+\sum_{i=1}^{m} |S_i|)$ work and $\poly(\log(mn))$ depth, that computes a vector $\chi \in \{\-1,1\}^n$ such that, for each $i\in [m]$, we have $disc(S_i)=|\sum_{j\in S_i} \chi_j| =O(\sqrt{s\log m})$.
\end{restatable}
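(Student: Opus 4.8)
The plan is to bootstrap the optimal partitioner out of the crude partitioner of Ghaffari, Grunau, and Rozhon~\cite{GGR2023Chernoff}, which in near-linear work and polylogarithmic depth produces a partition under which every $S_i$ is split up to discrepancy $|S_i|/\poly(\log(mn))+O(\sqrt{|S_i|\log m})$. First I would pass to a weighted, lattice-approximation-style formulation and run a recursion that maintains, for each constraint $i$, a vector of real coefficients $a_{ij}$ on the currently \emph{free} variables, its variance $\sigma_i^2:=\sum_j a_{ij}^2$ (initially $\leq|S_i|\leq s$), and the discrepancy already committed to $S_i$. Two invariants are targeted: $\sigma_i^2$ never grows past $O(s)$, and the total committed discrepancy stays $O(\sqrt{s\log m})$. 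Once the number of free variables has been driven down to $O(1)$, the residual is rounded arbitrarily, adding a further $O(\sqrt{\sigma_i^2})=O(\sqrt{s})$, which finishes the proof.

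The recursion proceeds in $O(\log n/\log\log(mn))$ \emph{bootstrapping rounds}, each shrinking the free-variable count by a $\poly(\log(mn))$ factor. A round works as follows: (i) apply the crude partitioner $O(\log q)$ times in balanced-binary fashion to split the free variables into parts $P_1,\dots,P_q$ so that each part carries only a $\approx 1/q$ share of every $\sigma_i^2$; (ii) in each part $P_t$ run the crude partitioner once more to obtain $\pm1$ signs $(c_j)_{j\in P_t}$ and \emph{impose the linear constraint} that all of $P_t$ is tied to one fresh free variable $y_t$ with those signs, so constraint $i$ now has the single coefficient $a_{it}'=\sum_{j\in P_t}a_{ij}c_j$ per part, with $|a_{it}'|$ equal to the crude discrepancy of $P_t$ for $S_i$; (iii) split the new coefficients by magnitude: those that are ``small'' relative to $\sqrt{\sigma_i^2}$ are rounded immediately by a direct Chernoff/Hoeffding pessimistic-estimator step, committing $O(\sqrt{\mu_i\log m})$ to $S_i$, where $\mu_i$ is the sum of squares of the coefficients rounded off, while the ``recursive'' coefficients are retained for the next round. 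The $|S_i|/\poly(\log(mn))$ term of the crude bound keeps the retained variance at $(1+o(1))\sigma_i^2$, whereas the $O(\sqrt{|S_i|\log m})$ term is absorbed entirely into the one-time ``small'' commitments; since the masses $\mu_i$ summed over all rounds are $O(s)$, these commitments combine (in the $\ell_2$ sense tracked by the pessimistic estimator) to $O(\sqrt{s\log m})$ in total, which is where the $\sqrt{\log m}$ factor is paid.

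The crux, and the technical heart of the paper, is that a single crude coloring of a part is good only \emph{on average} over the $m$ constraints, so step~(ii) can blow up $\sigma_i^2$ or overload the small-mass budget $\sum_{\text{rounds}}\mu_i$ for a few unlucky constraints, and naively these losses compound across the $O(\log n/\log\log(mn))$ rounds, destroying both invariants. The fix is the multiplicative-weights method: each constraint $i$ carries a weight $w_i$ that grows with how much of its variance (and small-mass) budget it has already spent, and both the part-splitting of (i) and the colorings of (ii) are run against the $w$-weighted constraint system, so that the crude guarantee controls the potential $\sum_i w_i\cdot(\text{loss of }i)$ rather than each constraint in isolation; a potential/regret argument then shows that no constraint ever overspends, so both invariants hold for every $i$ throughout. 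Finally, work-efficiency: because the free-variable count drops geometrically while the number of constraints and the total number of nonzero coefficients never increase, the recursion levels add only a polylogarithmic overhead over a single near-linear-work, polylogarithmic-depth invocation of the crude partitioner, yielding the claimed $\tilde{O}(n+m+\sum_{i=1}^m|S_i|)$ work and $\poly(\log(mn))$ depth.
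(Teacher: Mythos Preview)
Your high-level scaffolding---a recursion that shrinks the number of free variables by a $\poly(\log(mn))$ factor per level, the crude GGR partitioner for the balanced splitting in step~(i), and multiplicative weights to police per-constraint variance growth---matches the paper. The gap is in step~(ii), where you obtain the $\pm 1$ signs inside each part $P_t$ by another call to the crude partitioner.

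The crude partitioner gives, for \emph{every} constraint, the worst-case bound $|a'_{it}|\le |S_i\cap P_t|/k + O(\sqrt{|S_i\cap P_t|\log m})$; it does \emph{not} give a weighted second-moment bound of the form $\sum_i w_i(a'_{it})^2\le(1+o(1))\sum_i w_i|S_i\cap P_t|$. With parts of size $\poly(\log(mn))$ (as forced by your ``shrink by $\poly(\log)$'' schedule), the second term dominates and $(a'_{it})^2=\Theta(|S_i\cap P_t|\log m)$ for \emph{every} constraint, so $\sum_t(a'_{it})^2=\Theta(|S_i|\log m)$: a uniform $\log m$ blow-up, not an occasional failure for ``a few unlucky constraints.'' MWU cannot repair a loss that is the same for all constraints---it needs an oracle whose \emph{weighted-average} loss is at most $1$, and the crude coloring is not such an oracle. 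Your step~(iii) does not rescue this either: with $\poly(\log)$-size parts every $a'_{it}$ is ``small'' in your sense, so you would round everything off at once, committing $O(\sqrt{\mu_i\log m})$ with $\mu_i=\Theta(|S_i|\log m)$, i.e.\ $O(\sqrt{s}\log m)$ rather than $O(\sqrt{s\log m})$. This is exactly the warm-up suboptimality the paper isolates before fixing it.

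The missing ingredient, and what the paper actually does in step~(ii), is this: inside each $\poly(\log)$-size part it runs the sequential conditional-expectations method \emph{augmented with an importance-weighted quadratic constraint}, which simultaneously delivers the per-set bound $disc_i^2\le O(\log m)\,\sigma_i^2$ and, crucially, the average bound $\sum_i imp(i)\,disc_i^2\le(1+1/M)\sum_i imp(i)\,\sigma_i^2$. The latter is precisely the MWU oracle guarantee; running MWU over $\poly(\log)$ groups of parts then keeps every $\sigma_i^2$ within a $(1+1/\log^2 n)$ factor per recursion level, and no discrepancy is ever committed until the base case. The crude partitioner is used only for balanced splitting, never for sign selection. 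For the final step that converts the $O(\sqrt{n\log m})$ guarantee into $O(\sqrt{s\log m})$, the paper additionally uses a pairwise-independent derandomization (Luby-style), which again enjoys the needed weighted-average second-moment property.
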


\medskip

\paragraph{Weighted set balancing} Our method, with some extra work, applies to the more general weighted discrepancy problem defined as follows:  Consider an $m\times n$ matrix $A$, and $m$ subsets $S_{1}, S_2, \dots, S_m \subseteq [n]$. We want a vector $\chi \in \{\-1,1\}^n$ that gives a small weighted discrepancy $disc(i):=|\sum_{j=1}^{n} a_{ij} \chi_j|$ for each set $i\in[m]$. For a random $\chi$, Hoeffding's bound implies $disc(i) = O(\sqrt{\sum_{j =1}^n a_{ij}^2 \cdot \log m})$, for each set $S_i$, with high probability. We give a work-efficient deterministic parallel algorithm that gets the same guarantee.

\begin{restatable}{theorem}{thmMainWeighted}
\label{thm:main-weighted}
Let $n,m\in \mathbb{N}$ with $m \geq 2$, and $A \in \mathbb{R}^{m \times n}$. There is a deterministic parallel algorithm algorithm with work $\tilde{O}(nnz(A) + n + m)$ and depth $\poly(\log(mn))$ that computes a vector $\chi \in \{-1,1\}^n$ such that, for every $i \in [m]$, it holds that $disc(i) = |\sum_{j=1}^{n} a_{ij} \chi_j| = O(\sqrt{\sum_{j =1}^n a_{ij}^2 \cdot \log m})$.
\end{restatable}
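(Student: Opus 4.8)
The plan is to run the same bootstrapping recursion that underlies \Cref{thm:main-unweighted}, but instantiated with Hoeffding's inequality in place of the Chernoff bound, and with the notion of the ``size'' of a set replaced by the ``variance'' $\sum_j a_{ij}^2$ of the corresponding row. Before entering the recursion I would make two normalizations. First, scale each row $i$ by $1/\sqrt{\sum_j a_{ij}^2}$ (rows that are identically zero need no balancing and are dropped); since both $disc(i)$ and the target $O(\sqrt{\sum_j a_{ij}^2\log m})$ are homogeneous of degree one in row $i$, this is without loss of generality, and afterwards $\sum_j a_{ij}^2=1$ and hence $|a_{ij}|\le 1$ for every $i,j$, so the goal becomes $disc(i)=O(\sqrt{\log m})$. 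Second, round every entry to the nearest multiple of $2^{-\Theta(\log(mn))}$; this perturbs each $disc(i)$ by at most $o(1)$ and guarantees only $O(\log(mn))$ significant bits per nonzero entry, so any bit-level processing inflates $nnz(A)$ and the work by at most a polylogarithmic factor.

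The one-line instinct --- bit-decompose every entry and feed the resulting $O(nnz(A)\log(mn))$ unweighted ``bit-sets'' to \Cref{thm:main-unweighted} --- is lossy: writing $B_{it}$ for the set of columns whose $t$-th bit in row $i$ is $1$, one has $\sum_t 2^{-2t}|B_{it}|\le \sum_j a_{ij}^2 = 1$, so even a per-set guarantee $disc(S)=O(\sqrt{|S|\log m})$ only yields $disc(i)\le\sum_t 2^{-t}O(\sqrt{|B_{it}|\log m})=O(\sqrt{(\log n)(\log m)})$ after Cauchy--Schwarz over the $\Theta(\log n)$ bit levels, and a uniform-$s$ guarantee is worse still. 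So instead I would open up the recursion. Each iteration is as in the unweighted proof: run the crude partitioning subroutine to split the surviving columns into many small blocks; inside each block impose one linear constraint on the $\chi_j$'s, chosen so that the number of free variables drops by a constant factor while, for every row $i$, the conditional variance $\sum_{j\in\text{block}}a_{ij}^2$ restricted to that block is essentially preserved; and feed the small per-iteration ``variance loss'' of each row into a multiplicative-weights potential that keeps the cumulative loss over all $O(\log n)$ iterations at $1+o(1)$. The only genuinely weighted ingredient is what the partitioner must promise --- that it spreads each row's mass $\sum_j a_{ij}^2$, not its support size, evenly across the blocks --- and this is obtained from the unweighted partitioner by the standard device of splitting each row into $O(\log(mn))$ magnitude classes $[2^{-t-1},2^{-t})$ and treating the columns in a class as an unweighted set of common weight $2^{-t}$, so that balanced splitting of supports within each class yields balanced splitting of mass.

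Once the recursion bottoms out, a residual instance with only $\poly(\log(mn))$ columns remains; I would finish it directly by the method of conditional expectations, using the usual Hoeffding pessimistic estimator $\sum_i (e^{\lambda_i D_i}+e^{-\lambda_i D_i})\,e^{\lambda_i^2 V_i/2}$, with $D_i$ the partial discrepancy of row $i$ on the already-fixed coordinates and $V_i=\sum_{j\ \text{unset}}a_{ij}^2$ (validity uses $\cosh x\le e^{x^2/2}$). Because the residual is of polylog size this costs only polylog work and depth. Combining the conclusion of the recursion --- each row's accumulated variance inflation is at most $1+o(1)$ --- with Hoeffding's bound applied to the residual rounding gives $disc(i)=O(\sqrt{(1+o(1))\log m})=O(\sqrt{\log m})$ in the normalized instance, i.e.\ $disc(i)=O(\sqrt{\sum_j a_{ij}^2\log m})$ after undoing the scaling; the work is $\tilde O(nnz(A)+n+m)$ (the magnitude bucketing, the bit-truncation, and the $O(\log n)$ iterations each contribute only polylog overhead) and the depth is $\poly(\log(mn))$.

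I expect the main obstacle to be exactly the weighted guarantee of the crude partitioner together with the variance bookkeeping: one must show that the multiplicative-weights potential still certifies a $1+o(1)$ total variance loss when a row's $\ell_2^2$-mass is spread inhomogeneously over its columns and over magnitude scales, and that the per-block linear constraint can always be chosen to preserve the weighted conditional variance of \emph{every} row simultaneously (a pigeonhole/linear-algebra statement, but keeping the constants from compounding over $\log n$ iterations is where the MWU must be used sharply). A secondary nuisance is that a single $\chi$ must serve rows of wildly different total mass at once: the per-row normalization fixes the target thresholds, but the partitioner and the constraints see all rows together, so the magnitude bucketing has to be done per row and then merged with care.
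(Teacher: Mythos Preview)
Your high-level plan matches the paper's: recurse by halving the number of variables, with MWU controlling per-row variance loss, and build the weighted partitioner from the unweighted one via magnitude bucketing. But there is a genuine gap in the step ``spread each row's mass $\sum_j a_{ij}^2$ evenly across the blocks.''

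Consider a row where a single entry $a_{ij_0}$ carries almost all the mass, say $a_{ij_0}^2 \approx 1$ after your normalization. Your magnitude bucketing puts $j_0$ in a bucket of size $1$; partitioning a size-$1$ set into $L$ parts puts the entire mass into one part and zero into the others. More generally, the guarantee your bucketing yields (this is exactly \Cref{lem:weighted_partition} in the paper) is
\[
\sum_{j\in P_\ell} a_{ij}^2 \;\le\; \frac{1+\eps}{L}\sum_j a_{ij}^2 \;+\; O\bigl(\poly(\log)/\eps^3\bigr)\,(a_{\max})^2,
\]
and the additive $(a_{\max})^2$ term is lethal once $(a_{\max})^2$ is comparable to the row's total variance. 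The unweighted recursion (\Cref{lem:inductionUnweightedDisc}) only works because it \emph{assumes} $\max_{i,j} a_{ij}^2 \le \frac{\poly\log}{n}\disc$; that hypothesis is automatically propagated by its own recursion, but it is simply false for general weighted inputs and is not restored by the normalizations you do. This is not a matter of constants compounding; no MWU weighting can repair it, because the partitioner cannot spread mass that is concentrated in a single column. Your ``secondary nuisance'' paragraph is about inter-row scale differences, which the row normalization indeed handles; the real obstruction is intra-row concentration.

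The paper's fix is qualitatively different from anything in your outline. For each row it separates the columns into ``heavy'' ones (those lying in small magnitude buckets --- at most $\poly(\log)$ many per row) and the rest. The rest are handled exactly by the even-mass partitioning you describe. The heavy ones are handled by an \emph{isolation} partitioning (\Cref{thm:partition_with_isolation}): a fine partition into $\tilde\Theta(n)$ parts of size $O(\log n)$ such that, for every row simultaneously, almost all heavy columns land alone in their part, with an exponential-tail guarantee $\sum_i imp_i\,(1+\delta)^{\#\text{collisions}_i} \le (1+o(1))\sum_i imp_i$ on the few that collide. Isolated heavy columns contribute exactly $a_{ij}^2$ to the next-level variance, so they are harmless. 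The recursion (\Cref{lem:recursion_weighted}) then carries, for each row, a budget of $O(\log m)$ allowed collisions summed over all levels; the collided columns are set aside into $O(\log m)$ singleton pieces, and the final bound combines the Hoeffding-type bound on the non-collided part with Cauchy--Schwarz over those $O(\log m)$ pieces. Your proposal has the even-mass half of this picture but no mechanism for the heavy columns.
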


As a very special case, this implies in the unweighted case that each set $S_{i}$ has discrepancy $O(\sqrt{|S_i| \log m})$. With this weighted set balancing algorithm, via a reduction of \cite{motwani1989probabilistic}, our result generalizes much further, to a problem known as \textit{lattice approximation}~\cite{raghavan1986probabilistic,motwani1989probabilistic}. Informally, this is the problem of finding an integral point that approximates a fractional solution, for a number of given linear constraints.

\bigskip
\paragraph{Lattice approximation} Suppose we are given an $m\times n$ matrix $A$, with each entry $a_{ij}\in [0,1]$, as well as a vector $\mathbf{p}\in [0,1]^{n}$. The \textit{lattice approximation} problem asks for a vector $\mathbf{q}\in \{0,1\}^n$ with a small bound on $|\sum_{j=1}^{n} a_{ij} q_j - \sum_{j=1}^{n} a_{ij} p_j|$ for each $i\in [m]$.

Notice that if we set $\mathbf{q}$ randomly where $Pr[q_j=1]=p_j$ for each $j\in [n]$, by Chernoff bound we get a solution such that, with high probability, for each $i\in [m]$, we have $|\sum_{j=1}^{n} a_{ij} q_j - \sum_{j=1}^{n} a_{ij} p_j| \leq O(\sqrt{\mu_i\log m}+\log m)$, where $\mu_i = \sum_{j=1}^{n} a_{ij} p_j$. This is what Raghavan called \textit{randomized rounding for linear programs}~\cite{raghavan1986probabilistic}. Motwani, Naor, and Naor~\cite{motwani1989probabilistic} provided a parallel algorithm that achieves discrepancy $O(\mu^{\eps}\cdot \sqrt{\mu_i\log m}+\log^{1/(1-2\eps)} m)$ for each $i\in [m]$, using a large polynomial work, with $\Theta(1/\eps)$ in the exponent of the polynomial. The discrepancy was improved later to the ideal bound ~\cite{mahajan2001solving, harris2019deterministic}, but using work bounds that are large polynomials and are thus far from work efficiency. The method of Ghaffari, Grunau, and Rozhon~\cite{GGR2023Chernoff} gives a work-efficient algorithm, but achieving only a slightly sublinear discrepancy of $\mu_i/\poly(\log(mn))$ when the expectation $\mu_i$ is lower bounded by a sufficiently high $\poly(\log(mn))$. Our result stated below provides an efficient deterministic parallel algorithm that achieves the same result as the Chernoff bound. It thus can be viewed as an efficient parallel derandomization of randomized rounding for linear programs. 
\begin{restatable}{theorem}{thmLattice}
\label{thm:main-lattice}
    Suppose we are given an $m\times n$ matrix $A$, with each entry $a_{ij}\in [0,1]$, as well as a vector $\mathbf{p}\in [0,1]^{n}$. There is a deterministic parallel algorithm that computes a vector $\mathbf{q}\in \{0,1\}^n$ such that, for each $i\in [m]$, we have $|\sum_{j=1}^{n} a_{ij} q_j - \sum_{j=1}^{n} a_{ij} p_j| \leq O(\sqrt{\mu_i\log m} + \log m)$, where $\mu_i = \sum_{j=1}^{n} a_{ij} p_j$. The algorithm has $\poly(\log(mn))$ depth and $\tilde{O}(n+m+nnz(A))$ work, where $nnz(A)$ denotes the number of nonzero entries in the matrix $A$.
\end{restatable}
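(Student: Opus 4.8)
The plan is to derive \Cref{thm:main-lattice} from \Cref{thm:main-weighted} by the classical reduction from lattice approximation to (repeated) set balancing --- the bit-by-bit rounding scheme of Raghavan~\cite{raghavan1986probabilistic}, parallelized by Motwani, Naor, and Naor~\cite{motwani1989probabilistic} --- now instantiated with our work-efficient weighted set balancing algorithm at each bit level. We may assume $m \ge 2$. First I would \emph{discretize}: fix $b = \Theta(\log n)$ large enough that $n 2^{-b} \le 1$, and replace each $p_j$ by $\tilde p_j := \lfloor 2^b p_j \rfloor / 2^b$, an integer multiple of $2^{-b}$ in $[0,1]$. Since $\sum_j a_{ij}|p_j - \tilde p_j| \le \sum_j |p_j - \tilde p_j| < n 2^{-b} \le 1$ and $\log m = \Omega(1)$ for $m \ge 2$, it suffices to round $\tilde{\mathbf p}$ to some $\mathbf q \in \{0,1\}^n$ with $|\sum_j a_{ij} q_j - \tilde\mu_i| = O(\sqrt{\tilde\mu_i\log m} + \log m)$ for every $i$, where $\tilde\mu_i := \sum_j a_{ij}\tilde p_j$; this yields the claimed bound because $\tilde\mu_i \le \mu_i + 1$, so $\sqrt{\tilde\mu_i\log m} = O(\sqrt{\mu_i\log m} + \log m)$.

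The core is a loop over bit positions $t = b, b-1, \dots, 1$. I maintain vectors $\mathbf p^{(b+1)} := \tilde{\mathbf p}, \mathbf p^{(b)}, \dots, \mathbf p^{(1)}$, where each coordinate of $\mathbf p^{(t)}$ lies in $[0,1]$ and is an integer multiple of $2^{-(t-1)}$; the output is $\mathbf q := \mathbf p^{(1)} \in \{0,1\}^n$. Given $\mathbf p^{(t+1)}$, let $J^{(t)} := \{ j : \text{the } 2^{-t}\text{-bit in the (terminating) binary expansion of } p^{(t+1)}_j \text{ is } 1 \}$ be the set of \emph{active} coordinates; since each $p^{(t+1)}_j$ is a multiple of $2^{-t}$ in $[0,1]$, for $j \in J^{(t)}$ one has $2^{-t} \le p^{(t+1)}_j \le 1 - 2^{-t}$, so shifting such a coordinate by $\pm 2^{-t}$ keeps it in $[0,1]$. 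Apply \Cref{thm:main-weighted} to the submatrix $A^{(t)}$ of $A$ consisting of the columns in $J^{(t)}$ (an instance with ground set $J^{(t)}$ and the same $m$ rows) to obtain signs $\chi^{(t)}_j \in \{-1,1\}$ with $D^{(t)}_i := |\sum_{j \in J^{(t)}} a_{ij}\chi^{(t)}_j| = O(\sqrt{w^{(t)}_i\log m})$ for all $i$, where $w^{(t)}_i := \sum_{j \in J^{(t)}} a_{ij}^2$. Then set $p^{(t)}_j := p^{(t+1)}_j + \chi^{(t)}_j 2^{-t}$ for $j \in J^{(t)}$ and $p^{(t)}_j := p^{(t+1)}_j$ otherwise; this clears the $2^{-t}$-bit, so $\mathbf p^{(t)}$ is a multiple of $2^{-(t-1)}$, as required.

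For the error analysis, put $\mu^{(t)}_i := \sum_j a_{ij}p^{(t)}_j$ and $E_i := \sum_{t=1}^b 2^{-t} D^{(t)}_i$. Since $\mu^{(t)}_i - \mu^{(t+1)}_i = 2^{-t}\sum_{j\in J^{(t)}} a_{ij}\chi^{(t)}_j$, a telescoping bound gives $|\mu^{(1)}_i - \tilde\mu_i| \le E_i$ and, for every $t$, $\mu^{(t+1)}_i \le \tilde\mu_i + E_i$. The crucial self-bounding estimate combines the dyadic weights with the fact that active coordinates carry probability at least $2^{-t}$: for $j \in J^{(t)}$, since $a_{ij} \in [0,1]$, we have $a_{ij}^2 \cdot 2^{-t} \le a_{ij}\, p^{(t+1)}_j$, hence $2^{-t} w^{(t)}_i \le \mu^{(t+1)}_i \le \tilde\mu_i + E_i$. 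Plugging in,
\[
E_i \;=\; \sum_{t=1}^b 2^{-t} D^{(t)}_i \;\le\; O\!\big(\sqrt{\log m}\,\big)\sum_{t=1}^b 2^{-t/2}\sqrt{\,2^{-t} w^{(t)}_i\,} \;\le\; O\!\big(\sqrt{\log m}\,\big)\,\sqrt{\tilde\mu_i + E_i}\,\sum_{t\ge 1} 2^{-t/2} \;=\; O\!\big(\sqrt{(\tilde\mu_i + E_i)\log m}\,\big),
\]
and solving this quadratic inequality in $E_i$ gives $E_i = O(\sqrt{\tilde\mu_i\log m} + \log m)$; therefore $|\sum_j a_{ij} q_j - \mu_i| \le E_i + 1$ lies within the target.

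Finally, the resource bounds compose because the loop has only $b = O(\log(mn))$ levels, executed sequentially. Each level makes one call to \Cref{thm:main-weighted}, of depth $\poly(\log(mn))$ and work $\tilde{O}(nnz(A^{(t)}) + |J^{(t)}| + m)$, plus routine bookkeeping (extracting the active columns and compacting $A^{(t)}$ via parallel prefix sums, updating the values $p^{(t)}_j$ including carries) in $O(\log(mn))$ depth and $\tilde{O}(nnz(A) + n)$ work. Since every nonzero entry and every column of $A$ appears in at most $b$ of the submatrices $A^{(t)}$, summing over the $b$ levels yields total depth $\poly(\log(mn))$ and total work $\tilde{O}\big(b\cdot(nnz(A) + n + m)\big) = \tilde{O}(nnz(A) + n + m)$. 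The one genuinely delicate step is the error analysis: getting the sharp bound $O(\sqrt{\mu_i\log m} + \log m)$ with \emph{no} spurious $\polylog(mn)$ factor relies precisely on the displayed inequality --- the geometric weights $2^{-t}$ make $\sum_t 2^{-t/2}$ a constant, while the observation that active coordinates at level $t$ have probability $\ge 2^{-t}$ keeps each term $2^{-t}w^{(t)}_i$ below the running weighted mean, which in turn is controlled by $\tilde\mu_i + E_i$, closing the loop. Discretization, the $\{0,1\}$-valued output, and the composition of per-level work and depth are routine.
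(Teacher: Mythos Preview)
Your proof is correct and follows essentially the same approach as the paper's: both reduce lattice approximation to $O(\log n)$ sequential applications of \Cref{thm:main-weighted} via the bit-by-bit rounding scheme of Motwani--Naor--Naor, processing from the least significant bit upward. The paper defers the error-summation calculation to \cite[Lemma~8.3]{motwani1994probabilisticJournal}, whereas you spell it out explicitly via the self-bounding inequality $E_i = O\big(\sqrt{(\tilde\mu_i + E_i)\log m}\big)$; this is exactly the intended argument.
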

This result follows from \Cref{thm:main-weighted} along with a reduction of Motwani, Naor, Naor detailed in their journal version \cite[Sections 8]{motwani1994probabilisticJournal}, for which we provide a brief sketch in \Cref{app:lattice}.

\paragraph{Example application---edge coloring}
By Vizing's theorem, any graph with maximum degree $\Delta$ admits an edge coloring with at most $\Delta+1$ colors. It is not known how to achieve this result using a $\poly(\log n)$ depth parallel algorithm (in general graphs; the bipartite case is easy and known even with $\Delta$ colors~\cite{lev1981fast}.) However there are algorithms that come close to this bound, and our focus is on deterministic parallel algorithms. Motwani, Naor, and Naor~\cite{motwani1989probabilistic} presented a $\poly(\log n)$-depth deterministic parallel algorithm with $\Delta + \Delta^\eps \cdot O(\sqrt{\Delta \log n})$ colors, and using work $m^{\Theta(1/\eps)}$ work. This was essentially a direct application of the set balancing problem, by following the edge coloring framework of Karloff and Shmoys~\cite{karloff1987edge-coloring}. The improvements of Mahajan et al~\cite{mahajan2001solving} and Harris~\cite{harris2019deterministic} improved the number of colors to $\Delta + O(\sqrt{\Delta \log n})$, but using prohibitively large polynomial work bounds. 
Plugging our set balancing algorithm instead, we get a deterministic parallel edge-coloring algorithm with the number of colors improved to $\Delta + O(\sqrt{\Delta \log n})$, with near-linear work. A proof sketch is presented in \Cref{app:edge-coloring}.

\begin{restatable}{corollary}{CrlEdgeColoring}
\label{crl:edge-coloring}
    There is a deterministic parallel algorithm that, given an undirected graph $G=(V, E)$ with maximum degree $\Delta$, computes an edge-coloring of it with $\Delta + O(\sqrt{\Delta \log n})$ colors, using $\poly(\log n)$ depth and $\tilde{O}(m)$ work, where $n=|V|$ and $m=|E|$.
\end{restatable}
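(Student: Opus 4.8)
The plan is to follow the recursive edge-coloring framework of Karloff and Shmoys~\cite{karloff1987edge-coloring}, in the form used by Motwani, Naor, and Naor~\cite{motwani1989probabilistic}, but to feed it the sharp set-balancing routine of \Cref{thm:main-unweighted} in place of the weaker primitive available to them; this substitution is exactly what upgrades their $\Delta+\Delta^{\eps}\cdot O(\sqrt{\Delta\log n})$ guarantee to $\Delta+O(\sqrt{\Delta\log n})$.

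The one new ingredient is the reduction of a \emph{balanced edge split} to set balancing. Given a subgraph $H$ with maximum degree $d$ on vertex set $V(H)\subseteq V$, take $E(H)$ as the ground set and, for each $v\in V(H)$, let $S_v\subseteq E(H)$ be the set of edges incident to $v$; then $|S_v|=\deg_H(v)\le d$ and there are $|V(H)|\le n$ sets. \Cref{thm:main-unweighted} produces $\chi\in\{-1,1\}^{E(H)}$ with $\big|\sum_{e\in S_v}\chi_e\big|=O(\sqrt{d\log n})$ for every $v$, using $\tilde{O}(|E(H)|)$ work and $\poly(\log n)$ depth; writing $E_{+}=\{e:\chi_e=1\}$ and $E_{-}=\{e:\chi_e=-1\}$ splits $H$ into two edge-disjoint subgraphs, each of maximum degree at most $\lceil d/2\rceil+O(\sqrt{d\log n})$.

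From here one runs the framework verbatim: the palette of available colors is partitioned in step with the graph as the recursion descends, and an induction over the $O(\log\Delta)$ recursion levels shows that the cumulative degree loss --- hence the number of colors spent beyond $\Delta$ --- is controlled by a geometric series $\sum_{k\ge 0} O\!\big(\sqrt{(\Delta/2^{k})\log n}\big)=O(\sqrt{\Delta\log n})$, whereas with a discrepancy of $d^{\Omega(1)}\cdot O(\sqrt{d\log n})$ per split these losses would instead compound to a $\Delta^{\Omega(1)}$ factor; the low-degree base case, reached once the maximum degree has fallen to $\poly(\log n)$, is completed by a standard NC edge-coloring subroutine. The depth is $\poly(\log n)$ because each level has $\poly(\log n)$ depth, and the work is $\tilde{O}(m)$ because the invocations of \Cref{thm:main-unweighted} within any one level act on edge-disjoint subgraphs and so cost $\tilde{O}(m)$ in aggregate. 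The full reduction to the framework of~\cite{karloff1987edge-coloring, motwani1989probabilistic} is given in \Cref{app:edge-coloring}.

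The main obstacle is the color accounting: one has to check that feeding the optimal $O(\sqrt{d\log n})$ discrepancy into the recursion makes the per-level losses telescope rather than compound, and that the palette can be apportioned among the recursively generated subgraphs so that the grand total is $\Delta+O(\sqrt{\Delta\log n})$ --- alongside the routine but needed bookkeeping to keep the work near-linear and to dispose of the $\poly(\log n)$-degree leaves.
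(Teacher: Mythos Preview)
Your reduction to set balancing is different from the paper's, and the difference breaks the color count. You split the \emph{edges} of $H$ into $E_+$ and $E_-$ and then recurse on each half with a separate palette; this gives the recurrence $q(d)\le 2\,q\!\big(d/2+c\sqrt{d\log n}\big)$. The geometric-series claim $\sum_k O(\sqrt{(\Delta/2^k)\log n})=O(\sqrt{\Delta\log n})$ tracks the degree loss along one root-to-leaf path, but the number of extra colors is $2^K d_K-\Delta$, not that sum. Writing $b_k=2^k d_k$ one gets $b_{k+1}\ge b_k+\Theta(\sqrt{2^k\,d\log n})$, so $b_K-\Delta=\Theta\!\big(\sqrt{\Delta\log n}\cdot\sqrt{2^K}\big)$; stopping at $d_K=\poly(\log n)$ makes $2^K=\Theta(\Delta/\poly(\log n))$ and hence $q(\Delta)-\Delta=\Theta(\Delta/\poly(\log n))$, far larger than $O(\sqrt{\Delta\log n})$. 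Equivalently, the attempted induction $q(d)\le d+C\sqrt{d\log n}$ fails: plugging in yields $2c+\sqrt{2}\,C\le C$, which has no positive solution.

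The paper instead applies \Cref{thm:main-unweighted} to the \emph{vertices}: the ground set is $V$ and the set for $v$ is its neighborhood $N(v)$. This splits $V$ into $V_0\sqcup V_1$ with every $|N(v)\cap V_i|\le \Delta/2+O(\sqrt{\Delta\log n})$. The bipartite graph $G[V_0,V_1]$ is colored exactly with $\Delta'=\Delta/2+O(\sqrt{\Delta\log n})$ colors (Lev--Pippenger--Valiant), and then $G[V_0]$ and $G[V_1]$ are colored recursively \emph{with the same palette as each other} (disjoint from the bipartite colors). The recurrence is therefore $q(\Delta)=q(\Delta')+\Delta'$ with a single recursive call, which unrolls to $\sum_k (\Delta/2^{k+1}+O(\sqrt{(\Delta/2^k)\log n}))=\Delta+O(\sqrt{\Delta\log n})$. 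The missing idea in your sketch is exactly this: extracting a bipartite piece that can be colored optimally, so that only one recursive call remains and the $2^k$ blow-up never appears.
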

% \begin{corollary}\label{crl:edge-coloring}
%     There is a deterministic parallel algorithm that, given an undirected graph $G=(V, E)$ with maximum degree $\Delta$, computes an edge-coloring of it with $\Delta + O(\sqrt{\Delta \log n})$ colors, using $\poly(\log n)$ depth and $\tilde{O}(m)$ work, where $n=|V|$ and $m=|E|$.
% \end{corollary}
\subsection{Method Overview}
Our method is different than those of the prior work~\cite{motwani1989probabilistic, berger1989simulating, mahajan2001solving, harris2019deterministic,GGR2023Chernoff}. In short, the method of ~\cite{motwani1989probabilistic, berger1989simulating} works via an efficient binary search in a $k$-wise independent space for $k=\log m/(\eps \log s)$, trying to find a point in this space that satisfies all the discrepancy constraints. Methods of \cite{mahajan2001solving, harris2019deterministic}, which in part rely on those of \cite{karger1994NC}, work via a scheme of building pseudorandom spaces that fool certain automata, and they seem to inherently require a large polynomial work. Finally, the method of \cite{GGR2023Chernoff} determines the $n$ random variables as a result of a $\poly(\log (mn))$-iteration random walk, with merely pairwise independence in each iteration, and derandomizes each iteration work-efficiently by maintaining certain pairwise objectives. 

The method we present in this paper can be viewed as an orthogonal approach, which makes use of algorithms with weaker discrepancy bounds as a basic partitioning tool, and sharpens their discrepancy via bootstrapping. Indeed, our method can be applied on top of the previous methods of \cite{motwani1989probabilistic, berger1989simulating,GGR2023Chernoff} to improve their discrepancy to the optimal bound (though with different work bounds). We use in particular the latter result since this yields a work-efficient algorithm overall. Below, we provide a high-level and intuitive (though, admittedly imprecise) overview of our approach. The actual technical proofs will deviate from this outline in some parts, due to details not discussed here.

%%%%%%
\medskip
\paragraph{High-level approach} On a high level, our general approach is to partition the $n$ random variables $\chi_{j}\in\{-1, 1\}$ for $j\in[n]$ into $L\leq n/2$ parts $P_{1}\sqcup \dots \sqcup P_{L}$ and constrain the variables in each part to a one-dimensional space. That is, for each part $P_{t}$, we compute a \textit{tentative} $\chi_{j}$ for $j\in P_{t}$, but we allow that potentially all of these variables in part $P_t$ can be negated. More concretely, for each $j, j'\in P_{t}$, their product $\chi_j\chi_{j'}$ in the final $\chi$ is fixed, so once we know the final $\chi_j$, we know $\chi_{j'}$ for all $j'\in P_{t}$. Then the remaining problem will be to determine whether each part's tentative assignment should be negated or not. This is like computing a vector in $\{-1, 1\}^{L}$, indicating the negations, such that we satisfy some linear constraints. We solve that recursively as another (\textit{weighted}) instance of the set balancing problem, with $L\leq n/2$ variables. In the base case, once the number of variables is as small as $\poly(\log(nm))$, we can fix them one by one via conditional expectations~\cite{raghavan1986probabilistic}. 

The crucial point is how to perform the recursive scheme. Note that the desired output is that each set's discrepancy is within an $O(\sqrt{\log m})$ of its standard deviation upper bound $\sqrt{s}$. Also, this is roughly the best one can hope for, given that we want to achieve this for $m$ sets simultaneously. Thus, we need to perform the recursion such that, in each recursion level, we do not increase the \textit{standard deviation} of each set (by more than a $1+1/\log n$ factor, as we will have roughly $\log n$ recursion levels). Furthermore, some care and much extra work will be needed due to the \textit{weightedness} introduced in the recursion, even if we start with an unweighted instance; we will ignore that for now for this brief overview.

%%%%%%%%
\medskip
\paragraph{Each recursion level, and how to fix the variables in each part}
Let us discuss how we fix the variables inside each part. We later discuss how the partition is computed, once we understand the properties needed from the partition. Consider a fixed part $P_{t}$, and a set $S_{i}$ for a fixed $i\in [m]$. Consider a random $\chi\in\{-1,1\}^n$, let us define its signed discrepancy $sdisc({S_i)}=\sum_{j\in S_i} \chi_j$, in contrast to the absolute discrepancy $disc(S_i)=|\sum_{j\in S_i} \chi_j|$. We have $\E[sdisc(S_i)]=0$. Instead of working with the \textit{standard deviation}, we zoom in on the variance of $sdisc(S_i)$, due to its nice additiveness properties. We have $Var(sdisc(S_i)) = \E[sdisc^2(S_i)] - (\E[sdisc^2(S_i)])^2 = \E[disc^2(S_i)].$
In particular, notice that $\E[disc^2(S_i)]=\E[(\sum_{j\in S_i} \chi_j)^2] = \E[\sum_{j\in S_i} (\chi_j)^2]=|S_i|\leq s$. Initially, the variables in $S_i\cap P_{t}$ contribute exactly $|S_i \cap P_{t}|$ to $Var(sdisc(S_i))= \E[disc^2(S_i)]$. After constraining these variables, their contribution to $\E[disc^2(S_i)]$ will be exactly equal to $(\sum_{j\in S_i\cap P_t} \chi_j)^2$, as it will be independent of the contributions of other parts. Thus, we want to determine $\chi_j$ for $j\in P_{t}$ such that $(\sum_{j\in S_i\cap P_t} \chi_j)^2$ remains almost the same bound as $|S_i \cap P_{t}|$. 

Unfortunately, that is impossible! This is exactly a set balancing problem for sets $S_1\cap P_t$, $S_2\cap P_t$, \ldots, $S_m\cap P_t \subseteq P_{t}$, and we know that, regardless of how we choose $\chi_j$ for $j\in P_{t}$, some set $S_i$ will have $\frac{(\sum_{j\in S_i\cap P_t} \chi_j)^2}{|S_i\cap P_t|} = \Omega(\log m)$; this is a $\Theta(\log m)$ factor loss in the variance. Thus, it seems we would lose a $\Theta(\log m)$ factor in variance in each recursion level. See the warm-up in \Cref{thm:rootdepth-suboptimal} where we follow this scheme for a two-level recursion, getting an $\tilde{O}(\sqrt{n})$ depth parallel algorithm with a $\sqrt{\log m}$ loss in discrepancy. If we follow this for $\log n$ recursion levels, the variance losses would multiply to $({\log m})^{\log n}$---a useless bound.

\medskip
\paragraph{Enforcing an average-guarantee, and leveraging it via multiplicative weights update}
There is something to be optimistic about: even though the worst variance loss factor among the $m$ sets $S_1\cap P_t$, $S_2\cap P_t$, \ldots, $S_m\cap P_t$ will be $\Theta(\log m)$, the average loss across these sets should be smaller. Indeed, by Chernoff bound, for a random $\chi$, the probability of a $z$ factor loss is $exp(-\Theta(z^2))$. More useful for us, we have $\E[\sum_{i=1}^{m} (\sum_{j\in S_i\cap P_t} \chi_j)^2] = \sum_{i=1}^{m} |S_i\cap P_t|$. We will be able to enforce this (and even a weighted variant of it) as an actual constraint in the derandomization process, in a work-efficient manner by utilizing that it uses only pairwise independence. However, this average across different sets $S_1\cap P_t$, \ldots, $S_m \cap P_t$ in one part $P_t$ is not useful on its own. We want \textit{each} set $S_i$ to have a small discrepancy. So we need a different averaging, for each set $S_i$ across different parts $P_t$. That is, we need each set $S_i$ not to experience too much loss, once we add up the new variances in $S_i\cap P_1$, $S_i\cap P_2$, \dots, $S_i\cap P_L$. That is, we want $\sum_{t=1}^{L}(\sum_{j\in S_i\cap P_t} \chi_j)^2 \approx s$ where the approximation can tolerate a $1+1/\log n$ factor.

For that, we appeal to the Multiplicative Weights Update (MWU) method~\cite{arora2012multiplicative}. Suppose we work through the parts $P_1$, $P_2$, \dots, $P_L$ sequentially in $L$ rounds (the algorithm will be different, as we will discuss, since depth $L$ would be expensive). In each round, each set $S_i$ will have an importance value $imp(i)$, such that sets with larger hitherto variance losses have higher importance. We then enforce an importance-weighted variant of the previous average guarantee across the sets $S_1\cap P_t$, \ldots, $S_m \cap P_t$. This in effect tries to have smaller variance losses in this round for sets $S_i$ that have higher importance. Roughly speaking, this will force an averaging for each set $S_i$ across different parts $P_{t}$. As a result, in the end, the overall variance loss for each set $S_i$ summed up over all its parts  $S_i\cap P_1$, $S_i\cap P_2$, \dots, $S_i\cap P_L$ will be small and we get $\sum_{t=1}^{L}(\sum_{j\in S_i\cap P_t} \chi_j)^2 \leq s (1+1/\log n)$. See the warm-up in \Cref{thm:rootdepth-optimal} where we use this idea to sharpen the discrepancy of the previously discussed $\tilde{O}(\sqrt{n})$-depth algorithm.

Processing $L$ parts sequentially would not yield a polylogarithmic depth (unless $L$ itself is just polylogarithmic, which brings other issues, as each part would have many variables, and they are solved one after the other). Instead of processing one part in each round, we process $L/T$ many of them in one round in parallel and independently. The number of rounds will be set to $T=\poly(\log(mn))$, and this will be sufficient for MWU to bring down the average loss in each set to $1+1/\log n$. Roughly speaking, this is because the worst-case loss in each round is at most an $O(\log m)$ higher than the average loss factor of $1$. 

We will set the parts such that each part induces an easy problem: in one application of the scheme each part will consist of only $\poly(\log(mn))$ variables, and thus we will be able to fix these variables via sequential derandomization~\cite{raghavan1986probabilistic}. In another application, the number of variables in the part will be large but each set will have an intersection of size $\poly(\log(mn))$ with this part, and therefore we will be able to use pairwise parallel derandomization, a la Luby~\cite{luby1988removing}.

%%%%%%%%
\paragraph{Partitioning} For the scheme described above to work, we need that, in each set $S_i$, the different $T=\poly(\log(mn))$ rounds of MWU process portions of $S_i$ that expect roughly the same variance, up to $1+1/\poly(\log n)$ factors. To produce this $\poly(\log(mn))$-way partitioning, we use the work-efficient result of Ghaffari, Grunau, Rozhon~\cite{GGR2023Chernoff}, as the base partitioning tool, in an essentially black-box way. It is crucial that we do not need partitions with optimal or near-optimal additive loss here, and the multiplicative error $1+1/\poly(\log (mn))$ of \cite{GGR2023Chernoff} will be tolerable in our overall scheme\footnote{Let us clarify the distinction between the additive loss and the multiplicative error: Consider for instance their scheme for partitioning into two parts, and $m$ sets of size at most $s$. The ideal additive discrepancy would be $O(\sqrt{s\log m})$. The additive discrepancy of \cite{GGR2023Chernoff} will be up to $s/\poly(\log(mn))$. This means, in each set, each of the two halves will have size in $s/2(1\pm 1/\poly(\log(mn)))$. Thus the multiplicative error is $1+1/\poly(\log(mn))$. In our scheme, this $1/\poly(\log(mn))$ partitioning loss will add up with the $1/\poly(\log(mn))$ loss of the MWU method, and still keep the overall loss in variance per level below a $1+1/\log n$ factor.}. 

Finally, the above outline discusses only the unweighted instance and an intuitive explanation of how we perform one level of recursion. However, even if we start with the unweighted set balancing problem, the recursion creates a weighted problem for the next iteration. Fortunately, in the case of the unweighted set balancing problem, the ``weights'' will remain within a relatively close range, and thus an algorithm along the lines discussed above will still work. We present this in \Cref{sec:unweighted} as the proof of \Cref{thm:main-unweighted}.

For our weighted set balancing result, \Cref{thm:main-weighted}, the partitioning is more complex. We do not provide an overview here but just mention this: in each constraint, a few variables of very large weight cannot be handled as before, and we will need the partitioning to put essentially all of these into separate parts (modulo a loss, that will have to be controlled very tightly). We present the proof of \Cref{thm:main-weighted} in \Cref{sec:weighted}. 
% \newpage

\section{Preliminaries: notations and tools from prior work}
\paragraph{Notations} Throughout, we work with the ground set $[n]$ and usually $m$ sets in it $S_1, S_2, \dots, S_m \subseteq [n]$. For a given \textit{value assignment} vector $\chi\in\{-1,1\}^n$, the discrepancy of each set $S_i$ is defined as $disc(S_i) = |\sum_{j=1}^{n} \chi_j|$. We generally assume that $m\geq 2$, as for $m=1$ the problem is trivial. In the weighted generalization, we are given a matrix $A\in \mathbb{R}^{m\times n}$, where the entry $i,j$ is denoted by $a_{ij}$, and the discrepancy of the $i^{th}$ constraint is defined as $disc(i) = |\sum_{j=1}^{n} a_{ij} \chi_j|$. Sometimes, instead of focusing on the absolute value of the discrepancy, we talk about the signed discrepancy $sdisc(i) = \sum_{j=1}^{n} a_{ij} \chi_j$.

Often, we work with a partition of the ground set into parts $P_1 \sqcup P_2 \sqcup \dots \sqcup P_L = [n]$, and we need to determine a value assignment for the variables in each part. We use the notation $\chi^t\in \{-1, 1\}^{[P_t]}$ to indicate the part of the value assignment in part $P_t$, which is a binary vector of length $|P_t|$, but with the convenient indexing inherited from $[n]$ such that their concatenation forms the output vector $\chi\in\{-1,1\}^n$, that is, $\chi_j=\chi^{t}_j$ for $j\in P_t$. 

\subsection{Crude partitioning}
We use in our work the algorithm of Ghaffari, Grunau, and Rozhon~\cite{GGR2023Chernoff} as a crude partitioning tool. We next state their key result here, as well as two simple generalizations that we derive here with some extra work. The proofs of these extensions are deferred to \Cref{app:prelimAppendix-partition}.

\begin{theorem}[Ghaffari, Grunau, Rozhon~\cite{GGR2023Chernoff}]
\label{thm:FOCS23} 
Let $n,m,k \in \mathbb{N}$ with $m \geq 2$ and let $\{S_1,S_2,\ldots, S_m\}$ be a family of subsets of $[n]$. Then, there exists a deterministic parallel algorithm with work $\tilde{O}(n + m + \sum_{i=1}^m|S_i|)\poly(k)$ and depth $\poly(\log(nm)k)$ that computes a partition $P_1 \sqcup P_2 = [n]$ satisfying that $\max(|S_i \cap P_1|,|S_i \cap P_2|) = |S_i|/2 + O(\sqrt{|S_i| \log m}) + \frac{|S_i|}{k}$ for every $i \in [m]$.
\end{theorem}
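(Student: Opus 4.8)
The plan is to recover the randomized rounding construction of~\cite{GGR2023Chernoff} underlying this statement, whose core is a derandomized random walk. It suffices to build $\chi\in\{-1,1\}^n$, read as $j\in P_1\iff\chi_j=1$, with $\bigl|\sum_{j\in S_i}\chi_j\bigr|=O(\sqrt{|S_i|\log m})+O(|S_i|/k)$ for every $i$, because $\max(|S_i\cap P_1|,|S_i\cap P_2|)=|S_i|/2+\tfrac12\bigl|\sum_{j\in S_i}\chi_j\bigr|$. Maintain a fractional vector $x\in[-1,1]^n$, initialized to $x=\mathbf 0$, and run $T=\poly(\log(mn)\,k)$ rounds: in round $t$ every still-active coordinate $j$ is updated $x_j\leftarrow x_j+\gamma_t\,y^{(t)}_j$, where $(y^{(t)}_j)_j$ is drawn from an unbiased \emph{pairwise-independent} law on $\{-1,1\}^n$, and $j$ is \emph{frozen} once $|x_j|\ge 1-1/k$ (or after a fixed cap on its number of active rounds); at the end set $\chi_j=\sgn(x_j)$. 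Since each frozen coordinate is within $1/k$ of its final value, $\sum_{j\in S_i}x_j$ differs from $\sum_{j\in S_i}\chi_j$ by at most $|S_i|/k$, so it is enough to keep $D_i:=\sum_{j\in S_i}x_j$ with $|D_i|=O(\sqrt{|S_i|\log m})$ throughout.

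For that, use a pessimistic estimator whose key feature is that each round is derandomizable with pairwise independence alone. Put $\Phi=\sum_{i=1}^m\bigl(e^{\lambda_i D_i}+e^{-\lambda_i D_i}\bigr)$ with $\lambda_i=\Theta\!\bigl(\sqrt{\log m/|S_i|}\bigr)$. In round $t$ the increment $\Delta_i:=\gamma_t\sum_{j\in S_i\text{ active}}y^{(t)}_j$ of $D_i$ is mean-zero, has $\E[\Delta_i^2]=\gamma_t^2\cdot\#\{j\in S_i\text{ active}\}$ for \emph{any} pairwise-independent law, and satisfies $|\Delta_i|\le\gamma_t|S_i|$ deterministically; if the schedule keeps $\lambda_i\gamma_t|S_i|=O(1)$, then $e^{\pm\lambda_i\Delta_i}\le 1\pm\lambda_i\Delta_i+\lambda_i^2\Delta_i^2$, so the post-round $\Phi$ is dominated by a surrogate $\widetilde{\Phi}^{(t)}$ that is \emph{quadratic} in the bits $y^{(t)}$. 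The conditional expectation of $\widetilde{\Phi}^{(t)}$ given any partial assignment of $y^{(t)}$ depends only on first and second moments of the unassigned bits --- exactly what an $O(\poly(nmk))$-size pairwise-independent sample space supplies --- so in near-linear work and polylogarithmic depth the method of conditional expectations selects a concrete $y^{(t)}$ with $\Phi^{(t)}\le\widetilde{\Phi}^{(t)}\le\E[\widetilde{\Phi}^{(t)}]\le\sum_i(e^{\lambda_iD_i^{(t-1)}}+e^{-\lambda_iD_i^{(t-1)}})\bigl(1+\lambda_i^2\gamma_t^2\#\{j\in S_i\text{ active}\}\bigr)$. Telescoping, $\Phi$ ends at most $\sum_i 2\exp\!\bigl(\lambda_i^2\sum_t\gamma_t^2\#\{j\in S_i\text{ active at }t\}\bigr)$; if the schedule freezes every coordinate within $T$ rounds while making the injected variance $\sum_t\gamma_t^2\#\{j\in S_i\text{ active at }t\}$ of each $S_i$ be $O(|S_i|)$, this is $\sum_i 2e^{O(\log m)}=\poly(m)$, so $e^{\lambda_i|D_i|}\le\Phi=\poly(m)$ gives $|D_i|=O(\log m)/\lambda_i=O(\sqrt{|S_i|\log m})$ simultaneously for all $i$, and the $|S_i|/k$ freezing/rounding slack finishes the bound. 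Work and depth: $T$ rounds, each $\tilde O(n+m+\sum_i|S_i|)\poly(k)$ work and $\poly(\log(nm)k)$ depth.

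The step I expect to be the real obstacle is designing the step-size/freezing schedule, which must satisfy three requirements pulling in opposite directions at once: (i) $\gamma_t$ small enough that $\lambda_i\gamma_t|S_i|=O(1)$, so that the \emph{second} moment of $\Delta_i$ (all that pairwise independence controls) suffices to bound $\E[e^{\lambda_i\Delta_i}]$ --- this is precisely what lets a $\poly(nmk)$-size sample space beat the union bound over $m$ sets and recover the $\sqrt{\log m}$ factor rather than a $\polylog$-factor-worse one; (ii) only $T=\poly(\log(mn)\,k)$ rounds total \emph{even when $|S_i|$ is as large as $n$}, which is delicate because a symmetric walk with fixed step $\gamma$ needs $\Theta(1/\gamma^2)$ rounds to be absorbed at $\pm1$, so a uniform $\gamma\lesssim 1/\sqrt{n\log m}$ is not affordable; this forces an adaptive schedule (steps shrinking geometrically across $\Theta(\log k)$ scales and normalized by current active-set sizes) together with a hard cap on how long a coordinate stays active --- the cap being exactly what creates the $|S_i|/k$ term; and (iii) each round must be derandomized in near-linear work, which is why the potential is a sum of per-set exponentials whose per-round increments are \emph{pairwise} objectives. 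Pinning down one schedule that meets (i)--(iii) and proving the per-set injected-variance bound $O(|S_i|)$ (so the guarantee is $\sqrt{|S_i|\log m}$, not $\sqrt{s_{\max}\log m}$) is the technical heart; the remaining $\poly(k)$ work overhead is just the cost of the $\Theta(\log k)$ scales and the precision-$1/k$ sample spaces, and is routine to track.
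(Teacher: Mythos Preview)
The paper does not prove this theorem at all: it is stated verbatim as a black-box result from \cite{GGR2023Chernoff} and then used as a crude partitioning subroutine (the paper only proves the two derived extensions, \Cref{lemma:partition_unweighted} and \Cref{lem:weighted_partition}, which invoke \Cref{thm:FOCS23} as a primitive). So there is no ``paper's own proof'' to compare against.

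That said, your sketch does accurately capture the high-level architecture of the \cite{GGR2023Chernoff} construction: a polylogarithmic-length random walk with pairwise-independent increments, frozen coordinates once they are within $1/k$ of $\pm 1$, and derandomization of each round via a quadratic pessimistic-estimator surrogate for the exponential potential. Your identification of the step-size/freezing schedule as the technical crux is also correct. What you have written is a plan rather than a proof: you state the three constraints such a schedule must meet and assert that meeting them is ``the technical heart,'' but you do not actually construct one or verify the per-set injected-variance bound $\sum_t \gamma_t^2 \cdot |\{j\in S_i \text{ active at } t\}| = O(|S_i|)$. In the actual \cite{GGR2023Chernoff} argument this is handled by grouping coordinates into scales according to how close they are to freezing and using geometrically decreasing step sizes across scales; without that piece, the sketch remains incomplete. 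Since the present paper treats the statement as a citation, this gap is immaterial here, but if you intend this as a standalone proof you would need to fill in the schedule and the variance accounting.
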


In particular, the following states the variant when we partition into $L$ parts, instead of just two parts:
\begin{restatable}{lemma}{UnweightedPartition}[Unweighted multi-way partition]
\label{lemma:partition_unweighted}
Let $n,m,L \in \mathbb{N}$ with $m \geq 2$, $\eps \in (0,0.5]$, and let $\{S_1,S_2,\ldots,S_m\}$ be a family of subsets of $[n]$. Also, let $L$ be a power of two. Then, there exists a deterministic parallel algorithm with work $\tilde{O}(n + m + \sum_{i=1}^m|S_i|) \cdot \poly(1/\eps)$ and depth $\poly(\log(nm)/\eps)$ that computes a partition $P_1 \sqcup P_2 \sqcup \ldots \sqcup P_L = [n]$ satisfying that $|S_i \cap P_\ell| \leq (1+\eps)|S_i|/L + O(\log (m)/\eps^2)$ for every $i \in [m]$ and $\ell \in [L]$. 
\end{restatable}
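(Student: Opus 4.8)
The plan is to build the $L$-way partition by recursively applying the two-way partition of \Cref{thm:FOCS23}, using the standard ``balanced binary tree of splits'' construction. Since $L$ is a power of two, write $L = 2^d$ with $d = \log_2 L$. At the root, apply \Cref{thm:FOCS23} to the family $\{S_1,\ldots,S_m\}$ on ground set $[n]$ with an appropriate accuracy parameter $k$ to obtain $[n] = Q_1 \sqcup Q_2$; then recurse independently on each half $Q_1, Q_2$, splitting the induced sets $S_i \cap Q_1$ and $S_i \cap Q_2$, and so on for $d$ levels. The $2^d = L$ leaves of this recursion tree give the parts $P_1, \ldots, P_L$. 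All $2^r$ subproblems at recursion level $r$ are independent, so they run in parallel; the depth is $d$ times the depth of one invocation, i.e.\ $\poly(\log(nm) k) \cdot \log L = \poly(\log(nm)/\eps)$ once we fix $k$ as a function of $\eps$ (and note $L \le n$, so $\log L \le \log n$ is absorbed). For the work bound: at each recursion level the subproblems partition the ground set and the sets, so the total size $\sum_i |S_i \cap (\text{current part})| + n + m$ summed over all subproblems at one level is at most $\sum_i |S_i| + O(n+m)$ (the $n+m$ may be charged per subproblem, but there are at most $L \le n$ of them across all levels, contributing a lower-order $\tilde O(n \cdot \poly(1/\eps))$ term, or one argues more carefully that sets become empty and can be dropped); multiplying by $d = O(\log L)$ levels and by the $\poly(k) = \poly(1/\eps)$ factor per invocation gives $\tilde O(n + m + \sum_i |S_i|)\cdot \poly(1/\eps)$.

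The main work is the error analysis: I need to show the additive-plus-multiplicative error accumulated over $d = \log_2 L$ levels telescopes to the claimed bound $|S_i \cap P_\ell| \le (1+\eps)|S_i|/L + O(\log(m)/\eps^2)$. Fix a set $S_i$ and a leaf $P_\ell$; let $n_r$ denote $|S_i \cap (\text{the level-}r \text{ ancestor of } P_\ell)|$, so $n_0 = |S_i|$ and $n_d = |S_i \cap P_\ell|$. Applying \Cref{thm:FOCS23} with parameter $k$ at each level gives the recursion
\[
n_{r+1} \;\le\; \frac{n_r}{2} + c\sqrt{n_r \log m} + \frac{n_r}{k}
\]
for an absolute constant $c$. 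I would choose $k = \Theta(\log L / \eps) = \Theta(d/\eps)$, so that $(1 + 1/k)^d \le 1 + \eps/2$ (say), which controls the multiplicative blowup from the $n_r/k$ terms across all $d$ levels. For the additive $\sqrt{\cdot}$ terms, the standard trick is to split into two regimes. Unrolling the recursion, the contribution of the $r$-th level's $\sqrt{n_r \log m}$ error to $n_d$ is scaled down by roughly $2^{-(d-r)}$ (up to the $(1+1/k)$ factors), and $n_r \le 2^{-r} n_0 (1+\eps)^{O(1)} + (\text{lower order})$, so each such term is at most $\approx 2^{-(d-r)} c \sqrt{2^{-r} |S_i| \log m}$; summing the geometric-in-$r$ series gives a bound dominated by its largest term, which is $O(\sqrt{|S_i|/L \cdot \log m})$ when $|S_i|/L$ is large. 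By AM-GM, $O(\sqrt{(|S_i|/L)\log m}) \le \eps |S_i|/(2L) + O(\log(m)/\eps)$, which folds into the multiplicative term plus the additive slack. When $|S_i|/L$ is small (below the $\poly(\log m)$ threshold), one instead tracks an absolute additive bound: each level adds at most $O(\sqrt{n_r \log m}) \le O(n_r/2 + \log m)$ crudely, or better, once $n_r = O(\log(m)/\eps^2)$ the value stays $O(\log(m)/\eps^2)$ at all subsequent levels since halving plus $O(\sqrt{n_r\log m})$ plus $n_r/k$ does not grow it — this is where the $\eps^{-2}$ in the additive term comes from (solving $x/2 + c\sqrt{x\log m} \le x$ needs $x = \Omega(\log m)$, and carrying the $(1+\eps)$ slack needs $x = \Omega(\log(m)/\eps^2)$).

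The hard part will be making the two-regime bookkeeping clean, i.e.\ exhibiting a single invariant of the form $n_r \le (1+\eps/2)^{\,r/d}\cdot |S_i|/2^r + g(r)$ for an explicit slowly-growing $g$ with $g(d) = O(\log(m)/\eps^2)$, and verifying it survives one application of \Cref{thm:FOCS23} — the interaction between the multiplicative factor and the $\sqrt{n_r \log m}$ term is where signs and constants have to be tracked carefully. I expect the cleanest route is: prove by induction on $r$ that $n_r \le (1+1/k)^r\big(2^{-r}|S_i| + C\log(m)/\eps \cdot \sum_{j<r} 2^{-(r-1-j)} \cdot (\text{something}))\big)$, or more simply to first establish $n_r \le 2^{-r}(1+\eps)|S_i| + O(\log(m)/\eps^2)$ by induction using $\sqrt{n_r\log m} \le \delta n_r + O(\log(m)/\delta)$ with $\delta$ a small multiple of $\eps$, which directly gives both the multiplicative and additive parts at $r=d$ after summing a geometric series. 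I would then also note that $L$ being a power of two is used only so that the recursion tree is perfectly balanced; no divisibility of $|S_i|$ is needed since the two-way splits already absorb rounding into their additive error.
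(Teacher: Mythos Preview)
Your construction is exactly the paper's: recursively bipartition via \Cref{thm:FOCS23} for $d=\log_2 L$ levels, with $k=\Theta(\log L/\eps)$. The work/depth accounting is right, and your high-level error picture (the $\sqrt{n_r\log m}$ contributions, scaled by $2^{-(d-r)}$, form a geometric series whose largest term is at the bottom) is also correct. However, neither of your two concrete proposals for closing the bookkeeping actually works as stated.

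For the ``simpler'' route, using $\sqrt{n_r\log m}\le \delta n_r + O(\log(m)/\delta)$ with a \emph{fixed} $\delta=\Theta(\eps)$ gives a per-level multiplicative factor $(1+\Theta(\eps))$, which over $d=\log L$ levels compounds to $(1+\Theta(\eps))^{\log L}$---not $1+O(\eps)$. For the unrolling route, to bound the sum $\sum_r 2^{-(d-r)}\sqrt{n_r\log m}$ you assume $n_r\le 2^{-r}|S_i|(1+\eps)^{O(1)}+(\text{lower order})$, which is precisely what you are trying to prove; and a genuinely crude bound like $n_r\le K\cdot 2^{-r}|S_i|$ for a fixed constant $K$ fails to propagate (the one-step error $c\sqrt{n_r\log m}$ is not absorbed by the zero slack in $K\cdot 2^{-(r+1)}|S_i|$).

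The paper resolves this with a single clean invariant: it proves by induction on $L$ that $|S_i\cap P_\ell|\le \max\bigl((1+\eps_L)|S_i|/L,\;c\log(m)/\eps^2\bigr)$ where
\[
\eps_L \;=\; \tfrac{\eps}{2}\bigl(1-1/\sqrt{L}\bigr)+\tfrac{\log L}{k}.
\]
The point is that once you are above the threshold $c\log(m)/\eps^2$, you have $|S_i|\ge \Omega\bigl(L\log(m)/\eps^2\bigr)$, so the relative error from the top-level split, $O(\sqrt{\log(m)/|S_i|})$, is only $O(\eps/\sqrt{L})$---not $O(\eps)$. These level-dependent increments satisfy $(1+\eps_L)\ge (1+\delta_L)(1+\eps_{L/2})$ for $\delta_L=\Theta(\eps/\sqrt{L}+1/k)$, and the series $\sum_{L'}\eps/\sqrt{L'}$ over dyadic $L'$ converges to $O(\eps)$, so $\eps_L\le\eps$ for all $L\le 2^{k\eps/2}$. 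That level-dependent slack is the missing idea; with it, your two-regime outline becomes a one-line induction.
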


We also use the following weighted variant, which follows roughly speaking by managing together in each constraint the variables of almost the same weight. 

\begin{restatable}{lemma}{WeightedPartition}[Weighted multi-way partition]
\label{lem:weighted_partition}
Let $n,m,L \in \mathbb{N}$ with $m \geq 2$, $A \in \mathbb{R}^{n \times m}$ and let $\eps \in (0,1]$. Furthermore, assume that $L$ is a power of two.
There exists a deterministic parallel algorithm with work $\tilde{O}(n + m + nnz(A)) \cdot \poly(1/\eps)$ and depth $\poly(\log(nm)/\eps)$ that computes a partition $P_1 \sqcup P_2 \sqcup \ldots \sqcup P_L = [n]$ satisfying the following for every $\ell \in [L]$:
\begin{itemize}
    \item $|P_\ell| \leq (1+\eps) n/L + O(\log(nm)/\eps^2)$,
    \item $|P_\ell \cap \{j \in [n] \colon a_{ij} \neq 0\}| \leq \frac{(1+\eps)}{L}|\{j \in [n] \colon a_{ij} \neq 0\}| + O(\log(nm)/\eps^2)$ for every $i \in [m]$ and
    \item $\sum_{j \in P_\ell} a^2_{ij} \leq \frac{1+\eps}{L} \sum_{j = 1}^n a^2_{ij} + O(\log^2(nm)/\eps^3) (a_{max})^2$ for every $i \in [m]$,
\end{itemize}
where we define $a_{max} = \max_{i \in [m], j \in [n]} |a_{ij}|$.
\end{restatable}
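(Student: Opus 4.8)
The plan is to reduce the weighted multi-way partitioning to the unweighted multi-way partitioning of \Cref{lemma:partition_unweighted}, by bucketing, inside each constraint, the variables according to the magnitude of their coefficients, so that within a single bucket all coefficients agree up to a $(1+\Theta(\eps))$ factor and the squared $\ell_2$-mass of the bucket is controlled by its cardinality. First I would dispose of the degenerate regime $1/\eps\geq(nm)^{10}$: there each of the three additive error terms exceeds the largest possible value of the quantity it bounds (for example $\sum_{j\in P_\ell}a^2_{ij}\leq n\,(a_{max})^2$, which is far below the additive term, of order $\log^2(nm)(a_{max})^2/\eps^3$), so any partition works, say $P_1=[n]$ and $P_2=\dots=P_L=\emptyset$. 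Hence we may assume $1/\eps<(nm)^{10}$, so that $\log(1/\eps)=O(\log(nm))$.

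Fix $\delta=\Theta(\eps)$ and $a_{max}=\max_{i\in[m],j\in[n]}|a_{ij}|$. Call a nonzero entry $a_{ij}$ \emph{tiny} if $|a_{ij}|\leq a_{max}/n$. For each $i\in[m]$ and each integer $t\geq 0$ with $(1+\delta)^{-t}\geq 1/n$ (hence at most $B=O(\log(n)/\delta)=O(\log(n)/\eps)$ values of $t$), let $S_{i,t}=\{\,j\in[n]:|a_{ij}|\in((1+\delta)^{-t-1}a_{max},\,(1+\delta)^{-t}a_{max}]\,\}$, the set of non-tiny entries of constraint $i$ whose magnitude falls in that band. Also set $S_i^{\mathrm{supp}}=\{j:a_{ij}\neq 0\}$ for each $i\in[m]$ and $S^{\mathrm{univ}}=[n]$. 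Together these form a family of $M=O(m\log(n)/\eps)$ sets of total size $O(n+nnz(A))$, since each nonzero entry lies in at most one bucket set and exactly one support set. I would build this family in $\tilde{O}(n+nnz(A))$ work and $\poly(\log(nm))$ depth (a parallel maximum for $a_{max}$, one arithmetic operation per nonzero entry for its band index, a parallel sort to materialize the sets), and then invoke \Cref{lemma:partition_unweighted} on it with parameter $\eps'=\Theta(\eps)$ (this needs $L$ a power of two, as assumed). The result is a partition $P_1\sqcup\dots\sqcup P_L=[n]$ with $|S\cap P_\ell|\leq (1+\eps')|S|/L+O(\log(M)/\eps'^2)=(1+\eps')|S|/L+O(\log(nm)/\eps^2)$ for every set $S$ of the family and every $\ell\in[L]$, produced in $\tilde{O}(n+M+nnz(A))\poly(1/\eps)=\tilde{O}(n+m+nnz(A))\poly(1/\eps)$ work and $\poly(\log(nm)/\eps)$ depth.

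It remains to verify the three bounds. The first is the guarantee applied to $S^{\mathrm{univ}}$ and the second the guarantees applied to the $S_i^{\mathrm{supp}}$, each after choosing the constant hidden in $\eps'=\Theta(\eps)$ small enough (so $1+\eps'\leq 1+\eps$). For the third, fix $i$ and $\ell$: the tiny entries contribute at most $n\cdot(a_{max}/n)^2\leq(a_{max})^2$ to $\sum_{j\in P_\ell}a^2_{ij}$, so it suffices to bound $\sum_t\sum_{j\in P_\ell\cap S_{i,t}}a^2_{ij}$. Within bucket $t$ every $a^2_{ij}$ lies in $((1+\delta)^{-2t-2}(a_{max})^2,\,(1+\delta)^{-2t}(a_{max})^2]$, a window of multiplicative width $(1+\delta)^2\leq 1+3\delta$, whence $\sum_{j\in P_\ell\cap S_{i,t}}a^2_{ij}\leq|P_\ell\cap S_{i,t}|\cdot(1+\delta)^{-2t}(a_{max})^2\leq\big((1+\eps')|S_{i,t}|/L+O(\log(nm)/\eps^2)\big)(1+\delta)^{-2t}(a_{max})^2\leq\tfrac{(1+\eps')(1+3\delta)}{L}\sum_{j\in S_{i,t}}a^2_{ij}+O(\log(nm)/\eps^2)\,(1+\delta)^{-2t}(a_{max})^2$. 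Summing over the $O(\log(n)/\eps)$ buckets, the first terms add up to at most $\tfrac{(1+\eps')(1+3\delta)}{L}\sum_{j=1}^n a^2_{ij}\leq\tfrac{1+\eps}{L}\sum_{j=1}^n a^2_{ij}$ (the $S_{i,t}$ are disjoint subsets of $\{j:a_{ij}\neq 0\}$, and $\eps',\delta$ are chosen with $(1+\eps')(1+3\delta)\leq 1+\eps$), while the second terms form a geometric series bounded by $O(\log(nm)/\eps^2)\cdot\tfrac{1}{1-(1+\delta)^{-2}}\cdot(a_{max})^2=O(\log(nm)/\eps^3)(a_{max})^2$. Adding back the $(a_{max})^2$ from the tiny entries gives $\sum_{j\in P_\ell}a^2_{ij}\leq\tfrac{1+\eps}{L}\sum_{j=1}^n a^2_{ij}+O(\log^2(nm)/\eps^3)(a_{max})^2$, indeed with room to spare.

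I expect the crux to be this last bound, and specifically the tension in the choice $\delta=\Theta(\eps)$: the buckets must be fine enough that squaring coefficients loses only a $1+\eps$ multiplicative factor, yet coarse enough that there are only $O(\log(n)/\eps)$ of them, so that the per-bucket additive error $O(\log(M)/\eps^2)$, once summed geometrically over buckets, stays within $O(\log^2(nm)/\eps^3)(a_{max})^2$ rather than inflating by a polynomial factor. The two enabling reductions --- truncating the tiny coefficients so that the number of buckets depends on $n$ rather than on the (possibly enormous) dynamic range $a_{max}/\min\{|a_{ij}|:a_{ij}\neq 0\}$, and restricting to $1/\eps<(nm)^{10}$ so that $\log M=O(\log(nm))$ --- are what make this accounting close cleanly. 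The first two bounds, by contrast, drop straight out of \Cref{lemma:partition_unweighted}.
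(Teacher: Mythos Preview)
Your proposal is correct and follows essentially the same approach as the paper: bucket the nonzero entries of each constraint by magnitude into $O(\log(n)/\eps)$ bands, add the universe and support sets, apply \Cref{lemma:partition_unweighted} to the resulting family, and recover the weighted guarantee by summing over buckets. The only notable difference is that you sum the per-bucket additive errors as a geometric series (getting $O(\log(nm)/\eps^3)(a_{max})^2$), whereas the paper simply bounds each by $O(\log(nm)/\eps^2)(a_{max})^2$ and multiplies by the number of buckets, landing exactly at the stated $O(\log^2(nm)/\eps^3)(a_{max})^2$; your handling of the degenerate small-$\eps$ regime via $1/\eps\geq(nm)^{10}$ is also cosmetically different from the paper's ``without loss of generality $\eps\geq 1/n$'', but both serve the same purpose of keeping $\log M=O(\log(nm))$.
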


\subsection{Sequential derandomization}
We also make use of the sequential derandomization method of Raghavan~\cite{raghavan1986probabilistic}, which solves the weighted set balancing problem by fixing the random variables one by one, using the method of conditional expectations. The following theorem abstracts this result. We do not provide a proof for this version, but we will later state and use a more general result as \Cref{thm:sequential_derandomization}, and we present a proof for that one in \Cref{app:seqDerandwithAverage}. 

\begin{theorem}[Sequential derandomization]
\label{thm:sequential_derandomization_raw}
There exists an absolute constant $C>0$ for which the following holds. Let $n,m\in \mathbb{N}$, and $A \in \mathbb{R}^{m \times n}$. There exists a deterministic parallel algorithm algorithm with work $\tilde{O}(nnz(A) + n + m)$ and depth $n \poly(\log m)$ that computes a vector $\chi \in \{-1,1\}^n$ such that, for every $i \in [m]$, it holds that $disc_i^2 = C\log m \cdot \sum_{j =1}^n a_{ij}^2$. 
\end{theorem}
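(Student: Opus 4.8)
The plan is to run the classical method of conditional expectations of Raghavan and Spencer with a Hoeffding‑type exponential potential, normalized per constraint, and then argue that this process admits a work‑efficient low‑depth parallel implementation by maintaining incremental per‑constraint aggregates.

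\textbf{The derandomization via a normalized potential.} For each $i\in[m]$ put $\sigma_i^2=\sum_{j=1}^n a_{ij}^2$; if $\sigma_i=0$ the $i$‑th constraint is vacuous, so assume $\sigma_i>0$ and set $\lambda_i=\alpha/\sigma_i$ with the absolute constant $\alpha=\sqrt{2\ln(2m)}$. We fix $\chi_1,\dots,\chi_n$ one at a time; write $F$ for the currently fixed prefix and $sdisc_i=\sum_{j=1}^n a_{ij}\chi_j$. Consider
\[
\Phi \;=\; \sum_{i=1}^m\Bigl(\E\bigl[e^{\lambda_i\, sdisc_i}\mid \chi_F\bigr]+\E\bigl[e^{-\lambda_i\, sdisc_i}\mid \chi_F\bigr]\Bigr)\;=\;\sum_{i=1}^m\bigl(V_i^{+}+V_i^{-}\bigr),\qquad V_i^{\pm}=e^{\pm\lambda_i\sum_{j\in F}a_{ij}\chi_j}\!\!\prod_{j\notin F}\!\cosh(\lambda_i a_{ij}),
\]
where the product form uses independence and uniformity of the not‑yet‑fixed coordinates. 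Since $\cosh(x)\le e^{x^2/2}$, the initial value obeys $\Phi_0\le 2m\,e^{\alpha^2/2}=4m^2$. For every $j$, the average over $\chi_j\in\{-1,1\}$ of the two resulting values of $\Phi$ equals the current $\Phi$ (each $\cosh$ factor is the mean of $e^{+}$ and $e^{-}$), so we may always commit $\chi_j$ to the value that does not increase $\Phi$; hence the final (fully fixed) potential is at most $\Phi_0\le 4m^2$. At the end $V_i^{\pm}=e^{\pm\lambda_i\, sdisc_i}$, so $e^{\lambda_i|sdisc_i|}\le 4m^2$, i.e. $|sdisc_i|\le \sigma_i\ln(4m^2)/\alpha=O(\sqrt{\log m})\,\sigma_i$, which for a suitable absolute constant $C$ gives $disc_i^2\le C\log m\sum_{j=1}^n a_{ij}^2$.

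\textbf{Work‑efficient parallel implementation.} First compute all $\sigma_i^2$, hence all $\lambda_i$, and the initial products $V_i^{\pm}=\prod_j\cosh(\lambda_i a_{ij})$; by parallel segmented sums/products over the nonzeros of $A$ this costs $\tilde O(nnz(A)+n+m)$ work and $\poly(\log(mn))$ depth. Then run $n$ rounds. In round $j$, for each constraint $i$ with $a_{ij}\neq 0$ form the two candidate updates obtained by multiplying $V_i^{+},V_i^{-}$ by $e^{\pm\lambda_i a_{ij}}/\cosh(\lambda_i a_{ij})$, aggregate the resulting changes of $V_i^++V_i^-$ over those constraints by a parallel reduction (depth $O(\log m)$ over $|\{i:a_{ij}\neq 0\}|$ terms), take $\chi_j$ to be the minimizer, and commit the corresponding $V_i^{\pm}$ updates; variables in no constraint are set arbitrarily. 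Each nonzero of $A$ is touched a constant number of times outside initialization, so the total work is $\tilde O(nnz(A)+n+m)$, and the depth is $n\cdot\poly(\log(mn))$, as claimed.

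\textbf{The main obstacle: numerical precision.} The one genuinely delicate point is that $\Phi$ is built from exponentials, so we must justify bounded‑precision arithmetic. The key structural observation is that throughout the process $1/\Phi_0\le V_i^{\pm}\le \Phi_0=O(m^2)$: the upper bound is immediate since $V_i^{\pm}\le\Phi\le\Phi_0$, and the lower bound follows from $V_i^{+}V_i^{-}=\bigl(\prod_{j\notin F}\cosh(\lambda_i a_{ij})\bigr)^2\ge 1$ together with $V_i^{-}\le\Phi_0$ (symmetrically for $V_i^{-}$). Moreover $\lambda_i|a_{ij}|=\alpha|a_{ij}|/\sigma_i\le\alpha$, so each factor $e^{\pm\lambda_i a_{ij}}/\cosh(\lambda_i a_{ij})$ also lies in $[1/\poly(m),\poly(m)]$. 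Hence every quantity has $O(\log m)$ bits before the radix point, and it suffices to carry $O(\log(mn))$ bits after it: taking relative precision $\eps'=(mn)^{-10}$, the minimizer of the (approximately computed) two candidate potentials exceeds the true current potential by at most a $(1+\eps')$ factor, so over all $n$ rounds the final bound on $\Phi$ degrades by only a $(1+\eps')^{O(n)}=1+o(1)$ factor, absorbed into $C$; and evaluating $\cosh,\exp,\sqrt{\cdot}$, and division on $O(\log(mn))$‑bit inputs to this accuracy costs only $\poly(\log(mn))$ depth and work. This robustness is precisely what the more general \Cref{thm:sequential_derandomization} (proved in \Cref{app:seqDerandwithAverage}) formalizes, and the argument above is the special case without the extra average guarantee.
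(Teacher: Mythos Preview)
Your proof is correct and follows the same conditional-expectations scheme as the paper's proof of the more general \Cref{thm:sequential_derandomization} in \Cref{app:seqDerandwithAverage}: a per-constraint multiplicative potential with parameter $\lambda_i\sim\sqrt{\log m}/\sigma_i$, the martingale/non-increasing step, and incremental $O(\log m)$-depth updates over the nonzeros of each column. The only substantive difference is the choice of pessimistic estimator: you use the exact conditional moment-generating function (the $\cosh$/exponential form), while the paper uses Raghavan's polynomial surrogate $\prod_j(1\pm\lambda_i a_{ij}\chi_j+(\lambda_i a_{ij})^2)$. The polynomial form sidesteps transcendental arithmetic entirely (all factors are rational in the data and bounded in $[1/\poly(m),\poly(m)]$ for the same reason you give), so no precision discussion is needed there; your route requires---and you supply---a correct precision argument via the observation $V_i^{+}V_i^{-}\ge 1$. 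One cosmetic point: your precision argument introduces $\log n$ factors, yielding depth $n\cdot\poly(\log(mn))$ rather than the stated $n\cdot\poly(\log m)$; this is exactly the slack the paper itself takes in \Cref{thm:sequential_derandomization} by allowing $M\ge m$ (and in applications setting $M=mn$), so it is harmless.
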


\section{Warm-Up}
In this section, we present two warm-up results, and in their context, we discuss two of the ideas that we use in our main results. Suppose we are given $m$ sets $S_1, S_2, \dots, S_m \subseteq [n]$, for $m\geq 2$, such that $|S_i|\leq s$ for each $i\in [m]$. By Chernoff bound, we know that a random value assignment $\chi\in \{-1,1\}^n$ creates a discrepancy of at most $disc(S_i)=|\sum_{j\in S_i} \chi_j| = O(\sqrt{s \log m})$ in each of the $m$ set. In the first result, \Cref{thm:rootdepth-suboptimal}, we show a deterministic parallel algorithm that achieves a slightly suboptimal discrepancy of $O(\sqrt{s} \log m)$ in near-linear work and $\tilde{O}(\sqrt{n})$ depth. In the second result, \Cref{thm:rootdepth-optimal}, we show how to improve the discrepancy to the optimal bound of $O(\sqrt{s \log m})$ while keeping near-linear work and $\tilde{O}(\sqrt{n})$ depth. As stated before, these are warm-up results, presently chiefly as contexts for introducing two of the ideas that we use frequently in our main results. Next, we discuss these two ideas from a high-level and informal viewpoint.

The key idea that we will present in the first result is how to create some parallelism in the task of computing a low-discrepancy value assignment, by partitioning the variables. Roughly speaking, we partition the ground set $[n]$ into about $\sqrt{n}$ parts and we find a value assignment in each part independently and all in parallel. This is such that the discrepancy of each set in each part is ``small". Then, in the end, we need to find a good mixture of the assignments of the different parts. Intuitively, this mixture selection will involve negating the solution coming from some of the parts; we will choose the negated parts carefully to achieve a good discrepancy in the output for all the $m$ sets. 

As we will see, compared to the standard deviation upper bound of $\sqrt{s}$, the above approach loses a $\sqrt{\log m}$ in the discrepancy in the parts and another $\sqrt{\log m}$ in finding a good mixture, thus creating the suboptimal discrepancy of $O(\sqrt{s} \log m)$. To remedy this, in the second result, we create a certain multi-round game for the process of determining the solutions in different parts, and we use an instantiation of the Multiplicative Weights Update method to ``average out" the losses of each set throughout the rounds of this game. As a result, we will be able to remove one of the $\sqrt{\log m}$ factor losses essentially completely.

\subsection{Near-optimal discrepancy with $\tilde{O}(\sqrt{n})$ depth}
\begin{theorem}\label{thm:rootdepth-suboptimal}
Let $n, m \in \mathbb{N}$ with $m \geq 2$, and let $\{S_1, S_2, \ldots, S_m\}$ be a family of subsets of $[n]$. Then, there exists a deterministic parallel algorithm that can compute a vector $\chi \in \{-1, 1\}^n$ with $\tilde{O}(n + m + \sum_{i = 1}^m |S_i|)$ work and $\tilde{O}(\sqrt{n})$ depth such that, for every $i \in [m]$, it holds that $disc^2(S_i)=(\sum_{j \in S_i} \chi_j)^2 = O(|S_i|\log^2 m)$.
\end{theorem}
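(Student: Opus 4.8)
I would prove this by a two-phase ``partition, solve locally, then recombine'' scheme whose only algorithmic ingredient is the sequential derandomization of \Cref{thm:sequential_derandomization_raw}, applied once at the scale of $\sqrt n$-sized parts and once at the scale of $\sqrt n$ parts. Set $L:=\lceil\sqrt n\rceil$ and fix \emph{any} balanced partition $[n]=P_1\sqcup\cdots\sqcup P_L$ with $|P_\ell|\le\lceil n/L\rceil=O(\sqrt n)$ for all $\ell$ --- consecutive blocks of $[n]$ work, and, in contrast to \Cref{lemma:partition_unweighted}, no control on the intersections $|S_i\cap P_\ell|$ is needed. In \emph{Phase 1}, for each $\ell\in[L]$ independently and in parallel, I would run \Cref{thm:sequential_derandomization_raw} on the restricted instance whose ground set is $P_\ell$ and whose constraints are the $0/1$ incidence rows of those $S_i\cap P_\ell$ that are nonempty; call the output $\chi^\ell\in\{-1,1\}^{P_\ell}$ and set $d_{i,\ell}:=\sum_{j\in S_i\cap P_\ell}\chi^\ell_j$ (so $d_{i,\ell}=0$ when $S_i\cap P_\ell=\emptyset$). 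Then \Cref{thm:sequential_derandomization_raw} guarantees $d_{i,\ell}^2\le C\log m\cdot|S_i\cap P_\ell|$ for every $i\in[m]$ and $\ell\in[L]$.

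In \emph{Phase 2}, I would recombine by choosing a global sign $\sigma_\ell\in\{-1,1\}$ per part and outputting $\chi$ with $\chi_j:=\sigma_{\ell(j)}\chi^{\ell(j)}_j$, where $\ell(j)$ denotes the part containing $j$. To choose the signs, view $\sigma\in\{-1,1\}^L$ as the variables of a weighted set-balancing instance with $m\times L$ coefficient matrix $D=(d_{i,\ell})$. This $D$ has at most $\sum_i|S_i|$ nonzero entries (each $j\in S_i$ contributes to at most one entry of row $i$), and it can be assembled from the $\chi^\ell$ in $\tilde O(\sum_i|S_i|)$ work and $\poly(\log(nm))$ depth by a segmented sum over the pairs $(i,j)$ with $j\in S_i$, keyed by $(i,\ell(j))$. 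Running \Cref{thm:sequential_derandomization_raw} on $D$ yields $\sigma\in\{-1,1\}^L$ with $\bigl(\sum_{\ell=1}^{L}\sigma_\ell d_{i,\ell}\bigr)^2\le C\log m\cdot\sum_{\ell=1}^{L}d_{i,\ell}^2$ for every $i\in[m]$.

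Correctness is then immediate: for each $i\in[m]$ we have $\sum_{j\in S_i}\chi_j=\sum_{\ell}\sigma_\ell\sum_{j\in S_i\cap P_\ell}\chi^\ell_j=\sum_{\ell}\sigma_\ell d_{i,\ell}$, hence $disc^2(S_i)=\bigl(\sum_{\ell}\sigma_\ell d_{i,\ell}\bigr)^2\le C\log m\sum_{\ell}d_{i,\ell}^2\le C\log m\sum_{\ell}C\log m\,|S_i\cap P_\ell|=C^2\log^2 m\sum_{\ell}|S_i\cap P_\ell|=C^2\log^2 m\cdot|S_i|$, which is exactly $O(|S_i|\log^2 m)$. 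For the depth, both invocations act on $O(\sqrt n)$-sized objects: Phase 1 takes $\max_\ell|P_\ell|\cdot\poly(\log m)=O(\sqrt n)\poly(\log m)$ depth and Phase 2 takes $L\cdot\poly(\log m)=O(\sqrt n)\poly(\log m)$ depth, for a total of $\tilde O(\sqrt n)$. For the work, Phase 2 costs $\tilde O(nnz(D)+L+m)=\tilde O(\sum_i|S_i|+m+\sqrt n)$, while the Phase-1 calls together cost $\sum_\ell\tilde O\bigl(\sum_{i:S_i\cap P_\ell\ne\emptyset}|S_i\cap P_\ell|+|P_\ell|+|\{i:S_i\cap P_\ell\ne\emptyset\}|\bigr)=\tilde O(\sum_i|S_i|+n)$, using $\sum_\ell|P_\ell|=n$, $\sum_\ell\sum_i|S_i\cap P_\ell|=\sum_i|S_i|$, and $\sum_\ell|\{i:S_i\cap P_\ell\ne\emptyset\}|\le\sum_i|S_i|$.

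Since this is a warm-up, no deep obstacle arises; the two points that need care are: (i) choosing the number of parts to be $L=\Theta(\sqrt n)$, which is precisely the scale at which the per-part cost of Phase 1 (parts of $\Theta(\sqrt n)$ variables) and the cost of Phase 2 ($\Theta(\sqrt n)$ variables) are both $\tilde O(\sqrt n)$ in depth; and (ii) charging each Phase-1 subcall only to the sets that actually meet its part, as otherwise the additive ``$+m$'' term of \Cref{thm:sequential_derandomization_raw} would accumulate to $\Theta(m\sqrt n)$ over the $L$ calls and destroy work-efficiency. It is also worth noting where the two $\sqrt{\log m}$ factors enter: one in Phase 1, where balancing $S_i\cap P_\ell$ incurs discrepancy about $\sqrt{|S_i\cap P_\ell|\log m}$, and one in Phase 2, where choosing signs to balance the $d_{i,\ell}$'s loses another $\sqrt{\log m}$; this is exactly the slack that the multiplicative-weights refinement in \Cref{thm:rootdepth-optimal} removes to reach the optimal bound.
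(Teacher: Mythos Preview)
Your proposal is correct and follows essentially the same three-step approach as the paper's proof: partition into $\Theta(\sqrt n)$ parts, apply sequential derandomization inside each part in parallel, then apply it again on the $\Theta(\sqrt n)$-variable ``mixture'' instance with coefficients $d_{i,\ell}$. The paper invokes \Cref{lemma:partition_unweighted} for step (1), but only to obtain $|P_\ell|\le 2\sqrt n$, so your observation that consecutive blocks suffice (and that no control on $|S_i\cap P_\ell|$ is needed since the analysis only uses $\sum_\ell |S_i\cap P_\ell|=|S_i|$) is a valid simplification; your explicit work accounting, including the clean-up of empty sets and the bound $\sum_\ell|\{i:S_i\cap P_\ell\neq\emptyset\}|\le\sum_i|S_i|$, matches the paper's treatment.
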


\begin{proof}
Recall that the sequential derandomization method (\Cref{thm:sequential_derandomization_raw}) provides a method to fix the $n$ variables $\chi_j$ for $j\in [n]$ one by one, in depth $\tilde{O}(n)$, such that we have $(\sum_{j \in S_i} \chi_j)^2 = O(|S_i|\log m)$ for every $i\in [m]$. To reduce this  $\tilde{O}(n)$ depth to $\tilde{O}(\sqrt{n})$, we use the following approach: (1) first we partition the variables into roughly $\sqrt{n}$ parts each with $O(\sqrt{n})$ variables, (2) we perform a sequential derandomization inside each part and all in parallel, and then (3) we determine how to merge the $\sqrt{n}$ parts together via another sequential derandomization. We next make this outline concrete.

(1) We compute a partitioning of the $[n]$ into $L=2^{\lceil\log_2(\sqrt{n})\rceil}$ parts $P_1 \sqcup P_2 \sqcup \ldots \sqcup P_L = [n]$ such that for each $t\in [L]$, we have $|P_t|\leq 2\sqrt{n}$. This can be achieved easily via the unweighted partitioning recalled in \Cref{lemma:partition_unweighted}, in $\tilde{O}(n + m + \sum_{i = 1}^m |S_i|)$ work and $\poly(\log (nm))$ depth. 

(2) As a result of the above partitioning, we have $L$ independent discrepancy problems on disjoint variables: problem $t$ consists of sets $S_1\cap P_t, S_2, \cap P_t, \dots, S_m \cap P_t$. We solve these problems in parallel and independently of each other, each using sequential derandomization. Since each part has at most $2\sqrt{n}$ variables, we can invoke the sequential derandomization method (cf. \Cref{thm:sequential_derandomization_raw}) to solve each part $t$ in $\tilde{O}(\sqrt{n})$ depth, getting a vector $\bar{\chi}^{t}\in \{-1, 1\}^{[P_t]}$ such that for each $i\in [m]$, we have $(\sum_{j\in S_i\cap P_t} \bar{\chi}^t_j)^2= |S_i\cap P_t| \cdot O(\log m)$. This works for all the $L$ parts in parallel, in $\tilde{O}(\sqrt{n})$ depth, and using a total of $\tilde{O}(n + m + \sum_{i = 1}^m |S_i|)$ work. To ensure this work bound, we need a simple clean-up before invoking the sequential randomization in each part $t$: we discard from each part $P_t$ sets $i$ for which $S_i\cap P_t=\emptyset$. 

(3) Finally, we need to determine how to merge these $L\approx \sqrt{n}$ solutions $\bar{\chi}^{t}$ for $t\in [L]$. Let us first discuss the situation from an intuitive viewpoint. Notice that naively taking the output vector $\chi\in\{-1, 1\}^n$ as the ``concatenation" of vectors $\bar{\chi}^{t} \in \{-1, 1\}^{[P_t]}$ computed in different parts can result in a large discrepancy. For instance, in the absence of any further guarantee on how the parts are mixed, for set $S_i$, we can have a discrepancy as large as \[|\sum_{j\in S_i} \chi_j| = |\sum_{t=1}^{L} \sum_{j\in S_i\cap P_t} \chi_j| = \sqrt{|S_i\cap P_1| \cdot O(\log m)} + \sqrt{|S_i\cap P_2| \cdot O(\log m)}+\dots+\sqrt{|S_i\cap P_L| \cdot O(\log m)}.\] This can reach $\sqrt{|S_i| L \cdot O(\log m)}$, which is an $\sqrt{L}$ factor loss compared to the ideal bound of $\sqrt{|S_i| \cdot O(\log m)}$. Intuitively, this loss stems from mixing $L$ variables, each expected to be roughly $\sqrt{|S_i|/L \cdot O(\log m)}$, in an arbitrary way. This arbitrary way allows all of the $L$ terms to contribute positively and thus the absolute values add up. The challenge is that this can happen in any one of the $m$ sets. 

To remedy the above issue, we need to find a \textit{good mixture} of the solutions. More concretely, to obtain the output vector $\chi\in\{-1, 1\}^n$, for each part $t$, we can determine whether to \textit{take $\bar{\chi}^{t}$ as is or to negate it}. The negation of course does not change the absolute value of discrepancy in each part. However, if we choose it wisely, it can help us avoid the unfortunate case of discrepancies of the $L$ different parts adding up in the same direction. 

Formally, we set up a new discrepancy problem with a variable $\chi'\in \{-1, 1\}^{L}$---which determines whether each part is negated or not---with the interpretation that we will set the overall output vector as $\chi_{j} = \bar{\chi}^{t}_j \cdot \chi'_{t}$ where $j\in P_t$. For each set $i\in [m]$, we have 
\[
(\sum_{j\in S_i} \chi_j)^2 =  (\sum_{t=1}^{L} \sum_{j\in S_i\cap P_t} \chi_j)^2 = (\sum_{t=1}^{L} \sum_{j\in S_i\cap P_t} \bar{\chi}^{t}_j \cdot \chi'_{t})^2 = (\sum_{t=1}^{L} \chi'_{t} \cdot (\sum_{j\in S_i\cap P_t} \bar{\chi}^{t}_j))^2
\]
To summarize, we have a new ``weighted" discrepancy problem with an output vector $\chi'\in \{-1, 1\}^{L}$, which consists of $m$ sets where each set $S_i$ has $L$ elements and element $t$ has weight $a_{i,t} = (\sum_{j\in S_i\cap P_t} \bar{\chi}^{t}_j)$ in the discrepancy of set $S_i$. That is, we have $disc(S_i) =  (\sum_{t=1}^{L} a_{i,t} \cdot  \chi'_{t})$. The guarantee provided is that for each $t\in[L]$, we have $a^2_{i,t} = (\sum_{j\in S_i\cap P_t} \bar{\chi}^{t}_j)^2 = |S_i\cap P_t| \cdot O(\log m).$

We can solve this discrepancy problem via sequential derandomization of \Cref{thm:sequential_derandomization_raw} in depth $\tilde{O}(L) = \tilde{O}(\sqrt{n})$. Also, the work is no more than $\tilde{O}(n + m + \sum_{i = 1}^m |S_i|)$. We get a vector $\chi'\in \{-1, 1\}^{L}$ such that  $disc^2(S_i) = (\sum_{t=1}^{L} a_{i,t} \cdot  \chi'_{t})^2 = (\sum_{t=1}^{L} a^2_{i,t} \cdot O(\log m)).$ Hence, overall, we have an output vector $\chi\in\{-1,1\}^n$---with the definition $\chi_{j} = \bar{\chi}^{t}_j \cdot \chi'_{t}$ where $j\in P_t$---such that for each $i\in [m]$, we have
\[disc^2(S_i) = (\sum_{j\in S_i} \chi_j)^2 = \sum_{t=1}^{L} a^2_{i,t} \cdot O(\log m)= \sum_{t=1}^{L} |S_i\cap P_t| \cdot O(\log m) \cdot O(\log m) = |S_i| \cdot O(\log^2 m).\qedhere\]
\end{proof}

\subsection{Optimal discrepancy with $\tilde{O}(\sqrt{n})$ depth}

\paragraph{An intuitive/informal discussion of the suboptimality in \Cref{thm:rootdepth-suboptimal} and how we remedy it} \Cref{thm:rootdepth-suboptimal} has a suboptimal discrepancy of $disc(S_i)=\sqrt{|S_i|} \cdot O(\log m)$, instead of the ideal bound of $disc(S_i)=\sqrt{|S_i|} \cdot O(\sqrt{\log m})$. Intuitively, there are two $O(\sqrt{\log m})$ factors in the achieved discrepancy: one $O(\sqrt{\log m})$ factor from the sequential derandomization in solving each of the $\sqrt{n}$ parts as described in step (2), and another $O(\sqrt{\log m})$ factor from the process of finding a good mixture of the parts as described in part (3). In this subsection, we discuss how to remedy this loss.

To alleviate the problem, we zoom in on the first loss: the $O(\sqrt{\log m})$ loss in solving each of the $\sqrt{n}$ parts in step (2). This loss is optimal in the worst-case sense, meaning that in each part, there will be at least one set that experiences an $O(\sqrt{\log m})$ factor loss. However, there is something to be optimistic about: in each part, most of the sets should not experience such a loss. As a matter of fact, the $\sqrt{\log m}$ factor in the sequential derandomization is to allow a union bound over all the $m$ sets, by reducing the probability of each set breaking the bound to $1/m$. But most sets should have a much smaller loss. 

Indeed, we observe that, with some extra work, we can create a variant of the sequential derandomization result stated in \cref{thm:sequential_derandomization_raw} that ensures the average of the losses to be only a $1+o(1)$ factor. We next state this variant (for the first reading, the reader can think of all $imp(i)$ as equal to $1$; later we will need the importance-weighted generality).

\begin{restatable}{theorem}{SeqDerandWithAverage}[Sequential derandomization, augmented with importance-weighted averaging]
\label{thm:sequential_derandomization}
There exists an absolute constant $C>0$ for which the following holds. Let $n,m\in \mathbb{N}$, $M \in \mathbb{R}$ with $ M\geq \max(m,2)$, $A \in \mathbb{R}^{m \times n}$, and $imp(i) \in \mathbb{R}_{\geq 0}$ for each $i\in [m]$. There exists a deterministic parallel algorithm algorithm with work $\tilde{O}(nnz(A) + n + m))$ and depth $n \poly(\log M)$  that computes a vector $\chi \in \{-1,1\}^n$ such that $\sum_{i=1}^m imp(i) \cdot  disc^2_i \leq \left(1 + \frac{1}{M}\right) \sum_{i=1}^m imp(i) \cdot \big(\sum_{j = 1}^n a^2_{ij}\big)$. Moreover, for every $i \in [m]$, it also holds that $disc_i^2 = C\log M \cdot \sum_{j =1}^n a_{ij}^2$. 
\end{restatable}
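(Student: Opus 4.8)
The plan is to run the method of conditional expectations on a carefully chosen pessimistic estimator (potential function) that simultaneously controls both the importance-weighted average of $disc_i^2$ and each individual $disc_i^2$. First I would set up the random process: let $\chi_1,\dots,\chi_n$ be independent uniform $\pm 1$ variables, to be fixed one at a time. For the per-constraint control I would use, for each $i\in[m]$, the standard exponential estimator $\Phi_i = \exp\big(\lambda_i\, sdisc_i\big) + \exp\big(-\lambda_i\, sdisc_i\big)$ with $\lambda_i$ tuned to roughly $\Theta\big(\sqrt{\log M / \sum_j a_{ij}^2}\big)$ (this is the Raghavan/Spencer estimator, whose conditional expectation can only decrease as we fix variables, and whose final value bounds $disc_i^2 \le C\log M \cdot \sum_j a_{ij}^2$ once some set value is achieved). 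The new ingredient is a second estimator $\Psi = \sum_{i=1}^m imp(i)\cdot sdisc_i^2$ that directly tracks the importance-weighted sum of squared discrepancies; note $\E[\Psi] = \sum_i imp(i)\sum_j a_{ij}^2$ since the $\chi_j$ are independent and mean zero, so $\E[sdisc_i^2] = \sum_j a_{ij}^2$. The overall potential is a nonnegative combination $\Phi = \Psi + \beta\sum_{i=1}^m w_i \Phi_i$ for suitable nonnegative weights $w_i$ and a scalar $\beta$ chosen so that the $\Phi$-part contributes only a $1/M$ fraction of the average bound — this is what lets the final guarantee be $(1+1/M)\sum_i imp(i)\sum_j a_{ij}^2$ rather than a constant-factor blowup.

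The key steps, in order: (i) Verify the martingale/supermartingale property: for the current partial assignment, among the two choices $\chi_k\in\{+1,-1\}$, at least one does not increase the conditional expectation of $\Phi$; greedily pick it. For $\Psi$ this is exact ($\E[\Psi\mid\text{fixed},\chi_k=+1] + \E[\Psi\mid\text{fixed},\chi_k=-1] = 2\E[\Psi\mid\text{fixed}]$, a convexity/averaging identity because each $sdisc_i$ is affine in $\chi_k$ and the remaining variables are independent mean-zero), and for each $\Phi_i$ it is the classical computation. (ii) Conclude that the final deterministic assignment satisfies $\Phi_{\text{final}} \le \Phi_{\text{initial}} = \E[\Phi] = \sum_i imp(i)\sum_j a_{ij}^2 + \beta\sum_i w_i \E[\Phi_i]$. (iii) Extract the two conclusions: since all terms are nonnegative, $\Psi_{\text{final}} = \sum_i imp(i)\,disc_i^2 \le \Phi_{\text{final}}$, and choosing $\beta$ small (but the per-constraint estimators still controlled because $\E[\Phi_i] = 2\cosh(\cdot)\le 2e^{\lambda_i^2\sum_j a_{ij}^2/2}$ is bounded) gives $\sum_i imp(i)\,disc_i^2 \le (1+1/M)\sum_i imp(i)\sum_j a_{ij}^2$; and separately, $\beta w_i \Phi_i^{\text{final}} \le \Phi_{\text{final}} \le O(1)\cdot(\text{total})$, which after taking logs yields $disc_i^2 \le C\log M\cdot\sum_j a_{ij}^2$ provided $w_i,\beta$ are chosen per-constraint so the slack absorbs the $\log(\text{total}/\beta w_i)$ term — here one picks $w_i$ proportional to $1/\E[\Phi_i]$ so each $\Phi_i^{\text{final}}$ is bounded by the same value, uniform over $i$. (iv) Address the parallel implementation: each conditional-expectation evaluation involves updating, for each affected constraint $i$, the quantities $sdisc_i$ and the $\Phi_i$ values; these updates touch only the nonzero entries of column $k$ of $A$, so over all $n$ steps the total work is $\tilde O(nnz(A)+n+m)$, and each of the $n$ sequential steps is parallelizable across constraints in $\poly(\log M)$ depth, giving total depth $n\poly(\log M)$. (v) Handle degenerate cases: constraints with $\sum_j a_{ij}^2 = 0$ are vacuous; if $imp(i)=0$ for some $i$ we still need the per-constraint bound, so those $i$ keep a positive $w_i$ in the $\Phi$-part regardless of their importance.

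The main obstacle I expect is the bookkeeping in step (iii) to get both guarantees out of a single potential \emph{with the right constants} — in particular, choosing $\beta$ and the $w_i$ so that (a) the $\Phi$-part of the initial potential is at most $\frac{1}{M}\sum_i imp(i)\sum_j a_{ij}^2$ (so the average bound comes out as $1+1/M$, not $1+O(1)$), while simultaneously (b) $\beta$ is still large enough that $\Phi_i^{\text{final}}\le \Phi_{\text{final}}/(\beta w_i)$ is small enough that $\log$ of it is $O(\log M)$, yielding the per-constraint bound with a clean absolute constant $C$. Reconciling (a) and (b) requires that $M\ge m$ (so the $1/M$ budget spread over the relevant constraints is not too small relative to $1/\E[\Phi_i]$) and a careful choice of $\lambda_i$; this is where the hypothesis $M\ge\max(m,2)$ is used. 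A secondary subtlety is that the exponential estimators are real-valued and must be computed to sufficient (polylogarithmic) precision to keep the depth bound — this is routine but needs a remark. The rest is the standard method-of-conditional-expectations argument, essentially as in Raghavan's proof, and I would organize the full argument so that the unweighted statement \Cref{thm:sequential_derandomization_raw} falls out as the special case $imp(i)\equiv 1$, $M=m$, with only the per-constraint conclusion retained.
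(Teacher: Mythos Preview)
Your proposal is correct and takes essentially the same approach as the paper: combine a variance-tracking potential $\sum_i imp(i)\,\E[sdisc_i^2\mid\cdot]$ with per-constraint exponential-moment estimators, weight the latter by $1/(mM)$ times their own initial values so the combined potential starts at $1+1/M$ (after normalization), and run the method of conditional expectations. The paper's only substantive deviation from your outline is that it uses the polynomial pessimistic estimator $\prod_j(1\pm\lambda_i a_{ij}\chi_j+(\lambda_i a_{ij})^2)$ in place of the exact conditional expectation $\prod_j\cosh(\lambda_i a_{ij})\cdot e^{\pm\lambda_i(\cdot)}$, which sidesteps the numerical-precision issue you flagged for transcendental functions; your parameter choices and the use of $M\ge m$ match the paper's.
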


The proof of this variant is deferred to \Cref{app:seqDerandwithAverage}. Intuitively, this variant says that the average loss among the sets is ``negligible". However, the average loss among the sets is not directly helpful. We somehow need the overall loss of each of the $m$ sets to be small. That is, we need a mechanism that ensures that the average loss of each set, averaged over all the parts, is small (ideally $1+o(1)$). 

For this mechanism, we appeal to the Multiplicative Weights Update (MWU) method. This method uses varying degrees of importance for different sets based on the losses they have experienced so far, ensuring that sets with large hitherto losses have large importance. \Cref{thm:sequential_derandomization} allows us to make use of this, by enforcing that sets with larger importance experience smaller losses in the next part. To make this MWU averaging work, we need to set up some sequential dependency between the parts (previously, the parts were solved in parallel, independently). Next, we first provide a reminder on MWU, phrased concretely for our usage, and then present the algorithm that formalizes this intuition. A proof of this MWU lemma is presented in \Cref{app:MWU}.

\begin{restatable}{lemma}{MWU}[Multiplicative Weights Update]
\label{lem:MWU} 
Consider a multi-round game with $m$ constraints and an oracle. Each constraint $i\in [m]$ has an importance value $imp(i) \in \mathbb{R}^{+}$, which is initially set $imp^1(i)=1$ and changes over the rounds. In each round $t$, we give the oracle the importance $imp^t(.)$ of the constraints, and the oracle gives back for each constraint a value $gap^t(i) \in [0, W]$, with the guarantee that $\sum_{i=1}^{m} imp^t(i) \cdot gap^t(i) \leq \sum_{i=1}^{m} imp^t(i)$. For any given value $\eps\in [0,1/2]$, there is a way to set the importance values during the game such that, at the end of a game with $T=\Omega(W\log m/\eps^2)$ rounds, for each constraint, we have $(\sum_{t=1}^{T} gap^t(i))/T \leq 1+\eps.$ The rule for updating the importance values is simple: in each round $t$, set $imp^{t+1}(i) = imp^{t}(i) \cdot (1+\eta \cdot gap^t(i))$ where $\eta = {\eps}/{(3W)}$.
\end{restatable}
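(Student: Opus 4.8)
The plan is to run the standard MWU potential argument with the potential $\Phi^t = \sum_{i=1}^m imp^t(i)$, and track how it evolves over the $T$ rounds. First I would establish the upper bound on the final potential: using the update rule $imp^{t+1}(i) = imp^t(i)(1+\eta\cdot gap^t(i))$, we get
\[
\Phi^{t+1} = \sum_{i=1}^m imp^t(i)\bigl(1+\eta\cdot gap^t(i)\bigr) = \Phi^t + \eta\sum_{i=1}^m imp^t(i)\cdot gap^t(i) \leq \Phi^t + \eta\,\Phi^t = (1+\eta)\Phi^t,
\]
where the inequality is exactly the oracle guarantee $\sum_i imp^t(i)\cdot gap^t(i) \le \sum_i imp^t(i) = \Phi^t$. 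Since $\Phi^1 = m$, iterating gives $\Phi^{T+1} \le m(1+\eta)^T \le m\,e^{\eta T}$.

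Next I would establish the per-constraint lower bound. Fix a constraint $i$. Since all factors are nonnegative, $\Phi^{T+1} \ge imp^{T+1}(i) = \prod_{t=1}^T (1+\eta\cdot gap^t(i))$. Here I would use the inequality $1+x \ge e^{x - x^2}$ for $x\in[0,1]$ (valid since $\eta\cdot gap^t(i)\le \eta W = \eps/3 \le 1$); applying it with $x = \eta\cdot gap^t(i)$ yields
\[
imp^{T+1}(i) \ge \exp\Bigl(\eta\sum_{t=1}^T gap^t(i) - \eta^2\sum_{t=1}^T (gap^t(i))^2\Bigr) \ge \exp\Bigl(\eta\sum_{t=1}^T gap^t(i) - \eta^2 W\sum_{t=1}^T gap^t(i)\Bigr),
\]
using $gap^t(i)\le W$ in the last step. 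Writing $G_i = \sum_{t=1}^T gap^t(i)$, combining the two bounds gives $e^{\eta G_i(1-\eta W)} \le m e^{\eta T}$, i.e. $\eta G_i(1-\eta W) \le \ln m + \eta T$, hence
\[
\frac{G_i}{T} \le \frac{1}{1-\eta W}\Bigl(1 + \frac{\ln m}{\eta T}\Bigr).
\]
Now plug in $\eta = \eps/(3W)$, so $\eta W = \eps/3$ and $\frac{1}{1-\eps/3} \le 1 + \eps/2$ for $\eps\le 1/2$; and choose $T = \Omega(W\log m/\eps^2)$ large enough that $\frac{\ln m}{\eta T} = \frac{3W\ln m}{\eps T} \le \eps/4$. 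Then $\frac{G_i}{T} \le (1+\eps/2)(1+\eps/4) \le 1+\eps$ for $\eps \le 1/2$, which is the claimed bound. Adjusting the hidden constant in $T = \Omega(W\log m/\eps^2)$ appropriately (and noting $\ln m = \Theta(\log m)$) completes the argument.

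I do not expect a serious obstacle here — this is the textbook MWU analysis specialized to the stated oracle model — but the one point requiring care is making the two elementary exponential inequalities ($1+x\le e^x$ and $1+x\ge e^{x-x^2}$ on $[0,1]$) apply cleanly, which is why the constraint $\eta W \le 1$ (equivalently $\eps \le 3$, satisfied since $\eps \le 1/2$) matters, and then chasing the constants so that the two multiplicative slacks $\frac{1}{1-\eps/3}$ and $1 + \frac{\ln m}{\eta T}$ multiply to at most $1+\eps$ rather than something like $1 + 2\eps$.
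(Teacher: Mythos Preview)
Your proposal is correct and follows essentially the same approach as the paper: both arguments track the potential $\Phi^t = \sum_i imp^t(i)$, upper-bound $\Phi^{T+1} \le m\,e^{\eta T}$ via the oracle guarantee, and lower-bound each $imp^{T+1}(i)$ by an exponential in $\sum_t gap^t(i)$. The only cosmetic difference is the elementary inequality used for the lower bound---the paper shows $e^{z/(1+\eps')} \le 1+z$ for $z\in[0,\eps']$ with $\eps'=\eps/3$, whereas you use $1+x \ge e^{x-x^2}$---but both lead to the same final bound after constant-chasing.
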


We are now ready for our second warm-up, which achieves the optimal discrepancy in $\tilde{O}(\sqrt{n})$ depth.

\begin{theorem}\label{thm:rootdepth-optimal}
Let $n, m \in \mathbb{N}$ with $m \geq 2$, and let $\{S_1, S_2, \ldots, S_m\}$ be a family of subsets of $[n]$ such that $|S_i|\leq s$ for all $i\in [m]$. Then, there exists a deterministic parallel algorithm that can compute a vector $\chi \in \{-1, 1\}^n$ with $\tilde{O}(n + m + \sum_{i = 1}^m |S_i|)$ work and $\tilde{O}(\sqrt{n})$ depth such that, for every $i \in [m]$, it holds that $disc^2(S_i)=(\sum_{j \in S_i} \chi_j)^2 = O(s \log m)$.
\end{theorem}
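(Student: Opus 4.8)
The plan is to keep the three‑stage skeleton of \Cref{thm:rootdepth-suboptimal} --- (1) split $[n]$ into $L=\Theta(\sqrt n)$ parts $P_1\sqcup\dots\sqcup P_L$ via \Cref{lemma:partition_unweighted} (with a constant $\eps$, so $|P_\ell|=O(\sqrt n)$ and $|S_i\cap P_\ell|\le(1+\eps)|S_i|/L+O(\log m)$), (2) compute a tentative assignment $\bar\chi^\ell\in\{-1,1\}^{P_\ell}$ in each part, and (3) merge by solving the weighted instance on $\chi'\in\{-1,1\}^L$ with weights $a_{i,\ell}:=\sum_{j\in S_i\cap P_\ell}\bar\chi^\ell_j$ via \Cref{thm:sequential_derandomization_raw}, which yields $disc^2(S_i)=O(\log m)\sum_\ell a_{i,\ell}^2$. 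Since there is only one merge here, it is enough to produce tentative assignments with $\sum_\ell a_{i,\ell}^2=O(|S_i|)$; the crude realization of stage (2), i.e.\ plain independent sequential derandomization in each part, only gives $a_{i,\ell}^2=O(|S_i\cap P_\ell|\log m)$ and hence $\sum_\ell a_{i,\ell}^2=O(|S_i|\log m)$, and the goal is to shave this last $\log m$ factor.

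To do so, I would replace the independent per‑part solves by a $T$‑round Multiplicative Weights Update game (\Cref{lem:MWU}), where $W:=O(\log(nm))$ and $T:=\Theta(W\log m)=\polylog(nm)$. Group the parts into $T$ consecutive batches $B_1,\dots,B_T$, each of size $L/T$. In round $t$, holding the current MWU importances $imp^t(\cdot)$, run the importance‑weighted sequential derandomization of \Cref{thm:sequential_derandomization} (with parameter $M:=\poly(nm)$) \emph{in parallel and independently} on every part $P_\ell$ with $\ell\in B_t$, feeding it the reweighted importances $imp^t(i)/s_i^t$, where $s_i^t:=\sum_{\ell\in B_t}|S_i\cap P_\ell|$ is the variance set $i$ expects from batch $t$ (sets with $s_i^t=0$ do not appear in batch $t$ and get $gap^t(i):=0$). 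Summing the average guarantee of \Cref{thm:sequential_derandomization} over the parts of $B_t$ gives $\sum_i imp^t(i)\,\hat s_i^t/s_i^t\le(1+\tfrac1M)\sum_i imp^t(i)$, where $\hat s_i^t:=\sum_{\ell\in B_t}a_{i,\ell}^2$, while the unconditional worst‑case clause of the same theorem gives $\hat s_i^t\le C\log M\cdot s_i^t$, so $gap^t(i):=\hat s_i^t/s_i^t\in[0,W]$ is a legitimate oracle response. Then \Cref{lem:MWU} outputs importances with $\tfrac1T\sum_t\hat s_i^t/s_i^t\le 1+\eps$ for every $i$. Each round runs $L/T$ sequential derandomizations on $O(\sqrt n)$‑variable instances in parallel, so it has depth $\tilde O(\sqrt n)$; over $T=\polylog$ rounds plus the $\tilde O(L)=\tilde O(\sqrt n)$‑depth merge the total depth is $\tilde O(\sqrt n)$, and the work stays near‑linear by the same empty‑intersection clean‑up as in \Cref{thm:rootdepth-suboptimal}.

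To finish I would convert the per‑round ratio bound into the additive one: since $\hat s_i^t/s_i^t\ge 0$,
\[
\sum_t\hat s_i^t \;=\; \sum_{t:\,s_i^t>0}\frac{\hat s_i^t}{s_i^t}\,s_i^t \;\le\; \Big(\max_t s_i^t\Big)\sum_t\frac{\hat s_i^t}{s_i^t} \;\le\; \Big(\max_t s_i^t\Big)(1+\eps)T,
\]
so $\sum_\ell a_{i,\ell}^2=O(|S_i|)$ whenever $\max_t s_i^t=O(|S_i|/T)$; by the partition guarantee $s_i^t\le(1+\eps)|S_i|/T+(L/T)\,O(\log m)$, and this holds once $|S_i|=\Omega(L\log m)$. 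For the remaining ``small'' sets the same computation still certifies $\sum_\ell a_{i,\ell}^2=O(|S_i|\log m)$ from the worst‑case clause alone, hence $disc^2(S_i)=O(|S_i|\log^2 m)$, which is $O(s\log m)$ provided $|S_i|\le s/\log m$. Taking $L=\Theta(\min(\sqrt n,\,s/\log^2 m))$ makes these two regimes exhaustive and keeps the merge depth within budget; the residual corner cases where $s$ or $n$ is itself only $\polylog(nm)$ are absorbed by the trivial bound $disc^2(S_i)\le|S_i|^2\le s\log m$ for $|S_i|\le\sqrt{s\log m}$ together with direct use of \Cref{thm:sequential_derandomization_raw}.

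The step I expect to be the crux is exactly this last conversion: MWU only controls the per‑round \emph{ratio} $\hat s_i^t/s_i^t$, so turning it into a bound on $\sum_t\hat s_i^t$ forces the per‑batch expectations $s_i^t$ to be near‑uniform in $t$, and that uniformity is only as good as the \emph{multiplicative} quality of the crude partition --- the additive $O(\log m)$ slack per part is negligible for large sets but is precisely what dictates the case split on set sizes and the choice of $L$. A secondary subtlety is that the average guarantee of \Cref{thm:sequential_derandomization} must be invoked with the round‑ and set‑dependent reweighting $imp^t(i)/s_i^t$ so that its conclusion matches the flat normalization $\sum_i imp^t(i)$ required by the MWU oracle, while the unconditional worst‑case clause of the same theorem is simultaneously needed to certify the bounded‑gap hypothesis $gap^t(i)\le W$.
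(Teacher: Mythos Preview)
Your overall strategy --- partition, run $T$ MWU rounds over batches using the importance--weighted sequential derandomization of \Cref{thm:sequential_derandomization} to control the per--round average variance, then merge --- is exactly the paper's. The substantive gap is in how you form the batches, and it breaks the depth bound.

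You partition once into $L\approx\sqrt n$ pieces via \Cref{lemma:partition_unweighted} and then form $T$ batches by \emph{grouping} $L/T$ consecutive pieces. The only bound you have on $s_i^t=\sum_{\ell\in B_t}|S_i\cap P_\ell|$ is the sum of the per--piece bounds, i.e.\ $s_i^t\le(1+\eps)|S_i|/T+(L/T)\,O(\log m)$. After the $\max$ step and multiplying by $T$, this additive term becomes $O(L\log m)=O(\sqrt n\log m)$, which swamps $|S_i|$ whenever $s\ll\sqrt n\log m$. Your remedy is to shrink $L$ to $\Theta(s/\log^2 m)$, but then each piece has size $\Theta(n\log^2 m/s)$ and the per--piece sequential derandomization has depth $\tilde O(n\log^2 m/s)$. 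For, say, $s=n^{1/3}$ (not polylog, so not covered by your ``residual corner cases'') this is $\tilde O(n^{2/3})\gg\tilde O(\sqrt n)$. Your case split therefore does not close: the regime $\polylog(nm)\ll s\ll\sqrt n\log^2 m$ is handled by neither branch within the stated depth.

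The paper fixes exactly this by reversing the order: it first applies \Cref{lemma:partition_unweighted} at the \emph{batch} level, directly into $T=\Theta(\log^5 m)$ parts, so that $|S_i\cap P_t|\le(1+\eps)(|S_i|/T+\delta)$ with $\delta=O(\log m/\eps^2)$; only then is each batch $P_t$ sub--partitioned into pieces of size $O(\sqrt{n/T})$ (this second level needs nothing beyond bounded piece size). Now the additive slack accumulates only $T$ times, giving $T\delta=\polylog$, which is absorbed once $s\ge\polylog$. A second simplification is that the paper normalizes $gap^t(i)$ by the \emph{uniform} constant $(1+\eps)^2(s/T+\delta)$ rather than the round-- and set--dependent $s_i^t$; this uses uniform importances $imp(i)$ in \Cref{thm:sequential_derandomization} (no $1/s_i^t$ reweighting) and lets MWU yield $\sum_{t,t'}(\sum_{j\in S_i\cap P_{t,t'}}\bar\chi_j)^2\le(1+\eps)^3(s+T\delta)$ directly, bypassing your $\max_t s_i^t$ step entirely. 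Since the theorem only asks for $O(s\log m)$ (not $O(|S_i|\log m)$), normalizing by $s$ is enough.
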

\begin{proof}
%\todo{For some parts we need to assume that $m$ is a sufficiently large constant (which we can using the previous warm-up), not sure whether we should mention that explicitly}
Set $\eps=1/(10\log m)$. Also, we assume $n\geq \log^5 m$ (otherwise, we can solve the entire problem in $\poly(\log m)$ depth and near-linear work using sequential derandomization), and that $s\geq \Omega(\log^{10} m)$ (otherwise, the statement of the theorem follows from \Cref{thm:FOCS23}, by setting $k=\sqrt{s}$).

First, we partition the variables into $T=\Theta(\log^5 m)$ parts $P_1 \sqcup P_2 \sqcup \ldots \sqcup P_T = [n]$ with the following guarantee: For each set $i\in [m]$ and each part $t\in [T]$, we should have $|S_i\cap P_t| \leq (1+\eps) (|S_i|/T + \delta)$ for $\delta=O(\log m/\eps^2)$, and moreover, for each $t\in T$, we should have $|P_t| \leq (1+\eps)n/T$. This partition can be done directly via \Cref{lemma:partition_unweighted}, in $\tilde{O}(n+m+\sum_{i=1}^{m})$ work and $\poly(\log(nm))$ depth. We will process these $T=\Theta(\log^5 m)$ parts sequentially, using MWU, in the sense that processing part $t\in T$ will be regarded as round $t$ of the MWU game, as described in \Cref{lem:MWU}.

Let us first discuss what we do in each part. Consider part $t$, which has $|P_t|=\Theta(n/\log^5 m)$ variables, and the sets $S_1\cap P_t$, $S_2\cap P_t$, \dots, $S_m\cap P_t$. To solve this part in $\tilde{O}(\sqrt{n})$ depth, we perform something similar to steps (1) and (2) of the proof of \Cref{thm:rootdepth-suboptimal}. Concretely, first we partition $P_t$ further into $L=2^{\lceil{\log\sqrt{n/\log^5 m}\rceil}}\approx\sqrt{n/\log^5 m}$ \textit{pieces} $P_{t, 1} \sqcup P_{t, 2} \sqcup \ldots \sqcup P_{t, L}$ such that for each $t'\in [L]$, we have $|P_{t, t'}| \leq 2|P_{t}|/L = O(\sqrt{n/\log^5 m})$. This can be done via \Cref{lemma:partition_unweighted}, in $\poly(\log(nm))$ depth and using  $\tilde{O}(n+m+\sum_{i=1}^{m})$ total work (again, we need to perform the simple clean-up of removing from each part sets that have an empty intersection with the part). Then, we use sequential derandomization inside each of the pieces, all independently and in parallel. In particular, by invoking \Cref{thm:sequential_derandomization} in each piece $t'\in [L]$, we get an output $\bar{\chi}^{t, t'}\in \{-1, 1\}^{[P_{t, t'}]}$ with the following two guarantees:
\begin{itemize}
    \item For each set $i$, we have $(\sum_{j\in S_i \cap P_{t, t'}} \bar{\chi}^{t, t'}_j)^2 \leq |S_i \cap P_{t, t'}| \cdot O(\log m)$
    \item $\sum_{i=1}^{m} imp(i) \cdot (\sum_{j\in S_i \cap P_{t, t'}} \bar{\chi}^{t, t'}_j)^2 \leq (1+\eps) \sum_{i=1}^{m} imp(i) \cdot |S_i \cap P_{t, t'}|$
\end{itemize}
Over all the pieces $t'$, which are solved in parallel, the algorithm works in $\tilde{O}(n+m+\sum_{i=1}^{m} |S_i|)$ work and $\tilde{O}(\sqrt{n})$ depth. Also, from the second inequality, we can deduce that
\begin{align*} 
\sum_{i=1}^{m} imp(i) \cdot \bigg(\sum_{t'=1}^{L} (\sum_{j\in S_i \cap P_{t, t'}} \bar{\chi}^{t, t'}_j)^2\bigg) &=
\sum_{t'=1}^{L} \sum_{i=1}^{m} imp(i) \cdot (\sum_{j\in S_i \cap P_{t, t'}} \bar{\chi}^{t, t'}_j)^2\\
&\leq \sum_{t'=1}^{L} (1+\eps) \sum_{i=1}^{m} imp(i) \cdot |S_i \cap P_{t, t'}| \\
&= (1+\eps )\sum_{i=1}^{m} imp(i)  \cdot \bigg(\sum_{t'=1}^{L} |S_i \cap P_{t, t'}|\bigg)\\
&= (1+\eps) \sum_{i=1}^{m} imp(i)  \cdot |S_i \cap P_{t}|\\
&\leq (1+\eps) \sum_{i=1}^{m} imp(i) \cdot \big((1+\eps)(\frac{s}{T} + \delta)\big)\\
&= (1+\eps)^2(\frac{s}{T} + \delta) \cdot \sum_{i=1}^{m} imp(i)
\end{align*}
Now, let us define 
\[gap^{t}(i) =  \frac{ \sum_{t'=1}^{L} (\sum_{j\in S_i \cap P_{t, t'}} \bar{\chi}^{t, t'}_j)^2}{(1+\eps)^2(\frac{s}{T} + \delta)}.\] 
Notice that 
\begin{align*} gap^t(i) &= \frac{ \sum_{t'=1}^{L} (\sum_{j\in S_i \cap P_{t, t'}} \bar{\chi}^{t, t'}_j)^2}{(1+\eps)^2(\frac{s}{T} + \delta)} \\
&\leq \frac{ \sum_{t'=1}^{L} |S_i \cap P_{t, t'}| \cdot O(\log m)}{(1+\eps)^2(\frac{s}{T} + \delta)} \\
&= \frac{ |S_i\cap P_{t}| \cdot O(\log m)}{(1+\eps)^2(\frac{s}{T} + \delta)} \leq \frac{O(\log m)}{(1+\eps)^2},
\end{align*} 
and moreover,
\begin{align*} 
\sum_{i=1}^{m} imp(i)\cdot gap^{t}(i) &= \sum_{i=1}^{m} imp(i) \cdot \frac{\sum_{t'=1}^{L} (\sum_{j\in S_i \cap P_{t, t'}} \bar{\chi}^{t, t'}_j)^2}{(1+\eps)^2(\frac{s}{T} + \delta)} \\
&\leq \frac{(1+\eps)^2(\frac{s}{T} + \delta) \cdot \sum_{i=1}^{m} imp(i)}{(1+\eps)^2(\frac{s}{T} + \delta)} =\sum_{i=1}^{m} imp(i).
\end{align*}
Hence, the values $gap^{t}(i)$ satisfy the two properties of the MWU statement in \Cref{lem:MWU} with $W=O(\log m)$. Thus, after running the game for $T$ rounds by going through the parts $P_1$, $P_2$, \dots, $P_T$ and doing the above for each of them sequentially, since $T=\Omega(W\log m/\eps^2)$, we get that for each set $i\in [m]$, we have $\sum_{t=1}^{T}gap^{t}(i) \leq (1+\eps)T$. This means 
\[
\sum_{t=1}^{T} \sum_{t'=1}^{L} (\sum_{j\in S_i \cap P_{t, t'}} \bar{\chi}^{t, t'}_j)^2 \leq (1+\eps)^3 (s+T\delta) = (1+\eps)^3 s+O(\log^8 m).
\]
At this point, we have $TL = \Theta(\sqrt{n\log^5 m})$ solutions $\bar{\chi}^{t, t'}\in \{-1,1\}^{[P_{t,t'}]}$ for $t\in [T]$ and $t'\in [L]$ and we need to find a good mixture of them. That is, we want to find a mixture vector $\chi'\in \{-1, 1\}^{TL}$ which determines for each of these solutions whether to take itself or its negation, by setting the output vector $\chi\in\{-1, 1\}^n$ as $\chi_j=\bar{\chi}^{t, t'}_j \cdot \chi'_{t,t'}$ where $j\in P_{t,t'}$. This part is quite similar to step (3) in the proof of \Cref{thm:rootdepth-suboptimal}. In particular, define $a_{i,(t-1)T+t'} = (\sum_{j\in S_i\cap P_{t, t'}} \bar{\chi}^{t,t'}_j).$ By invoking the sequential derandomization of \Cref{thm:sequential_derandomization}, which runs in $\tilde{O}(\sqrt{n})$ depth and with $\tilde{O}(n+m+\sum_{i=1}^{m}|S_i|)$ work, we get a mixture vector $\chi'\in \{-1, 1\}^{TL}$ such that  $disc^2(i) = (\sum_{k=1}^{TL} a_{i,k} \cdot  \chi'_{k})^2 = (\sum_{k=1}^{TL} a^2_{i,k} \cdot O(\log m)).$ Hence, overall, for each $i\in [m]$, we have
\begin{align*}
    disc^2(S_i)&=(\sum_{j\in S_i} \chi_j)^2 \\
    &= (\sum_{k=1}^{TL} a_{i,k} \cdot \chi'_{k})^2 \\
    &\leq \sum_{k=1}^{TL} a^2_{i, k} \cdot O(\log m) \\
    &= \sum_{t=1}^{T} \sum_{t'=1}^{L} (\sum_{j\in S_i\cap P_{t, t'}} \bar{\chi}^{t,t'}_j)^2 \cdot O(\log m) \\
    &\leq ((1+\eps)^3 s+O(\log^8 m))\cdot O(\log m) \\
    &= O(s \log m).
\end{align*}
Here, the last inequality uses that $\eps=1/(10\log m)$ and $s\geq \Omega(\log^{10} m)$.
\end{proof}
% \newpage

\section{Optimal discrepancy in polylogarithmic depth --- unweighted}
\label{sec:unweighted}
In this section, we present a deterministic parallel algorithm that achieves an asymptotically optimal discrepancy for the (unweighted) set balancing problem---i.e., matching what follows from the Chernoff bound---using near-linear work and polylogarithmic depth, therefore proving \Cref{thm:main-unweighted}.
% \begin{restatable}{theorem}{thmunweightedmain}
% \label{thmunweightedmain}
%     Let $n, m \in \mathbb{N}$ with $m \geq 2$, and let $\{S_1, S_2, \ldots, S_m\}$ be a family of subsets of $[n]$ such that $|S_i|\leq s$ for all $i\in [m]$. Then, there exists a deterministic parallel algorithm that can compute a vector $\chi \in \{-1, 1\}^n$ with $\tilde{O}(n + m + \sum_{i = 1}^m |S_i|)$ work and $\poly(\log (mn))$ depth such that, for every $i \in [m]$, it holds that $(\sum_{j \in S_i} \chi_j)^2 = O(s \log m)$.
% \end{restatable}

In \Cref{subsec:nlogm}, we present the core ingredients in this result, in the format of a lemma that achieves $O(\sqrt{n\log m})$ discrepancy. The lemma will actually be somewhat more general for two reasons: (1) to make way for its own proof via recursion, and (2) to facilitate its later usage. Later, in \Cref{subsec:slogm}, we use this result and some extra helper lemma to get discrepancy $O(\sqrt{s\log m})$.

% \todo{extra care in this section for $\log(mn)$ vs $\log m$ vs $\log n$}

\subsection{A polylogarithmic-depth recursive algorithm for $O(\sqrt{n\log m})$ discrepancy}
\label{subsec:nlogm}
In this subsection, we prove the following result: 

\begin{lemma}\label{lem:inductionUnweightedDisc}
There is an absolute constant $C'>0$ for which the following holds. Let $n,m \in \mathbb{N}$ with $m,n \geq 2$ and $\disc \in \mathbb{R}_{\geq 0}$. Let $A \in \mathbb{R}^{m \times n}$ satisfying that $\sum_{j=1}^n a^2_{ij} \leq \disc$ for every $i \in [m]$ and $\max_{i \in [n], j \in [m]} a^2_{ij} \leq \frac{(\log^5 (nm))}{n}\disc$. Then, there exists a deterministic parallel algorithm that can compute a vector $\chi \in \{-1,1\}^n$ with $\tilde{O}(n + m + nnz(A))$ work and $\poly(\log (nm))$ depth such that, for every $i \in [m]$, it holds that $(\sum_{j=1}^{n} a_{ij} \chi_j)^2 = (2-1/\log n) \cdot (C'\log m) \cdot \disc$.   
\end{lemma}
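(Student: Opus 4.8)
The plan is to prove \Cref{lem:inductionUnweightedDisc} by induction on $n$, following the bootstrapping strategy outlined in the introduction: partition the $n$ variables into $T = \poly(\log(nm))$ parts, each of which is solved recursively (or, in the base case, directly), while using the MWU machinery of \Cref{lem:MWU} together with the importance-weighted sequential derandomization of \Cref{thm:sequential_derandomization} to prevent the variance of each constraint from inflating by more than a $(1 + 1/\poly(\log n))$ factor per recursion level. The factor $(2 - 1/\log n)$ in the target bound is exactly what a geometric series $\sum_{\text{levels}} (1 + 1/\log n)^{\text{depth}}$ telescopes to over $O(\log n)$ levels starting from $1$, which is why the statement is phrased with this peculiar constant — it is the inductive invariant, not the final bound.

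First I would set $\eps = 1/\poly(\log(nm))$ (small enough to absorb both the MWU loss and the partitioning loss of \Cref{lemma:partition_unweighted}) and dispose of the base case: when $n \leq \poly(\log(nm))$, invoke \Cref{thm:sequential_derandomization} directly in $\poly(\log(nm))$ depth, which gives $\mathrm{disc}_i^2 = O(\log m)\cdot\disc$, comfortably below the target. For the inductive step, apply \Cref{lemma:partition_unweighted} to the column-support sets of $A$ to split $[n] = P_1 \sqcup \cdots \sqcup P_T$ with $T = \Theta(\log^5(nm))$ parts, so that each part has $|P_t| \leq (1+\eps)n/T + O(\log(nm)/\eps^2) = O(n/\log^5(nm))$ variables, each row's support is split proportionally, and hence $\sum_{j \in P_t} a_{ij}^2 \leq (1+\eps)\disc/T + (\text{error})$. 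I would then process the parts $P_1, \dots, P_T$ \emph{sequentially} as the $T$ rounds of an MWU game: in round $t$, recursively solve the subproblem on $P_t$ with matrix $A$ restricted to columns $P_t$, but applied to the \emph{importance-weighted} variant — this is where I need the recursive statement to carry the importance weights, which is presumably why the lemma is stated for a general matrix $A$ rather than just for set systems. The recursion returns $\bar\chi^t \in \{-1,1\}^{P_t}$ with $(\sum_{j\in S_i\cap P_t} a_{ij}\bar\chi^t_j)^2 \leq (2 - 1/\log n')\cdot C'\log m \cdot (\text{budget for }P_t)$ for the appropriately smaller $n' = |P_t|$, and an average guarantee weighted by $imp^t(\cdot)$. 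I would define $gap^t(i)$ as the ratio of $S_i$'s realized squared discrepancy in $P_t$ to its per-part budget, check that $gap^t(i) \in [0, W]$ with $W = O(\log m)$ and that $\sum_i imp^t(i)\, gap^t(i) \leq \sum_i imp^t(i)$ from the average guarantee, apply \Cref{lem:MWU} with $T = \Omega(W\log m/\eps^2)$ rounds, and conclude $\sum_{t=1}^T gap^t(i) \leq (1+\eps)T$, i.e. $\sum_{t=1}^T (\sum_{j\in S_i\cap P_t} a_{ij}\bar\chi^t_j)^2 \leq (1+O(\eps))(2-1/\log n')\, C'\log m\,\disc$.

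Finally, I would merge the $T$ partial solutions by the same mixing trick as in the warm-ups: introduce $\chi' \in \{-1,1\}^T$, set $\chi_j = \bar\chi^t_j \chi'_t$ for $j \in P_t$, define weights $b_{i,t} = \sum_{j\in S_i\cap P_t} a_{ij}\bar\chi^t_j$, and solve this $T$-variable weighted instance via \Cref{thm:sequential_derandomization} (or recursively — $T$ is polylogarithmic so sequential derandomization already runs in $\poly(\log(nm))$ depth) to get $\mathrm{disc}^2(i) = (\sum_t b_{i,t}\chi'_t)^2 \leq \sum_t b_{i,t}^2 \cdot O(\log m)$. The delicate accounting is that this merge step multiplies by $O(\log m)$ but not by an extra $(2 - \cdot)$ factor — so I need the merge to be run with the same constant $C'$ and the recursion depth to be chosen so that $(1+O(\eps)) \cdot (2 - 1/\log n') + (\text{merge contribution expressed as part of the }(2-1/\log n)\text{ slack}) \leq 2 - 1/\log n$; concretely one shows $n' \leq n/2$ (actually $n' = O(n/\polylog)$), so $1/\log n' \geq 1/\log n + \Omega(1/\log^2 n)$, and this gap dominates the $O(\eps)$ and partitioning errors provided $\eps$ is chosen as a small enough inverse polylog. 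Work and depth are routine: each level costs $\tilde O(n + m + nnz(A))\cdot\poly(1/\eps)$ work and $\poly(\log(nm))$ depth, there are $O(\log n)$ levels, and the parts at each level partition the columns so the per-level work telescopes.

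I expect the main obstacle to be the precise bookkeeping of the variance budget: making sure that the partitioning error term $O(\log(nm)/\eps^2)$ in \Cref{lemma:partition_unweighted} (which is \emph{additive}, not multiplicative, in $|S_i|$) stays negligible relative to $\disc/T$, which forces the hypothesis $\max_{i,j} a_{ij}^2 \leq \frac{\log^5(nm)}{n}\disc$ and a careful choice of how many levels of recursion to allow before the subproblems become small enough to fall into the base case — each level weakens this "flatness" hypothesis, so one must verify it is preserved (or re-established) down the recursion. Threading the importance weights correctly through the recursion, so that the MWU average guarantee composes across levels, is the other place where a naive argument would break.
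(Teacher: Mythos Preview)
Your recursion is structured the wrong way around, and this creates two genuine gaps.

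\textbf{The inversion.} You partition into $T=\poly(\log(nm))$ \emph{large} parts (each of size $\approx n/T$), recurse on each part inside the MWU loop, and then merge the $T$ partial solutions via a tiny $T$-variable problem. The paper does the opposite: it partitions into $L\approx n/\poly(\log(nm))$ \emph{tiny} parts (each of polylogarithmic size), solves each part \emph{directly} by sequential derandomization (\Cref{thm:sequential_derandomization}) inside the MWU loop, and then recurses \emph{once} on the $L$-variable mixture problem (this is \Cref{lem:RecursionCreator}). Your structure leads to the following problems.

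\textbf{Gap 1: depth blows up.} You process the $T$ parts sequentially as MWU rounds, and each round makes a recursive call on $n/T$ variables. Thus $D(n)\ge T\cdot D(n/T)$, which unrolls to $D(n)\ge T^{\log_T n}=n$, not $\poly(\log(nm))$. Your claim that ``each level costs $\poly(\log(nm))$ depth'' overlooks that each MWU round contains a full recursive call. The paper avoids this because inside each MWU round it only runs sequential derandomization on parts of size $O(\log^{20}(nm))$, which is genuinely $\poly(\log(nm))$ depth; the single recursive call happens \emph{after} MWU finishes, on the $L\le n/2$-variable mixture instance, so $D(n)\le \poly(\log(nm))+D(n/2)$.

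\textbf{Gap 2: the recursion gives no average guarantee.} For your MWU to yield $\sum_t b_{i,t}^2\le(1+\eps)\disc$ (which is what you need so that the final merge pays only \emph{one} factor of $C\log m$), the oracle in round $t$ must satisfy $\sum_i imp^t(i)\,b_{i,t}^2\le(1+o(1))\sum_i imp^t(i)\cdot(\text{budget})$. But the inductive hypothesis---\Cref{lem:inductionUnweightedDisc} as stated---provides only the worst-case bound $b_{i,t}^2\le(2-1/\log n')(C'\log m)\disc_t$; it carries no importance-weighted average. Your remark that ``this is presumably why the lemma is stated for a general matrix $A$'' is a misreading: the matrix generality is there so that the \emph{mixture} coefficients $a'_{i\ell}=\sum_{j\in P_\ell}a_{ij}\bar\chi_j$ can be fed back into the same lemma, not to encode importance weights. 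Without the average guarantee, your MWU conclusion $\sum_t b_{i,t}^2\le(1+O(\eps))(2-1/\log n')(C'\log m)\disc$ already contains a $\log m$ factor, and the merge multiplies by another, giving $O(\log^2 m)\disc$. Your sentence about ``merge contribution expressed as part of the $(2-1/\log n)$ slack'' does not repair this: the merge is multiplicative, not additive. In the paper the average guarantee comes for free from \Cref{thm:sequential_derandomization}, which is exactly why the parts must be small enough to be solved by it directly.

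A smaller point: you invoke \Cref{lemma:partition_unweighted} on support sets and infer $\sum_{j\in P_t}a_{ij}^2\le(1+\eps)\disc/T+(\text{error})$; splitting supports evenly does not control weighted sums. The paper uses the weighted partitioning \Cref{lem:weighted_partition} here, and the flatness hypothesis $\max_{i,j}a_{ij}^2\le(\log^5(nm)/n)\disc$ is what makes its additive error term negligible.
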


This result itself can be viewed as a generalization of achieving discrepancy $O(\sqrt{n\log m})$ in the unweighted setting. In particular, for that purpose, the reader can interpret $a_{ij}=1$ for $j\in S_i$ and $a_{ij}=0$ otherwise, and $\disc=n.$ The lemma generalizes the statement mainly by allowing a polylogarithmic range of weights for nonzero coefficients $a_{ij}$. As mentioned before, this generality is necessary for our recursive algorithm that proves the lemma, and moreover, it helps in later applications of the result.

The key ingredient in proving \Cref{lem:inductionUnweightedDisc} is a helper lemma, which we next state as \Cref{lem:RecursionCreator}. The former provides a certain partitioning scheme for the variables, along with a value assignment inside each part, which basically sets up the recursion. Then the task of finding a good mixture of these assignments will be solved by recursion. We first prove \Cref{lem:RecursionCreator}, and then go back to proving \Cref{lem:inductionUnweightedDisc}.

\begin{lemma}\label{lem:RecursionCreator}
There exists an absolute constant $c > 0$ such that the following holds. Let $n,m \in \mathbb{N}$ with $m \geq 2$ and $n \geq c\log^{30}(m)$ and $\disc \in \mathbb{R}_{\geq 0}$.  Let $A \in \mathbb{R}^{m \times n}$ satisfying that $\sum_{j=1}^n a^2_{ij} \leq \disc$ and $\max_{i \in [m], j \in [n]} a^2_{ij} \leq \frac{(\log^5 (nm))}{n} \disc$. Then, there exists a deterministic parallel algorithm that can compute a vector $\chi \in \{-1,1\}^n$ and a partition $[n] = P_1 \sqcup P_2 \sqcup \ldots \sqcup P_L$ for some $L \leq n/2$  with $\tilde{O}(n + m + nnz(A))$ work and $\tilde{O}(1)$ depth satisfying that \begin{itemize}
    \item[(A)] $\sum_{ \ell \in L} \left(\sum_{j \in P_\ell} a_{ij} \chi_j \right)^2 \leq\left(1 + \frac{1}{10\log^2 n} \right) \disc$, and
    \item[(B)] for every $i \in [m]$ and $\ell \in [L]$, $\left(\sum_{j \in P_t} a_{ij} \chi_j \right)^2 \leq \frac{O(\log (nm))}{L}\disc$.
\end{itemize}
\end{lemma}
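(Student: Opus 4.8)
The goal is to produce, in $\tilde O(1)$ depth, a partition into $L\le n/2$ parts together with a sign vector $\chi$ so that (A) the total squared-discrepancy over the parts barely exceeds $\disc$, and (B) no single part contributes more than $O(\log(nm)/L)\cdot\disc$ to any constraint. The plan is to combine the three tools from the preliminaries exactly along the lines of the second warm-up (\Cref{thm:rootdepth-optimal}), but with all depths pushed down to polylogarithmic: first use the crude partitioning of \Cref{lemma:partition_unweighted} to chop $[n]$ into $T=\poly(\log(nm))$ \emph{groups} $G_1,\dots,G_T$ in which every constraint's $\ell_2^2$-mass is split evenly up to a $(1+\eps)$ factor plus an $O(\log m/\eps^2)$ additive term; then process the $T$ groups as the $T$ rounds of the MWU game of \Cref{lem:MWU}, where inside each group we further partition into $L'=\poly(\log(nm))$ \emph{parts} (again via \Cref{lemma:partition_unweighted}) and run the importance-weighted sequential derandomization of \Cref{thm:sequential_derandomization} independently inside each part. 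The final partition $P_1\sqcup\cdots\sqcup P_L$ is the common refinement of the groups and the within-group parts, so $L=TL'=\poly(\log(nm))\le n/2$ using the hypothesis $n\ge c\log^{30}m$; the vector $\chi$ is the concatenation of the within-part sequential-derandomization outputs (no mixing step here — the mixing is exactly what the recursion in \Cref{lem:inductionUnweightedDisc} will later do).

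\textbf{Key steps, in order.} (1) Fix $\eps = 1/\poly(\log n)$ small enough that all the $(1+\eps)$ losses below multiply to at most $1+\frac{1}{10\log^2 n}$ (there are $O(1)$ of them: the group partition, the within-group part partition, the sequential-derandomization averaging guarantee, and the MWU slack). (2) Apply \Cref{lemma:partition_unweighted}\,/\,a weighted analogue to split $[n]$ into $T=\Theta(\poly(\log(nm)))$ groups so that for each $i$, $\sum_{j\in G_t} a_{ij}^2 \le (1+\eps)\disc/T + O(\log(m)/\eps^2)\cdot a_{\max}^2 \le (1+\eps)(\disc/T + \delta)$ with $\delta$ polylogarithmic relative to $\disc/T$, using the bound $a_{ij}^2 \le \frac{\log^5(nm)}{n}\disc$ and $n\ge c\log^{30}m$ to absorb the additive term. (3) For each group $t$, treated as MWU round $t$: partition $G_t$ into $L'=\poly(\log(nm))$ parts each of size $|G_t|/L'$ up to $(1+\eps)$-factor-plus-additive, and inside each such part invoke \Cref{thm:sequential_derandomization} with the current importances $imp^t(\cdot)$ and $M$ a large $\poly(m)$, obtaining signs with the per-constraint bound $(\sum a_{ij}\chi_j)^2 \le C\log M \cdot (\sum a_{ij}^2)$ on that part, and the importance-weighted bound $\sum_i imp^t(i)(\sum a_{ij}\chi_j)^2 \le (1+\eps)\sum_i imp^t(i)(\sum a_{ij}^2)$. (4) Sum the within-part guarantees over the $L'$ parts of $G_t$ to get $\sum_i imp^t(i)\cdot(\text{group-}t\text{ mass of }i) \le (1+\eps)^2(\disc/T+\delta)\sum_i imp^t(i)$; normalize by $(1+\eps)^2(\disc/T+\delta)$ to define $gap^t(i)$, check $gap^t(i)\in[0,W]$ with $W=O(\log M)$ and $\sum_i imp^t(i)gap^t(i)\le\sum_i imp^t(i)$, so \Cref{lem:MWU} applies with $T=\Omega(W\log m/\eps^2)$ rounds. (5) Conclude $\sum_t gap^t(i)\le(1+\eps)T$ for every $i$, i.e. $\sum_{t=1}^T\sum_{\ell'}(\text{mass of }i\text{ in part }(t,\ell'))\le (1+\eps)^3(\disc + T\delta)\le (1+\frac{1}{10\log^2n})\disc$, which is (A). (6) For (B): each individual part $(t,\ell')$ has, by the per-constraint guarantee of \Cref{thm:sequential_derandomization} and the group/part balance, mass at most $C\log M\cdot(1+\eps)^2(\disc/T + \delta)/L' \cdot (1+\eps) = O(\log(nm)/L)\disc$ since $L=TL'$ and $\delta = O(\disc/T)$; this is exactly the claimed bound. (7) Depth: the two partitioning calls are $\poly(\log(nm))$ depth; each of the $T$ rounds runs \Cref{thm:sequential_derandomization} on parts of size $|G_t|/L' = \poly(\log(nm))$ (again by $n\ge c\log^{30}m$), hence $\poly(\log(nm))$ depth per round, and $T=\poly(\log(nm))$ rounds are sequential, so total depth is $\poly(\log(nm))=\tilde O(1)$; work is near-linear since all subroutines are.

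\textbf{Main obstacle.} The delicate point is the bookkeeping of additive terms: the within-part sizes must genuinely be $\poly(\log(nm))$ (so that the inner sequential derandomization is polylog-depth) \emph{and} the additive slack $\delta$ from each partitioning step must be dominated by the "expected mass" $\disc/T$ (so that $(1+\eps)$-type multiplicative losses, not additive ones, control (A) and (B)). Both require the hypotheses $n\ge c\log^{30}m$ and $a_{ij}^2\le\frac{\log^5(nm)}{n}\disc$ to be used with the right exponents; in particular one must choose $T$ and $L'$ so that $TL'\le n/2$, $|G_t|/L'$ is polylog, and $T\ge\Omega(W\log m/\eps^2)$ with $W=O(\log m)$ and $\eps=1/\poly(\log n)$ simultaneously — a constraint-juggling exercise that the generous slack in $n\ge c\log^{30}m$ is designed to make feasible. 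The other mild subtlety is that the additive term in \Cref{lemma:partition_unweighted} is stated for set sizes, not for $\ell_2^2$-masses; for the version of \Cref{lem:RecursionCreator} with polylog-bounded weights one either invokes the weighted partition \Cref{lem:weighted_partition} or observes that bounded weights let the unweighted guarantee transfer with only a polylog loss, which again is absorbed by the slack.
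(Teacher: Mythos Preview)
Your high-level strategy is right and matches the paper: partition into small parts, run importance-weighted sequential derandomization (\Cref{thm:sequential_derandomization}) inside each part, and use MWU across $T=\poly(\log(nm))$ sequential rounds so that the per-constraint variance losses average out. But your parameter choices are internally inconsistent in a way that breaks the depth bound.

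You fix $T=\poly(\log(nm))$ and $L'=\poly(\log(nm))$, giving $L=TL'=\poly(\log(nm))$; then in step (7) you assert that the parts have size $|G_t|/L'=\poly(\log(nm))$. But $|G_t|\approx n/T$, so the part size is $n/(TL')=n/\poly(\log(nm))$, which is \emph{not} polylogarithmic unless $n$ itself is---and the hypothesis $n\ge c\log^{30}m$ is a \emph{lower} bound on $n$, not an upper bound, so it cannot make $n/(TL')$ small. Since \Cref{thm:sequential_derandomization} has depth linear in the number of variables it fixes, running it on parts of size $n/\poly(\log(nm))$ destroys the $\tilde O(1)$ depth claim. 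The fix---and this is exactly what the paper does---is to take $L\approx n/\log^{20}(nm)$ rather than $L=\poly(\log(nm))$: partition $[n]$ directly into $L$ parts via \Cref{lem:weighted_partition}, each of size $O(\log^{20}(nm))$ and each carrying at most $(1+2\eps)\disc/L$ of every constraint's $\ell_2^2$-mass; then bundle these $L$ parts into $T=\Theta(\log^2(nm)/\eps^2)$ groups of $L/T$ parts each, processing one group per MWU round with sequential derandomization run on all $L/T$ parts of that group in parallel. With this corrected $L$, your MWU bookkeeping in steps (4)--(6) goes through essentially as written, and property (B) follows directly from the per-part mass bound $(1+2\eps)\disc/L$ together with the worst-case guarantee of \Cref{thm:sequential_derandomization}.
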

\begin{proof}
Set $\eps = 1/(100\log^2(nm))$. First, we partition the $n$ variables into $L=2^{\lceil \log(\frac{n}{\log^{20} (nm)})\rceil} \approx \frac{n}{\log^{20} (nm)}$ parts $P_1 \sqcup P_2 \sqcup \ldots \sqcup P_L=[n]$ with the following two properties:
\begin{itemize}
    \item[(I)] For each part $\ell\in [L]$, we have $|P_{\ell}|\leq 2n/L = O(\log^{20}(nm))$
    \item[(II)] For each part $\ell \in [L]$ and each set $i\in [m]$, we have $\sum_{j \in P_\ell} a^2_{ij} \leq (1+2\eps)\Delta/L.$
\end{itemize}
This can be computed via \Cref{lem:weighted_partition} using $\tilde{O}(nnz(A) + n + m))$ work and $\poly(\log (mn))$ depth. In particular, the third property in \Cref{lem:weighted_partition} implies that for each set $i\in [m]$ and each part $\ell\in [L]$, we have \begin{align*}
    \sum_{j \in P_\ell} a^2_{ij} &\leq  (1+\eps) \frac{\disc}{L} + O(\frac{\log^2 (nm)}{\eps^3}) \cdot (\max_{i\in [m], j \in [n]} a^2_{ij})\\
    &\leq (1+\eps) \frac{\disc}{L} + O(\frac{\log^2 (nm)}{\eps^3}) \cdot \frac{(\log^5 (nm))}{n} \disc \\
    &= (1+\eps) \frac{\disc}{L} + O(\frac{\log^2 (nm)}{\eps^3}) \cdot \frac{(\log^5 (nm))}{L\log^{20} (nm)} {\disc} \\
    &\leq (1+\eps) \frac{\disc}{L} + \eps \frac{\disc}{L} \\ 
    &= (1+2\eps)\frac{\Delta}{L}.
\end{align*}

Notice that we can easily apply sequential derandomization (\Cref{thm:sequential_derandomization}) in each part $\ell \in [L]$ to obtain a vector $\chi^{\ell}\in \{-1,1\}^{[P_\ell]}$ such that for every $i \in [m]$, we have $\left(\sum_{j \in P_\ell} a_{ij} \chi^\ell_j \right)^2 \leq \frac{O(\log (m))}{L}\disc$. This would satisfy guarantee (B) in the lemma we are proving. Furthermore, since each part has only $2n/L = O(\log^{20}(nm))$ variables, and we solve different parts in parallel, this would take depth $\poly(\log (mn))$ and work $\tilde{O}(nnz(A) + n + m))$. But that alone would not provide the more important guarantee (I). To achieve (A), we work somewhat differently by appealing to the Multiplicative Weight Updates (MWU) method, as recalled in \Cref{lem:MWU}, in a manner similar to what we did in the proof of \Cref{thm:rootdepth-optimal}.

Concretely, we break the parts into $T=\Theta(\log^2 (nm)/\eps^2)$ groups, by viewing parts $(t-1)(L/T)+1$ to $tL/T$ as group $t$, for each $t\in [T]$. Here, we choose $T$ to be a power of $2$ so that $T|L$. We will process the groups sequentially, each as one round of MWU. Initially, we set the importance value of each set $i\in[m]$ as $imp(i)=1$. Then, we process the groups one by one and adjust the importance values as we will describe. Let us zoom in on one round.

Consider round $t\in [T]$ and the corresponding group of $L/T$ parts $P_{(t-1)(L/T)+1}, P_{(t-1)(L/T)+2}, \dots, P_{t(L/T)}$. We invoke the sequential derandomization of \Cref{thm:sequential_derandomization} in each of the parts independently, all with the current importance value $imp(i)$ for each set $i\in [m]$. Since each part has at most $2n/L = O(\log^{20}(nm))$ variables, and we solve different parts in parallel, this takes depth $\poly(\log (mn))$. From \Cref{thm:sequential_derandomization} (setting $M = mn$), we get two properties for each set $i\in [m]$ and each $\ell \in [(t-1)(L/T)+1, t(L/T)]$:
\begin{itemize}
    \item $\left(\sum_{j \in P_\ell} a_{ij} \chi^\ell_j \right)^2 \leq \frac{O(\log (nm))}{L}\disc$
    \item $\sum_{i=1}^{m} imp(i) \cdot \left(\sum_{j \in P_\ell} a_{ij} \chi^\ell_j \right)^2  \leq  (1+\eps) \sum_{i=1}^{m} imp(i) \cdot(1+2\eps)\frac{\disc}{L}$.
\end{itemize}
As mentioned before, the first already gives property (B) of the lemma. We next examine property (A). Let us define 
\[
gap^{t}(i) = \frac{\sum_{\ell=(t-1)(L/T)+1}^{t(L/T)} \left(\sum_{j\in P_{\ell }}a_{ij} \chi^\ell_j \right)^2}{(1+\eps) (1+2\eps)\frac{\disc}{T}}
\]
From the above two properties, we conclude the following two guarantees about $gap^{t}(i)$: 
\begin{itemize}
    \item for each $i\in [m]$, we have $gap^{t}(i) \in [0, W]$ for $W=O(\log (mn))$,
    \item $\sum_{i=1}^{m} imp(i) \cdot gap^{t}(i) \leq \sum_{i=1}^{m} imp(i)$
\end{itemize}
Hence, the guarantees fit exactly the definition of the oracle in the MWU framework, as recapped in \Cref{lem:MWU}. Thus, by running the game for $T=\Theta(W\log m/\eps^2)$ rounds and processing all the groups with importance values updated according to \Cref{lem:MWU}, we get the following guarantee: for each $i\in [m]$, we have $\sum_{t=1}^{T} gap^{t}(i)\leq (1+\eps)T$. That is,

\begin{align*}
    \sum_{t=1}^{T} \frac{\sum_{\ell=(t-1)(L/T)+1}^{t(L/T)} \left(\sum_{j\in P_{\ell }}a_{ij} \chi^\ell_j \right)^2}{(1+\eps) (1+2\eps)\frac{\disc}{T}} = \frac{\sum_{\ell=1}^{L} \left(\sum_{j\in P_{\ell }}a_{ij} \chi^\ell_j \right)^2}{(1+\eps) (1+2\eps)\frac{\disc}{T}} \leq (1+\eps)T,
\end{align*}
which implies
\[\sum_{\ell=1}^{L} \left(\sum_{j\in P_{\ell }}a_{ij} \chi^\ell_j \right)^2 \leq (1+\eps)^2(1+2\eps) \disc \leq \left(1+\frac{1}{10\log^2 n} \right)\disc.\]
This proves property (A) and thus concludes the proof of the lemma.
\end{proof}

We can now go back to proving \Cref{lem:inductionUnweightedDisc}.

\begin{proof}[Proof of \Cref{lem:inductionUnweightedDisc}]
We present a proof by induction on $n$ (i.e., creating a recursive algorithm as a function of $n$). Also, we first describe how the algorithm works and provides the desired guarantees, and then in the end come back to bound its computational depth and work.

If $n <c\log^{30} m$ where $c$ is the constant in \Cref{lem:RecursionCreator}, then we are in the base case. Then, we simply solve the problem by invoking the sequential derandomization of \Cref{thm:sequential_derandomization}, which provides a vector $\chi \in \{-1,1\}^n$ such that that $(\sum_{j=1}^{n} a_{ij} \chi_j)^2 \leq (C\log m) \cdot \disc$. Here, $C$ is the constant in \Cref{thm:sequential_derandomization} which satisfies $C(2-1/\log n)\leq C'$ by choosing $C'= 2C$. Otherwise, we are in the case where we solve the problem via recursion, as we discuss next.

By invoking \Cref{lem:RecursionCreator}, we spend $\tilde{O}(n + m + nnz(A))$ work and $\poly(\log (mn))$ depth and we get a vector $\bar{\chi} \in \{-1,1\}^n$ and a partition $[n] = P_1 \sqcup P_2 \sqcup \ldots \sqcup P_L$ for some $L \leq n/2$ satisfying the following two properties:
\begin{itemize}
    \item[(A)] $\sum_{ \ell \in L} \left(\sum_{j \in P_\ell} a_{ij} \bar{\chi}_j \right)^2 \leq\left(1 + \frac{1}{10\log^2 n} \right) \disc$, and
    \item[(B)] for every $i \in [m]$ and $\ell \in [L]$, $\left(\sum_{j \in P_\ell} a_{ij} \bar{\chi}_j \right)^2 \leq \frac{O(\log (nm))}{L}\disc$.
\end{itemize}

Then, the remaining task is to determine a mixture vector $\chi'\in \{-1, 1\}^{L}$ so that we can mix the $\bar{\chi}$ solutions of parts $P_1$ to $P_L$ accordingly, i.e., by setting $\chi_{j} = \bar{\chi}_j \cdot \chi'_{\ell}$ where $j\in P_{\ell}$. This problem is similar to the mixture selection in the proofs of \Cref{thm:rootdepth-suboptimal} and \Cref{thm:rootdepth-optimal}. However, unlike those results, which solve the mixture selection via sequential derandomization, we now have a large number of parts $L$, which can be as large as $n/2$. Thus, it would be too slow to use sequential derandomization here. Instead, we can invoke recursion.

In particular, for each $i\in[m]$ and each $\ell\in [L]$, define $a'_{i\ell} = \left(\sum_{j \in P_\ell} a_{ij} \bar{\chi}_j \right)$, and define $\disc' = \left(1 + \frac{1}{10\log^2 n} \right) \disc$. These satisfy the conditions of our inductive lemma (\Cref{lem:inductionUnweightedDisc}) for $n'=L\leq n/2$, in the sense that for each $i\in [m]$, we have $\sum_{\ell=1}^{n'} (a'_{i\ell})^2 \leq \disc'$ and $\max_{i\in [m], \ell\in [L]} (a'_{i\ell})^2 \leq \frac{\log^{5}(n'm)}{n'}\disc'$. Thus, by invoking \Cref{lem:inductionUnweightedDisc} recursively/inductively on this instance with $n'\leq n/2$ variables, we get a vector $\chi'\in \{-1, 1\}^{L}$ with the guarantee that for each set $i \in [m]$, we have $(\sum_{\ell=1}^{n'} a'_{i\ell} \chi'_j)^2 = (2-1/\log n') \cdot (C'\log m) \cdot \disc'$. Hence, we can conclude that for every set $i\in [m]$, we have
\begin{align*}
 (\sum_{j=1}^{n} a_{ij} \chi_j)^2 & = (\sum_{\ell=1}^{n'} \sum_{j\in P_{\ell}} a_{ij} \chi_j)^2 = (\sum_{\ell=1}^{n'} \sum_{j\in P_{\ell}} a_{ij} \bar{\chi}_j \chi'_{\ell})^2 \\
 &= (\sum_{\ell=1}^{n'} \chi'_{\ell}\sum_{j\in P_{\ell}} a_{ij} \bar{\chi}_j)^2 = (\sum_{\ell=1}^{n'} \chi'_{\ell}a'_{i\ell})^2 \\
&\leq (2-\frac{1}{\log n'}) \cdot (C'\log m) \cdot \disc' \\
&=  (2-\frac{1}{\log n'})(1+\frac{1}{10\log^2 n})\cdot (C'\log m) \cdot \disc \\
&\leq \big((2-\frac{1}{\log n-1})(1+\frac{1}{10\log^2 n})\big)\cdot (C'\log m) \cdot \disc \\
&\leq (2-\frac{1}{\log n})\cdot (C'\log m),
\end{align*}
which satisfies the desired output guarantee. 

Finally, we discuss the computational depth and work of this algorithm. If we are in the base case of $n=O(\log^{30} m)$, the algorithm follows just by invoking the sequential derandomization of \Cref{thm:sequential_derandomization}, which has $\poly(\log (nm))$ depth and $\tilde{O}(n + m + nnz(A))$ work. Let us now examine the depth in the recursive case. For larger $n$, we invoked \Cref{lem:RecursionCreator}, which has depth $\poly(\log (nm))$, and then we solved the remaining (mixture) problem by applying a recursion on an instance with $L\leq n/2$ variables. Hence, the depth of the instance with $n$ variables and $m$ sets satisfies the recursion $D(n, m)\leq \poly(\log (nm)) + D(n/2, m)$ with the base case of $D(n, m) \leq \poly(\log (nm))$ if $n=O(\log^{30} m).$ Hence, $D(n, m) \leq \poly(\log (nm))$. A similar argument shows that the work $W(n, m)$ of the instance with $n$ variables and $m$ satisfies $W(n, m)\leq \tilde{O}(n + m + nnz(A))+ W(n/2, m)$, which thus shows that the work is bounded by $\tilde{O}(n + m + nnz(A))$.
\end{proof}

\subsection{A polylogarithmic-depth algorithm for $O(\sqrt{s\log m})$ discrepancy}
\label{subsec:slogm}
In this subsection, we prove \Cref{thm:main-unweighted}, using \Cref{lem:inductionUnweightedDisc} developed in the previous subsection. For that, we will need an additional helper lemma about the derandomization of pairwise analysis.

\begin{lemma}\label{lem:pairwise_derandomization}
Let $n, m \in \mathbb{N}$ with $m \geq 2$, and let $\{S_1, S_2, \ldots, S_m\}$ be a family of subsets of $[n]$ such that $|S_i|\leq k$ for all $i\in [m]$. Also, suppose that for each $i\in [m]$ we are given an importance value $imp(i)\leq \mathbb{R}^+$. There exists a deterministic parallel algorithm that can compute a vector $\chi \in \{-1, 1\}^n$, using $\tilde{O}(n + m + \sum_{i = 1}^m |S_i|) \cdot \poly(k)$ work and $\poly(k\log(mn))$ depth such that we have $\sum_{i=1}^{m} imp(i) \cdot (\sum_{j \in S_i} \chi_j)^2 \leq \sum_{i=1}^{m} imp(i) \cdot |S_i|$.
\end{lemma}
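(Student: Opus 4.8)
The plan is to derandomize the obvious randomized strategy—set each $\chi_j \in \{-1,1\}$ uniformly at random—using only pairwise independence, exactly in the spirit of Luby's method~\cite{luby1988removing}. For a pairwise-independent $\chi$ we have $\E[(\sum_{j\in S_i}\chi_j)^2] = \sum_{j\in S_i}\E[\chi_j^2] + \sum_{j\neq j'\in S_i}\E[\chi_j]\E[\chi_{j'}] = |S_i|$, so $\E\big[\sum_i imp(i)(\sum_{j\in S_i}\chi_j)^2\big] = \sum_i imp(i)|S_i|$. Hence there exists a point in any pairwise-independent sample space at which the potential $\Phi(\chi) := \sum_{i=1}^m imp(i)(\sum_{j\in S_i}\chi_j)^2$ is at most its expectation, which is exactly the bound we want. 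It remains to find such a point deterministically, work-efficiently, and in low depth.

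First I would fix a standard pairwise-independent construction: take a prime power $q = \Theta(n)$ (or work over $\mathrm{GF}(2^b)$ with $2^b = \Theta(n)$), pick a seed $(a,b)\in \Fq^2$, and define $\chi_j$ from the bit/sign of $a\cdot z_j + b$ for distinct field elements $z_1,\dots,z_n$. This has a seed of only $O(\log n)$ bits, so the sample space has $\poly(n)$ points. The key observation is that $\Phi$ expands into a sum over pairs: $\Phi(\chi) = \sum_i imp(i)|S_i| + \sum_i imp(i)\sum_{\{j,j'\}\subseteq S_i, j\neq j'}\chi_j\chi_{j'}$, so the conditional expectation of $\Phi$ given a partial assignment of the seed bits is a sum of at most $\sum_i imp(i)\binom{|S_i|}{2} \le \poly(k)\cdot\sum_i |S_i|$ pairwise terms, each of which is a fixed, efficiently computable function of the (few) remaining seed bits. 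I would then run the method of conditional expectations over the $O(\log n)$ seed bits: at each step, for both choices of the next seed bit, evaluate the conditional expectation of $\Phi$ (a parallel sum over the relevant pairs, of depth $\poly(\log(mn))$ and work $\poly(k)\cdot(n+m+\sum_i|S_i|)$), and keep the branch with the smaller value. After $O(\log n)$ such steps the seed is fully determined and the resulting $\chi$ satisfies $\Phi(\chi)\le \E[\Phi] = \sum_i imp(i)|S_i|$, as desired. The total depth is $O(\log n)\cdot \poly(\log(mn)) = \poly(k\log(mn))$ and the total work is $O(\log n)$ times the per-step work, i.e.\ $\tilde{O}(n+m+\sum_i|S_i|)\cdot\poly(k)$.

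I would also include the routine bookkeeping: precompute for each set $S_i$ the list of its (at most $\binom{k}{2}$) element-pairs so the pair-sums can be evaluated by parallel reductions; handle elements appearing in no set trivially; and note the constant in the bound is exactly $1$ since the pairwise cross-terms vanish in expectation, so no slack is lost (this exact-$1$ feature is what the caller, e.g.\ the "another application" using pairwise parallel derandomization in the main scheme, needs). The main obstacle—though a mild one—is verifying that all the conditional-expectation evaluations can genuinely be done within the stated work bound: each of the $O(\log n)$ rounds must touch every pair only $O(1)$ times, which requires organizing the computation so that the per-round work is $\poly(k)\cdot\tilde O(n+m+\sum_i|S_i|)$ rather than naively re-summing over the whole universe; this is where the $\poly(k)$ factor (bounding $\binom{|S_i|}{2}\le k^2$) is spent, and it is why the lemma is stated with a $\poly(k)$ overhead rather than being fully work-efficient.
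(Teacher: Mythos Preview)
Your proposal is correct and takes essentially the same approach as the paper: both observe that the target potential has the right expectation under mere pairwise independence and then derandomize via Luby's method, i.e., conditional expectations over the $O(\log n)$ seed bits of a pairwise-independent space. The paper's proof is in fact only a sketch that cites Luby~\cite{luby1988removing} and invokes Lemma~3.4 of~\cite{GGR2023Chernoff} as a black box, whereas you spell out the per-round cost accounting (the $\binom{|S_i|}{2}\le k^2$ pair-terms per set) that justifies the $\poly(k)$ work overhead; your added detail is sound and matches what the black-box reference would supply.
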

\begin{proof}[Proof Sketch]
Notice that under a random selection of $\chi$ with merely pairwise independence---i.e., for each $j, j'\in [n]$, where $j\neq j'$,  and each $(a,b)\in \{-1,1\}^{2}$ we have $\Pr[\chi_j=a]=1/2$ and $\Pr[(\chi_j, \chi_{j'})=(a, b)]=1/4$---we have 
\[
\mathbb{E}[\sum_{i=1}^{m} imp(i) \cdot (\sum_{j \in S_i} \chi_j)^2] = \sum_{i=1}^{m} imp(i) \cdot \mathbb{E}[(\sum_{j \in S_i} \chi_j)^2] = \sum_{i=1}^{m} imp(i) \cdot \sum_{j \in S_i} E[(\chi_j)^2] =  \sum_{i=1}^{m} imp(i) \cdot |S_i|,
\]
where the penultimate equality relied on the pairwise independence of $\chi_j$ and $\chi_{j'}$ for $j\neq j'$. Given this, such a vector $\chi$ can be computed deterministically and in parallel, using Luby's method for work-efficient parallel derandomization of pairwise independent analysis~\cite{luby1988removing}. Following this method, the statement follows as a black-box application of Lemma 3.4 in \cite{GGR2023Chernoff}.
\end{proof}

Finally, we go back to \Cref{thm:main-unweighted} and present its proof. For convenience, we restate \Cref{thm:main-unweighted}.

\thmMainUnweighted*
\begin{proof}
    If $s\leq \poly(\log(mn))$, the result follows from \Cref{thm:FOCS23}. Let us assume that $s$ is larger. Set $\eps=0.01$.  First, we partition the $n$ variables into $L=\frac{s}{\Theta(\log m/\eps^2)}$ parts $P_1 \sqcup P_2 \sqcup \ldots \sqcup P_L=[n]$, where $L$ is a power of two, with the following two properties:
\begin{itemize}
    \item[(I)] For each part $\ell \in [L]$ and each set $i\in [m]$, we have $|S_i\cap P_{\ell}| \leq (1+\eps)s/L.$
\end{itemize}
This can be computed using \Cref{lem:weighted_partition} using work $\tilde{O}(nnz(A) + n + m))$ and depth $\poly(\log (mn))$. 
We bundle the parts into $T=\Theta(W\log m/\eps^2)$ groups, by viewing parts $(t-1)(L/T)+1$ to $tL/T$ as group $t$, for each $t\in [T]$. Again, $T$ is a power of two, so we have $T|L$. We will process the groups sequentially, each as one round of MWU. Initially, we set the importance value of each set $i\in[m]$ as $imp(i)=1$. Then, we process the groups one by one and adjust the importance values as we will describe. 

Let us zoom in on one round. Consider round $t\in [T]$ and the corresponding group of $L/T$ parts $P_{(t-1)(L/T)+1}, P_{(t-1)(L/T)+2}, \dots, P_{t(L/T)}$. We invoke the pairwise derandomization of \Cref{lem:pairwise_derandomization} on the family of sets $S_i\cap P_{\ell}$ for all $i\in [m]$  and $\ell\in [(t-1)(L/T)+1, t(L/T)]$. Moreover, all subsets of set $S_i$ are given importance value $imp(i)$ inherited from set $S_i$ where $i\in [m]$. 

Since in each part $\ell \in [L]$, each set $S_i\cap P_{\ell}$ has size at most $k=(1+\eps)s/L=O(\log m)$, applying \Cref{lem:pairwise_derandomization} takes depth $\poly(\log (mn))$. From \Cref{lem:pairwise_derandomization}, we get the following property:

\begin{align} \label{ineq:pairwisebound}
\sum_{i=1}^{m} imp(i) \cdot \Bigg(\sum_{\ell=(t-1)(L/T)+1}^{t(L/T)} \left(\sum_{j \in S_i\cap P_\ell} \bar{\chi}^\ell_j \right)^2 \Bigg) \leq  (1+\eps) \sum_{i=1}^{m} imp(i) \cdot \frac{s}{T}.
\end{align}
Let us define 
\[
gap^{t}(i) = \frac{\Bigg(\sum_{\ell=(t-1)(L/T)+1}^{t(L/T)} \left(\sum_{j \in S_i\cap P_\ell} \bar{\chi}^\ell_j \right)^2 \Bigg)}{(1+\eps)\frac{s}{T}}
\]
From \Cref{ineq:pairwisebound}, we can conclude that $\sum_{i=1}^{m} imp(i) \cdot gap^{t}(i) \leq \sum_{i=1}^{m} imp(i)$.
Furthermore, we have that $gap^{t}(i)\leq O(\log m)$. The reason is that $\left(\sum_{j \in S_i\cap P_\ell} \bar{\chi}^\ell_j \right)^2 \leq (|S_i\cap P_\ell|)^2 \leq (1+\eps)k(s/L),$ and thus \[\Bigg(\sum_{\ell=(t-1)(L/T)+1}^{t(L/T)} \left(\sum_{j \in S_i\cap P_\ell} \bar{\chi}^\ell_j \right)^2 \Bigg) \leq (1+\eps)k \cdot (s/L) \cdot (L/T) = k \cdot (1+\eps) (s/T).\]
So, $gap^{t}(i) \leq k$, which means we have $gap^{t}(i) \in [0, W]$ for $W=O(\log m)$. 

Hence, the guarantees fit exactly the definition of the oracle in the MWU framework, as recapped in \Cref{lem:MWU}. Thus, by running the game for $T=\Theta(W\log m/\eps^2)$ rounds and processing all the groups with importance values updated according to \Cref{lem:MWU}, we get the following guarantee: for each $i\in [m]$, we have $\sum_{t=1}^{T} gap^{t}(i)\leq (1+\eps)T$. That is,

\begin{align*}
    \sum_{t=1}^{T} \frac{\sum_{\ell=(t-1)(L/T)+1}^{t(L/T)} \left(\sum_{j\in S_i \cap P_{\ell }} \bar{\chi}^\ell_j \right)^2}{(1+\eps) \frac{s}{T}} = \frac{\sum_{\ell=1}^{L} \left(\sum_{j\in S_i \cap P_{\ell }}\bar{\chi}^\ell_j \right)^2}{(1+\eps) \frac{s}{T}} \leq (1+\eps)T,
\end{align*}
which implies 
\[\sum_{\ell=1}^{L} \left(\sum_{j\in S_i \cap P_{\ell }} \bar{\chi}^\ell_j \right)^2 \leq (1+\eps)^2 s \leq (1+3\eps)s.\]
There is also the trivial bound that for each $i\in [m]$ and $\ell\in[L]$, we have 
\[\left(\sum_{j\in S_i \cap P_{\ell }} \bar{\chi}^\ell_j \right)^2 \leq (1+\eps)k s/L \leq O(\log m) \cdot s/L.\] 

These two conditions prepare us to invoke \Cref{lem:inductionUnweightedDisc}. In particular, for each $i\in[m]$ and each $\ell\in [L]$, define $a'_{i\ell} = \left(\sum_{j \in S_i\cap P_\ell} \bar{\chi}_\ell \right)$, and define $\disc' = \left(1 + 3\eps \right) s$. These satisfy the condition of \Cref{lem:inductionUnweightedDisc} for $n'=L$, in the sense that for each $i\in [m]$, we have $\sum_{\ell=1}^{n'} (a'_{i\ell})^2 \leq \disc'$ and $\max_{i\in [m], \ell\in [L]} (a'_{i\ell})^2 \leq \frac{\log^{5}(n'm)}{n'}\disc'$. Thus, by invoking \Cref{lem:inductionUnweightedDisc}, we get a vector $\chi'\in \{-1, 1\}^{L}$ with the guarantee that for each set $i \in [m]$, we have $(\sum_{\ell=1}^{n'} a'_{i\ell} \chi'_\ell)^2 =  2(C'\log m) \cdot \disc'$. Hence, we can conclude that for every set $i\in [m]$, we have
\begin{align*}
 (\sum_{j\in S_{i}} \chi_j)^2 & = (\sum_{\ell=1}^{n'} \sum_{j\in S_i\cap P_{\ell}}  \chi_j)^2 = (\sum_{\ell=1}^{n'} \sum_{j\in S_i\cap P_{\ell}} \bar{\chi}_j \chi'_{\ell})^2 \\
 &= (\sum_{\ell=1}^{n'} \chi'_{\ell}\sum_{j\in S_i \cap P_{\ell}} \bar{\chi}_j)^2 = (\sum_{\ell=1}^{n'} \chi'_{\ell}a'_{i\ell})^2 \\
&\leq 2(C'\log m) \cdot \disc' \leq (3C'\log m) \cdot s
\end{align*}
which satisfies the desired output guarantee. 
\end{proof}

\section{Optimal discrepancy in polylogarithmic depth --- weighted}
\label{sec:weighted}
In this section we prove our result for the weighted set balancing problem, \Cref{thm:main-weighted}, restated below.
\thmMainWeighted*

The general scheme will be similar to the recursive approach we had in the unweighted case, but the partitioning here will be far more complex and will have more nuanced properties (and these will spread also to how the partitioning is used). This partitioning is then used to create a recursive instance of the problem, with the help of the multiplicative weights update scheme for controlling the variance losses throughout the recursion levels.

\subsection{Partitioning for arbitrarily weighted set balancing instances}
A key part of proving \Cref{thm:main-weighted} is a new deterministic parallel algorithm for partitioning weighted instances, which we present as \Cref{thm:partition_with_isolation}. On a very high level, the difficulty with weighted instances stems from the fact that in each constraint $i \in [m]$, there can be a very small number of variables (say $K$ variables, which is at most polylogarithmic) that have a very high weight. Due to their small number, basic partitionings do not handle these items well and might potentially put several of them into the same part. Due to the large weight of these items, this would cause a substantial loss in the variance, when we try to solve the problem recursively. Notice also that we cannot easily discard or set aside these few items, to handle them separately, as we have $m$ constraints and each of the $n$ variables might be such a high-weight variable in some of the constraints. Intuitively, the deterministic partitioning we present will try to \textit{isolate} these few heavy-weight variables in each constraint. We do not achieve perfect isolation (which would require a very large number of parts in the partition, and would thus be useless), but we get something with a very strong control on the number of variables that remain non-isolated: Consider one constraint $i\in [m]$. For each part, we call all except one of those high-weight variables that fall into this part \textit{bad} variables. Notice that in a random partitioning with $\Omega(K^2)$ parts, the number of bad variables in each constraint would have a subconstant expectation and also an exponentially decaying tail, i.e., the probability of $z$ bad variables in one constraint will be upper bounded by $exp(-\Theta(z))$. Our deterministic partitioning performs various derandomizations to essentially enforce such an exponential tail on the number of bad variables in each constraint, while simultaneously achieving a nearly-even split of all other variables in the constraint, at the same time for all $m$ constraints.

\paragraph{Remark} The proof of \Cref{thm:partition_with_isolation} is complex and involves many pieces. We present the proof in several subsubsections. To enable a local view of each part, we use a top-down presentation where in each proof we identify the core piece(s) that would be sufficient to obtain the desired result, formulated as a separate lemma statement, and later state and prove that lemma. To help the readability, the subsubsections are labeled with the corresponding lemma names, and we have interleaved the presentations of these lemmas with sentences that indicate where each one was previously used and which next lemmas it will invoke.

\subsubsection{Main weighted partitioning theorem, with isolations (\Cref{thm:partition_with_isolation})}
The following theorem provides the formal statement of our partitioning. Since the statement has too many variables, let us provide some intuitive explanation for the statement, and how to match it to the above intuition. The output has two partitions, a coarse partition $\mathcal{P}$ with only polylogarithmic parts, and a much finer partition $\mathcal{Q}$, which is a refinement of $\mathcal{P}$ and has a number of parts around $\tilde{\Theta(n)}$, though smaller than $n/2$.  Intuitively, each round of MWU will process one part of $\mathcal{P}$, handling all the parts of $\mathcal{Q}$ in that part simultaneously, and eventually each part of $\mathcal{Q}$ will create one variable in our recursion. The reader can think of $S_i$ as the set of those few variables in constraint $i$, which we would like to isolate nearly completely in the fine partition $\mathcal{Q}$. In particular, property 5 in the theorem formalizes the exponential tail guarantee on the non-isolated items, discussed above. At the same time, each constraint has also other variables $B_i$ (a subtlety is that these two are not necessarily disjoint, but the reader can ignore that for now), which we would like to split nearly evenly in the coarse partition $\mathcal{P}$. This is so that the MWU scheme applied later in the recursion can achieve the desired averaging of the variance losses, similar to what we did in \Cref{lem:RecursionCreator}.

We note that the key part in \Cref{thm:partition_with_isolation} is what we present and prove later as \Cref{lem:partition_second}.

\begin{theorem}[Main Partition Theorem]
\label{thm:partition_with_isolation}
The following holds for every sufficiently small absolute constant $\delta> 0$:
Let $n,M \in \mathbb{N}$ with $n \geq (\log M)^{1/\delta}$ and $M \geq 3$.  Also, let $\{S_1,S_2,\ldots,S_{m_S}\}$ and $\{B_1,B_2,\ldots,B_{m_B}\}$ be two families of subsets of $[n]$ with $m_S,m_B \leq M$ satisfying:

\begin{itemize}
    \item $|S_i| \leq \log^{100}(nM)$ for every $i \in [m_S]$
    \item $|B_i| \geq \log^{30}(nM)$ for every $i \in [m_B]$
\end{itemize}
We are also given a vector $imp \in \mathbb{R}^{m_S}_{\geq 0}$.
Then, there exists a deterministic parallel algorithm with work $\left(n + m_S + m_B + \sum_{i=1}^{m_S} |S_i| + \sum_{i=1}^{m_B} |B_i| \right)\poly(\log(nM))$ and depth $\poly(\log(nM))$ that computes two partitions $\mathcal{P} = P_1 \sqcup P_2 \sqcup \ldots \sqcup P_{T_P} = [n]$ and $\mathcal{Q} = Q_1 \sqcup Q_2 \sqcup \ldots \sqcup Q_{T_Q} = [n]$ satisfying:

\begin{enumerate}
    \item $T_P \in \left[\log^{20}(nM), 10\log^{20}(nM) \right]$, $T_Q \leq n/2$ 
    \item $\mathcal{Q}$ is a refinement of $\mathcal{P}$
    \item $|B_i \cap P_t| \leq \left(1 + \frac{1}{\log^3 (n)} \right)\frac{|B_i|}{T_P}$ for every $i \in [m_B]$ and $t \in [T_P]$
    \item $|Q_t| = O(\log n)$ for every $t \in [T_Q]$
    \item $\sum_{i=1}^{m_S} imp_i (1+\delta)^{\sum_{t = 1}^{T_Q} \max(|S_i \cap Q_t| - 1, 0)} \leq \left(1 + \frac{1}{\log^3(n)} \right)\sum_{i=1}^{m_S} imp_i$
\end{enumerate}
\end{theorem}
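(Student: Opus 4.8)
The plan is to read off the coarse partition $\mathcal{P}$ directly from the crude multi-way partitioning tool, and to produce the fine partition $\mathcal{Q}$ by a much more delicate refinement inside each part of $\mathcal{P}$, with two nested multiplicative-weights-update (MWU) arguments controlling the collision bound of property~5. First I would apply \Cref{lemma:partition_unweighted} (or \Cref{lem:weighted_partition}) to the combined family $\{B_1,\dots,B_{m_B}\}\cup\{S_1,\dots,S_{m_S}\}$, with $L=T_P$ chosen as the power of two in $[\log^{20}(nM),2\log^{20}(nM)]$ and a parameter $\eps=1/\poly(\log(nM))$, obtaining $\mathcal{P}=P_1\sqcup\dots\sqcup P_{T_P}$. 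Since $|B_i|\ge\log^{30}(nM)$ dwarfs $T_P\cdot O(\log(nM)/\eps^2)$, the additive error of that lemma is absorbed into the multiplicative slack and property~3 follows; the same invocation also yields $|S_i\cap P_t|\le(1+\eps)|S_i|/T_P+O(\log(nM)/\eps^2)=:k_S=\poly(\log(nM))$, the size bound on restricted $S$-sets that the refinement will rely on. This settles the $T_P$ part of property~1.

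For the refinement, I would process $P_1,\dots,P_{T_P}$ sequentially, as a $T_P$-round MWU game in the spirit of \Cref{lem:RecursionCreator}: before processing $P_t$, every constraint $i$ carries a running importance $\widehat{imp}^{(t)}_i=imp_i\prod_{t'<t}(1+\delta)^{\mathrm{bad}_{i,t'}}$, where $\mathrm{bad}_{i,t'}=\sum_{Q\subseteq P_{t'}}\max(|S_i\cap Q|-1,0)$ is the collision mass that constraint $i$ accrues inside $P_{t'}$. The core subroutine --- essentially \Cref{lem:partition_second} --- takes $P_t$ (of size $\approx n/T_P$), the restricted sets $S_i\cap P_t$ of size at most $k_S$, and the weights $\widehat{imp}^{(t)}_i$, and returns a partition of $P_t$ into pieces of size $\Theta(\log n)$ (so that globally $|Q_t|=O(\log n)$ and $T_Q\le n/\Omega(\log n)\le n/2$, completing properties~1, 2 and 4) with the guarantee $\sum_i\widehat{imp}^{(t)}_i(1+\delta)^{\mathrm{bad}_{i,t}}\le(1+\epsilon)\sum_i\widehat{imp}^{(t)}_i$ for $\epsilon=\Theta(1/(T_P\log^3 n))$. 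Telescoping over the $T_P$ rounds gives $\sum_i imp_i(1+\delta)^{\sum_t\mathrm{bad}_{i,t}}\le(1+\epsilon)^{T_P}\sum_i imp_i\le(1+\tfrac{1}{\log^3 n})\sum_i imp_i$, which is exactly property~5 because $\sum_{t\in[T_Q]}\max(|S_i\cap Q_t|-1,0)=\sum_t\mathrm{bad}_{i,t}$, since $\mathcal{Q}$ refines $\mathcal{P}$.

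Inside one part I would build the $\Theta(\log n)$-sized pieces by iterated balanced bipartitioning: repeatedly split each current piece $R$ into $R_1\sqcup R_2$ by invoking \Cref{thm:FOCS23} on the restricted family $\{S_i\cap R\}_i$ --- so each $|S_i\cap R_b|$ is $|S_i\cap R|/2+O(\sqrt{|S_i\cap R|\log M})$ --- halting after the $O(\log n)$ levels needed to drive piece sizes into $[\log n,4\log n]$. A balanced split never increases any $\mathrm{bad}_i$, and as long as a piece holds $\Omega(\log M)$ elements of $S_i$ the split is genuinely two-sided and strictly decreases $S_i$'s collision count there; hence after the first $O(\log\log(nM))$ levels every $S_i$ is already isolated except inside pieces holding only $O(\log M)$ of its elements. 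To force the importance-weighted moment generating function of the residual collisions down to $1+\epsilon$ I would run a second, inner MWU, this time over the $O(\log n)$ refinement levels, using at each level the crude bipartitioner augmented with an importance-weighted average guarantee (obtained in the manner of \Cref{thm:sequential_derandomization}) as the MWU oracle: a constraint whose collision count shrinks too slowly is up-weighted, and the next split is forced to reduce the weighted collision mass on average. The bottom-level pieces of size at most $4\log n$ (and any set still not fully isolated at that scale) are cleaned up by one further pairwise-independent derandomization in the style of \Cref{lem:pairwise_derandomization}, or simply by sequential derandomization, since each such piece has only $\poly(\log(nM))$ variables. All invocations above are polylog-depth and charge $\tilde O(\cdot)\poly(\log(nM))$ work to the stated input size, and there are only $\poly(\log(nM))$ of them across the two nested loops, so the claimed work and depth bounds hold.

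The main obstacle is that the base $1+\delta$ is an absolute constant whereas $\mathrm{bad}_{i,t}$ can a priori be as large as $|S_i\cap P_t|=\poly(\log(nM))$, so $(1+\delta)^{\mathrm{bad}_{i,t}}$ is a priori superpolynomial in $nM$ and the naive bound on the left-hand side of property~5 is hopeless; the whole argument rests on forcing the collision count to vanish for almost every constraint (in the $imp$-weighted sense) using only work-efficient primitives --- the crude partitioner and pairwise independence --- rather than $\Theta(\log^{100}(nM))$-wise independent sample spaces, which would wreck work-efficiency. The delicate point is that the crude bipartitioner's additive error $O(\sqrt{|S_i\cap R|\log M})$ blocks perfect isolation precisely once pieces fall below $\Theta(\log M)$ --- a genuine regime when $\log M\gg\log n$ --- so one must show that the collision mass that survives to that scale, after being spread by the level-MWU and re-weighted by the $(1+\delta)$ factors, still overshoots by only a $1+1/\poly(\log n)$ factor. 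Choosing the two MWU parameter sets, the halting scales, and the number of levels so that the additive split errors, the $O(\log n)$ level count, the part count $T_P=\poly(\log(nM))$, and the multiplicative $(1+\delta)$ growth all compose to stay under $1+1/\log^3 n$ is where the hypotheses' generous polylog exponents ($|B_i|\ge\log^{30}(nM)$, $|S_i|\le\log^{100}(nM)$, $n\ge(\log M)^{1/\delta}$) are spent, and is the technical heart of the proof.
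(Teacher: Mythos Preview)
Your outer architecture matches the paper exactly: compute $\mathcal{P}$ via \Cref{lemma:partition_unweighted}, then refine each $P_t$ sequentially while maintaining the running importances $\widehat{imp}^{(t)}_i = imp_i\,(1+\delta)^{\sum_{t'<t}\mathrm{bad}_{i,t'}}$, and telescope so that $\sum_i imp_i(1+\delta)^{\sum_t \mathrm{bad}_{i,t}}\le (1+\epsilon)^{T_P}\sum_i imp_i$. That is precisely how the paper reduces the theorem to the ``core subroutine'' (\Cref{lem:partition_second}).

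The gap is in your construction of that subroutine. Your plan is iterated bipartitioning of $P_t$ via \Cref{thm:FOCS23}, with a second MWU over the $O(\log n)$ levels. But the quantity you must control is the moment generating function $\sum_i \widehat{imp}_i(1+\delta)^{\mathrm{bad}_{i,t}}$ with a \emph{constant} base $1+\delta$, and the MWU lemma (\Cref{lem:MWU}) only controls additive averages of bounded gaps; it does not convert an average-collision guarantee into an MGF bound unless you already have an exponential tail on $\mathrm{bad}_{i,t}$. Moreover, the bipartitioner of \Cref{thm:FOCS23} controls sizes, not collisions, and there is no stated importance-weighted version of it; the analogy with \Cref{thm:sequential_derandomization} does not carry over because that theorem's averaging is about squared discrepancy, not about the event ``both halves receive at least one element of $S_i$.'' Once a piece holds only $O(\log M)$ elements of $S_i$ the bipartitioner may dump them all on one side, so after the stall each $S_i$ still sits in at most $|S_i|/\Theta(\log M)$ pieces and $\mathrm{bad}_{i,t}$ remains $\Theta(|S_i|)$; your clean-up suggestions (pairwise derandomization, or sequential derandomization on ``small pieces'') do not apply, since at that scale the pieces themselves are still huge---only their $S_i$-intersection is small.

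The paper does something qualitatively different inside each $P_t$. It repeatedly (\Cref{lem:partition_second} via \Cref{lem:partitioning_sample_partition}) peels off a subset $R^{sub}$ obtained by \emph{sampling} each element with probability $\sim 1/K$, derandomized through the importance-weighted Chernoff tool \Cref{thm:focs_with_importance} (\Cref{cor:focs_cor2}); this is exactly what manufactures the exponential tail on $|S_i\cap R^{sub}|$ and hence the MGF bound (\Cref{lem:partitioning_subset}). Only then is $R^{sub}$ partitioned into $O(\log n)$-sized blocks, using a bespoke collision potential inside $\poly(K)$-sized chunks handled by sequential conditional expectations (\Cref{lem:sequential_collisions}, \Cref{lem:partitioning_subset_partitioning}). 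The sampling step---not more bipartitioning---is the missing idea in your sketch.
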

\begin{proof}
The proof has three steps, we first describe how we compute the coarse partition $\mathcal{P}$ and discuss its properties, then describe how we compute the fine partition $\mathcal{Q}$, and in the end discuss why $\mathcal{Q}$ satisfies the desired properties. This second step will involve invoking our main weighted partitioning lemma, which we present and prove later as \Cref{lem:partition_second}.

\medskip
\paragraph{The coarse parition} We first compute the coarse partition $P_1 \sqcup P_2 \sqcup \ldots \sqcup P_{T_P}$ by using the unweighted partitioning algorithm of $\cref{lemma:partition_unweighted}$. Concretely, let $T_P$ be an arbitrary power of two in the interval $\left[\log^{20}(nM), 10\log^{20}(nM) \right]$. We use the unweighted partitioning algorithm with $\eps_{L\ref{lemma:partition_unweighted}} = \frac{1}{2\log^3(n)}$ to compute a partition $P_1 \sqcup P_2 \sqcup \ldots P_{T_P}$ such that, for every $i \in [m_B]$ and $t \in [T_P]$, we have

\[|B_i \cap P_t| \leq (1+\eps_{L\ref{lemma:partition_unweighted}})|B_i|/T_P + O(\log M )/ \eps^2_{L\ref{lemma:partition_unweighted}}\]
This uses $\left(n + m_B + \sum_{i=1}^{m_B} |B_i|  \right)\poly(\log(nM))$ work and $\poly(\log(nM))$ depth. In particular, for every sufficiently small $\delta > 0, i \in [m_B]$ and $t \in [T_P]$, we have 

\[|B_i \cap P_t| \leq \left(1 + \frac{1}{\log^3 (n)} \right)\frac{|B_i|}{T_P}.\]

\medskip
\paragraph{The fine partition} We now sequentially iterate through the parts of the partition $\mathcal{P}$ and refine each part by using the algorithm of $\cref{lem:partition_second}$ (presented later in this section). More concretely, we refine part $t_P$, for $t_P \in [T_P]$, by invoking the algorithm of $\cref{lem:partition_second}$ with the following input:

\begin{itemize}
	\item $K = \log^{100}(nM)$ 
	\item $R = P_{t_P}$
	\item $\hat{S}_i = S_i \cap P_t$ for every $i \in [m_S]$
	\item $imp = imp^{(t_P)}$ ($imp^{(t_P)}$ has been computed previously and at the beginning we set $imp^{(1)} = imp$)
\end{itemize}
The algorithm has work  $\left(n + m_S + \sum_{i=1}^{m_S} |S_i|  \right)\poly(\log(nM))$ and depth $\poly(\log(nM))$ and computes a partition $Q^{(t_P)}_1 \sqcup Q^{(t_P)}_2 \sqcup \ldots Q^{(t_P)}_{T_{t_P}} = P_{t_P}$ satisfying

\begin{enumerate}
	\item $T_{t_P} \leq \frac{|P_{t_P}|}{100} + \poly(\log(nM))$
	\item $|Q^{(t_P)}_t| = O(\log n)$ for every $t \in [T_{t_P}]$
	\item $\sum_{i=1}^{m_S} imp^{(t_P)}_i (1 + \delta_{L\ref{lem:partition_second}})^{\sum_{t = 1}^{T_{t_P}} \max (|S_i \cap Q^{(t_P)}_t| -1,0)} \leq \left( 1 + \frac{1}{\log^{30}(nM)}\right)\sum_{i=1}^{m_S} imp^{(t_P)}_i$ 
\end{enumerate} 

\noindent Then, for each $i \in [m_S]$, we define $imp^{(t_P+1)} \in \mathbb{R}^{m_S}_{\geq 0}$ as

\[imp^{(t_P+1)}_i = imp^{(t_P)}_i (1 + \delta_{L\ref{lem:partition_second}})^{\sum_{t = 1}^{T_{t_p}} \max (|S_i\cap Q^{(t_P)}_t| -1,0)}.\]
Also, in the end, we define $Q_1 \sqcup Q_2 \sqcup \ldots Q_{T_Q}$ as the partition we obtain by taking the union of the partitions $Q^{(t_P)}_1 \sqcup Q^{(t_P)}_2 \sqcup \ldots Q^{(t_P)}_{T_{t_P}}$ for $t_P \in [T_P]$. 

\medskip
\paragraph{Properties of the fine parition}
We now argue that the fine partition has the desired properties. First, note that
\[T_Q \leq \sum_{t_p=1}^{T_P} \left(\frac{|P_{t_P}|}{100} + \poly(\log(nM))\right) \leq  \frac{n}{100} + \poly(\log nM).\]
Thus, using that $n \geq \left(\log(M) \right)^{1/\delta}$ and $M \geq 3$, we get that $T_Q \leq n/2$ for all sufficiently small $\delta > 0$.
It thus only remains to verify that 
\[\sum_{i=1}^{m_S} imp_i (1+\delta)^{\sum_{t = 1}^{T_Q} \max(|S_i \cap Q_t| - 1, 0)} \leq \left(1 + \frac{1}{\log^3(n)} \right)\sum_{i=1}^{m_S} imp_i.\]
First, recall that $imp^{(1)} = imp$ and for every $t_P \in [T_P]$, we have
 \[\sum_{i=1}^{m_S} imp_i^{(t_P + 1)} \leq \left(1 + \frac{1}{\log^{30}(nM)} \right)\sum_{i=1}^{m_S}imp^{(t_P)}_i.\]
Therefore, we get
\begin{align*}
\sum_{i=1}^{m_S} imp_i^{(T_P + 1)} &
\leq \left(1 + \frac{1}{\log^{30}(nM)} \right)^{T_P} \sum_{i=1}^{m_S}imp_i \\
&\leq \left(1 + \frac{1}{\log^{30}(nM)} \right)^{10\log^{20}(nM)} \sum_{i=1}^{m_S}imp_i \\
&\leq  \left(1 + \frac{1}{\log^3(n)} \right)\sum_{i=1}^{m_S}imp_i.
\end{align*}
Moreover, a simple induction shows that for every $t'_P \in \{0,1,\ldots,T_P\}$ and $i \in [m_S]$, it holds that
\[imp^{(t'_P + 1)}_i = imp_i(1 + \delta_{L\ref{lem:partition_second}})^{\sum_{t_P = 1}^{t'_P} \sum_{t=1}^{T_{t_P}} \max(|S_i \cap Q^{(t_P)}_t)| - 1,0}). \]
Thus, for every sufficienlty small $\delta \in (0,\delta_{L\ref{lem:partition_second}}]$, we get
\begin{align*}
\sum_{i=1}^{m_S} imp_i (1+\delta)^{\sum_{t = 1}^{T_Q} \max(|S_i \cap Q_t| - 1, 0)} 
&\leq \sum_{i=1}^{m_S} imp_i (1+\delta_{L\ref{lem:partition_second}})^{\sum_{t = 1}^{T_Q} \max(|S_i \cap Q_t| - 1, 0)} \\
&= \sum_{i=1}^{m_S}  imp_i(1 + \delta_{L\ref{lem:partition_second}})^{\sum_{t_P = 1}^{T_P} \sum_{t=1}^{T_{t_P}} \max(|S_i \cap Q^{(t_P)}_t)| - 1,0)} \\
&= \sum_{i=1}^{m_S} imp^{(T_P + 1)}_i \\
&\leq \left(1 + \frac{1}{\log^3(n)} \right)\sum_{i=1}^{m_S}imp_i.
\end{align*}
\end{proof}

\subsubsection{Main weighted partitioning lemma, with isolations (\Cref{lem:partition_second})}
A key part in proving our main partitioning theorem was \Cref{lem:partition_second}, which was invoked in each part of the coarse partition to produce the corresponding segment of the fine partition. Indeed, this lemma abstracts most of the technical work. Next, we state this lemma and prove it. This lemma itself will be based on \Cref{lem:partitioning_sample_partition}, which we present and prove afterward.

\begin{lemma} (Main Partition Lemma)
\label{lem:partition_second}
The following holds for every sufficiently small absolute constant $\delta> 0$:
Let $n,M,K \in \mathbb{N}$ with $n \geq 1/\delta$. Let $R \subseteq [n]$ be a subset and $\{\hat{S}_1,\hat{S}_2,\ldots,\hat{S}_{m_S}\}$ be a family of subsets of $R$ with $m_S \leq M$ and $|\hat{S}_i| \leq K$ for every $i \in [m_S]$. Also, let $imp \in \mathbb{R}_{\geq 0}^{m_S}$. 
Then, there exists a deterministic parallel algorithm with work $\left(n + m_S\right)\poly(K\log(nM))$ and depth $\poly(K\log(nM))$ that computes a partition $\hat{Q}_1 \sqcup \hat{Q}_2 \sqcup \ldots \sqcup \hat{Q}_T$ of the subset $R$ such that:

\begin{enumerate}
    \item $T \leq \frac{|R|}{100} + \poly(K\log(nM))$
    \item $|\hat{Q}_t| = O(\log n)$ for every $t \in [T]$
    \item  $\sum_{i=1}^{m_S} imp_i(1 + \delta)^{\sum_{t = 1}^T \max (|\hat{S}_i \cap \hat{Q}_t| -1,0)} \leq \left( 1 + \frac{1}{\log^{30}(nM)}\right) \sum_{i=1}^{m_S} imp_i$ 
\end{enumerate}
\end{lemma}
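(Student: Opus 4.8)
\emph{Setup and the easy regime.} The quantity to be controlled is $X_i:=\sum_{t=1}^{T}\max(|\hat S_i\cap\hat Q_t|-1,0)$, which equals $|\hat S_i|$ minus the number of parts that $\hat S_i$ touches; it is monotone non-increasing under refinement, and for a uniformly random partition of $R$ into $B$ parts one has $\E\big[(1+\delta)^{X_i}\big]\le 1+O(\delta|\hat S_i|^2/B)$ once $B\gg|\hat S_i|^2$ — a standard balls-in-bins moment estimate, using that $X_i$ is dominated by the number of monochromatic pairs inside $\hat S_i$, whose distribution has an exponentially decaying tail. Hence, for $|R|$ below a suitable fixed threshold $\poly(K\log(nM))$, the algorithm simply outputs the partition of $R$ into singletons: then every $X_i=0$, property~3 holds, $|\hat Q_t|=1$, and $T=|R|$ is absorbed by the additive $\poly(K\log(nM))$ term in property~1. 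So assume $|R|$ is above this threshold (which, in the way this lemma is used inside \Cref{thm:partition_with_isolation}, always happens for sufficiently small $\delta$), and aim for $B=2^{L}$ parts with $2^{L}\in[\,|R|/200,|R|/100\,]$, i.e.\ $L=O(\log n)$.

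\emph{The refinement and its potential.} I would build the target partition as the leaves of a depth-$L$ binary tree, fixing in round $r=1,\dots,L$ one more bit of every element's root-to-leaf address, which splits each current group into two. For $k\ge 0$ and $a\ge 0$ let $f_k(a)$ be the expected value of $(1+\delta)^{a-(\#\text{nonempty bins})}$ when $a$ balls are thrown uniformly into $2^{k}$ bins; then $f_0(a)=(1+\delta)^{\max(a-1,0)}$ and $f_{k+1}(a)=\E_{b\sim\mathrm{Bin}(a,1/2)}\big[f_k(b)\,f_k(a-b)\big]$. Writing $\mathcal G^{(r)}$ for the groups present after round $r$, define the potential $\Psi^{(r)}:=\sum_i imp_i\prod_{G\in\mathcal G^{(r)}}f_{L-r}(|\hat S_i\cap G|)$; since $f_k(0)=1$, each product has at most $|\hat S_i|\le K$ nontrivial factors, and $\Psi^{(r)}$ is exactly the conditional expectation of the final potential $\sum_i imp_i\prod_{Q}f_0(|\hat S_i\cap Q|)=\sum_i imp_i(1+\delta)^{X_i}$ given the first $r$ rounds of independent fair coins. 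Using the recursion for $f$ together with independence across the groups of $\mathcal G^{(r-1)}$ one checks $\E_{\text{round-}r\text{ bits}}[\Psi^{(r)}]=\Psi^{(r-1)}$, that $\Psi^{(L)}$ is the true final potential, and that $\Psi^{(0)}=\sum_i imp_i\,f_L(|\hat S_i|)\le\big(1+O(\delta K^2/2^{L})\big)\sum_i imp_i\le\big(1+\tfrac{1}{\log^{30}(nM)}\big)\sum_i imp_i$ by the choice of $L$. Therefore it suffices to run each round so that $\Psi^{(r)}\le\E_{\text{round-}r\text{ bits}}[\Psi^{(r)}]=\Psi^{(r-1)}$; telescoping then gives property~3, exactly as the MWU-style potential was driven down in \Cref{lem:RecursionCreator}.

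\emph{The per-round derandomization (the crux).} The task of one round is to choose a bit $x_j\in\{0,1\}$ for every $j\in R$ so that simultaneously (a) $\Psi^{(r)}\le\E_x[\Psi^{(r)}]$ and (b) every current group is split into two nearly balanced halves, so that after all $L$ rounds every part has size $O(\log n)$ and the number of nonempty parts stays at most $|R|/100+\poly(K\log(nM))$. Requirement (b) is a balanced bisection of each group and is handled along the lines of \Cref{lemma:partition_unweighted}; the real difficulty is (a): $\Psi^{(r)}$ is a sum over constraints of a \emph{product} over groups, so it is not a pairwise objective and the Luby-style derandomization of \Cref{lem:pairwise_derandomization} does not apply directly. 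The redeeming feature is locality: the product for constraint $i$ involves only the bits of the $\le K$ elements of $\hat S_i$, so all relevant conditional expectations are functions of at most $K$ bits and can be evaluated and maintained exactly under $K$-bounded independence. I would isolate this step as a separate lemma, stated and proved afterward as \Cref{lem:partitioning_sample_partition}, which derandomizes a single sampled refinement step while controlling simultaneously such a collision-type product objective and the balance of the split, in $\poly(K\log(nM))$ depth and $(n+m_S)\poly(K\log(nM))$ work — e.g.\ via limited-independence hash families combined with the unweighted partitioning tool and a pessimistic-estimator argument. This is the main obstacle, and essentially all of the technical content of the lemma resides there.

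\emph{Bookkeeping.} With \Cref{lem:partitioning_sample_partition} in hand the algorithm is: if $|R|$ is below the threshold, output singletons; otherwise run $L=O(\log n)$ rounds, each invoking the per-round derandomization, and finally discard empty parts. Each round costs $\poly(K\log(nM))$ depth and $(n+m_S)\poly(K\log(nM))$ work, so the totals are $\poly(K\log(nM))$ depth and $(n+m_S)\poly(K\log(nM))$ work as required; properties~1 and~2 are immediate from the tree structure (the $2^{L}\le|R|/100$ leaves, plus the additive slack absorbing the small-$|R|$ case and tree-rounding), and property~3 follows from the telescoping of $\Psi^{(L)}\le\Psi^{(L-1)}\le\dots\le\Psi^{(0)}\le\big(1+\tfrac{1}{\log^{30}(nM)}\big)\sum_i imp_i$.
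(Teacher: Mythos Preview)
Your potential $\Psi^{(r)}$ and its martingale property are correct, and the bound $\Psi^{(0)}\le\big(1+\tfrac{1}{\log^{30}(nM)}\big)\sum_i imp_i$ is plausible. The gap is the per-round derandomization. You defer it to a lemma you do not prove, and the mechanism you hint at does not fit the stated bounds: yes, each summand of $\Psi^{(r)}$ depends on at most $K$ round-$r$ bits, so $K$-wise independence preserves the expectation, but you then have to \emph{find} a good point in a $K$-wise family. Enumerating such a family costs at least $n^{\Omega(K)}$ work; doing conditional expectations on the $O(K\log n)$ seed bits of a polynomial-based family requires evaluating conditional expectations of products of hash values under a partially fixed seed, which is not a pessimistic estimator one can compute in $\poly(K\log(nM))$ work per step; and bit-by-bit or group-by-group conditional expectations on the true bits $x_j$ have depth $\Theta(|R|)$. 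You also need the balance constraint~(b) simultaneously with~(a), which is a second coupled objective. In short, the product-of-factors structure of $\Psi^{(r)}$ is exactly what makes a low-depth, near-linear-work derandomization nonobvious, and you have not supplied one.

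The paper avoids this obstacle by a different decomposition. It does \emph{not} build the partition as a depth-$O(\log n)$ binary tree. Instead it runs $O(K\log^2 n)$ sequential iterations; in each iteration it (i) extracts a subset $R^{sub}\subseteq R^{(\ell)}$ of density $\Theta(1/K)$ using the Chernoff-style tool of \cite{GGR2023Chernoff} (\Cref{thm:focs_with_importance}), which controls a \emph{sum}-type objective and therefore admits work-efficient parallel derandomization, so that $\sum_i imp_i(1+\delta)^{\max(0,|R^{sub}\cap S_i|-1)}\le 2\sum_i imp_i$; and (ii) partitions $R^{sub}$ into $O(\log n)$-size parts while keeping the \emph{linear} collision count $\sum_i imp_i\sum_t\max(|Q_t\cap \hat S_i|-1,0)$ tiny, by chopping $R^{sub}$ into $\poly(K)$-size chunks and running a sequential conditional-expectation argument inside each chunk in parallel (\Cref{lem:sequential_collisions}). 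The exponential potential in property~3 is then recovered by combining the exponential tail from step~(i) with the linear bound from step~(ii) (\Cref{lem:partitioning_sample_partition}), and a multiplicative update of $imp$ across the $O(K\log^2 n)$ iterations telescopes to the final guarantee. The point is that every derandomization step is either a Chernoff-type sum objective (handled by \cite{GGR2023Chernoff}) or a quadratic collision count inside a $\poly(K)$-size piece (handled sequentially); at no point does one need to derandomize a sum-of-products potential over $\Theta(|R|)$ bits in low depth.
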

\begin{proof}
Let $R^{(1)}:= R$, $imp^{(1)} := imp$ and $\ell_{max} := \lceil K \log^2(n)\rceil$. Next, we first describe the algorithm which is basically invoking  \Cref{lem:partitioning_sample_partition} in $\ell_{max}$ iterations, and then discuss the achieved properties.

\medskip
\paragraph{Algorithm} The algorithm runs in $\ell_{max}$ sequential rounds.
In round $\ell \in [\ell_{max}]$, the algorithm invokes the algorithm of \Cref{lem:partitioning_sample_partition} with input 

\begin{itemize}
	\item $R^{L\ref{lem:partitioning_sample_partition}} = R^{(\ell)}$
	\item the family of input sets is $\{\hat{S_1} \cap R^{(\ell)},\hat{S_2} \cap R^{(\ell)},\ldots, \hat{S}_{m_S} \cap R^{(\ell)} \}$
	\item $imp^{L\ref{lem:partitioning_sample_partition}} = imp^{(\ell)}$
\end{itemize}
In particular, this takes work $\left(n + m_S\right)\poly(K\log(nM))$ and depth $\poly(K\log(nM))$ and computes a subset $R^{sub,\ell}\subseteq R^{(\ell)}$ and a partition $Q^{(\ell)}_1 \sqcup Q^{(\ell)}_2 \sqcup \ldots \sqcup Q^{(\ell)}_{T_\ell} = R^{sub,\ell}$ satisfying

\begin{enumerate}
    \item $|R^{sub,\ell}| \geq \Omega(\frac{|R^{(\ell)}|}{K})$
    \item $T_{\ell} \leq \frac{|R^{sub,\ell}|}{100} + \poly(K \log(nM))$
    \item $|Q^{(\ell)}_t| = O(\log n)$ for every $t \in [T_\ell]$
    \item $\sum_{i=1}^{m_S} imp^{(\ell)}_i \cdot (1+\delta_{L\ref{lem:partitioning_sample_partition}})^{\sum_{t=1}^{T_\ell}\max(0,|Q^{(\ell)}_t \cap \hat{S}_i| - 1)} \leq \left(1 + \frac{1}{K \log^{40}(nM)} \right)\sum_{i=1}^{m_S} imp^{(\ell)}_i$
\end{enumerate}
We define $R^{(\ell + 1)} = R^{(\ell)} \setminus R^{sub,\ell}$ and $imp^{(\ell + 1)}_i = imp^{(\ell)}_i \cdot (1+\delta_{L\ref{lem:partitioning_sample_partition}})^{\sum_{t=1}^{T_\ell}\max(0,|Q^{(\ell)}_t \cap \hat{S}_i| - 1)}$ for every $i \in [m_S]$.
Also, let $\hat{Q}_1 \sqcup \hat{Q}_2 \sqcup \ldots \sqcup \hat{Q}_T$ by including for each $\ell \in [\ell_{max}]$ the partition $Q^{(\ell)}_1 \sqcup Q^{(\ell)}_2 \sqcup \ldots \sqcup Q^{(\ell)}_{T_\ell} = R^{sub,\ell}$. 

\medskip
\paragraph{Analyzing the properties} The third condition in the above list directly implies the second output property. Moreover,  the second condition in this list directly gives the following, which satisfies the first output property:
\[T \leq \sum_{\ell = 1}^{\ell_{max}} \left(\frac{|R^{sub,\ell}|}{100} + \poly(K \log(nM))\right) \leq \frac{|R|}{100} + \poly(K \log(nM))\]
Furthermore, the first condition implies that $|R^{(\ell + 1)}| \leq \left(1 - O(\frac{1}{K})\right)|R^{(\ell)}|$. Therefore, a simple induction shows that

\[|R^{(\ell_{max} + 1)}| \leq \left(1 - \Omega \left(\frac{1}{K}\right) \right)^{\ell_{max}}|R| \leq e^{- \Omega(\log^2(n))}|R| < 1\]
and therefore $R^{(\ell_{max} + 1)} = \emptyset$. It is easy to verify that $\hat{Q}_1 \sqcup \hat{Q}_2 \sqcup \ldots \sqcup \hat{Q}_T$ is a partition of $R \setminus R^{(\ell_{max} + 1)}$ and therefore of $R$, as needed.

What remains to be verified is that we have the third output property, i.e., that
\[\sum_{i=1}^{m_S} imp_i (1+\delta)^{\sum_{t = 1}^{T} \max(|S_i \cap \hat{Q}_t| - 1, 0)} \leq \left(1 + \frac{1}{\log^{30}(nM)} \right)\sum_{i=1}^{m_S} imp_i.\]
First, recall that $imp^{(1)} = imp$ and for every $\ell \in [\ell_{max}]$, we have
 \[\sum_{i=1}^{m_S} imp_i^{(\ell + 1)} \leq \left(1 + \frac{1}{K\log^{40}(nM)} \right)\sum_{i=1}^{m_S}imp^{(\ell)}_i.\]
Therefore, we get
\begin{align*}
\sum_{i=1}^{m_S} imp_i^{(\ell_{max} + 1)} &
\leq \left(1 + \frac{1}{K\log^{40}(nM)} \right)^{\ell_{max}} \sum_{i=1}^{m_S}imp_i \\
&\leq \left(1 + \frac{1}{K\log^{40}(nM)} \right)^{\lceil K \log^2(n) \rceil} \sum_{i=1}^{m_S}imp_i \\
&\leq  \left(1 + \frac{1}{\log^{30}(nM)} \right)\sum_{i=1}^{m_S}imp_i
\end{align*}
where the last inequality holds for $n$ being a sufficiently large constant.
Moreover, a simple induction shows that for every $\ell' \in \{0,1,\ldots,\ell_{max}\}$ and $i \in [m_S]$, it holds that
\[imp^{(\ell' + 1)}_i = imp_i(1 + \delta_{L\ref{lem:partitioning_sample_partition}})^{\sum_{\ell = 1}^{\ell'} \sum_{t=1}^{T_{\ell}} \max(|\hat{S}_i \cap Q^{(\ell)}_t)| - 1,0)}. \]
Thus, for every sufficienlty small $\delta \in (0,\delta_{L\ref{lem:partitioning_sample_partition}}]$, we get
\begin{align*}
\sum_{i=1}^{m_S} imp_i (1+\delta)^{\sum_{t = 1}^{T} \max(|\hat{S}_i \cap \hat{Q}_t| - 1, 0)} 
&\leq \sum_{i=1}^{m_S} imp_i (1+ \delta_{L\ref{lem:partitioning_sample_partition}})^{\sum_{t = 1}^{T} \max(|\hat{S}_i \cap \hat{Q}_t| - 1, 0)} \\
&= \sum_{i=1}^{m_S}  imp_i(1 + \delta_{L\ref{lem:partitioning_sample_partition}})^{\sum_{\ell = 1}^{\ell_{max}} \sum_{t=1}^{T_{\ell}} \max(|\hat{S}_i \cap Q^{(\ell)}_t)| - 1,0)}\\
&= \sum_{i=1}^{m_S} imp^{(\ell_{max} + 1)}_i \\
&\leq \left(1 + \frac{1}{\log^{30}(nM)} \right)\sum_{i=1}^{m_S}imp_i.
\end{align*}
\end{proof}

\subsubsection{\Cref{lem:partitioning_sample_partition}}
In proving the main partitioning lemma (\Cref{lem:partition_second}), we sequentially invoked $O(K\log^2 n)$ iterations of \Cref{lem:partitioning_sample_partition}. We now state and prove \Cref{lem:partitioning_sample_partition}. The proof itself relies on two other lemmas (\Cref{lem:partitioning_subset,lem:partitioning_subset_partitioning}), stated and proved later. 

\begin{lemma}
\label{lem:partitioning_sample_partition}
The following holds for every sufficiently small absolute constant $\delta > 0$: Let $n,M,K \in \mathbb{N}$ with $n \geq 1/\delta$. Let $R \subseteq [n]$ be a subset and $\{S_1,S_2,\ldots,S_m\}$ be a family of subsets of $R$ with $m \leq M$ and $|S_i| \leq K$ for every $i \in [m]$. Also, let $imp \in \mathbb{R}^m_{\geq 0}$. Then, there exists a deterministic parallel algorithm with work $\left(n + m\right)\poly(K\log(nM))$ and depth $\poly(K\log(nM))$ that computes a subset $R^{sub} \subseteq R$ and a partition $Q_1 \sqcup Q_2 \sqcup \ldots \sqcup Q_T = R_{sub}$ satisfying
\begin{enumerate}
    \item $|R^{sub}| = \Omega(\frac{|R|}{K})$
    \item $T \leq \frac{|R^{sub}|}{100} + \poly(K \log(nM))$
    \item $|Q_t| = O(\log n)$ for every $t \in [T]$
    \item $\sum_{i=1}^m imp_i \cdot (1+\delta)^{\sum_{t=1}^T\max(0,|Q_t \cap S_i| - 1)} \leq \left(1 + \frac{1}{K \log^{40}(nM)} \right)\sum_{i=1}^m imp_i$
\end{enumerate}
\end{lemma}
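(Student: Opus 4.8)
The plan is to handle the construction in two phases, each packaged as a subroutine proved afterwards (\Cref{lem:partitioning_subset}, which \emph{thins} $R$ down to a good subset $R^{sub}$, and \Cref{lem:partitioning_subset_partitioning}, which \emph{buckets} $R^{sub}$ into the parts $Q_t$), and then to compose their guarantees. Before anything, I would dispose of the degenerate regime $|R| \le \poly(K\log(nM))$ by taking $R^{sub} = R$ with every $Q_t$ a singleton: then $T = |R|$, every overflow term is $0$, $T \le |R|/100 + \poly(K\log(nM))$, and the size and importance conclusions are immediate. So from now on assume $|R|$ exceeds a sufficiently large $\poly(K\log(nM))$.

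For Phase~1 (thinning), the guiding picture is to keep each element of $R$ with probability $p = \Theta(1/K)$. Since $|S_i| \le K$ we get $\E|S_i \cap R^{sub}| \le pK = \Theta(1)$, so a Chernoff bound gives $\Pr[|S_i \cap R^{sub}| > \tau] < 1/M$ for a threshold $\tau = \Theta(\log(nM))$, while $\E|R^{sub}| = p|R|$. \Cref{lem:partitioning_subset} will derandomize this in $(n+m)\poly(K\log(nM))$ work and $\poly(K\log(nM))$ depth --- e.g.\ via an $O(\log(nM))$-wise-independent sampling scheme whose (polylog-length) seed is fixed by the method of conditional expectations against the natural pessimistic estimator, namely the sum over the $m$ sets of a Chernoff-style tail estimator plus a linear estimator keeping $|R^{sub}|$ large --- and output a fixed $R^{sub}$ with $|R^{sub}| = \Omega(|R|/K)$ and $|S_i \cap R^{sub}| \le \tau = O(\log(nM))$ for every $i$. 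The point of this step is that every surviving set $\hat{S}_i := S_i \cap R^{sub}$ now has size only $O(\log(nM))$, which keeps the independence demands of Phase~2 mild.

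For Phase~2 (bucketing), throw the elements of $R^{sub}$ uniformly into $T := \lceil |R^{sub}|/100 \rceil$ buckets; then $T \le |R^{sub}|/100 + 1$, the average load is at most $100$, and a Chernoff/union bound makes the maximum load $O(\log n)$ (using $|R^{sub}| \le n$), which gives properties~2 and~3. For property~4, write $\mathrm{ov}_i = \sum_t \max(|Q_t \cap \hat{S}_i| - 1, 0)$ and observe that $\mathrm{ov}_i \ge r$ forces the elements of $\hat{S}_i$ to occupy at most $|\hat{S}_i| - r$ distinct buckets, an event of probability at most $\binom{|\hat{S}_i|}{r}(|\hat{S}_i|/T)^r \le (|\hat{S}_i|^2/T)^r/r!$; summing the telescoped identity $(1+\delta)^{\mathrm{ov}_i} = 1 + \sum_{r \ge 1}\delta(1+\delta)^{r-1}\,\mathbf{1}[\mathrm{ov}_i \ge r]$ and taking expectations yields $\E[(1+\delta)^{\mathrm{ov}_i}] \le 1 + O(|\hat{S}_i|^2/T) = 1 + O(\log^2(nM)/T)$, hence $\E[\sum_i imp_i(1+\delta)^{\mathrm{ov}_i}] \le (1 + O(\log^2(nM)/T))\sum_i imp_i \le (1 + \frac{1}{K\log^{40}(nM)})\sum_i imp_i$, the last step because $T = \Omega(|R|/K)$ and $|R|$ was taken sufficiently large. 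Moreover this bound on $\E[(1+\delta)^{\mathrm{ov}_i}]$ only uses that the buckets of the $\le |\hat{S}_i| = O(\log(nM))$ elements of $\hat{S}_i$ are jointly uniform, so $O(\log(nM))$-wise independence of the bucket assignment suffices. \Cref{lem:partitioning_subset_partitioning} will derandomize such an $O(\log(nM))$-wise-independent assignment --- fixing its seed by conditional expectations against a (suitably truncated) pessimistic estimator for $\sum_i imp_i(1+\delta)^{\mathrm{ov}_i}$ together with the max-load term --- in $(n+m)\poly(K\log(nM))$ work and $\poly(K\log(nM))$ depth, giving properties~2--4 exactly. Property~1 is inherited from Phase~1, and since only a constant number of subroutine calls are made here (the $\Theta(K\log^2 n)$-fold iteration lives one level up, in \Cref{lem:partition_second}), the overall work/depth are as claimed.

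The step I expect to be the main obstacle is Phase~2: one must control the \emph{exponential} quantity $\sum_i imp_i (1+\delta)^{\mathrm{ov}_i}$ deterministically, even though an individual overflow $\mathrm{ov}_i$ can be as large as $|\hat{S}_i| = \Theta(\log(nM))$ in the worst case, so $(1+\delta)^{\mathrm{ov}_i}$ can blow up to $(nM)^{\Theta(\delta)}$; this is exactly why Phase~1 must first shrink every set to size $O(\log(nM))$, and why the derandomized bucket assignment needs $\Theta(\log(nM))$-wise independence together with a pessimistic estimator that faithfully tracks (a truncation of) the collision structure of each $\hat{S}_i$ while remaining work-efficient. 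By comparison, Phase~1's Chernoff-derandomization, the balls-into-bins load concentration, and the arithmetic of composing the tiny loss budgets are routine.
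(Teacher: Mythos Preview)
Your two-phase decomposition (thin $R$ down to $R^{sub}$, then bucket $R^{sub}$) matches the paper's structure, but the division of labor between the phases is different, and this difference is where your sketch breaks.

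The gap is in Phase~2. You correctly argue that under $O(\log(nM))$-wise independent bucketing, $\E[(1+\delta)^{\mathrm{ov}_i}] \le 1 + O(|\hat S_i|^2/T)$, and that the seed is only polylogarithmically long. But ``fixing the seed by conditional expectations against a suitably truncated pessimistic estimator'' is a real problem you have not solved: the quantity $(1+\delta)^{\mathrm{ov}_i}$ depends on all $|\hat S_i| = \Theta(\log(nM))$ bucket values jointly (through which pairs, triples, etc.\ collide), and you do not name an estimator whose conditional expectation under a partial seed can be evaluated in $(n+m)\poly(K\log(nM))$ work. The crude linearization $(1+\delta)^{\mathrm{ov}_i} - 1 \le \delta(1+\delta)^{\tau}\cdot \mathrm{ov}_i$ is too lossy because $(1+\delta)^{\tau} = (nM)^{\Theta(\delta)}$ is not polylogarithmic. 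Tracking the exponential exactly via sequential derandomization costs depth $|R^{sub}|$; breaking $R^{sub}$ into pieces and running them in parallel fails because the exponential moment does not compose additively across pieces (only the \emph{linear} overflow does).

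The paper avoids this by putting the exponential control in Phase~1 rather than Phase~2. Its Phase~1 (\Cref{lem:partitioning_subset}) does \emph{not} aim for your uniform bound $|\hat S_i|\le O(\log(nM))$; instead it secures the coarse moment bound $\sum_i imp_i(1+\delta')^{\max(0,|R^{sub}\cap S_i|-1)} \le 2\sum_i imp_i$, obtained from the importance-weighted Chernoff derandomization of \cite{GGR2023Chernoff} by inserting, for every $i$ and every $k\in[K]$, a constraint with importance $imp_i(1+\delta')^k$ and deviation threshold $k$. Phase~2 (\Cref{lem:partitioning_subset_partitioning}) then only needs a \emph{linear} bound $\sum_i imp_i \cdot \mathrm{ov}_i \le \frac{1}{K^2\log^{100}(nM)}\sum_i imp_i$, which is a pairwise-collision count, composes additively across parallel pieces, and is easy to derandomize. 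The two are combined by taking $\delta$ small relative to $\delta'$ so that $(1+\delta)^{\mathrm{ov}_i}\le \gamma_i^{1/2}$ with $\gamma_i := (1+\delta')^{\mathrm{ov}_i}$: sets with $\gamma_i \ge K^2\log^{100}(nM)$ are killed by the Phase~1 bound (since $\sum_i imp_i\gamma_i \le 2\sum_i imp_i$ and the extra $\gamma_i^{-1/2}$ factor absorbs the $2$), while sets with bounded $\gamma_i$ but nonzero overflow are handled by the tiny linear bound from Phase~2. This trade---coarse exponential from the Chernoff tool in Phase~1, fine but merely linear from pairwise collisions in Phase~2, then interpolate via two different $\delta$'s---is the idea your sketch is missing.
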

\begin{proof}
We first invoke the algorithm of \Cref{lem:partitioning_subset} to compute in $\left(n + m\right)\poly(K\log(nM))$ work and $\poly(K\log(nM))$ depth a subset $R^{sub}$ satisfying:

\begin{enumerate}
	    \item $|R^{sub}| = \Omega(\frac{|R|}{K})$
  	    \item $\sum_{i=1}^m imp_i \cdot (1+\delta_{L\ref{lem:partitioning_subset}})^{\max(0,|R^{sub}\cap S_i| - 1)} \leq 2\sum_{i=1}^m imp_i$
\end{enumerate}
Next, we invoke \Cref{lem:partitioning_subset_partitioning} (with $S_1 \cap R^{sub},S_2 \cap R^{sub},\ldots,S_m \cap R^{sub}$ being the family of input sets and $K_{L\ref{lem:partitioning_subset_partitioning}} = \lceil K^2\log^{100}(nM) \rceil$) to compute in $\left(n + m\right)\poly(K\log(nM))$ work and $\poly(K\log(nM))$ depth a partitioning $Q_1 \sqcup Q_2 \sqcup \ldots \sqcup Q_T =  R^{sub}$ satisfying

\begin{enumerate}
    \item $T \leq \frac{|R^{sub}|}{100} + \poly(K \log(nM))$
    \item $|Q_t| = O(\log n)$ for every $t \in [T]$
    \item $\sum_{i=1}^m imp_i \cdot \left(\sum_{t=1}^T\max(|Q_t \cap \hat{S}_i| -  1,0) \right) \leq \frac{1}{K^2\log^{100}(nM)}\sum_{i=1}^m imp_i$.
\end{enumerate}

Note that $R^{sub}$ and $Q_1 \sqcup Q_2 \sqcup \ldots \sqcup Q_T$ satisfy the first three output guarantees. We next show that for all sufficiently small $\delta > 0$, the fourth property holds as well, namely that

\[\sum_{i=1}^m imp_i \cdot (1+\delta)^{\sum_{t=1}^T\max(0,|Q_t \cap S_i| - 1)} \leq \left(1 + \frac{1}{K \log^{40}(nM)} \right)\sum_{i=1}^m imp_i.\]

For each $i \in [m]$, let $\gamma_i := (1+\delta_{L\ref{lem:partitioning_subset}})^{\sum_{t=1}^T\max(0,|Q_t\cap S_i| - 1)}$. Note that for sufficiently small $\delta > 0$, it holds that $(1+\delta)^{\sum_{t=1}^T\max(0,|Q_t \cap S_i|-1)|} \leq (\gamma_i)^{0.5}$ for every $i \in [m]$. In particular, we get

\begin{align*}
\sum_{i\in [m] \colon \gamma_i \geq K^2\log^{100}(nM)} imp_i (1+\delta)^{\sum_{t=1}^T \max(0,|Q_t \cap S_i| - 1)}
&\leq \sum_{i\in [m] \colon \gamma_i \geq K^2\log^{100}(nM)} imp_i (\gamma_i)^{-0.5}(1+\delta_{L\ref{lem:partitioning_subset}})^{\sum_{t=1}^T \max(0,|Q_t \cap S_i| - 1)}  \\
&\leq \frac{1}{K \cdot \log^{50}(nM)}\sum_{i = 1}^m imp_i (1+\delta_{L\ref{lem:partitioning_subset}})^{\sum_{t=1}^T \max(0,|Q_t \cap S_i| - 1)}  \\
&\leq \frac{1}{K \cdot \log^{50}(nM)} 2\sum_{i=1}^m imp_i  \\
&\leq \frac{1}{2K \log^{40}(nM)} \sum_{i=1}^m imp_i.
\end{align*}

Next, let $COL = \{i \in [m] \colon \text{there exists $t \in [T]$ with $|Q_t \cap \hat{S}_i| \geq 2$}\}$. For sufficiently small $\delta > 0$, we have
\begin{align*}
\sum_{i \in COL \colon \gamma_i < K^2\log^{100}(nM) } imp_i (1+\delta)^{{\sum_{t=1}^T \max(0,|Q_t \cap S_i| - 1)}} 
&\leq \sum_{i \in COL \colon \gamma_i < K^2\log^{100}(nM) } imp_i \sqrt{\gamma_i} \\
&\leq K \log^{50}(nM)\sum_{i=1}^m imp_i \left(\sum_{t=1}^T\max(|Q_t \cap \hat{S}_i| -  1,0) \right) \\
&\leq K \log^{50}(nM) \cdot \frac{1}{K^2\log^{100}(nM)}\sum_{i=1}^m imp_i \\
&\leq \frac{1}{2K \log^{40}(nM)} \sum_{i=1}^m imp_i.
\end{align*}
Moreover, it trivially holds that
\[\sum_{i \in [m] \setminus COL} imp_i (1+\delta)^{{\sum_{t=1}^T \max(0,|Q_t \cap S_i| - 1)}} \leq \sum_{i=1}^m imp_i.\]
Thus, for sufficiently small $\delta  > 0$, we get
\[\sum_{i = 1}^m imp_i (1+\delta)^{{\sum_{t=1}^T \max(0,|Q_t \cap S_i| - 1)}} \leq \left(1 + 2 \cdot \frac{1}{2K \log^{40}(nM)}\right) \sum_{i=1}^m imp_i = \left(1 + \frac{1}{K \log^{40}(nM)}  \right)\sum_{i=1}^m imp_i.\]
\end{proof}
The rest of this subsection is dedicated to stating and proving the two lemmas that were used in the above proof, namely \Cref{lem:partitioning_subset} and \Cref{lem:partitioning_subset_partitioning}. We discuss these in two different subsubsections. 

\subsubsection{\Cref{lem:partitioning_subset}}
The proof of \Cref{lem:partitioning_subset} uses a variant of a result of Ghaffari, Grunau, Rozhon~\cite{GGR2023Chernoff}, which we restate as \Cref{thm:focs_with_importance} and derive \Cref{cor:focs_cor2} out of it. After stating these results from \cite{GGR2023Chernoff}, we state and prove \Cref{lem:partitioning_subset}.

\begin{theorem}[Chernoff-like concentrations, Ghaffari, Grunau, Rozhon~\cite{GGR2023Chernoff}]
\label{thm:focs_with_importance}
There exists a constant $c > 0$ such that the following holds. Let $n,m,k \in \mathbb{N}$ be arbitrary, $\{S_1,S_2,\ldots,S_m\}$ be a family of subsets of $[n]$ and $imp \in \mathbb{R}^m_{\geq 0}$. Also, let $p \in [0,1]$ and $\Deltav \in \mathbb{R}_{>0}^m$. For each $i \in [m]$, let
\[\fail := c \exp \left(-(1/c) \min \left(\frac{\Deltai^2}{p|S_i|}, \Deltai, \frac{\Deltai k}{p |S_i|} \right)\right).\]
There exists a deterministic parallel algorithm with $\tilde{O}(\max(\sum_{i=1}^m |S_i|,n,m)\poly(k))$ work and $\poly(\log(nm),k)$ depth that outputs a subset $R^{sub} \subseteq [n]$ and a set $\Ibad \subseteq [m]$ with $\sum_{i \in \Ibad} imp_i \leq \sum_{i \in [m]} imp_i \cdot \fail$ such that for every $i \in [m] \setminus \Ibad$ it holds that
\[|p|S_i| - |R^{sub} \cap S_i|| \leq \Deltai.\]
\end{theorem}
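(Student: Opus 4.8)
The plan is to read this off from the derandomization machinery of \cite{GGR2023Chernoff}, with the importance weights entering only as fixed nonnegative coefficients. I would proceed in three steps.

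\emph{Step 1 (the randomized benchmark).} First I would fix the random experiment that we are derandomizing: include each $j \in [n]$ in $R^{sub}$ independently with probability $p$. Then $X_i := |R^{sub}\cap S_i|$ is a sum of $|S_i|$ i.i.d.\ $\mathrm{Bernoulli}(p)$ variables with mean $\mu_i := p|S_i|$, and a Bernstein-type bound gives $\Pr[\,|X_i - \mu_i| > \Deltai\,] = O(\exp(-\Omega(\min(\Deltai^2/\mu_i,\, \Deltai))))$; replacing full independence by the degree-$\Theta(k)$ bounded independence that the work-efficient scheme of \cite{GGR2023Chernoff} can emulate costs only the third branch $\Deltai k/\mu_i$ of the minimum, i.e.\ $\Pr[\,|X_i-\mu_i|>\Deltai\,] \le \fail$ for the $\fail$ in the statement (the absolute constant $c$ absorbs all hidden constants). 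This is the target the deterministic output must meet.

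\emph{Step 2 (derandomization with an importance-weighted potential).} Next I would invoke the method of conditional expectations with a pessimistic estimator as in \cite{GGR2023Chernoff}: one maintains a potential $\Phi = \sum_{i=1}^m \Phi_i$, where $\Phi_i$ is a convex exponential-moment over-estimate of the indicator $\mathbbm{1}[\,|X_i-\mu_i| > \Deltai\,]$ with initial value at most $\fail$, and the randomness is fixed (through the iterative, pairwise-independent random-walk scheme of \cite{GGR2023Chernoff}, derandomized in the style of Luby) so that $\Phi$ never increases in expectation. The only change I need is to run the whole argument with the weighted potential $\Phi = \sum_{i=1}^m imp_i\,\Phi_i$ instead: every inequality used by \cite{GGR2023Chernoff} acts on each $\Phi_i$ separately and is then summed over $i$, so multiplying the $i$-th summand by the fixed constant $imp_i \ge 0$ and summing preserves all of them. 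At the end we get $\sum_{i=1}^m imp_i\,\Phi_i \le \sum_{i=1}^m imp_i\,\fail$, and, setting $\Ibad := \{ i \in [m] : \Phi_i \ge 1 \}$ so that $i \notin \Ibad$ forces $|X_i - \mu_i| \le \Deltai$, this yields $\sum_{i \in \Ibad} imp_i \le \sum_{i=1}^m imp_i\,\fail$ as required.

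\emph{Step 3 (efficiency).} The work and depth bounds would be inherited essentially verbatim from \cite{GGR2023Chernoff}: the estimator is evaluated over a $\poly(k)$-size bounded-independence space, and the $n$ variables are fixed over $\poly(\log(nm))$ pairwise-independent rounds, giving $\tilde{O}(\max(\sum_{i=1}^m|S_i|, n, m)\poly(k))$ work and $\poly(\log(nm), k)$ depth; the weights $imp$ appear only as coefficients and add no asymptotic overhead. The step I expect to be the genuine obstacle is Step 2--3 itself --- the work-efficient, polylogarithmic-depth derandomization --- but that is exactly the main theorem of \cite{GGR2023Chernoff}, so here the substantive task is only the bookkeeping verification that their potential-function argument is linear in the per-constraint contributions and therefore survives multiplication by arbitrary nonnegative importances; I would carry this out by walking through their invariants one by one and inserting the $imp_i$ factors.
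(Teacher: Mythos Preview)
The paper does not prove this theorem: it is stated and attributed to \cite{GGR2023Chernoff} and then used as a black box (the paper immediately derives \Cref{cor:focs_cor1} and \Cref{cor:focs_cor2} from it by plugging in parameters). So there is no proof in the paper to compare your attempt against.

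As an independent assessment: your plan is a reasonable high-level description of how such a result would be established, and your key observation is sound --- the derandomization of \cite{GGR2023Chernoff} proceeds by maintaining a potential that is a sum over constraints of nonnegative per-constraint pessimistic estimators, and every step of the argument (the martingale/non-increase property, the initial bound, the final thresholding) is linear in those summands, so multiplying the $i$-th summand by a fixed $imp_i \ge 0$ preserves the entire analysis verbatim. The definition $\Ibad = \{i : \Phi_i \ge 1\}$ together with $\sum_i imp_i \Phi_i \le \sum_i imp_i \cdot \fail$ then yields the claimed importance-weighted bound on $\Ibad$. The one caveat is that your Step~1 is heuristic about where the third branch $\Deltai k/(p|S_i|)$ comes from: this is not a generic bounded-independence artifact but is specific to the iterated random-walk construction of \cite{GGR2023Chernoff} (roughly, $k$ controls the number of refinement rounds and hence the per-constraint potential blowup), so to make this rigorous you really do have to open that paper rather than cite a generic $k$-wise tail bound. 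But since the statement itself is a citation, that is exactly what is intended.
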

From \Cref{thm:focs_with_importance}, by plugging values and rephrasing, we get the following two corollaries:

\begin{corollary}
\label{cor:focs_cor1}
There exists a constant $c > 0$ such that the following holds. Let $n,m,K \in \mathbb{N}$ be arbitrary, $\{S_1,S_2,\ldots,S_m\}$ be a family of subsets of $[n]$ with $|S_i| \leq K$ for every $i \in [m]$ and $imp \in \mathbb{R}^m_{\geq 0}$. Also, let $p \in [0,1]$ and $\Deltav \in \mathbb{R}_{>0}^m$ with $\Delta_i \geq 0.5p|S_i|$ for every $i \in [m]$. For each $i \in [m]$, let
\[\fail := c \exp \left(-(1/c) \Delta_i\right).\]
There exists a deterministic parallel algorithm with $\tilde{O}(\max(n,m)\poly(K))$ work and $\poly(\log(nm),K)$ depth that outputs a subset $R^{sub} \subseteq [n]$ and a set $\Ibad \subseteq [m]$ with $\sum_{i \in \Ibad} imp_i \leq \sum_{i \in [m]} imp_i \cdot \fail$ such that for every $i \in [m] \setminus \Ibad$ it holds that
\[|p|S_i| - |R^{sub} \cap S_i|| \leq \Deltai.\]
\end{corollary}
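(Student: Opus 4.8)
The plan is to derive Corollary~\ref{cor:focs_cor1} as an essentially immediate specialization of Theorem~\ref{thm:focs_with_importance}. I would invoke that theorem with $k := K$, the same family $S_1,\ldots,S_m$, the same importance vector $imp$, the same $p$, and the same $\Deltav$, and then check that, under the two extra hypotheses of the corollary (namely $|S_i|\le K$ for all $i$ and $\Delta_i\ge 0.5\,p|S_i|$ for all $i$), the failure quantity $\fail$ produced by the theorem is upper bounded by the one claimed in the corollary, after adjusting the absolute constant.

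Concretely, the key step is to lower bound the argument of the $\min$ in the theorem's exponent by $\Delta_i/2$. Rewriting $\Delta_i\ge 0.5\,p|S_i|$ as $p|S_i|\le 2\Delta_i$, the first term satisfies $\Delta_i^2/(p|S_i|)\ge \Delta_i^2/(2\Delta_i)=\Delta_i/2$; the middle term is exactly $\Delta_i$; and for the third term $\Delta_i k/(p|S_i|)=\Delta_i K/(p|S_i|)\ge \Delta_i K/|S_i|\ge \Delta_i$, using $p\le 1$ and then $|S_i|\le K$. The degenerate case $p|S_i|=0$ (i.e.\ $S_i=\emptyset$ or $p=0$) is handled separately: then $\fail=0$ under the convention $\exp(-\infty)=0$, so such $i$ never enter $\Ibad$ and there is nothing to prove. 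Hence $\min\big(\Delta_i^2/(p|S_i|),\,\Delta_i,\,\Delta_i K/(p|S_i|)\big)\ge \Delta_i/2$, so the theorem's failure bound is at most $c\exp(-\Delta_i/(2c))$, which has exactly the form $c'\exp(-(1/c')\Delta_i)$ with $c':=2c$; renaming $c'$ back to $c$ gives the statement of the corollary. The bound $\sum_{i\in\Ibad}imp_i\le \sum_i imp_i\cdot\fail$ and the per-constraint guarantee $|p|S_i| - |R^{sub}\cap S_i|| \le \Delta_i$ then transfer verbatim from the theorem.

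Finally, I would reconcile the resource bounds. Since $|S_i|\le K$, we have $\sum_{i=1}^m|S_i|\le mK$, hence $\max(\sum_i|S_i|,n,m)\le K\max(n,m)$, and the theorem's work $\tilde{O}(\max(\sum_i|S_i|,n,m)\poly(k))$ becomes $\tilde{O}(\max(n,m)\poly(K))$; the depth $\poly(\log(nm),k)$ is already $\poly(\log(nm),K)$. I do not anticipate any genuine obstacle here --- the corollary is purely a rephrasing of the theorem under a stronger hypothesis on $\Deltav$ --- the only points requiring a little care are using both $p\le 1$ and $|S_i|\le K$ (in that order) to tame the third term of the $\min$, and keeping the constant relabeling consistent so that the single symbol $c$ can be reused in the corollary's statement.
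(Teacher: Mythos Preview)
Your proposal is correct and matches the paper's approach: the paper simply states that the corollary follows ``by plugging values and rephrasing'' from Theorem~\ref{thm:focs_with_importance}, and your argument (setting $k=K$, verifying each branch of the $\min$ is at least $\Delta_i/2$, and absorbing the factor $2$ into the constant) is exactly that rephrasing made explicit.
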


\begin{corollary}
\label{cor:focs_cor2}
There exists a constant $c > 0$ such that the following holds.
Let $n,m,K \in \mathbb{N}$ be arbitrary, $p \in [0,1]$ and $R \subseteq [n]$. Let $\mathcal{C}$ be a collection of at most $m$ triples $C= (S,\Delta,imp(C))$ with $S \subseteq R$, $|S| \leq K$, $\Delta \geq 0.5p|S|$ and $imp(C) \in \mathbb{R}_{\geq 0}$. For each triple $C =(S,\Delta,imp(C)) \in \mathcal{C}$, we define

\[prob^{bad}_C := ce^{-\Delta/c}\]

There exists a deterministic parallel algorithm with  $\tilde{O}(\max(n,m)\poly(K))$ work and $\poly(\log(nm),K)$ depth that outputs a subset $R^{sub} \subseteq R$ satisfying that $\mathcal{C}^{bad} \subseteq \mathcal{C}$ with $\sum_{C \in \mathcal{C}^{bad}} imp(C) \leq \sum_{C \in \mathcal{C}} imp(C)\cdot prob^{bad}_C $ and for every $(S,\Delta,imp) \in \mathcal{C} \setminus \mathcal{C}^{bad}$, it holds that
\[|p|S| - |R^{sub} \cap S||\leq \Delta.\]
\end{corollary}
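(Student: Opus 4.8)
The plan is to read off \Cref{cor:focs_cor2} as a mere reindexing of \Cref{cor:focs_cor1}, and to obtain \Cref{cor:focs_cor1} itself from \Cref{thm:focs_with_importance} by a one-line estimate. First I would establish \Cref{cor:focs_cor1}: apply \Cref{thm:focs_with_importance} with its free parameter $k$ set to $K$. For each $i \in [m]$ the exponent defining the failure parameter involves $\min\!\left(\frac{\Deltai^2}{p|S_i|},\ \Deltai,\ \frac{\Deltai K}{p|S_i|}\right)$. The hypothesis $\Deltai \ge 0.5\,p|S_i|$ gives $\frac{\Deltai^2}{p|S_i|} \ge 0.5\,\Deltai$, and $p \le 1$ together with $|S_i| \le K$ gives $p|S_i| \le K$, hence $\frac{\Deltai K}{p|S_i|} \ge \Deltai$. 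Thus the minimum is at least $0.5\,\Deltai$, so $\fail \le c\exp(-\Deltai/(2c))$; renaming the constant (replace $c$ by $2c$) this becomes $\le c e^{-\Deltai/c}$, the form asserted in \Cref{cor:focs_cor1}. Substituting $k=K$ turns the work bound $\tilde{O}(\max(\sum_i|S_i|,n,m)\poly(k))$ of the theorem into $\tilde{O}(\max(n,m)\poly(K))$ (using $\sum_i|S_i| \le mK$) and the depth into $\poly(\log(nm),K)$, and the set $\Ibad$ the theorem returns already has $\sum_{i \in \Ibad} imp_i \le \sum_i imp_i\cdot\fail \le \sum_i imp_i\cdot c e^{-\Deltai/c}$.

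For \Cref{cor:focs_cor2}, fix an arbitrary enumeration $C_1,\dots,C_{m'}$ of the triples of $\mathcal{C}$, writing $C_j = (S_j,\Delta_j,imp(C_j))$ with $m' = |\mathcal{C}| \le m$. The family $\{S_1,\dots,S_{m'}\}$, the weights $imp(C_1),\dots,imp(C_{m'})$, the probability $p$, and the vector $(\Delta_1,\dots,\Delta_{m'})$ satisfy the hypotheses of \Cref{cor:focs_cor1} verbatim, since $|S_j| \le K$, $\Delta_j \ge 0.5\,p|S_j|$ and $S_j \subseteq R$. I would run the algorithm of \Cref{cor:focs_cor1} on this instance with the ground set taken to be $R$ itself (identifying $R$ with $[|R|]$, a harmless relabeling; as $|R| \le n$ the work remains $\tilde{O}(\max(n,m)\poly(K))$ and the depth $\poly(\log(nm),K)$), so the returned set is automatically a subset of $R$; equivalently, run \Cref{cor:focs_cor1} over $[n]$ and replace its output $R^{sub}$ by $R^{sub}\cap R$, which leaves $R^{sub}\cap S_j$ unchanged for every $S_j \subseteq R$ and so preserves all guarantees. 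Setting $\mathcal{C}^{bad} := \{C_j : j \in \Ibad\}$, and noting that the parameter $prob^{bad}_{C_j}$ of the statement equals $c e^{-\Delta_j/c}$, which is precisely the failure parameter $\fail$ of \Cref{cor:focs_cor1} at index $j$, the inequality $\sum_{j \in \Ibad} imp(C_j) \le \sum_j imp(C_j)\cdot prob^{bad}_{C_j}$ and the guarantee $|p|S_j| - |R^{sub}\cap S_j|| \le \Delta_j$ for $j \notin \Ibad$ transfer directly, which is exactly the claim.

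I do not expect a genuine obstacle here: \Cref{cor:focs_cor2} only repackages \Cref{cor:focs_cor1} into ``triple'' form, and possible repetitions among the triples cause no trouble because \Cref{cor:focs_cor1} already allows an arbitrary indexed family of sets. The only two points that need a moment of care are (i) the estimate lower-bounding the three-term minimum in \Cref{thm:focs_with_importance} by $0.5\,\Deltai$, which is what collapses the Chernoff-type bound to the single clean exponential $c e^{-\Delta/c}$, and (ii) making sure the returned subset lies in $R$ rather than only in $[n]$, handled by the relabeling (or the intersection trick) described above.
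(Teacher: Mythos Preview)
Your proposal is correct and matches the paper's approach: the paper simply states that \Cref{cor:focs_cor1} and \Cref{cor:focs_cor2} follow from \Cref{thm:focs_with_importance} ``by plugging values and rephrasing,'' and you have spelled out precisely that plugging (setting $k=K$, lower-bounding the three-term minimum by $0.5\,\Deltai$, and reindexing triples as an indexed family). Your handling of the $R^{sub}\subseteq R$ requirement via relabeling or intersecting with $R$ is the natural way to make the rephrasing airtight.
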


Having these helper tools from prior work, we now go back to proving \Cref{lem:partitioning_subset}, by using \Cref{cor:focs_cor2}.

\begin{lemma}
\label{lem:partitioning_subset}
The following holds for every sufficiently small absolute constant $\delta > 0$: Let $n,m,K \in \mathbb{N}$. Let $R \subseteq [n]$ be a subset and $\{S_1,S_2,\ldots,S_m\}$ be a family of subsets of $R$  and $|S_i| \leq K$ for every $i \in [m]$. Also, let $imp \in \mathbb{R}^m_{\geq 0}$. Then, there exists a deterministic parallel algorithm with work $\left(n + m\right)\poly(K\log(2nm))$ and depth $\poly(K\log(2nm))$ that computes a subset $R^{sub} \subseteq R$ satisfying
\begin{enumerate}
    \item $|R^{sub}| = \Omega(\frac{|R|}{K})$
    \item $\sum_{i=1}^m imp_i \cdot (1+\delta)^{\max(0,|R^{sub}\cap S_i| - 1)} \leq 2\sum_{i=1}^m imp_i$
\end{enumerate}
\end{lemma}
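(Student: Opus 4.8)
The plan is to produce $R^{sub}$ by a single call to \Cref{cor:focs_cor2}, using a sampling probability $p:=C_2/K$ and a collection $\mathcal{C}$ of triples designed to encode both a lower bound on $|R^{sub}|$ and the multiplicative bound of property~2 at once; here $C_2$ is a large absolute constant (depending only on the constant $c$ of \Cref{cor:focs_cor2}) chosen so that $c\,e^{-C_2/(2c)}\le 1/8$. I would first dispose of the degenerate cases: if $K<C_2$ I output $R^{sub}=R$, which is fine because then $(1+\delta)^{\max(0,|R^{sub}\cap S_i|-1)}\le(1+\delta)^{C_2}\le 2$ for $\delta$ small, and $|R^{sub}|=|R|\ge|R|/K$; if instead $|R|<K$ I output a single element of $R$ (or $\emptyset$), for which property~2 is trivial and $|R^{sub}|\ge 1=\Omega(|R|/K)$. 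Otherwise $K\ge C_2$ and $|R|\ge K$; I partition $R$ into $b:=\lfloor|R|/K\rfloor\ge 1$ blocks $B_1,\dots,B_b$ of size exactly $K$, discard the $<K$ leftover elements, and run \Cref{cor:focs_cor2} with ground set $R':=B_1\cup\dots\cup B_b$ and sets $S_i\cap R'$.

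The collection $\mathcal{C}$ would contain two families. For each block, the triple $(B_j,\,C_2/2,\,W)$ with a uniform weight $W:=\big(\sum_i imp_i\big)/b$ (or $W:=1$ if $\sum_i imp_i=0$); this is legal since $C_2/2=0.5\,p|B_j|$. For each $i\in[m]$ and each integer $s$ with $Y_0<s<|S_i|$, where $Y_0:=\lceil 2C_2\rceil$, the triple $(S_i,\,s-C_2,\,imp_i\,\delta(1+\delta)^{s-1})$; since $p|S_i|\le pK=C_2$ one checks $s-C_2\ge 0.5\,p|S_i|$ and $p|S_i|+(s-C_2)\le s$, so this triple is legal and, crucially, whenever it lies in $\mathcal{C}\setminus\mathcal{C}^{bad}$ it forces $|R^{sub}\cap S_i|\le p|S_i|+(s-C_2)\le s$. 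Note $|\mathcal{C}|\le b+mK\le n+mK$.

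For property~1 I would use the guarantee $\sum_{C\in\mathcal{C}^{bad}}imp(C)\le\sum_{C\in\mathcal{C}}imp(C)\,prob^{bad}_C$, restricted to blocks: each bad block contributes $W$, and $\sum_{C\in\mathcal{C}}imp(C)\,prob^{bad}_C = bW\,c\,e^{-C_2/(2c)}+\sum_{(i,s)}imp_i\,\delta(1+\delta)^{s-1}c\,e^{-(s-C_2)/c}$, where the last sum is a \emph{damped} geometric series, since its $s$-th term carries the factor $\big((1+\delta)e^{-1/c}\big)^{s-1}$ with $(1+\delta)e^{-1/c}<1$ for $\delta$ small, so it equals $O(\delta)\sum_i imp_i=O(\delta)\,bW$. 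Hence the number of bad blocks is at most $b\big(c\,e^{-C_2/(2c)}+O(\delta)\big)\le b/4$, leaving $\ge\tfrac34 b$ good blocks, each contributing $|R^{sub}\cap B_j|\ge p|B_j|-C_2/2=C_2/2\ge 1$ to the disjoint union, so $|R^{sub}|\ge\tfrac34 b\cdot\tfrac{C_2}{2}=\Omega(|R|/K)$. For property~2, with $Y_i:=|R^{sub}\cap S_i|$ the elementary telescoping identity gives
\[
\sum_{i=1}^m imp_i(1+\delta)^{\max(0,Y_i-1)}\ \le\ (1+\delta)^{Y_0}\sum_{i=1}^m imp_i\ +\ \sum_{i=1}^m imp_i\sum_{s=Y_0+1}^{Y_i-1}\delta(1+\delta)^{s-1},
\]
and the $(i,s)$ term of the double sum is nonzero only if $Y_i\ge s+1$, which by the previous paragraph forces $(S_i,s-C_2,\cdot)\in\mathcal{C}^{bad}$. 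Therefore the double sum is at most $\sum_{C\in\mathcal{C}^{bad}}imp(C)\le\big(c\,e^{-C_2/(2c)}+O(\delta)\big)\sum_i imp_i\le(\tfrac18+O(\delta))\sum_i imp_i$, and since $Y_0$ is an absolute constant, $(1+\delta)^{Y_0}\le\tfrac74$ for $\delta$ small, so the whole expression is $\le 2\sum_i imp_i$.

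The work and depth are dominated by the single call to \Cref{cor:focs_cor2} on $|\mathcal{C}|\le n+mK$ triples, namely $\tilde{O}\big((n+mK)\poly(K)\big)=(n+m)\poly(K\log(2nm))$ work and $\poly(K\log(2nm))$ depth; the block partition and the construction of $\mathcal{C}$ are cheaper. The main obstacle I expect is calibrating the block weight $W$: it must be large enough that the bad-importance budget cannot pay for many blocks (needed for property~1), yet small enough that the block triples do not dominate $\sum_{C}imp(C)\,prob^{bad}_C$ and spoil property~2. Normalizing $W=(\sum_i imp_i)/b$ makes both sides scale proportionally to $\sum_i imp_i$ and dissolves this tension; a secondary point is the threshold shift $Y_0\approx 2pK$, which is exactly what keeps $p|S_i|+\Delta<s+1$ for every legal set-triple, and the clean handling of the small-$K$ and small-$|R|$ base cases.
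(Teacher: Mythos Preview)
Your proof is correct and uses the same core tool (\Cref{cor:focs_cor2}) as the paper, but packages the two constraints differently. The paper encodes the size guarantee with a \emph{single} triple $(R,\,0.5|R|/K,\,(10/\delta')\sum_i imp_i)$ carrying huge importance: since the total bad-importance budget ends up below $(1/\delta')\sum_i imp_i$, this one triple can never be bad, immediately yielding $|R^{sub}|=\Omega(|R|/K)$. Your block trick achieves the same end but with more bookkeeping. Conversely, for property~2 the paper uses triples $(S_i,k,imp_i(1+\delta')^k)$ for all $k\in[K]$, obtains a bound of $(2/\delta')\sum_i imp_i$ on the sum with exponent base $1+\delta'$, and then passes to a much smaller $\delta\ll\delta'$ via an implicit H\"older/Jensen step; your telescoping with threshold $Y_0$ and weights $imp_i\,\delta(1+\delta)^{s-1}$ is more direct here and avoids that reduction. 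One minor sloppiness: after restricting to the ground set $R'$ you should use $S_i\cap R'$ consistently in the triples (you mention this once but then write $(S_i,s-C_2,\cdot)$), though the inequalities you need all survive the restriction. Overall the two arguments are close variants of the same idea, each slightly cleaner on one of the two properties.
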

\begin{proof}
In the following, we denote by $\delta' > 0$ a sufficiently small constant (not dependent on $\delta$). If $|R| \leq (1/\delta') K$, then  $\Omega(\frac{|R|}{K}) = \Omega(1)$, and therefore we can return $R^{sub} = R$ if $|R| \leq 1$ and if $|R| \geq 2$, we can define $R^{sub}  = \{j\}$ where $j$ is the smallest number contained in $R$.
Thus, we will now assume that $|R| \geq (1/\delta') K$.

Let $\mathcal{C}$ be the collection containing the following $mK + 1$ triples:

\begin{enumerate}
	\item $(S_i,k,imp_i(1+\delta')^{k})$ for every $i \in [m]$, $k \in [K]$.
	\item $(R,0.5|R|/K, (10/\delta')\sum_{i=1}^m imp_i)$
\end{enumerate}

By invoking \cref{cor:focs_cor2} with $\mathcal{C}$ and $p = \frac{1}{1.1K}$, we can compute  with work $\left(n + m\right)\poly(K\log(2nm))$ and depth $\poly(K\log(2nm))$ a subset $R^{sub} \subseteq R$ and $\mathcal{C}^{bad} \subseteq \mathcal{C}$ such that for $c > 0$ being the constant of \cref{cor:focs_cor2} and $\delta' \ll 1/c$ being sufficiently small, we get

\begin{align*}
\sum_{C \in \mathcal{C}^{bad}} imp(C) &\leq 
\sum_{i=1}^m\sum_{k=1}^K imp_i (1+\delta')^{k} \cdot ce^{-k/c} + (10/\delta') (\sum_{i=1}^m imp_i)ce^{-(0.5|R|/K)/c} \\
&\leq\sum_{i=1}^{m} imp_i \left((10/\delta')ce^{-0.5(1/\delta')/c} + \sum_{k=1}^{K}(1+\delta')^{k} \cdot ce^{-k/c}\right) \\
&\leq (1/\delta')\sum_{i=1}^m imp_i.
\end{align*}
Moreover, if $(S_i,k,imp_i(1+\delta')^k) \notin  \mathcal{C}^{bad}$ for some $i \in [m]$ and $k \in [K]$, then

\[  |R^{sub} \cap S_i| \leq p|S_i| + k \leq k+\frac{1}{1.1} < k + 1\]

and if $(R,0.5|R|/K, (10/\delta')\sum_{i=1}^m imp_i) \notin \mathcal{C}^{bad}$, then

\[|R^{sub}| \geq p|R| - 0.5|R|/K \geq 0.1|R|/K.\]

As $\sum_{C \in \mathcal{C}^{bad}} imp(C) \leq (1/\delta')\sum_{i=1}^m imp_i$, we have $(R,0.5|R|/K, (10/\delta')\sum_{i=1}^m imp_i) \notin \mathcal{C}^{bad}$ and therefore $|R^{sub}| \geq 0.1|R|/K = \Omega(|R|/K)$.

For every $i \in [m]$ with $|R^{sub}\cap S_i| \geq 2$, it holds that $(S_i,|R^{sub}\cap S_i| - 1,imp_i(1+\delta')^{|R^{sub}\cap S_i| -1}) \in  \mathcal{C}^{bad}$ and therefore

\begin{align*}
  \sum_{i =1}^m imp_i \cdot (1+\delta')^{\max(0,|R^{sub}\cap S_i| - 1)}  
&\leq 2\sum_{i=1}^m imp_i + \sum_{i \in [m] \colon |R^{sub} \cap S_i| \geq 2}  imp_i \cdot (1+\delta')^{|R^{sub}\cap S_i| - 1} \\
& \leq 2\sum_{i=1}^m imp_i  + \sum_{C \in \mathcal{C}^{bad}} imp(C) \\
&\leq (2/\delta')\sum_{i=1}^m imp_i.
\end{align*}

However, this implies that for $\delta \ll \delta'$ being sufficiently small, we indeed have

 \[\sum_{i=1}^m imp_i \cdot (1+\delta)^{\max(0,|R^{sub}\cap S_i| - 1)} \leq 2\sum_{i=1}^m imp_i.\]
\end{proof}

\subsubsection{\Cref{lem:partitioning_subset_partitioning}}
The other lemma that we used in proving \Cref{lem:partitioning_sample_partition} is \Cref{lem:partitioning_subset_partitioning}, which we state and prove here. We note that this proof itself uses \Cref{lem:sequential_collisions} stated and proved later in this subsubsection, and that is the end of the top-down hierarchy of the proofs of our weighted partitioning result, hence finishing the proof of \Cref{thm:partition_with_isolation}.

\begin{lemma}
\label{lem:partitioning_subset_partitioning}
Let $n,K \in \mathbb{N}$ with $n,K$ being larger than a sufficiently large constant. Let $R^{sub} \subseteq [n]$ be a subset and $\{\hat{S}_1,\hat{S}_2,\ldots,\hat{S}_m\}$ be a family of subsets of $R^{sub}$ with $|\hat{S}_i| \leq K$ for every $i \in [m]$. Also, let $imp \in \mathbb{R}^m_{\geq 0}$. Then, there exists a deterministic parallel algorithm with work $\left(n + m  \right)\poly(K\log(nm))$ and depth $\poly(K\log(nm))$ that computes a partitioning $Q_1 \sqcup Q_2 \sqcup \ldots \sqcup Q_T =  R^{sub}$ satisfying
\begin{enumerate}
    \item $T \leq \frac{|R^{sub}|}{100} + \poly(K)$
    \item $|Q_t| = O(\log n)$ for every $t \in [T]$
    \item $\sum_{i=1}^m imp_i \cdot \left(\sum_{t=1}^T\max(|Q_t \cap \hat{S}_i| -  1,0) \right) \leq \frac{1}{K}\sum_{i=1}^m imp_i$
\end{enumerate}
\end{lemma}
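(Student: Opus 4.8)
The plan is to produce $Q_1 \sqcup \cdots \sqcup Q_T$ as a uniformly random coloring of $R^{sub}$ by $T$ colors drawn from a space of limited independence --- tuned so that in expectation the color classes are small and within-set collisions are few --- and then to derandomize it; this derandomization is precisely the job of \Cref{lem:sequential_collisions}. First, though, I would split off a trivial small-input regime: if $|R^{sub}| \le K^{10}$, let every element of $R^{sub}$ be its own singleton part, so that $T = |R^{sub}| \le K^{10} = \poly(K)$, every $|Q_t| = 1 = O(\log n)$, and $\sum_{t}\max(|Q_t\cap\hat{S}_i| - 1,0) = 0$ for all $i$; all three conclusions then hold with $O(n)$ work and $O(1)$ depth. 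So assume $|R^{sub}| > K^{10}$; since $R^{sub}\subseteq[n]$ and $K$ exceeds a large constant, this forces $n > K^{10}$, hence $\log n \ge 100$ and $\log K = O(\log n)$, which rules out the degenerate ranges below.

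In this regime I would set $g = \lceil\log n\rceil$ and $T = \max\bigl(\lceil 4K^3\rceil,\ \lceil |R^{sub}|/g\rceil\bigr)$; a direct check using $g \ge 100$ gives $T \le |R^{sub}|/100 + \poly(K)$ (conclusion 1), and in both cases the expected size $|R^{sub}|/T$ of a color class is at most $g = O(\log n)$. The key point is the usual pairwise estimate: $\sum_{t}\max(|Q_t\cap\hat{S}_i| - 1,0)$ is at most the number of monochromatic pairs inside $\hat{S}_i$, so for any coloring $c:R^{sub}\to[T]$ with uniform marginals under which each pair of distinct elements collides with probability exactly $1/T$ (for instance, one drawn from a $\Theta(\log n)$-wise independent family), with $Q_t := c^{-1}(t)$ we obtain
\[
\E\!\left[\sum_{i=1}^{m} imp_i\sum_{t=1}^{T}\max\bigl(|Q_t\cap\hat{S}_i|-1,0\bigr)\right]\ \le\ \sum_{i=1}^{m} imp_i\binom{|\hat{S}_i|}{2}/T\ \le\ \frac{K^2}{2T}\sum_{i=1}^{m} imp_i\ \le\ \frac{1}{8K}\sum_{i=1}^{m} imp_i,
\]
using $T \ge 4K^3$; and by a Chernoff-type tail bound for $\Theta(\log n)$-wise sums every class has size $O(\log n)$ except with total failure probability below $\tfrac14$ over the at most $n$ classes. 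I would fold both requirements into a single pessimistic estimator, for instance
\[
\Psi\ :=\ \frac{\sum_{i} imp_i\sum_{t}\max(|Q_t\cap\hat{S}_i|-1,0)}{\tfrac{1}{2K}\sum_{i} imp_i}\ +\ \sum_{t=1}^{T}\mathbf{1}\!\bigl[\,|Q_t| > C\log n\,\bigr]
\]
for a suitable constant $C$, so that $\E[\Psi] < 1$, and then invoke \Cref{lem:sequential_collisions} to run the method of conditional expectations over the $O(\log^2 n)$ seed bits in work $(n+m)\poly(K\log(nm))$ and depth $\poly(K\log(nm))$, obtaining an explicit $c$ with $\Psi < 1$. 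Such a $c$ has every part of size at most $C\log n = O(\log n)$ (conclusion 2) and $\sum_{i} imp_i\sum_{t}\max(|Q_t\cap\hat{S}_i|-1,0) < \tfrac{1}{2K}\sum_i imp_i \le \tfrac1K\sum_i imp_i$ (conclusion 3).

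I expect the real obstacle to be the \emph{simultaneous} control, under a distribution of only $\Theta(\log n)$-wise independence, of two rather different quantities --- the near-uniform \emph{sizes} of the parts and the importance-weighted count of \emph{within-set collisions} --- while keeping the derandomization at polylogarithmic depth. This is exactly the content that \Cref{lem:sequential_collisions} must deliver, and it forces that lemma to work with a pessimistic estimator which is a low-degree polynomial in the seed bits (so that it is efficiently evaluable in parallel) rather than with the tail events directly. A secondary annoyance is the parameter bookkeeping around the interface of the two regimes: putting the trivial cut-off at $|R^{sub}| = K^{10}$ rather than at $|R^{sub}| \approx K^3 \log n$ is what guarantees $\log K = O(\log n)$ in the non-trivial regime, so that the size concentration and the union bounds above behave as claimed no matter how $\log n$ compares to $K$.
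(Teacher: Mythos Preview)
Your high-level plan --- random $T$-coloring, pairwise collision bound, pessimistic estimator, derandomization --- is morally right, but there is a genuine gap in how you invoke \Cref{lem:sequential_collisions}. That lemma, as stated in the paper, is \emph{not} a derandomization over $O(\log^2 n)$ seed bits of a limited-independence space: it assigns the elements of $R$ to parts one element at a time via the method of conditional expectations, and its depth bound is $\poly\bigl((|R|+1)\log(Mn)\bigr)$, i.e.\ polynomial in $|R|$. If you call it on all of $R^{sub}$ at once (which may have size up to $n$), you get depth $\poly(n)$, not $\poly(K\log(nm))$. Your alternative of fixing seed bits would require a different subroutine --- one that maintains a low-degree pessimistic estimator simultaneously for the collision count \emph{and} for the part-size overflows, and evaluates its conditional expectation in polylogarithmic depth --- and that is not what \Cref{lem:sequential_collisions} provides (its part-size potential is the exponential $2^{|Q_t|}$, tracked element by element).

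The paper resolves exactly this depth issue with one extra layer of partitioning that you are missing: it first cuts $R^{sub}$ into contiguous blocks $P_1 \sqcup \cdots \sqcup P_{T_P}$, each of size between $K^4$ and $\poly(K)$ (a trivial prefix-sum computation), and then invokes \Cref{lem:sequential_collisions} on each block independently and in parallel with $T^{(t)} = \lfloor |P_t|/100 \rfloor$. Now the sequential walk inside each block has depth only $\poly(K\log(nm))$. The per-block guarantee $\sum_i imp_i \sum_{t'} \binom{|\hat{S}_i \cap Q^{(t)}_{t'}|}{2} \le \frac{2}{T^{(t)}}\sum_i imp_i \binom{|\hat{S}_i \cap P_t|}{2}$ is at most $K^{-3}\sum_i imp_i \binom{|\hat{S}_i \cap P_t|}{2}$ since $T^{(t)} \ge K^4/200$; summing over blocks and using $\max(x-1,0)\le\binom{x}{2}$ together with $\binom{|\hat{S}_i|}{2}\le K^2$ yields conclusion~3. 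This block decomposition is the missing idea; without it your single global call to \Cref{lem:sequential_collisions} does not meet the depth budget.
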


\begin{proof}
First, note that if $|R^{sub}| \leq K^4$, then we can simply return the partition into singletons. Thus, it remains to consider the case that $|R^{sub}| \geq K^4$.
Let $P_1 \sqcup P_2 \sqcup \ldots P_{T_P} = R^{sub}$ be an arbitrary partition satisfying $|P_t| \geq K^4$ and $|P_t| \leq \poly(K)$ for every $t \in [T_P]$. Note that such a partition can be efficiently computed in parallel using a prefix-sum algorithm.

In parallel, we invoke for each $t \in [T_P]$ the algorithm of \cref{lem:sequential_collisions} with input

\begin{enumerate}
    \item $T^{(t)}_{L \ref{lem:sequential_collisions}} = \lfloor\frac{|P_t|}{100} \rfloor$
    \item $R^{(t)}_{L \ref{lem:sequential_collisions}} = P_t$
    \item $\mathcal{C}^{(t)}_{L \ref{lem:sequential_collisions}} = \{ (\hat{S}_i \cap P_t,imp_i) \colon i \in [m] \text{ with }\hat{S}_i \cap P_t \neq \emptyset\}$
\end{enumerate}
to compute a partition $Q^{(t)}_1 \sqcup Q^{(t)}_2 \sqcup \ldots \sqcup Q^{(t)}_{T^{(t)}_{L \ref{lem:sequential_collisions}}}$ satisfying

\begin{enumerate}
    \item $\sum_{i = 1}^{m} imp_i \sum_{t'=1}^{T^{(t)}_{L \ref{lem:sequential_collisions}}} {|\hat{S}_i \cap Q^{(t)}_{t'}| \choose 2} \leq \frac{2}{T^{(t)}_{L \ref{lem:sequential_collisions}}} \sum_{i = 1}^{m} imp_i {|\hat{S}_i \cap P_t| \choose 2} \leq \frac{1}{K^3} \sum_{i = 1}^{m} imp_i {|\hat{S}_i \cap P_t| \choose 2}$
    \item $|Q^{(t)}_{t'}| \leq \max(1,10\log(n)|P_t|/T^{(t)}_{L \ref{lem:sequential_collisions}}) = O(\log n)$ for every $t' \in [T^{(t)}_{L \ref{lem:sequential_collisions}}]$
\end{enumerate}
We then define $Q_1 \sqcup Q_2 \sqcup \ldots \sqcup Q_T$ as the union of the partitions $Q^{(t)}_1 \sqcup Q^{(t)}_2 \sqcup \ldots \sqcup Q^{(t)}_{T^{(t)}_{L \ref{lem:sequential_collisions}}}$ for every $t \in [T_p]$. Note that $T = \sum_{t=1}^{T_P}T^{(t)}_{L \ref{lem:sequential_collisions}} \leq \sum_{t=1}^{T_P} \frac{|P_t|}{100} \leq \frac{|R^{sub}|}{100}$. Also,

\begin{align*}
    \sum_{i=1}^m imp_i \cdot \left(\sum_{t=1}^T\max(|Q_t \cap \hat{S}_i| -  1,0) \right) 
    &\leq \sum_{i=1}^{m} imp_i \sum_{t=1}^T {|\hat{S}_i \cap Q_t| \choose 2} \\
    &= \sum_{t=1}^{T_P}\sum_{i = 1}^{m} imp_i \sum_{t'=1}^{T^{(t)}_{L \ref{lem:sequential_collisions}}} {|\hat{S}_i \cap Q^{(t)}_{t'}| \choose 2} \\
    &\leq  \sum_{t=1}^{T_p}\frac{1}{K^3} \sum_{i = 1}^{m} imp_i {|\hat{S}_i \cap P_t| \choose 2} \\
    &\leq \frac{1}{K^3} \sum_{i = 1}^{m} imp_i {|\hat{S}_i| \choose 2} \\
    &\leq \frac{1}{K^3} \cdot K^2 \sum_{i = 1}^{m} imp_i \\
    &\leq  \frac{1}{K}\sum_{i = 1}^{m} imp_i.
\end{align*}

It remains to discuss the work and depth of the algorithm. First, note that we can compute $\mathcal{C}^{(t)}_{L \ref{lem:sequential_collisions}}$ for every $t \in [T_P]$ with work $\left(n + m  \right)\poly(K\log(nm))$ and depth $\poly(K\log(nm))$, as $|\hat{S}_i| \leq K$ for every $i \in [m]$.   Using that $|R^{(t)}_{L \ref{lem:sequential_collisions}}| \leq \poly(K)$  for every $t \in [T_P]$, the algorithm of \cref{lem:sequential_collisions} computes the partition $Q^{(t)}_1 \sqcup Q^{(t)}_2 \sqcup \ldots \sqcup Q^{(t)}_{T^{(t)}_{L \ref{lem:sequential_collisions}}}$ in $(|\mathcal{C}^{(t)}_{L \ref{lem:sequential_collisions}}| + 1)\poly(K\log(mn))$ work and $\poly(K\log(mn))$ depth.
As $\sum_{t=1}^{T_P} |\mathcal{C}^{(t)}_{L \ref{lem:sequential_collisions}}| \leq m \cdot K$, we can therefore conclude that the algorithm has work $\left(n + m  \right)\poly(K\log(nm))$ and depth $\poly(K\log(nm))$.
\end{proof}

Finally, in the above proof of \Cref{lem:partitioning_subset_partitioning}), we used \Cref{lem:sequential_collisions}, which we state and prove here.

\begin{lemma}
\label{lem:sequential_collisions}
Let $n,M,T \in \mathbb{N}$ with $M \geq 3$. Let $R \subseteq [n]$ and $\mathcal{C}$ be a collection of at most $M$ tuples $(S,imp_S)$ with $S \subseteq R$ and $imp_S \in \mathbb{R}_{\geq 0}$.
There exists a deterministic parallel algorithm with work $(T + |\mathcal{C}|)\poly((|R|+1) \log(Mn))$ and depth $\poly((|R|+1) \log(Mn))$ that computes a partition $Q_1 \sqcup Q_2 \sqcup \ldots \sqcup Q_T = R$ satisfying

\begin{enumerate}
    \item $\sum_{(S,imp_S) \in \mathcal{C}}  imp_S \sum_{t=1}^T {|S \cap Q_t|\choose 2} \leq \frac{2}{T}\sum_{(S,imp_S) \in \mathcal{C}} imp_S{|S| \choose 2}$
    \item For every $t \in [T] \colon |Q_t| \leq \max(1,10\log(n)|R|/T)$
\end{enumerate}

\end{lemma}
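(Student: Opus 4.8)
The plan is to derandomize the obvious random construction — send each element of $R$ independently and uniformly to one of the $T$ parts — directly by the method of conditional expectations over the $|R|$ elements, which is affordable because the allowed work and depth are polynomial (not merely logarithmic) in $|R|$. A pairwise‑independent derandomization à la Luby would handle the collision objective (which is a pairwise quantity), but the per‑part size bound is a Chernoff‑type tail event that needs more than pairwise independence, so instead I would fold both objectives into a single pessimistic estimator and fix the elements one at a time. First I would dispose of the trivial regime $T \ge |R|$: put the $j$‑th element of $R$ into $Q_j$ for $j \in [|R|]$ and set $Q_j = \emptyset$ for $|R| < j \le T$; then $|Q_t| \le 1$ and every $\binom{|S \cap Q_t|}{2} = 0$, so both guarantees hold in $O(T + |R|)$ work and $O(1)$ depth. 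Hence I may assume $T < |R|$, so $\mu := |R|/T > 1$, the size target is $b := \max(1, 10\log(n)|R|/T) = 10\log(n)\mu \ge 10\log n$, and (since then $|R| \ge 2$) $n \ge 2$.

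Fix an enumeration $j_1,\dots,j_{|R|}$ of $R$. Under the uniform random assignment, every pair inside a set $S$ lands in a common part with probability exactly $1/T$, so $\E\big[\sum_{(S,imp_S)}imp_S\sum_{t}\binom{|S\cap Q_t|}{2}\big] = \frac1T\sum_{(S,imp_S)}imp_S\binom{|S|}{2} =: \Sigma/T$, and each $|Q_t|$ is $\mathrm{Bin}(|R|,1/T)$. I would then use the pessimistic estimator
\[
\Phi(\sigma) = \frac{T}{2\Sigma}\,\E_\sigma\Big[\sum_{(S,imp_S)}imp_S\sum_{t=1}^{T}\binom{|S\cap Q_t|}{2}\Big] + \sum_{t=1}^{T}(1+\lambda)^{-b}\,\E_\sigma\big[(1+\lambda)^{|Q_t|}\big],
\]
where $\E_\sigma$ is expectation over a uniform assignment of the elements not yet fixed by the partial assignment $\sigma$ and $\lambda > 0$ is a suitable absolute constant (e.g.\ $\lambda = 1$); if $\Sigma = 0$ the first term is vacuous and $\Phi$ is just the second sum (property~1 then holds trivially). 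The three facts to verify are standard: (i) $\Phi(\emptyset) < 1$, since the first term equals $\tfrac12$ and each summand of the second equals $(1+\lambda/T)^{|R|}(1+\lambda)^{-b} \le \big(e^{\lambda}(1+\lambda)^{-10\log n}\big)^{\mu}$, which for $\lambda = 1$, $\mu \ge 1$, $n \ge 2$ makes the whole second term $< \tfrac12$ (using $T < |R| \le n$); (ii) each of the two pieces is a conditional expectation of a nonnegative quantity, so by the tower property averaging over a uniformly random choice for the next element reproduces $\Phi(\sigma)$, whence assigning the next element to the part minimizing the (explicitly computable) conditional estimators keeps $\Phi$ nonincreasing; (iii) for the completed assignment $\sigma^\star$, $\Phi(\sigma^\star) < 1$ forces $\sum_{(S,imp_S)}imp_S\sum_{t}\binom{|S\cap Q_t|}{2} < 2\Sigma/T$ and $(1+\lambda)^{|Q_t|-b} < 1$, i.e.\ $|Q_t| < b$, for every $t$ — exactly the two output guarantees.

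For the resource bounds, the conditional expectation of $\sum_t\binom{|S\cap Q_t|}{2}$ for one set is a closed form in the per‑part counts $a_t = |S\cap Q_t^{\sigma}|$ and in $r = |S|-\sum_t a_t$ (namely $\sum_t\binom{a_t}{2} + \tfrac{r(|S|-r)}{T} + \tfrac1T\binom{r}{2}$), and $\E_\sigma[(1+\lambda)^{|Q_t|}] = (1+\lambda)^{|Q_t^{\sigma}|}(1+\lambda/T)^{u}$ with $u$ the number of unfixed elements; both update in $O(1)$ arithmetic operations when one element is placed, and in the first sum only the sets containing the newly placed element change. Thus round $k$ costs $O\big(T(\deg(j_k)+1)\big)$ operations up to $\polylog(nM)$ factors (with $\deg(j_k)$ the number of tuples containing $j_k$) and $O(\log(nM|R|))$ depth — a parallel sum over the affected tuples and a parallel minimum over the $T$ candidate parts. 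Summing over the $|R|$ rounds gives work $O\big(T|R|(|\mathcal{C}|+1)\big)\cdot\polylog(nM)$ and depth $|R|\cdot O(\log(nM|R|))$; since we are in the case $T < |R|$, the factor $T|R|$ is $\poly(|R|)$, so this is $(T+|\mathcal{C}|)\poly((|R|+1)\log(Mn))$ work and $\poly((|R|+1)\log(Mn))$ depth.

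The one point needing care is that the target work is a \emph{sum} $T+|\mathcal{C}|$, not the product $T\cdot|\mathcal{C}|$ that a naive ``try every part for every affected tuple'' step suggests; this is resolved precisely by the reduction to the regime $T < |R|$, where the extra factor of $T$ is absorbed into the permitted $\poly(|R|)$ overhead. The other delicate spot is the choice of the constant $\lambda$ — equivalently, checking that the exponential‑tail contribution at the empty assignment is below $\tfrac12$ using only $n \ge 2$ and $T < |R| \le n$ — which is where the constant $10$ in the claimed size bound gets used.
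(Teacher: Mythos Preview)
Your proposal is correct and is essentially the same argument as the paper's: both handle $T \ge |R|$ by singletons, and for $T < |R|$ both run the method of conditional expectations over the $|R|$ elements with a combined pessimistic estimator that sums a normalized conditional expectation of the collision count and an exponential‑moment term for the part sizes (the paper takes $\lambda = 1$, i.e.\ tracks $2^{|Q_t|}(1+1/T)^{|R|-\ell}$, and normalizes the total potential to start at $2$ rather than below $1$). The only cosmetic difference is that the paper writes the collision piece as $\sum_t\binom{|S\cap Q_t^{\sigma}|}{2} + \tfrac{1}{T}\big(\binom{|S|}{2}-\binom{|S\cap R_\ell|}{2}\big)$, which is algebraically the same closed form you give, and derives the size bound a posteriori from $\Phi_{|R|,t}\le 2T\,\Phi_{0,t}$ rather than baking the target $b$ into the normalization.
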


\begin{proof}

First, note that if $T \geq |R|$, then we can simply output a partition into singletons. Thus, we now consider the case that $T < |R|$. Also, note that we can assume without loss of generality that for every $(S,imp_S) \in \mathcal{C}$, it holds that $|S| \geq 2$.
Let $R = \{j_1,j_2,\ldots,j_{|R|}\}$ and $R_{\ell} = \{j_1,j_2,\ldots,j_\ell\}$ for every $\ell \in \{0,1,\ldots,|R|\}$.
Consider each $\ell \in \{0,1,\ldots,|R|\}$ and an arbitrary $\rho^{(\ell)} \colon R_{\ell} \mapsto [T]$. Also, we use the notation $(\rho^{(\ell)})^{-1}$ to denote the ``inverse function'' where for each $t\in T$ we define $(\rho^{(\ell)})^{-1}(t)=\{j\in R_{\ell} \,|\, \rho^{(\ell)}(j)=t\}$. We define 

\begin{itemize}
    \item $\Phi_{\ell,S}(\rho^{(\ell)}) = \sum_{t=1}^T {|S \cap (\rho^{(\ell)})^{-1}(t)|\choose 2} + \frac{1}{T} \left({|S| \choose 2} -  {|S \cap R_\ell| \choose 2}\right)$ for any $S \subseteq R$ ,
    \item $\Phi_{\ell,t}(\rho^{(\ell)}) =  2^{|(\rho^{(\ell)})^{-1}(t)|} \cdot \left( 1 + \frac{1}{T}\right)^{|R| -\ell}$ for any $t \in [T]$, and
    \item $\Phi_\ell(\rho^{(\ell)}) = \frac{\sum_{(S,imp_S) \in \mathcal{C}} imp_S \Phi_{\ell,S}(\rho^{(\ell)})}{\sum_{(S,imp_S) \in \mathcal{C}} imp_S \Phi_{0,S}(\rho^{(0)})} + \frac{1}{T}\sum_{t=1}^T \frac{\Phi_{\ell,t}(\rho^{(\ell)})}{\Phi_{0,t}(\rho^{(0)})}$, where $\rho^{(0)}$ is the empty function from $\emptyset$ to $[T]$.
\end{itemize}

For every $\ell \in \{0,1,\ldots,|R| - 1\}$, $\rho^{(\ell)} \in  R_{\ell} \mapsto [T]$ and $t \in [T]$, we define $\rho^{(\ell)} + t$ as the function from $R_{\ell + 1}$ to $[T]$ that agrees with $\rho^{(\ell)}$ on $R_{\ell}$ and maps $j_{\ell + 1}$ to $t$.

The algorithm runs in $|R|$ iterations. At the beginning of the $\ell$-th iteration, the elements $j_1,j_2,\ldots,j_{\ell-1}$ have each been assigned to one of the $T$ parts and we denote by $\rho^{(\ell - 1)} \colon R_{\ell -1} \mapsto [T]$ the corresponding mapping. In the $\ell$-th iteration, the algorithm first computes $t_\ell = \arg \min_{t \in [T]} \Phi_{\ell}(\rho^{(\ell -1)} + t)$ and then assigns $j_\ell$ to the part $Q_{t_{\ell}}$.
Let $\rho^{|R|} \colon R  \mapsto [T]$ be the resulting mapping. Note that this mapping can indeed be computed with work $(1 + |\mathcal{C}|)\poly((|R|+1) \log(Mn))$ and depth $\poly((|R|+1) \log(Mn))$ and it naturally defines a partition of $R$ into $T$ parts, namely by defining $Q_t = (\rho^{|R|})^{-1}(t)$ for every $t \in [T]$. It remains to show that the desired partition satisfies the desired properties.
Note that $\Phi_0(\rho^{(0)}) = 2$ and below we will argue that $\Phi_{|R|}(\rho^{(|R|)}) \leq 2$. In particular,

\[\sum_{(S,imp_S) \in \mathcal{C}} imp_S \Phi_{|R|,S}(\rho^{(|R|)}) \leq 2 \sum_{(S,imp_S) \in \mathcal{C}} imp_S \Phi_{0,S}(\rho^{(0)}),\]

which implies that the first property holds because

\[\sum_{(S,imp_S) \in \mathcal{C}} imp_S \Phi_{|R|,S}(\rho^{(|R|)}) = \sum_{(S,imp_S) \in \mathcal{C}} imp_S\sum_{t=1}^T {|S \cap (\rho^{(|R|)})^{-1}(t)|\choose 2} = \sum_{(S,imp_S) \in \mathcal{C}} imp_S\sum_{t=1}^T {|S \cap Q_t|\choose 2}\]

and 

\[2 \sum_{(S,imp_S) \in \mathcal{C}} imp_S \Phi_{0,S}(\rho^{(0)}) = \frac{2}{T}\sum_{(S,imp_S) \in \mathcal{C}} imp_S{|S| \choose 2}.\]

Next, fix some $t \in [T]$. The fact that $\Phi_{|R|}(\rho^{(|R|)}) \leq 2$ directly implies

\[\Phi_{|R|,t}(\rho^{(|R|)}) \leq 2T \Phi_{0,t}(\rho^{(0)}) = 2T\left(1 + \frac{1}{T} \right)^{|R|} \leq 2Te^{|R|/T},\]

But $\Phi_{|R|,t}(\rho^{(|R|)}) = 2^{|Q_t|}$ and therefore

\[|Q_t| \leq \log_2(2|T|e^{|R|/T}) \leq 1 + \log_2(|T|) + 2|R|/T \leq \max(1,10\log(n)|R|/T),\]

as needed. All that is left to show is that $\Phi_{|R|}(\rho^{(|R|)}) \leq 2$. This is a simple consequence of the facts that $\Phi_0(\rho^{(0)}) = 2$ and for every $\ell \in \{0,1,\ldots,|R| -1\}$ and $\rho^{(\ell)} \colon R_{\ell} \mapsto  [T]$, it holds that
\[\frac{1}{T}\sum_{t=1}^T \Phi_{\ell+1}(\rho^{(\ell)} + t) = \Phi_\ell(\rho^{(\ell)}),\]

which we will show next.

Fix any $S \subseteq R$. If $j_{\ell + 1} \notin S$, then $\Phi_{\ell+1,S}(\rho^{(\ell)} + t) = \Phi_{\ell,S}(\rho^{(\ell)})$ for every $t \in [T]$. If $j_{\ell + 1} \in S$, then for any $t \in [T]$, it holds that

\begin{align*}
\Phi_{\ell+1,S}(\rho^{(\ell)} + t) - \Phi_{\ell,S}(\rho^{(\ell)}) &= {{|S \cap (\rho^{(\ell)})^{-1}(t)| + 1\choose 2} } - {{|S \cap (\rho^{(\ell)})^{-1}(t)|\choose 2} } \\
&+ \frac{1}{T}\left({|S \cap R_\ell| \choose 2} -  {|S \cap R_{\ell+ 1}| \choose 2} \right) \\
&= |S \cap (\rho^{(\ell)})^{-1}(t)|   - \frac{1}{T}|S \cap R_\ell|.
\end{align*}

Therefore 

\begin{align*}
    \frac{1}{T}\sum_{t=1}^T\Phi_{\ell+1,S}(\rho^{(\ell)} + t) &= \Phi_{\ell,S}(\rho^{(\ell)}) + \frac{1}{T}\sum_{t=1}^T \left(\Phi_{\ell+1,S}(\rho^{(\ell)} + t) - \Phi_{\ell,S}(\rho^{(\ell)}) \right) \\
    &= 
    \Phi_{\ell,S}(\rho^{(\ell)}) + \frac{1}{T}\sum_{t=1}^T \left(|S \cap (\rho^{(\ell)})^{-1}(t)|  - \frac{1}{T}|S \cap R_\ell|\right) \\
    &= \Phi_{\ell,S}(\rho^{(\ell)}).
\end{align*}

Next, consider some arbitrary $t' \in [T]$. Note that

\begin{align*}
    \frac{\frac{1}{T}\sum_{t=1}^T\Phi_{\ell+1,t'}(\rho^{(\ell)} + t)}{\Phi_{\ell,t'}(\rho^{(\ell)})} = \frac{\frac{1}{T} (2 + (T-1)\cdot 1)}{1 + \frac{1}{T}} = 1
\end{align*}

Thus, we can indeed conclude that 

\[\frac{1}{T}\sum_{t=1}^T \Phi_{\ell+1}(\rho^{(\ell)} + t) = \Phi_\ell(\rho^{(\ell)}).\]
\end{proof}

\newpage

\subsection{Creating a recursion, via the partitioning and MWU}
Next, we describe the lemma that uses our weighted partitioning theorem (\Cref{thm:partition_with_isolation}) along with the multiplicative weights update method to create a smaller weighted set balancing problem, which we solve recursively. We later use this lemma to present our recursive solution for the weighted set balancing problem.

\begin{lemma}
    \label{lem:mwu_weighted}
     There exists an absolute constant $\delta \in (0,0.5]$ such that the following holds.
     Let $n,m,\Delta \in \mathbb{N}$ with $m \geq 2$, $\Delta \geq n$ and $n \geq (\log 3m\Delta)^{1/\delta}$, $A \in \{-\Delta,-\Delta + 1, \ldots, -1,0,1,\ldots,\Delta\}^{m \times n}$ and $imp \in \mathbb{R}_{\geq 0}^m$. There exists a deterministic parallel algorithm with work $(n + m + nnz(A))\cdot\poly(\log(\Delta n m))$ and depth $\poly(\log(\Delta n m))$ that computes a vector $\chi \in \{-1,1\}^n$, a partition $Q_1 \sqcup Q_2 \sqcup \ldots \sqcup Q_T = [n]$ with $T \leq n/2$ and for every $i \in [m]$ a subset  $COL_i \subseteq [n]$ satisfying
    \begin{enumerate}
        \item $\sum_{i=1}^m imp_i (1+\delta)^{|COL_i|} \leq (1 + \frac{1}{10 \log^2(n)}) \sum_{i=1}^m imp_i $, and  \\
        \item  for every $i \in [m]$, we have $\sum_{t=1}^T \left(\sum_{j \in Q_t \setminus COL_i} a_{ij} \chi_j \right)^2 \leq (1 + \frac{1}{10\log^2 (n)}) \sum_{j \in [n] \setminus COL_i} a^2_{ij}$.
    \end{enumerate}
\end{lemma}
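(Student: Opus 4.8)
The plan is to follow the blueprint of the proof of \Cref{lem:RecursionCreator} --- compute a partition, fix $\chi$ locally inside small parts with the importance-weighted sequential derandomization of \Cref{thm:sequential_derandomization}, and glue the pieces with the multiplicative weights update scheme of \Cref{lem:MWU} --- but now invoking \Cref{thm:partition_with_isolation} in place of a plain weighted partition, since that is precisely the tool designed to absorb the large-weight entries that obstruct the weighted case.

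\emph{Setup, partitioning, and the collision sets.} For each $i$ write $D_i := \sum_j a_{ij}^2$ (assume $D_i>0$). Fix a polylogarithmic threshold $K=\poly(\log(\Delta nm))$, call $j$ \emph{heavy} for $i$ if $a_{ij}^2>D_i/K$ --- at most $K$ such indices per constraint --- and \emph{light} otherwise, and group the light indices of each $i$ into $\poly(\log(\Delta nm))$ weight buckets, each collecting indices whose squared weights agree up to a $(1+\eps_0)$ factor, $\eps_0=\Theta(1/\log^3 n)$. Invoke \Cref{thm:partition_with_isolation} with $M=\poly(\Delta,n,m)$ (so $n\ge(\log M)^{1/\delta}$ follows from $n\ge(\log 3m\Delta)^{1/\delta}$ after shrinking $\delta$), feeding it the $S$-family $S_i:=(\text{heavy indices of }i)\cup(\text{light indices of }i\text{ in buckets smaller than }\log^{30}(nM))$, which still satisfies $|S_i|\le\log^{100}(nM)$ after adjusting exponents, the $B$-family of all buckets of size at least $\log^{30}(nM)$, and the importances $imp$ for the $S$-family. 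This yields a coarse partition $\mathcal P=P_1\sqcup\dots\sqcup P_{T_P}$ with $T_P=\Theta(\log^{20}(nM))$ and a refinement $\mathcal Q=Q_1\sqcup\dots\sqcup Q_{T_Q}$ with $T_Q\le n/2$, $|Q_t|=O(\log n)$, every large bucket split nearly evenly over $\mathcal P$, and $\sum_i imp_i(1+\delta)^{\sum_t\max(|S_i\cap Q_t|-1,0)}\le(1+1/\log^3 n)\sum_i imp_i$. Output $Q_1\sqcup\dots\sqcup Q_T:=\mathcal Q$ and set $COL_i:=\bigcup\{S_i\cap Q_t:|S_i\cap Q_t|\ge 2\}$. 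Since $|COL_i|\le 2\sum_t\max(|S_i\cap Q_t|-1,0)$ and $(1+\delta)^2\le1+3\delta$, applying the isolation bound with the $\delta$ of \Cref{thm:partition_with_isolation} taken a constant factor below the one asserted here gives $\sum_i imp_i(1+\delta)^{|COL_i|}\le(1+1/\log^3 n)\sum_i imp_i\le(1+\tfrac1{10\log^2 n})\sum_i imp_i$ (the last step valid because $n\ge(\log 3m\Delta)^{1/\delta}$ with $\delta$ small forces $\log n$ large), which is property~1.

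\emph{The MWU loop and property~2.} Process $P_1,\dots,P_{T_P}$ sequentially as the rounds of \Cref{lem:MWU}, carrying importances $imp^t$ with $imp^1=imp$. In round $t$, independently and in parallel over the fine parts $Q_\ell\subseteq P_t$, run \Cref{thm:sequential_derandomization} on the $O(\log n)$ variables of $Q_\ell$ with the masked rows $\hat a_{ij}:=a_{ij}\mathbf{1}[j\notin COL_i]$ and importances $imp^t$, obtaining $\chi$ on $Q_\ell$ with the worst-case bound $\big(\sum_{j\in Q_\ell\setminus COL_i}a_{ij}\chi_j\big)^2\le O(\log M)\sum_{j\in Q_\ell\setminus COL_i}a_{ij}^2$ and the importance-weighted averaging bound. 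Using the properties of $\mathcal P$ and $\mathcal Q$ one argues that each coarse part $P_t$ contributes at most an $O(\log M)/T_P$ fraction of $\sum_{j\notin COL_i}a_{ij}^2$ to $\sum_{\ell:\,Q_\ell\subseteq P_t}\big(\sum_{j\in Q_\ell\setminus COL_i}a_{ij}\chi_j\big)^2$, while the corresponding importance-weighted quantity stays below $(1+O(\eps))\,T_P^{-1}\sum_i imp^t(i)\sum_{j\notin COL_i}a_{ij}^2$ for $\eps=\Theta(1/\log^2 n)$. Hence, setting $gap^t(i):=\big(\sum_{\ell:\,Q_\ell\subseteq P_t}(\sum_{j\in Q_\ell\setminus COL_i}a_{ij}\chi_j)^2\big)\big/\big((1+\eps)\,T_P^{-1}\sum_{j\notin COL_i}a_{ij}^2\big)$, the oracle hypotheses of \Cref{lem:MWU} hold with width $W=O(\log M)$; since $T_P=\Theta(\log^{20}(nM))=\Omega(W\log M/\eps^2)$, the lemma gives $\sum_t gap^t(i)\le(1+\eps)T_P$ for every $i$, i.e.\ $\sum_{\ell=1}^{T_Q}\big(\sum_{j\in Q_\ell\setminus COL_i}a_{ij}\chi_j\big)^2\le(1+\eps)^2\sum_{j\notin COL_i}a_{ij}^2\le(1+\tfrac1{10\log^2 n})\sum_{j\in[n]\setminus COL_i}a_{ij}^2$, which is property~2. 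Work and depth are $\poly(\log(\Delta nm))$ overheads on $n+m+nnz(A)$ by the bounds of the invoked results, since each variable lies in exactly one fine part and is touched in exactly one round.

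\emph{Main obstacle.} The heart of the argument --- and the reason the plain weighted partition of \Cref{lem:RecursionCreator} does not suffice --- is justifying that claim about $\mathcal P$: that for \emph{every} constraint $i$ the coarse partition splits the non-$COL_i$ mass $\sum_{j\notin COL_i}a_{ij}^2$ to within a $(1+o(1))$ factor of its average, so that the MWU width $W$ remains $O(\log M)$ and does not grow with $T_P$. For the light indices this follows from the near-even count split of each large bucket together with the $(1+\eps_0)$ weight spread inside a bucket (and the rerouting of small buckets into $S_i$); the genuinely delicate part is the heavy indices, whose squared weights can be as large as $D_i$ itself --- these must be handled via the isolation guarantee of \Cref{thm:partition_with_isolation} (with the heavy/light cutoff and $K$ chosen so that the light mass co-located with a kept heavy index in its fine part is negligible) so that a kept heavy index is charged essentially its own $a_{ij}^2$ with no $O(\log M)$ loss and therefore cannot inflate $W$. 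Carrying this out while simultaneously fitting every polylogarithmic slack --- the partition's additive errors, the bucket granularity $\eps_0$, the MWU error $\eps$, and the $1+1/\log^3 n$ from isolation --- under the target $1+\tfrac1{10\log^2 n}$, and choosing $K$, the exponents defining $S_i$ and the buckets, and $M$ consistently with the hypotheses of \Cref{thm:partition_with_isolation}, is where the bulk of the work lies.
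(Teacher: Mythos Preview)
Your skeleton --- invoke \Cref{thm:partition_with_isolation}, define $COL_i$ from the non-isolated $S$-items, then run MWU over the coarse parts with sequential derandomization on the fine parts --- matches the paper. The gap is exactly where you flag it, but your proposed resolution is wrong, not merely sketchy.

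You run MWU on the masked rows $\hat a_{ij}=a_{ij}\mathbf{1}[j\notin COL_i]$ and claim width $W=O(\log M)$. For that you need, for every $i$ and every coarse part $P_t$, that $\sum_{j\in P_t\setminus COL_i}a_{ij}^2 \le O(1)\cdot T_P^{-1}\sum_{j\notin COL_i}a_{ij}^2$. But a single \emph{kept} heavy index $j^\ast$ with $a_{ij^\ast}^2=\Theta(D_i)$ (nothing forbids this) lies in exactly one $P_{t^\ast}$ and already forces $gap^{t^\ast}(i)=\Omega(T_P)$. Your fix --- ``a kept heavy index is charged essentially its own $a_{ij}^2$ with no $O(\log M)$ loss and therefore cannot inflate $W$'' --- attacks the wrong quantity: removing the $\log M$ multiplier from the squared discrepancy does nothing about the fact that the heavy mass is not evenly spread across coarse rounds, which is what $W$ measures. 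With $W=\Omega(T_P)$ the MWU round budget $T_P=\Omega(W\log M/\eps^2)$ is unattainable, so the per-constraint conclusion of \Cref{lem:MWU} fails.

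The paper's resolution is structurally different from what you outline: it runs MWU \emph{only} on the medium-and-large-bucket part $M_i\cup L_i$, for which the $B$-family guarantee of \Cref{thm:partition_with_isolation} gives an even split across coarse parts and hence genuine width $O(\log M)$. The small-bucket items $S_i$ (which contain all heavy indices) are kept \emph{out} of the MWU entirely. After removing $COL_i$ at most one $S$-item survives per $Q_t$; the paper's definition of $COL_i$ is finer than yours --- it places $\min(|S_{i,t}|,\,|S_{i,t}|+|M_{i,t}|-1)$ elements of $S_{i,t}$ in $COL_i$ --- precisely so that a surviving $S$-item never coexists with a medium item in the same $Q_t$, killing the $S$--$M$ cross-terms. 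The remaining $S$--$L$ cross-terms are then bounded by a bucket-by-bucket case analysis exploiting the size gap between small and large buckets. Decoupling $S_i$ from the MWU, rather than trying to carry it through, is the idea your plan is missing.
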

\begin{proof}

Notice that $ |a_{ij}| \in \{0,1,2,\ldots,\Delta\}$. First, we compute a bucketing of the coefficients $a_{ij}$ in each constraint $i\in [m]$ such that coefficients in the same bucket are nearly the same, up to a $1 + \frac{1}{\log^3(n)}$ factor. That is, for every $i \in [m]$, we compute a bucketing

\[\{j \in [n] \colon a_{ij} \neq  0\} = B_{i,1} \sqcup B_{i,2} \sqcup \ldots \sqcup B_{i,\ell_{max}}\]
for some $\ell_{max} = O(\log (\Delta) \log^3(n))$, such that for every $\ell \in [\ell_{max}]$, it holds that
\[\frac{\max_{j \in B_{i,\ell}} |a_{ij}|}{\min_{j \in B_{i,\ell}} |a_{ij}|} \leq 1 + \frac{1}{\log^3(n)}.\]
We refer to bucket $B_{i,\ell}$ as \textit{small} if $|B_{i,\ell}| \leq \log(nm \Delta)^{50}$, as \textit{large} if $|B_{i,\ell}| \geq \log(nm \Delta)^{90}$, and otherwise we refer to it as a \textit{medium bucket}. Accordingly, we define $S_i$, $M_i$, and $L_i$ to be the union of all items in small, medium, and large buckets, correspondingly. That is, we set
\begin{itemize}
	\item $S_i = \left(\bigcup_{\ell \in [\ell_{max}] \colon \text{$B_{i,\ell}$ is a small bucket}} B_{i,\ell} \right) $ 
	\item $M_i = \left(\bigcup_{\ell \in [\ell_{max}] \colon \text{$B_{i,\ell}$ is a medium bucket}} B_{i,\ell} \right) $
	\item $L_i = \left(\bigcup_{\ell \in [\ell_{max}] \colon \text{$B_{i,\ell}$ is a large bucket}} B_{i,\ell} \right) $.
\end{itemize}

\bigskip
A major ingredient in the proof is summarized by the following claim:

\begin{claim}\label{clm:in-weighted-recursion-creator}
We can compute a vector $\chi \in \{-1,1\}^n$ and a partition $Q_1 \sqcup Q_2 \sqcup \ldots \sqcup Q_T = [n]$ with $T \leq n/2$ satisfying the following guarantees:

\begin{enumerate}
		\item $\sum_{i=1}^m imp_i (1+\delta)^{\sum_{t =1}^T \max(0,|Q_t \cap(S_i \cup M_i)| - 1)} \leq \left(1 + \frac{1}{10 \log^2(n)} \right)\sum_{i=1}^m imp_i $
	\item $\sum_{t=1}^T \left(\sum_{j \in Q_t \cap(M_i \cup L_i)} a_{ij} \chi_j \right)^2 \leq \left(1 + \frac{100}{\log^3 n} \right) \sum_{j \in M_i \cup L_i} a^2_{ij}$
	\item $|Q_t| = O(\log n)$ for every $t \in [T]$
\end{enumerate}
\end{claim}

The remaining part of the proof is structured in two components. In the first component, we argue that the three guarantees in \Cref{clm:in-weighted-recursion-creator}, together with some simple definitions for $COL_i$, imply the output guarantees of \cref{lem:mwu_weighted}. Later, in the second component, we prove \Cref{clm:in-weighted-recursion-creator}.

\medskip
\paragraph{Component I of the proof--concluding \Cref{lem:mwu_weighted} from \Cref{clm:in-weighted-recursion-creator}} For every $i \in [m]$ and $t \in [T]$, we define $S_{i,t} = S_i \cap Q_t$, $M_{i,t} = M_i \cap Q_t$ and $L_{i,t} = L_i \cap Q_t$. We also define $COL_i \subseteq S_i$ by putting $\min(|S_{i,t}|,|S_{i,t}| + |M_{i,t}| - 1)$ arbitrary elements of $S_{i,t}$ in $COL_i$, for each $t \in [T]$. We argue that with this definition, by the three properties in \Cref{clm:in-weighted-recursion-creator}, the sets $COL_i$ satisfy the output guarantees of \cref{lem:mwu_weighted}.

Note that the first property of \cref{lem:mwu_weighted}, namely that $\sum_{i=1}^m imp_i (1+\delta)^{|COL_i|} \leq (1 + \frac{1}{10 \log^2(n)}) \sum_{i=1}^m imp_i $ directly follows from the way we defined $COL_i$ for every $i \in [m]$ and our assumption $$\sum_{i=1}^m imp_i (1+\delta)^{\sum_{t =1}^T \max(0,|S_{i,t} \cup M_{i,t}| - 1)} \leq (1 + \frac{1}{10 \log^2(n)}) \sum_{i=1}^m imp_i. $$ Thus, it remains to show that the second property of \cref{lem:mwu_weighted} is satisfied.

Fix some $i \in [m]$. We have to show that

\[\sum_{t=1}^T \left(\sum_{j \in Q_t \setminus COL_i} a_{ij} \chi_j \right)^2 \leq \left(1 + \frac{1}{10\log^2 (n)} \right) \sum_{j \in [n] \setminus COL_i} a^2_{ij}.\]
We define the set of elements in \textit{small} buckets \textit{without collisions} $SWC_i = S_i \setminus COL_i$ and $SWC_{i,t} = SWC_i \cap Q_t$ for every $t \in [T]$.
From the way $COL_i$ is defined, it holds that $|SWC_{i,t}| \leq 1$ and $ |SWC_{i,t}|\cdot |M_{i,t}| = 0$ for every $t \in [T]$. Using both of these properties, together with our assumption
\[\sum_{t=1}^T \left(\sum_{j \in M_{i,t} \cup L_{i,t}} a_{ij} \chi_j \right)^2 \leq \left(1 + \frac{100}{\log^3 n} \right) \sum_{j \in M_i \cup L_i} a^2_{ij},\]
we can deduce that
\begin{align*}
	& &&\sum_{t=1}^T \left(\sum_{j \in Q_t \setminus COL_i} a_{ij} \chi_j \right)^2 \\
        &= &&\sum_{t=1}^T \left( \left(\sum_{j \in SWC_{i,t}} a_{ij}\chi_j \right)  + \sum_{j \in M_{i,t} \cup L_{i,t}} a_{ij} \chi_j\right)^2 \\
         &= &&\sum_{t=1}^T \left( \sum_{j \in SWC_{i,t}} a_{ij}\chi_j\right)^2 
         + 2\sum_{t=1}^T \sum_{j \in SWC_{i,t}}\sum_{j '\in M_{i,t} \cup L_{i,t}} a_{ij}a_{ij'}\chi_j \chi_{j'} 
         + \sum_{t=1}^T \left(\sum_{j \in M_{i,t} \cup L_{i,t}} a_{ij} \chi_j \right)^2  \\
        &\leq &&\sum_{j \in SWC_i} a_{ij}^2 
        + 2\sum_{t=1}^T \sum_{j \in SWC_{i,t}}\sum_{j '\in L_{i,t}} |a_{ij}a_{ij'}| 
         + \left(1 + \frac{100}{\log^3 n} \right) \sum_{j \in M_i \cup L_i} a^2_{ij} \\
        &\leq &&\left(1 + \frac{100}{\log^3 n} \right) \sum_{j \in [n] \setminus COL_i} a^2_{ij} + 2\sum_{t=1}^T \sum_{j \in SWC_{i,t}}\sum_{j '\in  L_{i,t}} |a_{ij}a_{ij'}|. 
\end{align*}

As we want to show that $\sum_{t=1}^T \left(\sum_{j \in Q_t \setminus COL_i} a_{ij} \chi_j \right)^2 \leq \left(1 + \frac{1}{10\log^2 (n)} \right) \sum_{j \in [n] \setminus COL_i} a^2_{ij}$, and we can assume that $n$ is sufficiently large, it suffices to show that $\sum_{t=1}^T \sum_{j \in SWC_{i,t}}\sum_{j '\in  L_{i,t}} |a_{ij}a_{ij'}| \leq O\left(\frac{1}{\log^3(n)}\right)\sum_{j \in SWC_i \cup L_i} a_{ij}^2$, which we will do next.
Consider some $\ell,\ell' \in [\ell_{max}]$ such that $B_{i,\ell}$ is  a small bucket and $B_{i,\ell'}$ is a big bucket. 
We next show that
\[\sum_{t=1}^T   \sum_{j \in SWC_{i,t}  \cap B_{i,\ell}}\sum_{j' \in L_{i,t} \cap B_{i,\ell'}} |a_{ij}a_{ij'}| \leq O\left(\frac{1}{\log^6(n)\log(\Delta)}\right) \sum_{j \in \left(SWC_i \cap B_{i,\ell} \right) \cup B_{i,\ell'}} a_{ij}^2.\]
Note that this suffices because

\begin{align*}
& &&\sum_{t=1}^T \sum_{j \in SWC_{i,t}}\sum_{j' \in L_{i,t}}|a_{ij}a_{ij'}| \\
&=&&\sum_{\ell  = 1}^{\ell_{max}}\sum_{\ell' =1}^{\ell_{max}} \sum_{t=1}^T  \sum_{j \in SWC_{i,t} \cap B_{i,\ell}}\sum_{j' \in L_{i,t} \cap B_{i,\ell'}}|a_{ij}a_{ij'}| \\
&\leq &&\sum_{\ell  = 1}^{\ell_{max}}\sum_{\ell' =1}^{\ell_{max}} O\left(\frac{1}{\log^6(n)\log(\Delta)}\right) \sum_{j \in \left(SWC_i \cap B_{i,\ell} \right) \cup (L_i \cap B_{i,\ell'})} a_{ij}^2 \\
&\leq &&\left(\sum_{\ell  = 1}^{\ell_{max}}\sum_{\ell' =1}^{\ell_{max}} O\left(\frac{1}{\log^6(n)\log(\Delta)}\right) \sum_{j \in SWC_i \cap B_{i,\ell}} a_{ij}^2 \right)
+ \sum_{\ell  = 1}^{\ell_{max}}\sum_{\ell' =1}^{\ell_{max}} O\left(\frac{1}{\log^6(n)\log(\Delta)}\right) \sum_{j \in L_i \cap B_{i,\ell'}} a_{ij}^2 \\
&\leq &&\ell_{max} \cdot O\left(\frac{1}{\log^6(n)\log(\Delta)}\right)\left(\left(\sum_{\ell  = 1}^{\ell_{max}} \sum_{j \in SWC_i \cap B_{i,\ell}} a_{ij}^2\right) +\sum_{\ell'  = 1}^{\ell_{max}}  \sum_{j \in L_i \cap B_{i,\ell'}} a_{ij}^2\right) \\
&\leq &&O(\log^3 (n) \log(\Delta)) \cdot O\left(\frac{1}{\log^6(n)\log(\Delta)}\right)  \sum_{j \in SWC_i \cup L_i} a_{ij}^2 \\
&= &&O\left(\frac{1}{\log^3(n)}\right)\sum_{j \in SWC_i \cup L_i} a_{ij}^2.
\end{align*}

Thus, it indeed suffices to show that for any small bucket $B_{i,\ell}$ and any big bucket $B_{i,\ell'}$, it holds that
\[\sum_{t=1}^T   \sum_{j \in SWC_{i,t}  \cap B_{i,\ell}}\sum_{j' \in L_{i,t} \cap B_{i,\ell'}} |a_{ij}a_{ij'}| \leq O\left(\frac{1}{\log^6(n)\log(\Delta)}\right) \sum_{j \in \left(SWC_i \cap B_{i,\ell} \right) \cup B_{i,\ell'}} a_{ij}^2.\]

We use a case distinction. First, consider the case that $\frac{\max_{j' \in B_{i,\ell'}} |a_{ij'}|}{\max_{j\in B_{i,\ell}} |a_{ij}|} \leq \frac{1}{\log^6(n)\log(\Delta)}\frac{1}{\max_{t \in [T]} |Q_t|}$.
Then,
\begin{align*}
\sum_{t=1}^T   \sum_{j \in SWC_{i,t}  \cap B_{i,\ell}}\sum_{j' \in L_{i,t} \cap B_{i,\ell'}} |a_{ij}a_{ij'}|
&\leq \sum_{t=1}^T |SWC_{i,t} \cap B_{i,\ell}||L_{i,t} \cap B_{i,\ell'}| (\max_{j \in B_{i,\ell}} |a_{ij}|)(\max_{j' \in B_{i,\ell'}} |a_{ij'}|) \\
 &\leq \sum_{t=1}^T |SWC_{i,t} \cap B_{i,\ell}||Q_t| (\max_{j \in B_{i,\ell}} |a_{ij}|)(\max_{j' \in B_{i,\ell'}} |a_{ij'}|)  \\
&\leq \frac{1}{\log^6(n)\log(\Delta)}\sum_{t=1}^T |SWC_{i,t} \cap B_{i,\ell}| (\max_{j \in B_{i,\ell}} |a_{ij}|)^2 \\
&= O\left(\frac{1}{\log^6(n)\log(\Delta)}\right)\sum_{t=1}^T \sum_{j \in SWC_{i,t} \cap B_{i,\ell}} a_{ij}^2 \\
&= O\left(\frac{1}{\log^6(n)\log(\Delta)}\right) \sum_{j \in \left(SWC_i \cap B_{i,\ell} \right) \cap B_{i,\ell'}} a_{ij}^2.
\end{align*}

Thus, it remains to consider the case that $\frac{\max_{j' \in B_{i,\ell'}} |a_{ij'}|}{\max_{j\in B_{i,\ell}} |a_{ij}|} \geq \frac{1}{\log^6(n)\log(\Delta)}\frac{1}{\max_{t \in [T]} |Q_t|}$. In particular,
\begin{align*}
 \max_{j \in B_{i,\ell}} |a_{ij}| \leq \left(\max_{j' \in B_{i,\ell'}} |a_{ij'}| \right) \cdot \log^6(n)\log(\Delta)\max_{t \in [T]} |Q_t| \leq \left(\max_{j' \in B_{i,\ell'}} |a_{ij'}| \right) \frac{1}{\log^6(n)\log(\Delta)}\frac{|B_{i,\ell'}|}{|B_{i,\ell}| \max_{t \in [T]} |Q_t|}
\end{align*}
where the last inequality uses $|Q_t| = O(\log n)$ for every $t \in [T]$, $|B_{i,\ell'}| \geq \log(nm\Delta)^{90}$ and $|B_{i,\ell}| \leq \log(nm\Delta)^{50}$.
Therefore,
\begin{align*}
\sum_{t=1}^T   \sum_{j \in SWC_{i,t}  \cap B_{i,\ell}}\sum_{j' \in L_{i,t} \cap B_{i,\ell'}} |a_{ij}a_{ij'}|
&\leq \sum_{t=1}^T |SWC_{i,t} \cap B_{i,\ell}||L_{i,t} \cap B_{i,\ell'}| (\max_{j \in B_{i,\ell}} |a_{ij}|)(\max_{j' \in B_{i,\ell'}} |a_{ij'}|) \\
 &\leq \sum_{t=1}^T |SWC_{i,t} \cap B_{i,\ell}||Q_t| (\max_{j \in B_{i,\ell}} |a_{ij}|)(\max_{j' \in B_{i,\ell'}} |a_{ij'}|)  \\
&\leq   |B_{i,\ell}|\left(\max_{t \in [T]} |Q_t| \right) (\max_{j \in B_{i,\ell}} |a_{ij}|)(\max_{j' \in B_{i,\ell'}} |a_{ij'}|) \\
&\leq \frac{1}{\log^6(n)\log(\Delta)}|B_{i,\ell'}| (\max_{j' \in B_{i,\ell'}} |a_{ij'}|) ^2 \\
&= O\left(\frac{1}{\log^6(n)\log(\Delta)}\right) \sum_{j \in \left(SWC_i \cap B_{i,\ell} \right) \cup B_{i,\ell'}} a_{ij}^2.
\end{align*}

\medskip
\paragraph{Component II of the proof--proving \Cref{clm:in-weighted-recursion-creator}}
We next prove \Cref{clm:in-weighted-recursion-creator}, i.e., we show how to compute a vector $\chi \in \{-1,1\}^n$ and a partition $Q_1 \sqcup Q_2 \sqcup \ldots \sqcup Q_T = [n]$ with $T \leq n/2$ satisfying the 
following guarantees

\begin{enumerate}
		\item $\sum_{i=1}^m imp_i (1+\delta)^{\sum_{t =1}^T \max(0,|Q_t \cap(S_i \cup M_i)| - 1)} \leq \left(1 + \frac{1}{10 \log^2(n)} \right)\sum_{i=1}^m imp_i $
	\item $\sum_{t=1}^T \left(\sum_{j \in Q_t \cap(M_i \cup L_i)} a_{ij} \chi_j \right)^2 \leq \left(1 + \frac{100}{\log^3 n} \right) \sum_{j \in M_i \cup L_i} a^2_{ij}$
	\item $|Q_t| = O(\log n)$ for every $t \in [T]$
\end{enumerate}

We first use the partitioning algorithm of \cref{thm:partition_with_isolation} to compute  with work $(n + m + nnz(A))\cdot\poly(\log(\Delta n m))$ and depth $\poly(\log(\Delta n m))$ two partitions $\mathcal{P} = P_1 \sqcup P_2 \sqcup \ldots \sqcup P_{T_P} = [n]$ and $\mathcal{Q} = Q_1 \sqcup Q_2 \sqcup \ldots \sqcup Q_{T} = [n]$, the latter being the one we will output, that satisfy

\begin{enumerate}
    \item $T_P \in [\log^{20}(nm \Delta), 10\log^{20}(nm \Delta)]$, $T \leq n/2$ 
    \item $\mathcal{Q}$ is a refinement of $\mathcal{P}$
    \item $|B_{i,\ell}\cap P_t| \leq \left(1 + \frac{1}{\log^3 (n)} \right)\frac{|B_{i,\ell}|}{T_P}$ for every $i \in [m]$, $\ell \in [\ell_{max}]$ such that $B_{i,\ell}$ is medium or large and $t \in [T_P]$
    \item $|Q_t| = O(\log n)$ for every $t \in [T]$
    \item $\sum_{i=1}^{m} imp_i (1+\delta)^{\sum_{t = 1}^{T} \max(|\left(S_i \cup M_i \right) \cap Q_t| - 1, 0)} \leq \left(1 +  \frac{1}{\log^3 (n)} \right) \sum_{i=1}^m imp_i$
\end{enumerate}
Concretely, we get the guarantees by setting $M = n\Delta$, using $\{S_i \cup M_i \colon i \in [m]\}$ as the family of "small" subsets and $\{B_{i,\ell} \colon i \in [m],\ell \in [\ell_{max}] \text{ and $B_{i,\ell}$ is a medium or big bucket}\}$ as the family of "big" subsets. In order to satisfy the preconditions of \cref{thm:partition_with_isolation}, one has to verify that

\begin{enumerate}
    \item $|S_i \cup M_i| \leq \log^{100}(nM) = \log^{100}(nm\Delta)$ for every $i \in [m]$
    \item $|B_{i,\ell}| \geq \log^{30}(nm\Delta)$ for every medium or large bucket $B_{i,\ell}$
    \item $n \geq (\log M)^{1/\delta_{T\ref{thm:partition_with_isolation}}}$
    \item $m \cdot \ell_{max} \leq M$
\end{enumerate}
The first property follows from $|S_i \cup M_i| \leq \ell_{max}\log^{90}(nm \Delta) = O(\log^3 (n) \log(\Delta))\log^{90}(nm \Delta) \leq \log^{100}(nm\Delta)$.
The second property directly follows from how medium and large buckets are defined. The third property holds from the input guarantee that $n \geq (\log(3m\Delta))^{1/\delta}$ as long as $\delta$ is sufficiently small. The fourth property follows from $O(\log^3(n)\log(\Delta)) \leq \Delta$ together with the input guarantee that $\Delta \geq n$.

By properties 4 and 5 above, the partition $Q_1 \sqcup  Q_2 \ldots \sqcup Q_T$ already satisfies the first and third guarantees of \Cref{clm:in-weighted-recursion-creator}. It thus remains to show how to compute $\chi \in \{-1,1\}^n$ that satisfies the second guarantee of \Cref{clm:in-weighted-recursion-creator}, namely that 

\[\sum_{t=1}^T \left(\sum_{j \in Q_t \cap(M_i \cup L_i)} a_{ij} \chi_j \right)^2 \leq \left(1 + \frac{100}{\log^3 n} \right) \sum_{j \in M_i \cup L_i} a^2_{ij}.\]

For every $i \in [m]$, $t \in [T_P]$ and $\ell \in [\ell_{max}]$ such that $B_{i,\ell}$ is a medimum or large bucket, we have
\begin{align*}
\sum_{j \in P_t \cap B_{i,\ell}} a_{ij}^2  
&\leq  |B_{i,\ell} \cap P_t| \cdot \left(\max_{j \in B_{i,\ell}} |a_{ij}| \right)^2 \\
&\leq \left(1 + \frac{1}{\log^3(n)} \right) \frac{|B_{i,\ell}|}{T_P} \cdot \left(1 + \frac{1}{\log^3(n)} \right)^2 \left( \min_{j \in B_{i,\ell}} |a_{ij}| \right)^2 \\
&\leq \left(1 + \frac{1}{\log^3(n)} \right)^3 \frac{1}{T_P} \sum_{j \in B_{i,\ell}} a_{ij}^2 \\
&\leq  \left(1 + \frac{4}{\log^3(n)} \right) \frac{1}{T_P} \sum_{j \in B_{i,\ell}} a_{ij}^2.
\end{align*}
Hence, summing over all medium and large buckets, we get that for every $t \in [T_P]$, we have

\[\sum_{j \in P_t \cap (M_i \cup L_i)} a_{ij}^2 \leq \left(1 + \frac{4}{\log^3(nm)} \right) \frac{1}{T_P}\sum_{j \in M_i \cup L_i} a_{ij}^2.\]

We now process the parts in  $T_P = \Omega(\log^{10}(mn))$ rounds, in each round $t$ determining the part of the vector $\chi$ for variables in $[P_t]$. We will use MWU throughout these rounds to average out the losses in each constraint $i\in[m]$, in a manner similar to what we did in the proof of \Cref{thm:rootdepth-optimal} and \Cref{lem:RecursionCreator}. At the beginning, we set $imp(i)=1$ for all $i\in [m]$. Let us consider round $t\in [T_P]$, and let $imp(i)$ be the importance of constraint $i$ in the MWU framework in this round, as stated in \Cref{lem:MWU}. In round $t$, we invoke the sequential derandomization of \Cref{thm:sequential_derandomization} in each of the parts $Q_r$ with $Q_r \subseteq P_t$, $r \in [T]$, all in parallel, but with a scaling in the importance of \Cref{thm:sequential_derandomization}: for each $i$, set the importance of the $disc^2(i)$ to be $imp(i) \cdot T_P/(\sum_{j \in M_i \cup L_i} a_{ij}^2)$. Since $|Q_r|= O(\log n)$, the sequential derandomization takes $\poly(\log(nm))$ depth. From \Cref{thm:sequential_derandomization}, for each part $Q_r$, we get a vector $\chi\in \{-1, 1\}^{[Q_r]}$ such that 

\begin{align*}
\sum_{i=1}^{m} \frac{imp(i) \cdot T_P}{(\sum_{j \in M_i \cup L_i} a_{ij}^2)} \cdot \left(\sum_{j \in Q_r \cap(M_i \cup L_i)} a_{ij} \chi_j \right)^2 \leq \left(1 + \frac{1}{\log^3 n} \right) \sum_{i=1}^{m} \frac{imp(i)\cdot T_P}{(\sum_{j \in M_i \cup L_i} a_{ij}^2)} \cdot \left(\sum_{j \in Q_r \cap(M_i \cup L_i)} a^2_{ij} \right),    
\end{align*}
and 
\begin{align*}
\left(\sum_{j \in Q_r \cap(M_i \cup L_i)} a_{ij} \chi_j \right)^2 \leq C \log (mn) \cdot \left(\sum_{j \in Q_r \cap(M_i \cup L_i)} a^2_{ij} \right).    
\end{align*}
We next argue that these two properties are exactly sufficient for us to run the MWU framework of \Cref{lem:MWU}, and as a result, conclude the second property of \Cref{clm:in-weighted-recursion-creator}. For the gap parameters in the MWU framework, let us define 
$$
gap^{t}(i) =\frac{T_P}{\left(1+\frac{7}{\log^3 n}\right) \cdot (\sum_{j \in M_i \cup L_i} a_{ij}^2)} \sum_{r \in [T] \colon Q_r \subseteq P_t}\left(\sum_{j \in Q_r \cap(M_i \cup L_i)} a_{ij} \chi_j \right)^2.$$
We argue that we have $\sum_{i=1}^{m} imp(i) \cdot gap^{t}(i) \leq \sum_{i=1}^{m} imp(i)$, thus satisfying a key condition of the MWU framework. In particular, by summing over different $r\in [T]$ with $Q_r \subseteq P_t$, we have

\begin{align*}
& &&\left(1+\frac{7}{\log^3 n}\right) \sum_{i=1}^{m} imp(i) \cdot gap^{t}(i) \\
&= &&\sum_{r \in [T] \colon Q_r \subseteq P_t}\sum_{i=1}^{m} \frac{imp(i)\cdot T_P}{(\sum_{j \in M_i \cup L_i} a_{ij}^2)} \cdot \left(\sum_{j \in Q_r \cap(M_i \cup L_i)} a_{ij} \chi_j \right)^2 \\
& \leq  &&\left(1 + \frac{1}{\log^3 n} \right) \sum_{r \in [T] \colon Q_r \subseteq P_t}\sum_{i=1}^{m} \frac{imp(i)\cdot T_P}{(\sum_{j \in M_i \cup L_i} a_{ij}^2)} \cdot \left(\sum_{j \in Q_r \cap(M_i \cup L_i)} a^2_{ij} \right)  \\
& \leq  &&\left(1 + \frac{1}{\log^3 n} \right) \sum_{i=1}^{m} \frac{imp(i)\cdot T_P}{(\sum_{j \in M_i \cup L_i} a_{ij}^2)} \cdot \sum_{r \in [T] \colon Q_r \subseteq P_t} \left(\sum_{j \in Q_r \cap(M_i \cup L_i)} a^2_{ij} \right) \\
& =  &&\left(1 + \frac{1}{\log^3 n} \right) \sum_{i=1}^{m} \frac{imp(i)\cdot T_P}{(\sum_{j \in M_i \cup L_i} a_{ij}^2)} \cdot \sum_{j \in P_t \cap (M_i \cup L_i)} a_{ij}^2 \\
& \leq && \left(1 + \frac{1}{\log^3 n} \right) \sum_{i=1}^{m} \frac{imp(i)\cdot T_P}{(\sum_{j \in M_i \cup L_i} a_{ij}^2)} \cdot \left(1 + \frac{4}{\log^3(nm)} \right) \frac{1}{T_P}\sum_{j \in M_i \cup L_i} a_{ij}^2 \\
& \leq && \left(1+\frac{7}{\log^3 n}\right) \sum_{i=1}^{m} imp(i).
\end{align*}
Hence, indeed $\sum_{i=1}^{m} imp(i) \cdot gap^{t}(i) \leq \sum_{i=1}^{m} imp(i)$. We next argue that we also have $gap^{t}(i)\in[0, W]$ for $W=O(\log (mn))$. Notice that

\begin{align*}
gap^{t}(i) &\leq &&\sum_{r \in [T] \colon Q_r \subseteq P_t} \frac{T_P}{(\sum_{j \in M_i \cup L_i} a_{ij}^2)} \cdot \left(\sum_{j \in Q_r \cap(M_i \cup L_i)} a_{ij} \chi_j \right)^2 \\
& \leq  && \sum_{r \in [T] \colon Q_r \subseteq P_t}\frac{C\log (mn) \cdot T_P}{(\sum_{j \in M_i \cup L_i} a_{ij}^2)} \cdot \left(\sum_{j \in Q_r \cap(M_i \cup L_i)} a^2_{ij} \right)  \\
& \leq  && \frac{C\log (mn)\cdot T_P}{(\sum_{j \in M_i \cup L_i} a_{ij}^2)} \cdot \sum_{r \in [T] \colon Q_r \subseteq P_t} \left(\sum_{j \in Q_r \cap(M_i \cup L_i)} a^2_{ij} \right) \\
& =  && \frac{C\log (mn) \cdot T_P}{(\sum_{j \in M_i \cup L_i} a_{ij}^2)} \cdot \sum_{j \in P_t \cap (M_i \cup L_i)} a_{ij}^2 \\
& \leq && \frac{C\log (mn) \cdot T_P}{(\sum_{j \in M_i \cup L_i} a_{ij}^2)} \cdot \left(1 + \frac{4}{\log^3(nm)} \right) \frac{1}{T_P}\sum_{j \in M_i \cup L_i} a_{ij}^2 \\
& \leq &&  2C\log (mn).
\end{align*}

Hence, $imp(i)$ and $gap^{t}(i)$ satisfy both properties of the MWU framework phrased in \Cref{lem:MWU}. Hence, by MWU, after processing all rounds $t\in[T_P]$, we get that for each $i\in [m]$, we have $\sum_{t=1}^{T_P} gap^{t}(i)/T_P \leq 1+1/\log^3 n.$ This implies that for the overall vector $\chi \in \{-1,1\}^n$, we have

\begin{align*}
    \sum_{t=1}^{T_P} \frac{1}{T_P} \cdot gap^{t}(i) = \sum_{t=1}^{T_P}\frac{1}{\left(1+\frac{7}{\log^3 n}\right) \cdot (\sum_{j \in M_i \cup L_i} a_{ij}^2)} \sum_{r \in [T] \colon Q_r \subseteq P_t}\left(\sum_{j \in Q_r \cap(M_i \cup L_i)} a_{ij} \chi_j \right)^2 \leq 1+\frac{1}{\log^3 n}.
\end{align*}
Hence, we have
\begin{align*}
& &&\sum_{t=1}^T \left(\sum_{j \in Q_t \cap(M_i \cup L_i)} a_{ij} \chi_j \right)^2 \\
&=  &&\sum_{t=1}^{T_P} \sum_{r \in [T] \colon Q_r \subseteq P_t}\left(\sum_{j \in Q_r \cap(M_i \cup L_i)}  a_{ij} \chi_j \right)^2 \\
&\leq &&\left(1+\frac{7}{\log^3 n}\right)\left(1+\frac{1}{\log^3 n}\right) \cdot (\sum_{j \in M_i \cup L_i} a_{ij}^2) \\
&\leq &&\left(1+\frac{100}{\log^3 n}\right) \cdot (\sum_{j \in M_i \cup L_i} a_{ij}^2),
\end{align*}
which provides the second property of \Cref{clm:in-weighted-recursion-creator} and therefore finishes the proof.
\end{proof}

% \newpage

\subsection{The recursive algorithm for weighted set balancing}
\label{subsec:weightedRecursion}

We now present our recursive algorithm for the weighted set balancing problem, as \Cref{lem:recursion_weighted}. When we are not in the base case, this algorithm will itself invoke \Cref{lem:mwu_weighted}, which prepares an instance to be solved recursively. Once we are done with this lemma, we restate and prove\Cref{thm:main-weighted}.

\begin{lemma}
    \label{lem:recursion_weighted}
    There exists an absolute constant $\delta \in (0,0.5]$ such that the following holds:
    Let $n,m,\Delta \in \mathbb{N}$ with $m \geq 2$, $\Delta \geq n$ and $n \geq 10$. Let $A \in \mathbb{Z}^{m \times n}$ with $\sum_{j=1}^n |a_{ij}| \leq \Delta$ for every $i \in [m]$ and $budget \in (\mathbb{N}_0)^m$ satisfying $\sum_{i=1}^m (1-\delta)^{budget_i} \leq 0.5\left(1 + \frac{1}{\log(n)}\right)$. There exists a deterministic parallel algorithm with work $(n + m + nnz(A)) \cdot \poly(\log(nm\Delta))$ and depth $\poly(\log(nm\Delta))$ that can compute a vector $\chi \in \{-1,1\}^n$ and for every $i \in [m]$ a subset $Z_i \subseteq [n]$ along with a partition $Z_{i,1} \sqcup Z_{i,2} \sqcup \ldots \sqcup Z_{i,budget_i}$ satisfying 
    
    \begin{itemize}
        \item $\sum_{b = 1}^{budget_i} \left(\sum_{j \in Z_{i,b}} a_{ij}\chi_j \right)^2 \leq \left(2 - \frac{1}{\log(n)}\right) \sum_{j=1}^n a^2_{ij}$  (Assuming $\delta \leq \delta_{L\ref{lem:mwu_weighted}}/2$)\\
        \item $(\sum_{j \in [n] \setminus Z_i} a_{ij} \chi_j)^2 \leq \left(1 - \frac{1}{\log(n)}\right) O(\log m \sum_{j=1}^n a^2_{ij})$.
    \end{itemize}
\end{lemma}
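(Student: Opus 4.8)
The plan is to prove \Cref{lem:recursion_weighted} by strong induction on $n$, which becomes a recursive algorithm that, in the inductive step, invokes \Cref{lem:mwu_weighted} once and then calls itself on an instance with at most $n/2$ variables. Fix the absolute constant $\delta \in (0,0.5]$ to be sufficiently small, in particular $\delta \le \delta_{L\ref{lem:mwu_weighted}}/2$ (this is exactly the proviso in the first bullet), and set $C_0 := 2C$ where $C$ is the constant of \Cref{thm:sequential_derandomization_raw}; $C_0$ will be the universal constant hidden in the $O(\cdot)$ of the second bullet. The base case is $n \le (\log 3m\Delta)^{1/\delta}$, a $\poly(\log(m\Delta))$ threshold: run \Cref{thm:sequential_derandomization_raw} to get $\chi$ with $(\sum_j a_{ij}\chi_j)^2 \le C\log m\sum_j a_{ij}^2$ for all $i$, in depth $n\poly(\log m)=\poly(\log(nm\Delta))$ and work $\tilde{O}(nnz(A)+n+m)$, and set $Z_i := \emptyset$, so the partition of $Z_i$ into $budget_i$ parts is the all-empty one. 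Then the first bullet reads $0 \le (2-1/\log n)\sum_j a_{ij}^2$, and the second reads $(\sum_j a_{ij}\chi_j)^2 \le C\log m\sum_j a_{ij}^2 \le (1-1/\log n)C_0\log m\sum_j a_{ij}^2$, the last step using $1-1/\log n \ge 1/2$, which holds since $n \ge 10$.

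For the inductive step, $n > (\log 3m\Delta)^{1/\delta} \ge (\log 3m\Delta)^{1/\delta_{L\ref{lem:mwu_weighted}}}$, so \Cref{lem:mwu_weighted} applies (its other preconditions $m\ge 2$, $\Delta\ge n$, and $A\in\{-\Delta,\dots,\Delta\}^{m\times n}$ hold because $|a_{ij}| \le \sum_{j'}|a_{ij'}| \le \Delta$). I would invoke it with importance vector $imp_i := (1-\delta)^{budget_i} > 0$; by hypothesis $\sum_i imp_i \le 0.5(1+1/\log n)$. We obtain $\chi\in\{-1,1\}^n$, a partition $Q_1\sqcup\dots\sqcup Q_T=[n]$ with $T\le n/2$, and sets $COL_i\subseteq[n]$ with
\[
\sum_i (1-\delta)^{budget_i}(1+\delta_{L\ref{lem:mwu_weighted}})^{|COL_i|} \le \Big(1+\tfrac{1}{10\log^2 n}\Big)\sum_i (1-\delta)^{budget_i}, \qquad \sum_t\Big(\sum_{j\in Q_t\setminus COL_i} a_{ij}\chi_j\Big)^{2} \le \Big(1+\tfrac{1}{10\log^2 n}\Big)\!\!\sum_{j\in[n]\setminus COL_i}\!\! a_{ij}^2 .
\]
Since $\delta \le \delta_{L\ref{lem:mwu_weighted}}/2$ we have $(1-\delta)(1+\delta_{L\ref{lem:mwu_weighted}}) \ge 1$, hence $(1-\delta)^{-1} \le 1+\delta_{L\ref{lem:mwu_weighted}}$ and $(1-\delta)^{budget_i}(1+\delta_{L\ref{lem:mwu_weighted}})^{|COL_i|} \ge (1+\delta_{L\ref{lem:mwu_weighted}})^{|COL_i|-budget_i}$. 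As the left side of the first displayed inequality is at most $\big(1+\tfrac{1}{10\log^2 n}\big)\cdot 0.5(1+1/\log n) < 1$ (valid since $n\ge 10$) and all its terms are positive, no single term can exceed $1$, which forces $|COL_i| \le budget_i$ for every $i$. Set $budget'_i := budget_i - |COL_i| \in \mathbb{N}_0$; using $(1-\delta)^{budget'_i} \le (1-\delta)^{budget_i}(1+\delta_{L\ref{lem:mwu_weighted}})^{|COL_i|}$ and the first displayed inequality, $\sum_i (1-\delta)^{budget'_i} \le \big(1+\tfrac{1}{10\log^2 n}\big)\cdot 0.5(1+1/\log n) \le 0.5(1+1/\log T)$ (last step because $T\le n/2$).

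Next I would form the recursive instance $A'\in\mathbb{Z}^{m\times T}$ with $a'_{i,t} := \sum_{j\in Q_t\setminus COL_i} a_{ij}\chi_j$ (so $\sum_t|a'_{i,t}| \le \sum_j|a_{ij}| \le \Delta$, $T\le n/2<\Delta$, and $T\ge n/O(\log n)\ge 10$ since we are past the base case) and budget $budget'$, and apply \Cref{lem:recursion_weighted} recursively to get $\chi'\in\{-1,1\}^T$ and, for each $i$, a set $Z'_i\subseteq[T]$ with a partition $Z'_{i,1}\sqcup\dots\sqcup Z'_{i,budget'_i}$ obeying the two bullets for $A'$. Then output $\chi^{\mathrm{out}}$ with $\chi^{\mathrm{out}}_j := \chi_j\chi'_t$ for $j\in Q_t$, set $Z_{i,b} := \bigcup_{t\in Z'_{i,b}}(Q_t\setminus COL_i)$ for $b\le budget'_i$, and, enumerating $COL_i=\{j^{(i)}_1,\dots,j^{(i)}_{|COL_i|}\}$, set $Z_{i,budget'_i+k} := \{j^{(i)}_k\}$; this uses exactly $budget'_i+|COL_i|=budget_i$ parts and partitions $Z_i := COL_i \sqcup \bigcup_{t\in Z'_i}(Q_t\setminus COL_i)$. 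The identities $\sum_{j\in Z_{i,b}}a_{ij}\chi^{\mathrm{out}}_j = \sum_{t\in Z'_{i,b}}a'_{i,t}\chi'_t$ (for $b\le budget'_i$), $\big(\sum_{j\in Z_{i,budget'_i+k}}a_{ij}\chi^{\mathrm{out}}_j\big)^2 = a_{i,j^{(i)}_k}^2$, and $\sum_{j\in[n]\setminus Z_i}a_{ij}\chi^{\mathrm{out}}_j = \sum_{t\in[T]\setminus Z'_i}a'_{i,t}\chi'_t$ reduce the output bullets to the inductive bullets for $A'$ together with $\sum_t (a'_{i,t})^2 \le \big(1+\tfrac1{10\log^2 n}\big)\sum_{j\notin COL_i}a_{ij}^2$ and the elementary facts $(2-1/\log T)\big(1+\tfrac1{10\log^2 n}\big)\le 2-1/\log n$ and $(1-1/\log T)\big(1+\tfrac1{10\log^2 n}\big)\le 1-1/\log n$ (both valid whenever $T\le n/2$): the first bullet becomes $\le (2-1/\log T)(1+\tfrac1{10\log^2 n})\sum_{j\notin COL_i}a_{ij}^2 + \sum_{j\in COL_i}a_{ij}^2 \le (2-1/\log n)\sum_j a_{ij}^2$, and the second $\le (1-1/\log T)(1+\tfrac1{10\log^2 n})C_0\log m\sum_{j\notin COL_i}a_{ij}^2 \le (1-1/\log n)C_0\log m\sum_j a_{ij}^2$. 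Work and depth obey $D(n)\le\poly(\log(nm\Delta))+D(n/2)$ and $W(n)\le(n+m+nnz(A))\poly(\log(nm\Delta))+W(n/2)$ (using $nnz(A')\le nnz(A)$, $\Delta'=\Delta$, $m$ unchanged), which solve to the claimed bounds.

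The main obstacle is the bookkeeping that legitimizes the $Z_{i,b}$-partition — above all the inequality $|COL_i|\le budget_i$, which is the single place where the hypothesis $\sum_i(1-\delta)^{budget_i}\le 0.5(1+1/\log n)$ and the relation $\delta\le\delta_{L\ref{lem:mwu_weighted}}/2$ are both essential, and which is what lets us devote one fresh singleton part of $Z_i$ to each isolated heavy variable at every recursion level — together with checking that the per-level multiplicative slack $1+\tfrac1{10\log^2 n}$ telescopes against the shrinking factors $2-1/\log n$ and $1-1/\log n$ over the $O(\log n)$ levels without ever breaking the target; everything else is routine index-chasing and the standard two-term work/depth recursions.
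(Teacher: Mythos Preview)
Your proof is correct and follows essentially the same recursive strategy as the paper: a base case handled by sequential derandomization with $Z_i=\emptyset$, and an inductive step that invokes \Cref{lem:mwu_weighted} with $imp_i=(1-\delta)^{budget_i}$, spends one budget unit per element of $COL_i$ as a singleton part, and recurses on the contracted instance $a'_{i,t}=\sum_{j\in Q_t\setminus COL_i}a_{ij}\chi_j$, with the same telescoping of the $1+\tfrac{1}{10\log^2 n}$ slacks against $2-\tfrac{1}{\log n}$ and $1-\tfrac{1}{\log n}$. The only cosmetic deviation is that you work with $(1+\delta_{L\ref{lem:mwu_weighted}})^{|COL_i|}$ directly whereas the paper replaces it by the weaker $(1+2\delta)^{|COL_i|}$, and your lower bound $T\ge n/O(\log n)$ is not part of the statement of \Cref{lem:mwu_weighted} (the paper instead pads the partition to force $T=\lfloor n/2\rfloor$), but this is a trivial fix.
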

\begin{proof}
We give a recursive algorithm.
First, consider the base case $n \in [10,(\log 3m \Delta)^{1/\delta_{\ref{lem:mwu_weighted}}}]$. Then, we can define $Z_i = \emptyset$ for every $i\in [m]$ and use the sequential derandomization algorithm of \Cref{thm:sequential_derandomization} to compute with work $(n + m + nnz(A)) \cdot \poly(\log(nm\Delta))$ and depth $n\poly(\log(nm\Delta)) = \poly(\log(nm\Delta))$ a vector $\chi \in \{-1,1\}^n$ satisfying 

\[(\sum_{j \in [n] \setminus Z_i} a_{ij} \chi_j)^2 = (\sum_{j \in [n]} a_{ij} \chi_j)^2 = 0.5O(\log m \sum_{j=1}^n a^2_{ij}) \leq \left(1 - \frac{1}{\log(n)} \right)O(\log m \sum_{j=1}^n a^2_{ij}).\]

Now, consider the case that $n \geq (\log 3m \Delta)^{1/\delta_{L\ref{lem:mwu_weighted}}}$.

We use the algorithm of \cref{lem:mwu_weighted} to compute with work $(n + m + nnz(A)) \cdot \poly(\log(nm\Delta))$ and depth $\poly(\log(nm\Delta))$ a vector $\bar{\chi} \in \{-1,1\}^n$, a partition $Q_1 \sqcup Q_2 \sqcup \ldots \sqcup Q_T = [n]$ with $T \leq n/2$ (In fact, without loss of generality we can assume $T = \lfloor n/2\rfloor \geq 10$) and for every $i \in [m]$ a subset $COL_i \subseteq [n]$ such that for $imp_i := (1-\delta)^{budget_i}$, it holds that:

\begin{itemize}
    \item $\sum_{i=1}^m imp_i(1+2\delta)^{|COL_i|} \leq \left( 1 +\frac{1}{10\log^2(n)}\right) \sum_{i=1}^m imp_i $
    \item $\sum_{t=1}^T \left(\sum_{j \in Q_t \setminus COL_i} a_{ij} \bar{\chi}_j \right)^2 \leq \left(1 + \frac{1}{10\log^2 (n)}\right) \sum_{j \in [n] \setminus COL_i} a^2_{ij}$
\end{itemize}

Now, let $A^{rec} \in \mathbb{Z}^{m \times T}$ with $a^{rec}_{it} = \sum_{j \in Q_t \setminus COL_i} a_{ij} \bar{\chi}_j$ for every $i \in [m]$ and $t \in [T]$. Note that
$\sum_{t=1}^T|a^{rec}_{it}| \leq \sum_{j=1}^n |a_{ij}| \leq \Delta$. Also, let $budget^{rec} \in (\mathbb{N}_0)^m$ with $budget^{rec}_i = budget_i - |COL_i|$ for every $i \in [m]$.

Note that for $\delta \leq 1/3$ being sufficiently small, it holds that

\begin{align*}
    \sum_{i=1}^m (1-\delta)^{budget^{rec}_i} &= \sum_{i=1}^m (1-\delta)^{budget_i} \cdot (1- \delta)^{-|COL_i|} \\
    &= \sum_{i=1}^m imp_i \left(\frac{1}{1-\delta}\right)^{|COL_i|} \\
    &\leq \sum_{i=1}^m imp_i (1+2\delta)^{|COL_i|} \\
    &\leq \left( 1 +\frac{1}{10\log^2(n)}\right) \sum_{i=1}^m imp_i \\
    &= \left( 1 +\frac{1}{10\log^2(n)}\right) \sum_{i=1}^m (1-\delta)^{budget_i} \\
    &\leq \left( 1 +\frac{1}{10\log^2(n)}\right) 0.5\left(1 + \frac{1}{\log(n)}\right)\\
    &\leq 0.5\left(1 + \frac{1}{\log(n/2)}\right)  \\
    &\leq 0.5\left(1 + \frac{1}{\log(T)}\right).
\end{align*}

In particular, $(1-\delta)^{budget^{rec}_i} \leq 1$ and therefore $budget^{rec}_i \geq 0$.

By invoking induction, we can conclude that we can efficiently compute in parallel a vector $\chi^{rec} \in \{-1,1\}^T$ and for every $i \in [m]$ a subset $Z^{rec}_i \subseteq [T]$ along with a partition $Z^{rec}_{i,1} \sqcup Z^{rec}_{i,2} \sqcup \ldots \sqcup Z^{rec}_{i,budget^{rec}_i} = Z^{rec}_i$ satisfying

\begin{itemize}
    \item $\sum_{b = 1}^{budget^{rec}_i} \left(\sum_{t \in Z^{rec}_{i,b}} a^{rec}_{it}\chi^{rec}_t \right)^2 \leq \left(2 - \frac{1}{ \log(T)}\right) \sum_{t=1}^T (a^{rec}_{it})^2$ \\
    \item $(\sum_{t \in [T] \setminus Z^{rec}_i} a^{rec}_{it} \chi^{rec}_t)^2 \leq \left(1 - \frac{1}{\log(T)}\right) O(\log m \sum_{t=1}^T (a^{rec}_{it})^2)$.
\end{itemize}

We now define $\chi_j = \bar{\chi}_j \cdot \chi^{rec}_t$ for every $t \in [T]$ and $j \in Q_t$. We also define $Z_{i,b} = \left(\bigcup_{t \in Z^{rec}_{i,b}} Q_t \right) \setminus COL_i$ for $b \in [budget^{rec}_i]$ and $Z_{i,(budget^{rec}_i + b)} = \{j_{i,b}\}$ for every $b \in [|COL_i|]$ where $COL_i = \{j_{i,1},j_{i,2},\ldots,j_{i,|COL_i|}\}$ and then define $Z_i = \bigcup_{b \in [budget_i]} Z_{i,b}$.

On one hand, for variables in $Z_i$, we have
\begin{align*}
    \sum_{b = 1}^{budget_i} \left(\sum_{j \in Z_{i,b}} a_{ij}\chi_j \right)^2 &= \sum_{b=1}^{budget^{rec}_i} \left(\sum_{j \in Z_{i,b}} a_{ij}\chi_j \right)^2 + \sum_{j \in COL_i} (a_{ij} \chi_j)^2 \\
    &= \sum_{b=1}^{budget^{rec}_i} \left( \sum_{t \in Z^{rec}_{i,b}} \chi^{rec}_t\sum_{j \in Q_t \setminus COL_i} a_{ij}\bar{\chi}_j\right)^2 + \sum_{j \in COL_i} a_{ij}^2 \\
     &= \sum_{b=1}^{budget^{rec}_i} \left( \sum_{t \in Z^{rec}_{i,b}} \chi^{rec}_t a^{rec}_{it} \right)^2 + \sum_{j \in COL_i} a_{ij}^2 \\
     &\leq \left(2 - \frac{1}{ \log(T)}\right) \sum_{t=1}^T (a^{rec}_{it})^2 + \sum_{j \in COL_i} a_{ij}^2 \\
     &\leq \left(2 - \frac{1}{ \log(T)}\right) \left(1 + \frac{1}{10\log^2 (n)} \right) \sum_{j \in [n] \setminus COL_i} a^2_{ij}  + \sum_{j \in COL_i} a_{ij}^2 \\
     &\leq \left(2 - \frac{1}{ \log(n)} \right) \sum_{j=1}^n a^2_{ij}.
\end{align*}
This satisfies the first property of the output in the lemma. On the other hand, for variables in $ [n] \setminus Z_i$, we have
\begin{align*}
 \left(\sum_{j \in [n] \setminus Z_i} a_{ij} \chi_j\right)^2 &= \left(\sum_{t \in [T]} \chi^{rec}_t\sum_{j \in Q_T \setminus Z_i} a_{ij} \bar{\chi}_j \right)^2 \\
 &= \left(\sum_{t \in [T] \setminus Z^{rec}_i} \chi^{rec}_t\sum_{j \in Q_T \setminus COL_i} a_{ij} \bar{\chi}_j \right)^2 \\
  &= \left(\sum_{t \in [T] \setminus Z^{rec}_i} \chi^{rec}_t a^{rec}_{it} \right)^2 \\
  &\leq \left(1 - \frac{1}{ \log(T)}\right) O(\log m \sum_{t=1}^T (a^{rec}_{it})^2) \\
  &\leq \left(1 - \frac{1}{ \log(T)}\right) \left(1 + \frac{1}{10\log^2 (n)}\right)O\left((\log m  \sum_{j \in [n] \setminus COL_i} a^2_{ij}\right) \\
  &\leq \left(1 - \frac{1}{ \log(n)}\right) O\left(\log m \sum_{j \in [n] \setminus COL_i} a^2_{ij} \right).
\end{align*}
This satisfies the second property of the output in the lemma. 

Finally, note that the algorithm has a recursion depth of $O(\log n)$. The base case runs in work $(n + m + nnz(A)) \cdot \poly(\log(nm\Delta))$ and depth $\poly(\log(nm\Delta))$ and the same bounds hold for preparing the recursion. Therefore, the overall work of the algorithm is $(n + m + nnz(A)) \cdot \poly(\log(nm\Delta))$ and the overall depth is $(n + m + nnz(A)) \cdot \poly(\log(nm\Delta))$.
\end{proof}

\bigskip
\paragraph{Wrapping up the proof of \Cref{thm:main-weighted}}. Finally, we now restate \Cref{thm:main-weighted} and present its proof.

\thmMainWeighted*
\begin{proof}[Proof of \cref{thm:main-weighted}]
First, note that we can assume without loss of generality that the following is satisfied:
\begin{enumerate}
    \item $n \geq 10$
    \item $\max_{j\in [n]} |a_{ij}| > 0$ for every $i \in [m]$
    \item $\max_{j \in [n]} |a_{ij}| = n^{10}$ for every $i \in [m]$
    \item $a_{ij} \in \mathbb{Z}$
\end{enumerate}
If $n < 10$, we can simply add $10$ new dummy variables. The second property holds because we can simply drop rows in $A$ that consist of just zeros. Next, given that the second property holds, we can multiply each $a_{ij}$ by a factor of $\frac{n^{10}}{\max_{j \in [n]} |a_{ij}|}$ for every $i \in [m]$, which thus satisfies the third property. Now, given that the third property holds, we can round each $a_{ij}$ to the closest integer. This introduces an additive error of at most $\frac{\max_{j \in [n]} |a_{ij}|}{n^9} = O(\sum_{j=1}^n a^2_{ij})$.
Now, given that all properties above are satisfied, we can use the algorithm of \cref{lem:recursion_weighted} (with $\Delta = n^{11}$ and $budget_i = c\log(m)$ for a suitably large constant $c$) to compute a vector $\chi \in \{-1,1\}^n$ and for every $i \in [m]$ a subset $Z_i \subseteq [n]$ along with a partition $Z_{i,1} \sqcup Z_{i,2} \sqcup \ldots \sqcup Z_{i,budget_i}$ satisfying

\begin{enumerate}
    \item $\sum_{b=1}^{budget_i}\left(\sum_{j \in Z_{i,b}}a_{ij} \chi_j \right)^2 \leq 2\sum_{j=1}^n a^2_{ij}$
    \item $\left(\sum_{j \in [n] \setminus Z_i} a_{ij} \chi_j \right)^2 \leq O(\log (m) \sum_{j=1}^n a^2_{ij})$
\end{enumerate}

For every $b \in [budget_{i}]$, let $u_b := 1$ and $v_b := \sum_{j \in Z_{i,b}}a_{ij} \chi_j$. By Cauchy-Schwarz, it holds that

\begin{align*}
|\sum_{j \in Z_i}^n a_{ij} \chi_j| &= |\sum_{b=1}^{budget_i}u_bv_b| \\
&\leq \sqrt{\sum_{b=1}^{budget_i} u_b^2}\sqrt{\sum_{b=1}^{budget_i} v_b^2} \\
&= O(\sqrt{\log m}) \sqrt{\sum_{b=1}^{budget_i} \left( \sum_{j \in Z_{i,b}}a_{ij}\right)^2} \\
&= O\left(\sqrt{\sum_{j=1}^n a_{ij}^2 \log m} \right).
\end{align*}

Therefore,
\[|\sum_{j=1}^n a_{ij} \chi_j| \leq |\sum_{j \in Z_i} a_{ij} \chi_j| + |\sum_{j\in [n] \setminus Z_i} a_{ij} \chi_j| = O\left(\sqrt{\sum_{j=1}^n a_{ij}^2 \log m} \right).\]
\end{proof}

\newpage

% \newpage
\bibliographystyle{alpha}
\bibliography{ref}

\appendix

\section{Generalizations of GGR's crude partitioning}
\label{app:prelimAppendix-partition}
Here, we restate and prove the two simple generalizations of the result of \cite{GGR2023Chernoff}, for multi-way partitioning, which we derive in this paper and use in our algorithms. The proofs use the result of \cite{GGR2023Chernoff}, as stated in \Cref{thm:FOCS23}, in a black-box manner.

\UnweightedPartition*
\begin{proof}[Proof of \Cref{lemma:partition_unweighted}]
Let $c$ be a sufficiently large constant.
We prove the following statement by induction on $L$. The partition theorem then follows as a simple corollary.
Let $n,m,k \in \mathbb{N}$ and let $\{S_1,S_2,\ldots,S_m\}$ be a family of subsets of $[n]$. Let $L \in \mathbb{N}$ be a power of $2$ with $L \leq 2^{k \cdot (\eps/2)}$. We define
\[\eps_L = (\eps/2)(1- \sqrt{1/L}) + \frac{\log(L)}{k}.\]
 We prove by induction on $L$ that there is a deterministic parallel algorithm with depth $(\poly(\log(nm)k\log(2L))$ and work $\tilde{O}((n + m + \sum_{i=1}^m|S_i|)\poly(k\log(2L)))$ that computes a partition $P_1 \sqcup P_2 \sqcup \ldots \sqcup P_L = [n]$ satisfying that $|S_i \cap P_\ell| \leq \max((1+\eps_L)|S_i|/L, c \cdot \log(m)/\eps^2)$.

\medskip
For every $L$, we define $\delta_L = 0.01 \left(\eps \sqrt{1/L} + \frac{1}{k} \right)$. We have the following properties:

\paragraph{Properties about $\eps_L$}

\begin{enumerate}
    \item $\eps_L \in [0,\eps]$ for every $L \in [1,2^{k (\eps/2)}]$
    \item $(1+\eps_L) \geq (1+\delta_L)(1+\eps_{L/2})$ for every $L \in \{2,4,\ldots,2^{k(\eps/2)}\}$.
\end{enumerate}
The first property holds by definition of $\eps_L$, and the second property is satisfied because, for every $L \in \{2,4,\ldots,2^{k(\eps/2)}\}$, we have
\begin{align*}
(1 + \delta_L)(1+\eps_{L/2}) &\leq (1+\eps_{L/2}) + 2\delta_L \\ &\leq \left(1+ (\eps/2)(1- \sqrt{2/L}) + \frac{\log(L/2)}{k}\right) +  0.02 \left(\eps \sqrt{1/L} + \frac{1}{k} \right) \\
&\leq 1 + (\eps/2)(1- \sqrt{1/L}) + \frac{\log(L)}{k} \\
&= (1 + \eps_L).
\end{align*}

In the rest of this proof, we argue the existence of the claimed deterministic parallel algorithm that computes a partition into $L$ parts, with the aforementioned properties, by induction on $L$.

\paragraph{Base Case}
The base case $L = 1$ follows trivially as $\eps_1 \geq 0$.

\paragraph{Inductive Case}
 Next, we assume that $L$ is a power of two with $2 \leq L \leq 2^{k(\eps/2)}$. As a first step, we compute a partition $P_1 \sqcup P_2 = [n]$ that satisfies $\max(|S_i \cap P_1|,|S_i \cap P_2|) = |S_i|/2 + O(\sqrt{|S_i| \log m}) + 0.001\frac{|S_i|}{k}$ for every $i \in [m]$ using the algorithm of \Cref{thm:FOCS23}, which takes $\tilde{O}(n + m + \sum_{i=1}^m |S_i|)\poly(k)$ work and depth $\poly(\log(nm)k)$. Next, we inductively refine both $P_1 = P_{1,1} \sqcup P_{1,2} \sqcup \ldots \sqcup P_{1,L/2}$ and $P_2 = P_{2,1} \sqcup P_{2,2} \sqcup \ldots \sqcup P_{2,L/2}$, each 
 using the algorithm guaranteed by the induction hypothesis. The final output of the algorithm is then $P_{1,1} \sqcup P_{1,2} \sqcup \ldots \sqcup P_{1,L/2} \sqcup  P_{2,1} \sqcup P_{2,2} \sqcup \ldots \sqcup P_{2,L/2} = [n]$.

We next show that $|S_i \cap P_{1,1}| \leq \max((1+\eps_L)|S_i|/L,c \cdot \log(m)/\eps^2)$. A completely analogous reasoning holds for all the other parts of the partition.

The induction hypothesis implies that $|S_i \cap P_{1,1}| \leq \max \left((1+\eps_{L/2})\frac{|S_i \cap P_1|}{L/2}, c \cdot \log(m)/\eps^2 \right)$.
If $(1 + \eps_{L/2})\frac{|S_i \cap P_1|}{L/2} \leq c \cdot \log(m)/\eps^2$, then there is nothing left to show.
Thus, it remains to consider the case that $(1+\eps_{L/2})\frac{|S_i \cap P_1|}{L/2} > c \cdot \log(m)/\eps^2$. In particular, this gives that
\[|S_i| \geq \frac{L \cdot c \cdot \log(m)/\eps^2}{2 (1+\eps_{L/2})} \geq (c/4) \cdot (\log(m)/\eps^2) \cdot L.\]
Here we use that $\eps_{L/2} \leq 1$.
Thus, for $c$ being a large enough constant, we get
\begin{align*}
|S_i \cap P_{1,1}| &\leq (1 + \eps_{L/2})\frac{|S_i \cap P_1|}{L/2}\\
&\leq (1+\eps_{L/2})\frac{|S_i|/2 + O(\sqrt{|S_i|\log m}) + 0.001\frac{|S_i|}{k}}{L/2} \\ 
&\leq (1+\eps_{L/2}) \left( 1 + O\left(\sqrt{\frac{\log(m)}{|S_i|}}\right) + \frac{0.002}{k} \right) \frac{|S_i|}{L} \\
&\leq (1+\eps_{L/2}) \left( 1 + 0.01 \eps\sqrt{\frac{1}{L}} + \frac{0.002}{k} \right) \frac{|S_i|}{L} \\
&\leq  (1+\eps_{L/2})(1+\delta_L)\frac{|S_i|}{L} \\
&\leq  (1+\eps_L) \frac{|S_i|}{L}.
\end{align*}
Finally, the work of the algorithm is 
\[\tilde{O}(n + m + \sum_{i=1}^m |S_i|)\poly(k) + \tilde{O}(n + m + \sum_{i=1}^m|S_i|)\poly(k\log(2(L/2))) \leq \tilde{O}(n + m + \sum_{i=1}^m|S_i|)\poly(k\log(2L))\]
and the depth is  
\[\poly(\log(nm)k) + \poly(\log(nm)k\log(2(L/2))) \leq \poly(\log(nm)k\log(2L)).\]
\end{proof}

\WeightedPartition*
\begin{proof}[Proof of \Cref{lem:weighted_partition}]
First, mote that we can assume without loss of generality that $\eps \geq \frac{1}{n}$, which we will do from now on.
For each $i \in [m]$ and $k \in \mathbb{N}_0$, we define
\[S_{i,k} = \{j \in [n] \colon (1-\eps/3)^{k+1}(a_{max})^2 < a^2_{ij} \leq (1-\eps/3)^k (a_{max})^2\}.\]
Consider the following family of subsets of $[n]$:
\begin{itemize}
    \item $\{1,2,\ldots,n\}$
    \item $\{j \in [n] \colon a_{ij} \neq 0\}$ for every $i \in [m]$
    \item $S_{i,k}$ for every $i \in [m]$ and $k \in [\lceil 10 \log (n+1) /\eps \rceil]$
\end{itemize}
Note that the family contains $\hat{m} = 1 + m + \lceil 10 \log(n+1)/\eps \rceil m \leq (nm)^{100}$ sets. Using \cref{lemma:partition_unweighted}, there exists a parallel algorithm with $\tilde{O}(n + \hat{m} + \sum_{i=1}^m \sum_{k = 1}^{\lceil 10 \log(n)/\eps\rceil} |S_{i,k}|) \cdot \poly(1/\eps) = \tilde{O}(n + m + nnz(A)) \cdot \poly(1/\eps)$ work and $\poly(\log(n \hat m)/\eps) = \poly(\log(nm)/\eps)$ depth that computes a partition $P_1 \sqcup P_2 \sqcup \ldots \sqcup P_L = [n]$ satisfying the following for every $\ell \in [L]$:

\begin{itemize}
    \item $|P_\ell| \leq (1+\eps/3)n/L + O(\log (\hat{m})/\eps^2) = (1+\eps/3)n/L + O(\log (nm)/\eps^2)$
    \item $|P_\ell \cap \{j \in [n] \colon a_{ij} \neq 0\}| \leq \frac{(1+\eps)}{L}|\{j \in [n] \colon a_{ij} \neq 0\}| + O(\log (nm)/\eps^2) $ for every $i \in [m]$.
    \item $|P_\ell \cap S_{i,k}|\leq \frac{1+\eps}{L}|S_{i,k}| + O(\log (nm))/\eps^2) $ for every $k \in [\lceil 10 \log(n+1)/\eps \rceil]$ and $i \in [m]$.
\end{itemize}
Note that the first two guarantees directly imply that the partition satisfies the first two output guarantees of \cref{lem:weighted_partition}. It thus remains to verify that the third output guarantee is satisfied as well. For every $\ell \in [L], k \in [\lceil 10 \log(n+1)/\eps\rceil]$ and $i \in [m]$, we have
\begin{align*}
\sum_{j \in P_\ell \cap S_{i,k}} a^2_{ij} &\leq |P_{\ell} \cap S_{i,k}|(1-\eps/3)^k(a_{max})^2 \\
&\leq \left((1+\eps/3)|S_{i,k}|/L + O(\log (nm))/\eps^2) \right)(1-\eps/3)^k(a_{max})^2 \\
&\leq  \frac{1+\eps/3}{1-\eps/3}(1/L) |S_{i,k}|(1-\eps)^{k+1}(a_{max})^2 + O(\log(nm) / \eps^2) (a_{max})^2 \\
&\leq (1+\eps)(1/L)\sum_{j \in S_{i,k}} a^2_{ij} + O(\log (nm))/\eps^2) (a_{max})^2.
\end{align*}
Thus, for every $\ell \in [L]$ and $i \in [m]$, we get
\begin{align*}
\sum_{j \in P_\ell} a^2_{ij} &\leq \left(\sum_{k = 1}^{\lceil 10 \log(n+1)/\eps\rceil} \sum_{j \in P_\ell \cap S_{i,k}} a^2_{ij} \right) + n \cdot \left(1 - \eps/3 \right)^{\lceil 10 \log(n+1)/\eps\rceil}(a_{max})^2 \\
&\leq \left(\sum_{k = 1}^{\lceil 10 \log(n+1)/\eps\rceil} \left((1+\eps)(1/L)\sum_{j \in S_{i,k}} a^2_{ij} + O(\log (nm))/\eps^2) (a_{max})^2 \right)  \right) + (a_{max})^2 \\
&\leq \frac{1+\eps}{L}\sum_{j=1}^n a^2_{ij}+ \lceil10\log(n+1)/\eps \rceil O(\log (nm))/\eps^2)(a_{max})^2 \\
&\leq \frac{1+\eps}{L}\sum_{j=1}^n a^2_{ij}+ O(\log^2 (nm))/\eps^3)(a_{max})^2.
\end{align*}
\end{proof}

\section{Recapping the Multiplicative Weights Update method}

\label{app:MWU}
We present here a particular instantiation of the well-known multiplicative weights update method (see e.g. \cite{arora2012multiplicative}), which we use throughout our algorithms. We also present a simple and self-contained proof.
\MWU*
\begin{proof}[Proof of \Cref{lem:MWU}] 
Set $\eps'=\eps/3$. Notice that $\eta \cdot gap^{t}(i) \leq \eps'$
Define $\Phi^t=\sum_{1}^{m} imp^t(i)$. We have $\Phi^1=m$ and 
\begin{align*} 
\Phi^{t+1} &= \sum_{1}^{m} imp^t(i) =  \sum_{1}^{m} imp^{t}(i) \cdot (1+\eta \cdot gap^t(i)) = \sum_{1}^{m} imp^{t}(i) + \eta \sum_{1}^{m} imp^{t}(i) \cdot gap^t(i) 
\\ &\leq \sum_{1}^{m} imp^{t}(i) + \eta \sum_{1}^{m} imp^{t}(i) = (1+\eta) \Phi^{t} \leq exp(\eta) \cdot \Phi^{t}.
\end{align*}
Hence, we have $\Phi^{T} \leq exp(\eta T) \cdot m.$ Thus, zooming on each constraint $i\in [m]$, we have 
\begin{align} imp^{T}(i) = \prod_{t=1}^{T} (1+\eta \cdot gap^{t}(i))\leq exp(\eta T+\ln m). \label{ineq:boundOnIndivdualImp}
\end{align} We would like to obtain an upper bound on $(\sum_{t=1}^{T} gap^t(i))/T$ from this. Let $z=\eta \cdot gap^{t}(i)$. We have $z\in [0, \eps'].$ Thus, 
\begin{align*}
exp(z/(1+\eps')) &\leq 1+ z/(1+\eps') + (z/(1+\eps'))^2 \\
&\leq 1+ z/(1+\eps') + (\eps' z/(1+\eps') \\
& = 1+(z/(1+\eps')) (1+\eps') = 1+z.
\end{align*}
Therefore, we have 
\begin{align*} 
exp\bigg(\sum_{t=1}^{T} \eta \cdot gap^{t}(i)/(1+\eps')\bigg) \leq \Bigg(\prod_{t=1}^{T} (1+\eta\cdot gap^{t}(i))\Bigg) 
\end{align*}
Now, using \Cref{ineq:boundOnIndivdualImp}, we can conclude that 
\begin{align*}
    exp\bigg(\sum_{t=1}^{T} \eta \cdot gap^{t}(i)/(1+\eps')\bigg) \leq exp(\eta T+\ln m),
\end{align*}
which implies that
\begin{align*}
\sum_{t=1}^{T} gap^{t}(i) \leq (T + \frac{\ln m}{\eta})(1+\eps')
\end{align*}
In other words, 
\begin{align*}
\frac{\sum_{t=1}^{T} gap^{t}(i)}{T} &\leq (1 + \frac{\ln m}{T\eta})\cdot (1+\eps') \\ 
&\leq (1+\eps')(1+\eps') \leq 1+\eps.
\end{align*}
Here, the penultimate inequality uses that $\eta={\eps}/{(3W)}$, $\eps'=\eps/3$, and $T\geq {9W \ln m}/{\eps^2}.$
\end{proof}

% \begin{lemma}
% Suppose that in the statement of \Cref{lem:MWU}, the oracle is relaxed as follows: there exists some value $\alpha\geq 1$ such that, in each round $t$, the oracle returns for each constraint $i\in [m]$ a value $gap^t(i) \in [0, \alpha W]$, with the guarantee that $\sum_{i=1}^{m} imp^t(i) \cdot gap^t(i) \leq \alpha \sum_{i=1}^{m} imp^t(i)$. Then, in $T=O(W^2\log m/\eps^2)$ number of rounds, we can get that for each constraint $i$, we have $(\sum_{t=1}^{T} gap^t(i))/T \leq (1+\eps)\alpha.$  
% \todo{Maybe we write one wrap-up statement according to the format we will use, with $\alpha=1+\eps$, and we push these proofs into an appendix.}
% \end{lemma}
% \begin{proof}
% Simply run \Cref{lem:MWU} with the values $gap^{t}(i)$ of that lemma scaled by $1/\alpha$, compared to those received from the relaxed oracle. 
% \end{proof}

\section{Sequential derandomization, augmented with averaging}
\label{app:seqDerandwithAverage}

We present here the proof of the sequential derandomization method, augmented with importance-weighted averaging, as stated in \Cref{thm:sequential_derandomization}. We first restate the result, and then present the proof. 

%Furthermore, later we state a slight variant of the statement, as \Cref{crl:sequential_derandomization}, which will be more convenient in some applications.

\SeqDerandWithAverage*

We first start with some useful definitions.

\begin{definition}[$\Delta_i,\lambda_i,\Phi^{(avg)}_{i,j}, \Phi^{(upper)}_{i,j}, \Phi^{(lower)}_{i,j},Pot_j$]
For every $i \in [m]$, we define $\Delta_i = 1000\sqrt{\sum_{j=1}^n a^2_{ij} \log M}$ and $\lambda_i = \frac{\Delta_i}{100 \sum_{j=1}^n a_{ij}^2}$.
Moreover, for every $j \in \{0,1,\ldots,n\}$, we define 

\begin{itemize}
\item $\Phi^{(avg)}_{i,j}(\chi_1,\ldots,\chi_j) = \left( \sum_{j' = 1}^j  a_{ij'} \chi_{j'} \right)^2 + \sum_{j' = j+1}^n a^2_{ij'}$
\item $\Phi^{(upper)}_{i,j}(\chi_1,\chi_2,\ldots,\chi_j) = \prod_{j'=1}^j \left(1 + \lambda_i a_{ij'} \chi_{j'} + (\lambda_i a_{ij'})^2 \right) \cdot \prod_{j' = j+1}^n (1 + (\lambda_i a_{ij'})^2)$
\item $\Phi^{(lower)}_{i,j}(\chi_1,\chi_2,\ldots,\chi_j) = \prod_{j'=1}^j \left(1 - \lambda_i a_{ij'} \chi_{j'} + (\lambda_i a_{ij'})^2 \right)  \cdot \prod_{j' = j+1}^n (1 + (\lambda_i a_{ij'})^2)$ and
\item $Pot_j(\chi_1,\chi_2,\ldots,\chi_j) = \frac{\sum_{i=1}^m imp(i)\cdot \Phi^{(avg)}_{i,j}(\chi_1,\chi_2,\ldots,\chi_j)}{\sum_{i=1}^m imp(i)\cdot\Phi^{(avg)}_{i,0}} +  \frac{1}{2m \cdot M} \sum_{i \in [m]}  \left( \frac{\Phi^{(upper)}_{i, j}(\chi_1,\chi_2,\ldots,\chi_j)}{\Phi^{(upper)}_{i, 0}} + \frac{\Phi^{(lower)}_{i, j}(\chi_1,\chi_2,\ldots,\chi_j)}{\Phi^{(lower)}_{i, 0}} \right)$.
\end{itemize}
\end{definition}

We make use of the following two claims, which will be proven below.

\begin{claim}
    \label{claim:seq_pot_expectation}
    For every $j \in [n]$, it holds that 

    \[0.5\left(Pot_j(\chi_1,\ldots,\chi_{j-1}, 1) + Pot_j(\chi_1,\ldots,\chi_{j-1},-1)\right) = Pot_{j-1}(\chi_1,\ldots,\chi_{j-1}).\]
\end{claim}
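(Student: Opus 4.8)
The claim asserts that $Pot_j$ is a martingale-type quantity: averaging over the two choices $\chi_j = \pm 1$ returns exactly the value one step earlier. I would prove this by verifying the averaging identity separately for each of the three families of ``building blocks'' appearing in the definition of $Pot_j$, namely $\Phi^{(avg)}_{i,j}$, $\Phi^{(upper)}_{i,j}$, and $\Phi^{(lower)}_{i,j}$, and then combining linearly (the normalization denominators $\Phi^{(avg)}_{i,0}$, $\Phi^{(upper)}_{i,0}$, $\Phi^{(lower)}_{i,0}$ and the coefficients $imp(i)$, $\tfrac{1}{2mM}$ do not depend on $\chi_1,\dots,\chi_j$, so it suffices to prove the identity for each unnormalized block). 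Throughout, fix $i \in [m]$ and condition on $\chi_1,\dots,\chi_{j-1}$.

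\textbf{The three block computations.} For $\Phi^{(avg)}_{i,j}$: writing $s = \sum_{j'=1}^{j-1} a_{ij'}\chi_{j'}$, we have $\Phi^{(avg)}_{i,j}(\dots,\chi_j) = (s + a_{ij}\chi_j)^2 + \sum_{j'=j+1}^n a_{ij'}^2$. Since $\chi_j^2 = 1$, averaging over $\chi_j = \pm 1$ kills the cross term $2 s a_{ij}\chi_j$ and gives $s^2 + a_{ij}^2 + \sum_{j'=j+1}^n a_{ij'}^2 = s^2 + \sum_{j'=j}^n a_{ij'}^2 = \Phi^{(avg)}_{i,j-1}(\chi_1,\dots,\chi_{j-1})$, exactly as needed. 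For $\Phi^{(upper)}_{i,j}$: it has the form $\big(\prod_{j'=1}^{j-1}(\cdots)\big)\cdot\big(1 + \lambda_i a_{ij}\chi_j + (\lambda_i a_{ij})^2\big)\cdot\prod_{j'=j+1}^n(1+(\lambda_i a_{ij'})^2)$, and only the middle factor depends on $\chi_j$. Averaging that middle factor over $\chi_j=\pm1$ yields $1 + (\lambda_i a_{ij})^2$, which is precisely the factor that appears in $\Phi^{(upper)}_{i,j-1}$ in position $j$; hence the average equals $\Phi^{(upper)}_{i,j-1}(\chi_1,\dots,\chi_{j-1})$. The computation for $\Phi^{(lower)}_{i,j}$ is identical (the sign of the linear term flips but is still killed by averaging).

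\textbf{Assembling.} Plugging these three identities into the definition of $Pot_j$ and using linearity of the average over $\chi_j$ together with the $\chi_j$-independence of all the denominators and weights, we get
\[
0.5\big(Pot_j(\dots,1) + Pot_j(\dots,-1)\big) = \frac{\sum_{i} imp(i)\Phi^{(avg)}_{i,j-1}}{\sum_i imp(i)\Phi^{(avg)}_{i,0}} + \frac{1}{2mM}\sum_i\Big(\frac{\Phi^{(upper)}_{i,j-1}}{\Phi^{(upper)}_{i,0}} + \frac{\Phi^{(lower)}_{i,j-1}}{\Phi^{(lower)}_{i,0}}\Big),
\]
which is exactly $Pot_{j-1}(\chi_1,\dots,\chi_{j-1})$. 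There is no real obstacle here — the statement is a one-line-per-block algebraic verification whose only subtlety is bookkeeping: being careful that (i) each $\Phi$-block factors as a $\chi_j$-independent part times a single simple factor depending on $\chi_j$, (ii) the three normalizing constants are genuinely independent of $\chi_j$, and (iii) in the $\Phi^{(avg)}$ case the index ranges shift correctly when the $j$-th squared coefficient moves from the ``tail sum'' into the ``partial sum squared.'' The mild care needed in step (iii) is the closest thing to a pitfall.
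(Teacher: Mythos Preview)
Your proposal is correct and follows essentially the same approach as the paper: the paper also reduces the claim to verifying the averaging identity for each of the three building blocks $\Phi^{(avg)}_{i,j}$, $\Phi^{(upper)}_{i,j}$, $\Phi^{(lower)}_{i,j}$ separately, carries out exactly the computations you describe (expanding $(s \pm a_{ij})^2$ for the average block, and noting that only the single factor $1 \pm \lambda_i a_{ij}\chi_j + (\lambda_i a_{ij})^2$ depends on $\chi_j$ for the upper/lower blocks), and then concludes by linearity.
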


\begin{claim}
\label{claim:seq_pot_conclusion}
    It holds that $Pot_0 = 1 + \frac{1}{M}$. Moreover, if $Pot_n(\chi_1,\chi_2,\ldots,\chi_n) \leq 1 + \frac{1}{M}$ for $\chi \in \{-1,1\}^n$, then it holds that
    \begin{enumerate}
    \item $\sum_{i=1}^m imp(i)\cdot disc^2_i \leq \left(1 + \frac{1}{M}\right) \sum_{i=1}^m imp(i)\cdot  \sum_{j = 1}^n a^2_{ij}$ and
    \item $disc_i^2 \leq \Delta_i^2$ for every $i \in [m]$.
    \end{enumerate}
\end{claim}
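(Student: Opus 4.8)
\textbf{Plan for proving Claims \ref{claim:seq_pot_expectation} and \ref{claim:seq_pot_conclusion} and deriving \Cref{thm:sequential_derandomization}.}

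The plan is to follow the standard pessimistic-estimator / method-of-conditional-expectations scheme, where $Pot_j$ plays the role of a potential that upper bounds (up to the additive $1/M$ slack) the ``bad event'' probability under a uniformly random completion of the remaining coordinates. First I would prove \Cref{claim:seq_pot_expectation}: the potential $Pot_j$ is an affine combination of the three families $\Phi^{(avg)}_{i,\cdot}$, $\Phi^{(upper)}_{i,\cdot}$, $\Phi^{(lower)}_{i,\cdot}$, and affine combinations preserve the ``average of the two children equals the parent'' property, so it suffices to check each family separately. For $\Phi^{(avg)}_{i,j}$, averaging over $\chi_j\in\{-1,1\}$ turns $(\sum_{j'\le j} a_{ij'}\chi_{j'})^2$ into $(\sum_{j'<j}a_{ij'}\chi_{j'})^2 + a_{ij}^2$ (cross term cancels, $\chi_j^2=1$), which exactly absorbs the $a_{ij}^2$ term that moves from the ``future'' sum into the ``past'' sum — giving $\Phi^{(avg)}_{i,j-1}$. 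For $\Phi^{(upper)}_{i,j}$, the $j$-th factor is $1+\lambda_i a_{ij}\chi_j + (\lambda_i a_{ij})^2$; averaging over $\chi_j$ kills the linear term and leaves $1+(\lambda_i a_{ij})^2$, which is precisely the factor appearing in $\Phi^{(upper)}_{i,j-1}$ for index $j$. The same computation works verbatim for $\Phi^{(lower)}_{i,j}$. Dividing by the (constant, $j$-independent) normalizers $\sum_i imp(i)\Phi^{(avg)}_{i,0}$ and $\Phi^{(upper)}_{i,0}$, $\Phi^{(lower)}_{i,0}$ and taking the prescribed affine combination yields the claim.

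Next I would prove \Cref{claim:seq_pot_conclusion}. The equality $Pot_0 = 1 + \tfrac{1}{M}$ is immediate: the first term of $Pot_0$ is $1$ by definition of the normalizer, and the second term is $\frac{1}{2mM}\sum_{i\in[m]}(1+1) = \frac{1}{M}$. For the implication, suppose $Pot_n(\chi)\le 1+\tfrac1M$. Since all three $\Phi$-families are nonnegative, each summand of $Pot_n$ is nonnegative, so we may bound each group of terms by $1+\tfrac1M$. First, $\frac{\sum_i imp(i)\,\Phi^{(avg)}_{i,n}(\chi)}{\sum_i imp(i)\,\Phi^{(avg)}_{i,0}} \le 1+\tfrac1M$; but $\Phi^{(avg)}_{i,n}(\chi) = (\sum_{j}a_{ij}\chi_j)^2 = disc_i^2$ and $\Phi^{(avg)}_{i,0} = \sum_j a_{ij}^2$, giving property 1 directly. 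Second, for each fixed $i$, $\frac{1}{2mM}\cdot\frac{\Phi^{(upper)}_{i,n}(\chi)}{\Phi^{(upper)}_{i,0}} \le 1+\tfrac1M \le 2$, hence $\Phi^{(upper)}_{i,n}(\chi) \le 4mM\cdot\Phi^{(upper)}_{i,0}$. Now I would use the standard Hoeffding-style estimates: $\Phi^{(upper)}_{i,0} = \prod_{j}(1+(\lambda_i a_{ij})^2) \le \exp(\lambda_i^2\sum_j a_{ij}^2)$, and on the other hand $\Phi^{(upper)}_{i,n}(\chi) \ge \prod_j(1+\lambda_i a_{ij}\chi_j) \ge$ (after noting $1+x+x^2\ge e^{x}$ for... actually one uses $1+\lambda_ia_{ij}\chi_j+(\lambda_i a_{ij})^2 \ge e^{\lambda_i a_{ij}\chi_j}$, valid because $1+x+x^2\ge e^x$ for $x\le 1$, and $|\lambda_i a_{ij}|\le 1$ by the choice of $\lambda_i$ and $\Delta_i$) $\exp(\lambda_i\sum_j a_{ij}\chi_j) = \exp(\lambda_i\cdot sdisc_i)$. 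Combining, $\exp(\lambda_i\, sdisc_i) \le 4mM\exp(\lambda_i^2\sum_j a_{ij}^2)$, so $\lambda_i\, sdisc_i \le \ln(4mM) + \lambda_i^2\sum_j a_{ij}^2$; plugging in $\lambda_i = \Delta_i/(100\sum a_{ij}^2)$ and $\Delta_i = 1000\sqrt{\sum a_{ij}^2\log M}$ and using $m\le M$, a short computation gives $sdisc_i \le \Delta_i$. The $\Phi^{(lower)}$ term gives the symmetric bound $-sdisc_i\le\Delta_i$ in the same way, hence $disc_i^2\le\Delta_i^2$, which is property 2 (with $C=10^6$).

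Finally, to assemble \Cref{thm:sequential_derandomization} from the two claims: run the algorithm that, for $j=1,\dots,n$, given $\chi_1,\dots,\chi_{j-1}$ already fixed, sets $\chi_j\in\{-1,1\}$ to whichever value makes $Pot_j(\chi_1,\dots,\chi_j)$ no larger than $Pot_{j-1}(\chi_1,\dots,\chi_{j-1})$ — such a choice exists by \Cref{claim:seq_pot_expectation} (the average of the two children equals the parent, so at least one child is $\le$ the parent). By induction $Pot_n(\chi)\le Pot_0 = 1+\tfrac1M$, so \Cref{claim:seq_pot_conclusion} yields both the averaging guarantee and the per-constraint bound $disc_i^2\le\Delta_i^2 = 10^6\log M\cdot\sum_j a_{ij}^2$, i.e.\ the ``Moreover'' statement with $C=10^6$. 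For the work/depth bounds, I would observe that each $\Phi^{(avg)}_{i,j}$, $\Phi^{(upper)}_{i,j}$, $\Phi^{(lower)}_{i,j}$ can be maintained incrementally: when $\chi_j$ is decided only the constraints $i$ with $a_{ij}\ne 0$ need updating (an update to the running sum $\sum_{j'\le j}a_{ij'}\chi_{j'}$ and to the running products), and each such update is $O(1)$ work and $O(\log)$ depth, with an aggregation over the affected constraints to evaluate $Pot_j(\cdot,1)$ vs $Pot_j(\cdot,-1)$; totalling over all $j$ gives $\tilde O(nnz(A)+n+m)$ work, and the $n$ sequential rounds each of $\poly(\log M)$ depth give $n\poly(\log M)$ depth.

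The main obstacle I expect is purely bookkeeping rather than conceptual: checking the inequality $1+x+x^2\ge e^x$ on the relevant range and verifying that $|\lambda_i a_{ij}|\le 1$ holds for \emph{every} nonzero coefficient (so that the Hoeffding-type lower bound on $\Phi^{(upper)}_{i,n}$ is valid), and then doing the arithmetic with the specific constants $1000$, $100$, $4mM$ to confirm $sdisc_i\le\Delta_i$ with room to spare; one must be careful that a single term $a_{ij}^2$ could in principle be as large as $\sum_{j'}a_{ij'}^2$, in which case $|\lambda_i a_{ij}| = \Delta_i|a_{ij}|/(100\sum a_{ij'}^2) \le \Delta_i/(100|a_{ij}|) $ and one needs $\Delta_i\le 100|a_{ij}|$ in that regime — but when a single coordinate dominates, $\sum a_{ij'}^2 \approx a_{ij}^2$ forces $\Delta_i = 1000\sqrt{\sum a_{ij'}^2\log M}\approx 1000|a_{ij}|\sqrt{\log M}$, which is \emph{not} $\le 100|a_{ij}|$. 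This indicates that the clean Hoeffding bound as I sketched it may need the additional hypothesis or a truncation argument; I would handle it by noting that the theorem only needs $disc_i^2 \le C\log M\sum a_{ij}^2$ (not a sharper constant), so one can afford to replace the above with the coarser estimate where $\lambda_i$ is chosen as $\min$ of the Hoeffding-optimal value and $1/a_{max,i}$, or simply observe that the $\Phi^{(upper)},\Phi^{(lower)}$ terms only need to certify $disc_i^2\le\Delta_i^2$ which by definition of $\Delta_i$ is the target bound, and re-examine the range condition accordingly — this is the one place where I would slow down and be careful.
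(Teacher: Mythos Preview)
Your approach is essentially the same as the paper's, and everything up to the obstacle you flag is correct. The paper resolves that obstacle not by changing $\lambda_i$ but by a one-line truncation: apply $1+x+x^2\ge e^x$ only to the factors with $\lambda_i a_{ij}\chi_j\le 0.5$ (it holds for all $x\le 0.5$, and in particular for all negative $x$), and simply lower-bound the remaining factors by $1$ (they have $\lambda_i a_{ij}\chi_j>0.5$, hence are $>1$). The total signed-discrepancy mass carried by those dropped indices is small: if $|\lambda_i a_{ij}|>0.5$ then $|a_{ij}|<2\lambda_i a_{ij}^2$, so
\[
\sum_{j:\,|\lambda_i a_{ij}|>0.5}|a_{ij}|\;<\;2\lambda_i\sum_j a_{ij}^2\;=\;\frac{\Delta_i}{50}\;\le\;\frac{\Delta_i}{2}.
\]
Hence when $\sum_j a_{ij}\chi_j\ge\Delta_i$ one still gets $\Phi^{(upper)}_{i,n}\ge\exp\bigl(\lambda_i(\Delta_i-\Delta_i/2)\bigr)=\exp(\lambda_i\Delta_i/2)$, which combined with $\Phi^{(upper)}_{i,0}\le\exp(\lambda_i^2\sum_j a_{ij}^2)$ and the choice of constants gives the required contradiction to $Pot_n\le 1+\tfrac1M$. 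So no modification of $\lambda_i$ is needed; your ``truncation argument'' instinct was right, and this is exactly how it goes through.
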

Before proving the two claims, we use them to prove \cref{thm:sequential_derandomization}.
\begin{proof}[Proof of \cref{thm:sequential_derandomization}]
Our algorithm will compute $\chi \in \{-1,1\}^n$ in an iterative manner. In particular, assume the algorithm has already computed $\chi_1,\chi_2,\ldots,\chi_{j-1}$ for some $j \in [n]$. Then, if $Pot_j(\chi_1,\ldots,\chi_{j-1},1) \leq Pot_{j-1}(\chi_1,\ldots,\chi_{j-1})$, the algorithm sets $\chi_j = 1$ and otherwise the algorithm sets $\chi_j = -1$. \cref{claim:seq_pot_expectation} implies that for every $j \in [n]$, it holds that $Pot_j(\chi_1,\ldots,\chi_j) \leq Pot_{j-1}(\chi_1,\ldots,\chi_{j-1})$. Using \cref{claim:seq_pot_conclusion}, we can then deduce that $Pot_n(\chi_1,\ldots,\chi_n) \leq 1 + \frac{1}{M}$ and therefore $\chi$ satisfies the output guarantees if $C$ is a sufficiently large constant. It remains to discuss some implementation details and then argue about its work and depth. First, note that the algorithm can compute the following in $\tilde{O}(nnz(A) + n + m)$ work and $\poly(\log(nM))$ depth at the beginning:
\begin{enumerate}
    \item $\Phi^{(avg)}_{i,0}$ for every $i \in [m]$
    \item $\Phi^{(upper)}_{i,0}$ for every $i \in [m]$
    \item $\Phi^{(lower)}_{i,0}$ for every $i \in [m]$
    \item $\sum_{i=1}^m imp(i) \Phi^{(avg)}_{i,0}$
    \item $Pot_0$
    \item $\lambda_i$ for every $i \in [m]$
\end{enumerate}
Assume we are at step $j \in [n]$ and that we are given a set $I_j = \{i\in [m] \colon a_{ij} \neq 0\}$.
We inductively assume that the algorithm has computed the following quantities:

\begin{enumerate}
    \item $\sum_{j'=1}^{j-1} a_{ij'} \chi_{j'}$ for every $i \in [m]$
    \item $\Phi^{(avg)}_{i,j-1}(\chi_1,\ldots,\chi_{j-1})$ for every $i \in [m]$
    \item $\Phi^{(upper)}_{i,j-1}(\chi_1,\ldots,\chi_{j-1})$ for every $i \in [m]$
    \item $\Phi^{(lower)}_{i,j-1}(\chi_1,\ldots,\chi_{j-1})$ for every $i \in [m]$
    \item $Pot_{j-1}(\chi_1,\chi_2,\ldots,\chi_{j-1})$
\end{enumerate}
Now, let $\chi_j \in \{-1,1\}$. Note that given this information, we can compute in $(1 + |I_j|)\poly(\log(nM))$ work and $\poly(\log(nM))$ depth the following:
\begin{enumerate}
    \item $\sum_{j'=1}^{j} a_{ij'} \chi_{j'}$ for every $i \in I_j$
    \item $\Phi^{(avg)}_{i,j}(\chi_1,\ldots,\chi_{j-1},\chi_j)$ for every $i \in I_j$
    \item $\Phi^{(upper)}_{i,j}(\chi_1,\ldots,\chi_{j-1},\chi_j)$ for every $i \in I_j$
    \item $\Phi^{(lower)}_{i,j}(\chi_1,\ldots,\chi_{j-1},\chi_j)$ for every $i \in I_j$
 \end{enumerate}
 The reason is that for every $i \in I_j$, we have

\[\sum_{j'=1}^{j} a_{ij'} \chi_{j'} = \left(\sum_{j'=1}^{j-1} a_{ij'} \chi_{j'} \right) + a_{ij}\chi_j,\]
\[\Phi^{(avg)}_{i,j}(\chi_1,\ldots,\chi_{j-1},\chi_j) = \Phi^{(avg)}_{i,j-1}(\chi_1,\ldots,\chi_{j-1}) + 2\chi_j a_{ij}\left( \sum_{j'=1}^{j-1} a_{ij'} \chi_{j'}\right) ,\]
\[\Phi^{(upper)}_{i,j}(\chi_1,\ldots,\chi_{j-1},\chi_j) = \frac{1 + \lambda_i a_{ij}\chi_j + (\lambda_ia_{ij})^2 }{1 + (\lambda_ia_{ij})^2  }\Phi^{(upper)}_{i,j-1}(\chi_1,\ldots,\chi_{j-1})\]

and

\[\Phi^{(lower)}_{i,j}(\chi_1,\ldots,\chi_{j-1},\chi_j) = \frac{1 - \lambda_i a_{ij}\chi_j + (\lambda_ia_{ij})^2 }{1 + (\lambda_ia_{ij})^2  }\Phi^{(lower)}_{i,j-1}(\chi_1,\ldots,\chi_{j-1}).\]

Since $\Phi^{(upper)}_{i,0}$ and $\Phi^{(lower)}_{i,0}$ have been computed at the beginning for every $i \in I_j$, and we have also computed $\sum_{i=1}^m imp(i) \Phi^{(avg)}_{i,0}$, having this information allows us to compute $Pot_{j}(\chi_1,\chi_2,\ldots,\chi_{j-1},\chi_j)$ in $(1 + |I_j|)\poly(\log(nM))$ work and $\poly(\log(nM))$ depth. Hence, we can indeed implement the algorithm described above in $\tilde{O}(nnz(A) + n + m)$ work and $\poly(\log(nM))$ depth.
\end{proof}

\begin{proof}[Proof of \cref{claim:seq_pot_conclusion}]
The fact that $Pot_0 = 1 + \frac{1}{M}$ follows straightforwardly from the definition. Now, assume that $Pot_n(\chi_1,\chi_2,\ldots,\chi_n) \leq 1 + \frac{1}{M}$.
In particular, this implies that $1 + \frac{1}{M} \geq Pot_n(\chi_1,\chi_2,\ldots,\chi_n) \geq \frac{\sum_{i=1}^m imp(i)\cdot\Phi^{(avg)}_{i,n}(\chi_1,\chi_2,\ldots,\chi_n)}{\sum_{i=1}^m imp(i)\cdot \Phi^{(avg)}_{i,0}}$, where the last inequality follows from the fact that all the defined potentials are non-negative at all times. We therefore get

\begin{align*}
\sum_{i=1}^m imp(i)\cdot disc_i^2 &= \sum_{i=1}^m imp(i)\cdot\left( \sum_{j=1}^n a_{ij} \chi_j \right)^2 = \sum_{i=1}^m imp(i)\cdot \Phi^{(avg)}_{i,n}(\chi_1,\chi_2,\ldots,\chi_n)  \\
&\leq \left(1 + \frac{1}{M}\right) \sum_{i=1}^m imp(i)\cdot \Phi^{(avg)}_{i,0} \\
&= \left(1 + \frac{1}{M}\right) \sum_{i=1}^m imp(i)\cdot \sum_{j = 1}^n a^2_{ij},
\end{align*}
which proves the first property in the statement of \Cref{claim:seq_pot_conclusion}.

Next, we argue that the second property in the statement of \Cref{claim:seq_pot_conclusion} also holds. Fix some $i \in [m]$. We will show that the following three properties hold:

\begin{itemize}
    \item $\Phi^{(upper)}_{i,0} = \Phi^{(lower)}_{i,0} \leq M^{100}$.
    \item If $\sum_{j=1}^n a_{ij} \chi_j \geq \Delta_i$, then $\Phi^{(upper)}_{i,n}(\chi_1,\chi_2,\ldots,\chi_n)  \geq M^{1000}$.
    \item If $\sum_{j=1}^n a_{ij} \chi_j \leq - \Delta_i$, then $\Phi^{(lower)}_{i,n}(\chi_1,\chi_2,\ldots,\chi_n)  \geq M^{1000}$.
\end{itemize}
In particular, if $disc_i^2 > \Delta_i^2$, we would have $\max\left(\frac{\Phi^{(upper)}_{i,n} }{\Phi^{(upper)}_{i,0}},\frac{\Phi^{(lower)}_{i,n} }{\Phi^{(lower)}_{i,0}} \right) \geq M^5$, which, for $M > 1$, would imply that $Pot_n(\chi_1,\chi_2,\ldots,\chi_n)  > 1 + \frac{1}{M}$. Therefore, $Pot_n(\chi_1,\chi_2,\ldots,\chi_n)  \leq 1 + \frac{1}{M}$ indeed implies that $disc_i^2 \leq \Delta_i^2 = O\left(\sum_{j=1}^n a^2_{ij} \log M \right)$, thus proving the second property in the statement of \Cref{claim:seq_pot_conclusion}. It remains to show that the three properties above hold. The first property holds, because

\begin{align*}
    \Phi^{(upper)}_{i,0} =  \Phi^{(lower)}_{i,0} &= \prod_{j=1}^n (1 + (\lambda_i a_{ij})^2) \leq e^{\sum_{j=1}^{n} (\lambda_i a_{ij})^2} \\
    &= e^{\left(\frac{\Delta_i}{100 \sum_{j=1}^n a^2_{ij}} \right)^2\sum_{j=1}^n a^2_{ij}} = e^{\frac{\Delta_i^2}{10000\sum_{j=1}^n a^2_{ij}}} \\
    &=  e^{100\log(M)} = M^{100}.
\end{align*}

Next, we show that the second property holds (the argument for the third property is completely analogous, and thus omitted). Assume that $\sum_{j=1}^n a_{ij} \chi_j \geq \Delta_i$. We have to show that $\Phi^{(upper)}_{i,n}(\chi_1,\chi_2,\ldots,\chi_n)  \geq M^{1000}$.
First, note that
\begin{align*}
\sum_{j \in [n] \colon |\lambda_i a_{ij}| > 0.5} |a_{ij}| \leq 2 \lambda_i \sum_{j \in [n] \colon |\lambda_i a_{ij}| > 0.5} a^2_{ij} =  \frac{2\Delta_i}{100 \sum_{j=1}^n a^2_{ij}}\sum_{j \in [n] \colon |\lambda_i a_{ij}| > 0.5} a^2_{ij} \leq \frac{\Delta_i}{2}.
\end{align*}
Next, we use the fact that $1 + x + x^2 \geq e^x$ for all $x \leq 0.5$ to conclude that
\begin{align*}
\Phi^{(upper)}_{i,n}(\chi_1,\chi_2,\ldots,\chi_n)  &=  \prod_{j=1}^n \left(1 + \lambda_i a_{ij} \chi_{j} + (\lambda_i a_{ij})^2 \right)\\
&\geq \prod_{j \in [n] \colon \lambda_i a_{ij} \chi_j \leq 0.5} e^{\lambda_i a_{ij} \chi_{j}} \\ 
&=  e^{\lambda_i\sum_{j \in [n] \colon \lambda_i a_{ij} \chi_j \leq 0.5} a_{ij} \chi_{j}} \\
&\geq e^{\lambda_i(\sum_{j \in [n]} a_{ij} \chi_{j} - |\sum_{j \in [n] \colon \lambda_i a_{ij} \chi_j > 0.5}a_{ij} \chi_{j}|)} \\
&\geq e^{\lambda_i(\Delta_i - \Delta_i/2)} \\
&\geq e^{\frac{\Delta_i^2}{1000 \sum_{j=1}^n a^2_{ij}}} \geq M^{1000}.
\end{align*}
\end{proof}

\begin{proof}[Proof of \cref{claim:seq_pot_expectation}]
Fix some $j \in [n]$ and $i \in [m]$. First, note that it suffices to show the following:

\begin{itemize}
    \item $0.5\left(\Phi^{(avg)}_{i,j}(\chi_1,\ldots,\chi_{j-1}, 1) + \Phi^{(avg)}_{i,j}(\chi_1,\ldots,\chi_{j-1},-1)\right) = \Phi^{(avg)}_{i,j-1}(\chi_1,\ldots,\chi_{j-1})$
    \item $0.5\left(\Phi^{(upper)}_{i,j}(\chi_1,\ldots,\chi_{j-1}, 1) + \Phi^{(upper)}_{i,j}(\chi_1,\ldots,\chi_{j-1},-1)\right) = \Phi^{(upper)}_{i,j-1}(\chi_1,\ldots,\chi_{j-1})$
    \item $0.5\left(\Phi^{(lower)}_{i,j}(\chi_1,\ldots,\chi_{j-1}, 1) + \Phi^{(lower)}_{i,j}(\chi_1,\ldots,\chi_{j-1},-1)\right) = \Phi^{(lower)}_{i,j-1}(\chi_1,\ldots,\chi_{j-1})$
\end{itemize}

\paragraph{The first equality} We have
\[\Phi^{(avg)}_{i,j}(\chi_1,\ldots,\chi_{j-1}, 1) = \left( \left( \sum_{j' = 1}^{j-1}  a_{ij'} \chi_{j'}\right) + a_{ij} \right)^2 + \sum_{j' = j+1}^n a^2_{ij'} = \left(  \sum_{j' = 1}^{j-1} a_{ij'}  \chi_{j'}  \right)^2 + \sum_{j' = j}^n a^2_{ij'} + 2\left( \sum_{j' = 1}^{j-1}  a_{ij'} \chi_{j'} \right)a_{ij} \]
and 
\[\Phi^{(avg)}_{i,j}(\chi_1,\ldots,\chi_{j-1}, -1) = \left( \left( \sum_{j' = 1}^{j-1}  a_{ij'} \chi_{j'} \right) - a_{ij} \right)^2 + \sum_{j' = j+1}^n a^2_{ij'} = \left(  \sum_{j' = 1}^{j-1}  a_{ij'} \chi_{j'}  \right)^2 + \sum_{j' = j}^n a^2_{ij'} - 2\left( \sum_{j' = 1}^{j-1}  a_{ij'} \chi_{j'} \right)a_{ij}. \]
Therefore, we can conclude the first equality, as follows:
\[0.5 \left(\Phi^{(avg)}_{i,j}(\chi_1,\ldots,\chi_{j-1}, 1) + \Phi^{(avg)}_{i,j}(\chi_1,\ldots,\chi_{j-1}, -1) \right) = \left(\sum_{j' = 1}^{j-1}  a_{ij'} \chi_{j'}  \right)^2 + \sum_{j' = j}^n a^2_{ij'} = \Phi^{(avg)}_{i,j-1}(\chi_1,\chi_2,\ldots,\chi_{j-1}).\]

\paragraph{The second equality} We have 
\[\Phi^{(upper)}_{i,j}(\chi_1,\ldots,\chi_{j-1},1) = \left(1 + \lambda_ia_{ij} + \lambda_i^2a^2_{ij}\right) \cdot \prod_{j' = 1}^{j-1} \left(1 + \lambda_i a_{ij'} \chi_{j'} + (\lambda_i a_{ij'})^2 \right) \cdot \prod_{j' = j+1}^n (1 + (\lambda_i a_{ij'})^2)\]
and
\[\Phi^{(upper)}_{i,j}(\chi_1,\ldots,\chi_{j-1},-1) = \left(1 - \lambda_ia_{ij} + \lambda_i^2a^2_{ij}\right) \cdot \prod_{j' = 1}^{j-1} \left(1 + \lambda_i a_{ij'} \chi_{j'} + (\lambda_i a_{ij'})^2 \right) \cdot \prod_{j' = j+1}^n (1 + (\lambda_i a_{ij'})^2).\]
Therefore, we can conclude the second equality as follows
\begin{align*}
&0.5 \left(\Phi^{(upper)}_{i,j}(\chi_1,\ldots,\chi_{j-1},1) + \Phi^{(upper)}_{i,j}(\chi_1,\ldots,\chi_{j-1},-1) \right) \\
&=
0.5 \left(1 + \lambda_ia_{ij} + \lambda_i^2a^2_{ij} +  1 - \lambda_ia_{ij} + \lambda_i^2a^2_{ij}\right)  \cdot \prod_{j' = 1}^{j-1} \left(1 + \lambda_i a_{ij'} \chi_{j'} + (\lambda_i a_{ij'})^2 \right) \cdot \prod_{j' = j+1}^n (1 + (\lambda_i a_{ij'})^2) \\
&= \prod_{j' = 1}^{j-1} \left(1 + \lambda_i a_{ij'} \chi_{j'} + (\lambda_i a_{ij'})^2 \right) \cdot \prod_{j' = j}^n (1 + (\lambda_i a_{ij'})^2) \\
&= \Phi^{(upper)}_{i,j-1}(\chi_1,\ldots,\chi_{j-1}).
\end{align*}

\paragraph{The third equality} A calculation analogous to above shows the third equality 
\[0.5\left(\Phi^{(lower)}_{i,j}(\chi_1,\ldots,\chi_{j-1}, 1) + \Phi^{(lower)}_{i,j}(\chi_1,\ldots,\chi_{j-1},-1)\right) = \Phi^{(lower)}_{i,j-1}(\chi_1,\ldots,\chi_{j-1})\]
\end{proof}

% \begin{corollary}
% \label{crl:sequential_derandomization}
% There exists an absolute constant $C>0$ for which the following holds. Let $n,m\in \mathbb{N}$, $M \in \mathbb{R}$ with $ M\geq \max(m,2)$, $A \in \mathbb{R}^{m \times n}$, and $imp(i) \in \mathbb{R}_{\geq 0}$ for each $i\in [m]$. There exists a deterministic parallel algorithm algorithm with work $\tilde{O}(nnz(A) + n + m))$ and depth $n \poly(\log m)$  that computes a vector $\chi \in \{-1,1\}^n$ such that $\sum_{i=1}^m imp(i) \cdot  \frac{disc^2_i}{\big(\sum_{j = 1}^n a^2_{ij}\big)} \leq \left(1 + \frac{1}{M}\right) \sum_{i=1}^m imp(i)$. Moreover, for every $i \in [m]$, it also holds that $disc_i^2 = C\log M \cdot \sum_{j =1}^n a_{ij}^2$. 
% \end{corollary}
% \begin{proof}
%     The corollary follows directly from \Cref{thm:sequential_derandomization} by plugging in the latter's statement scaled importance values $imp'(i) = \frac{imp(i)}{\big(\sum_{j = 1}^n a^2_{ij}\big)}.$
% \end{proof}

\section{Example application: deterministic parallel edge coloring}
\label{app:edge-coloring}
\CrlEdgeColoring*

\begin{proof}[Proof Sketch of \Cref{crl:edge-coloring}]
The proof follows the general approach of Karloff and Shmoys~\cite{karloff1987edge-coloring}, as also used by Motwani, Naor, and Naor~\cite{motwani1989probabilistic}. 
The algorithm of Karloff and Shmoys~\cite{karloff1987edge-coloring} is randomized, but it uses randomness in only one part of partitioning the vertices into two parts, with certain degree-evenness properties, which can be seen as a set balancing problem. We can compute this part deterministically using our deterministic parallel set balancing algorithm, and thus the entire algorithm becomes deterministic. Below, we provide a brief overview of their approach.

The edge coloring is computed recursively. In the base case where $\Delta \leq \poly(\log n)$, then the result follows from a deterministic parallel algorithm of Karloff and Shmoys that computes a $\Delta+1$ edge coloring in $\poly(\Delta \log n)$ depth and using $m \cdot \poly(\Delta\log n)$ work.~\footnote{Their explicit statement for the the work bound is $n^{O(1)}$, but a closer inspection of their algorithm shows that it satisfies the work bound $m(\Delta\log n)^{O(1)}$, with only one change: for the maximal independent set subroutine, one should invoke the work-efficient deterministic algorithm of Luby~\cite{luby1988removing}, instead of his polynomial work variant~\cite{luby86}.} The more interesting regime is when $\Delta\geq \Omega(\log^2 n)$, where the problem is treated recursively. For that, we compute a partitioning of the vertices into two disjoint parts $V_0$ and $V_1$, such that for each node $v$, for each $i\in[0,1]$, we have $|N(v)\cap V_i| \leq \Delta/2+O(\sqrt{\Delta \log n})$. Here, $N(v)$ denotes the set of neighbors of $v$ in the graph. Notice that this is an instance of set balancing where the ground set is $V$ and for each node $v$, we have a set $N(v)$ in the set system of the set balancing problem. Thus, the desired partition follows from \Cref{thm:main-unweighted}. Let us see how this partition is used.

We perform the edge coloring in two different parts: 
\begin{itemize}
\item[(A)] Let $G[V_0, V_1]$ denote the bipartite induced subgraph, which has exactly the edges of $G$ with one endpoint in $V_0$ and the other endpoint in $V_1$. This graph has maximum degree $\Delta'=\Delta/2+O(\sqrt{\Delta \log n})$. We can color the edges of $G[V_0, V_1]$ optimally with $\Delta'$ colors, since it is a bipartite graph, using the parallel algorithm of Lev, Pippenger, and Valiant~\cite{lev1981fast} for edge-coloring bipartite graphs. This runs in $\poly(\log n)$ depth and using $\tilde{O}(|E(G[V_0, V_1])|)$ work. One could also apply here the algorithm of Alon~\cite{alon2003edge-coloring} (even though it is not explicitly claimed as a parallel algorithm).
\item[(B)] Separately, we use different colors from those used in (A) to color the edges of $G[V_0]$ and the edges in $G[V_1]$, by leveraging the degree in each of these two subgraphs is reduced to $\Delta/2+O(\sqrt{\Delta\log n}$. Importantly, we use the same set of colors for $G[V_0]$ and $G[V_1]$, and we solve these two problems in parallel and independently. Indeed, steps (A) and (B) can also be performed in parallel.
\end{itemize}
Let $D(\Delta)$ denote the depth of the algorithm for instances with degree $\Delta$. We get the recursion $D(\Delta) = D(\Delta/2+O(\sqrt{\Delta\log n})) + \poly(\log n)$, with the base case $D(\Delta)= \poly(\log n)$ when $\Delta \leq \poly(\log n)$. The solution is $D(\Delta)=\poly(\log n)$ for all $\Delta\in [n]$. Let us next examine the number of colors, using the notation $q(\Delta)$ for the number of colors needed to edge color a graph with maximum degree $\Delta$. We have $q(\Delta) = q(\Delta/2+O(\sqrt{\Delta\log n}) + \Delta/2+O(\sqrt{\Delta \log n})$, with the base case of $q(\Delta) = \Delta+1$ for $\Delta=\poly(\log n)$. The solution to this recursion is $q(\Delta)=\Delta+O(\sqrt{\Delta\log n})$. Finally, notice also that the algorithm has near-linear work $\tilde{O}(m)$, since each edge is processed once in the set balance computation, and then either in coloring the bipartite graph or in one of the two graphs $G[V_0]$ and $G[V_1]$ which are treated recursively.
\end{proof}

\section{Lattice Approximation}
\label{app:lattice}
We now restate and present a proof sketch of our lattice approximation result, following the reduction of Motwani, Naor, and Naor~\cite[Section 8]{motwani1994probabilisticJournal}. In their reduction, we use our own solution for the weighted set balancing problem (as stated in \Cref{thm:main-weighted}), instead of the solution that they build for this problem\footnote{They call (a limited variant of) the latter the \textit{vector balancing} problem and build a solution for it via reduction to the unweighted set balancing problem. However, for that, they make an additional assumption of $m\geq \poly(n)$. Removing this assumption from their approach appears to result in a suboptimal bound, so we do not follow that direction. }
\thmLattice*
\begin{proof}[Proof Sketch of \Cref{thm:main-lattice}]
Let us assume that $p_j\in (0,1)$ for all $j\in[n]$ as for any $j$ that has $p_j\in\{0,1\}$, we can directly set $q_j=p_j$. Moreover, without loss of generality, we can assume that each $a_{ij}$ and $p_j$ has at most $B=10\log n$ bits; the error introduced by less significant bits is negligible. 

We determine $\mathbf{q}$ by gradually rounding $\mathbf{p}$ in $B$ stages, where in each stage we remove one bit from $\mathbf{p}$. Set $\mathbf{p}^1=\mathbf{p}$. Let us describe stage $k\in[B]$: By induction, assume that each $p^{k}_j$ has $B-k+1$ bits. In stage $k$, we focus on the least significant bit of $p^{k}_j$. For those variables $j\in [n]$ for which the least significant bit of $p^{k}_{j}$ is zero, we do nothing and set $p^{k+1}_j=p^{k}_{j}$. For those $j$ that have their least significant bit of $p^{k}_j$ equal to one, we consider two options of rounding up---i.e., setting $p^{k+1}_j=p^{k}_{j}+2^{-(B-k+1)}$--- or rounding down---i.e., setting $p^{k+1}_j=p^{k}_{j}-2^{-(B-k+1)}$. If we continue this for $B$ stages, we would have in $\mathbf{q}=p^{B+1}_j\in \{0,1\}$. But how do we decide whether to round each variable up or down in each stage? 

In a randomized scheme, we would decide randomly. Let $J_k$ be the set of $j\in [n]$ such that $p^{k}_j$ has its least significant bit equal to $1$. For each such $j\in J_k$, we would do a rounding up as described above with probability $1/2$ and a rounding down with probability $1/2$. Overall, this randomized scheme would ensure that $\E[p^{B+1}_j]=p_j$. We do deterministic counterpart of this by using weighted set balancing, instead of randomly deciding to round up or down.

Concretely, we invoke the weighted set balancing result of \Cref{thm:main-weighted} on the set $J_k$, with the coefficient $a_{ij}2^{-(B-k+2)}$ for variable $j\in J_k$ in constraint $i\in [m]$. Define the error in the $i^{th}$ constraint in stage $k$ as $|\sum_{j=1}^{n} a_{ij} (p^{k+1}_j - p^{k}_j) |$. From \Cref{thm:main-weighted}, we get the following upper bound for this error: \[O(\sqrt{\sum_{j\in J_k} a_{ij}^2 2^{-2(B-k)} \log m})=O(\sqrt{\sum_{j\in J_k} a_{ij} p^{k}_j 2^{-(B-k)} \log m}) = O(\frac{\sqrt{\sum_{j=1}^{n} a_{ij}p^{k}_j \log m}}{2^{(B-k)/2}}).\] Here, the first inequality holds because for each $j\in J_k$, we have $p^k_j\geq 2^{-(B-k+1)}$. One can see that these errors essentially form a geometric series through the stages; intuitively this is because $\sum_{j=1}^{n} a_{ij}p^k_j$ remains roughly equal as $\sum_{j=1}^{n} a_{ij}p_j$, modulo a smaller additive term (except when the error is below $O(\log m)$, but those stages add up to $O(\log m)$ error overall). Thus, with some calculations, one can bound the total sum of these errors through the stages to be $O(\sqrt{\sum_{j=1}^{n}a_{ij}p_j\log m} + \log m)$, thus concluding the theorem. We do not repeat the calculations here; they can be found in ~\cite[Lemma 8.3]{motwani1994probabilisticJournal}.    
\end{proof}

\end{document}